\tikzset{->-/.style={decoration={
			markings,
			mark=at position #1 with {\arrow{latex}}},postaction={decorate}}}
\tikzset{-<-/.style={decoration={
			markings,
			mark=at position #1 with {\arrowreversed{latex}}},postaction={decorate}}}
\tikzset{cross/.style={cross out, draw, 
		minimum size=2*(#1-\pgflinewidth), 
		inner sep=0pt, outer sep=0pt}}
\definecolor{dullmagenta}{rgb}{0.4,0,0.4}   % #660066
\definecolor{darkblue}{rgb}{0,0,0.4}
\def\dblue#1{\textcolor[rgb]{0,0,0.7}{#1}}
\newcommand{\ii}{{\rm i}}
\newcommand{\ee}{{\rm e}}
\newcommand{\dd}{{\rm d}}
\newcommand{\CC}{{\mathbb C}}
\newcommand{\RR}{{\mathbb R}}
\newcommand{\DD}{{\mathbb D}}
\def\frak{\mathfrak}
\newtheorem{thm}{Theorem}[section]
\newtheorem{prop}[thm]{Proposition}
\newtheorem{defn}[thm]{Definition}
\newtheorem{lemma}[thm]{Lemma}
\newtheorem{cor}[thm]{Corollary}
\newtheorem{remark}{Remark}[section]
\renewcommand{\descriptionlabel}[1]%
         {\dblue{#1:}\\}
\title{Strong Asymptotics of Planar Orthogonal Polynomials: \\Gaussian Weight Perturbed by Finite Number of Point Charges}
\author[1]{Seung-Yeop Lee}
\author[2]{Meng Yang}
\affil[1]{Department of Mathematics and Statistics, University of South Florida}
\affil[2]{Institut de Recherche en Math\'ematique et Physique, Universit\'e catholique de Louvain}
\date{}                                           % Activate to display a given date or no date
\begin{document}
\maketitle

\begin{abstract}
We consider the orthogonal polynomial $p_{n}(z)$ with respect to the planar measure supported on the whole complex plane 
\begin{equation*}
\ee^{-N|z|^2} \prod_{j=1}^\nu |z-a_j|^{2c_j}\,\dd
A(z)\end{equation*} 
where $\dd A$ is the Lebesgue measure of the plane, $N$ is a positive constant, 
$\{c_1,\dots,c_\nu\}$ are nonzero real numbers greater than $-1$ and
$\{a_1,\dots,a_\nu\}\subset{\mathbb D}\setminus\{0\}$ are distinct points inside the unit disk. In the scaling limit when $n/N = 1$ and $n\to \infty$ we obtain the strong asymptotics of the polynomial $p_n(z)$.   We show that the support of the roots converges to what we call the ``multiple Szeg\H o curve," a certain connected curve having $\nu+1$ components in its complement.  We apply the nonlinear steepest descent method \cite{Deift 1999,DKMVZ 1999} on the matrix Riemann-Hilbert problem of size $(\nu+1)\times(\nu+1)$ posed in \cite{Lee 2017}.      
\end{abstract}

\tableofcontents

\section{Introduction and Main Result}
Let $\{c_1,\dots,c_\nu\}$ be a set of nonzero real numbers greater than $-1$ and
$\{a_1,\dots,a_\nu\}$ be a set of distinct points inside the unit disk.
Let $p_{n,N}(z)$ be the monic polynomial of degree $n$ satisfying the
orthogonality relation
\begin{equation}\label{eq1}
\int_\CC p_{n,N}(z)\,\overline{p_{m,N}(z)}\,\ee^{-N|z|^2} \prod_{j=1}^\nu|z-a_j|^{2c_j}\,\dd
A(z)=h_n\delta_{nm},\quad n,m\geq 0.\end{equation} Here $\dd A$ is
the Lebesgue area measure on the complex plane, $N$ is a positive constant, $h_n$ is the positive norming constant and $\delta_{nm}=1$ when $n=m$ and $\delta_{nm}=0$ when $n\neq m$.  

We consider the scaling limit where $n$ and $N$ both go to $\infty$ while $\lim_{n\to\infty} n/N=1$. 
We will set $N=n$ without losing generality since the
orthogonality gives the relation
$$ p_{n,\, N}(z; a_1,\dots,a_\nu) = \left(\frac{n}{N}\right)^{n/2} p_{n,\, n}\left(\sqrt{\frac{N}{n}}z; \sqrt{\frac{N}{n}}a_1,\dots,\sqrt{\frac{N}{n}}a_\nu\right).$$

\begin{remark}
The asymptotic behavior of the orthogonal polynomials for the planar measure given by $\exp(-N Q(z))\dd A(z)$ for a general external field $Q:{\mathbb C}\to {\mathbb R}$ has been an open problem in relation to the normal matrix model, two dimensional Coulomb gas and Hele-Shaw problems \cite{Zabrodin 2006, Zabrodin 2011}.
The asymptotic behaviors are known only for special choices of $Q$ \cite{Ba 2015,Ba1 2017, Bertola 2018,ku94 2015,ku103 2015, ku104 2015,ku105 2015,ku106 2015,HM 2013,Ra 2005,Lee 2017, Martinez 2019}.  For a general class of $Q$ Hedenmalm and Wennman \cite{Hedenmalm 2017} have found the asymptotic behavior of the orthogonal polynomials outside the ``droplet".  This general result still does not identify the limiting support of the roots, because the roots are mostly found --- except a finite number of them --- inside the droplet as their results have reassured. Our main goal is to find the strong asymptotics for the new class of planar orthogonal polynomials given in \eqref{eq1}.  
\end{remark}

\begin{remark}
We do not consider the case when (some) $a_j$'s are outside ${\mathbb D}$.   The reason is partly because we have been motivated by the results of \cite{Webb 2018} and \cite{Alfredo 2019}, where the main question is the asymptotic behavior of the partition function of the Coulomb gas ensemble as the function of $\{a_j\}_{j=1}^{\nu}\subset{\mathbb D}$ and $\{c_j\}_{j=1}^{\nu}$.  This problem will be studied in our subsequent publication based on the results of this paper. Another application of our results can be the universal behavior of the Coulomb gas in the vicinity of a point singularity (such as $a_j$) which has been studied in \cite{Ameur 2001} using Ward's equation.
\end{remark}

\noindent {\bf Notations.}
We set $N=n$. Though $N=n$ we will keep both $N$ and $n$, preserving their separate roles --- $N$ as a real-valued parameter and $n$ as the integer-valued degree of polynomial --- as much as possible. 
We define $p_n(z)=p_{n,n}(z).$ We denote $\DD=\{z:|z|<1\}$ and $\sum c=\sum_{j=1}^\nu c_j$. We use both the bar and the superscript $*$ for the complex conjugation, e.g., $\bar{z}$ and $z^*$.
\bigskip

\noindent{{\bf Asymptotics for $\nu=1$:}}
When $\nu=1$ the full asymptotic behavior has been found \cite{Lee 2017}; the roots of the polynomial converge towards the generalized Szeg\H{o} curve that depends only on $a_1$ but not on $c_1$. See Figure \ref{fig single}.   The limiting support of the roots is given by the simple closed curve (which is exactly the {\em Szeg\H o curve} when $a_1=1$)
\begin{equation}\nonumber
\Gamma= \{z\in\DD:    \log|z| - {\rm Re}(\overline{a}_1z) = \log|a_1| - |a_1|^2\}.
\end{equation} 
The curve divides the plane into the unbounded domain $\Omega_0$ and the bounded domain $\Omega_1$
such that $\CC= \Omega_0\cup \Omega_1\cup\Gamma$.  
The strong asymptotics of the polynomial $p_n$ is given by
\begin{equation}
p_n(z)=\begin{cases}
\displaystyle\frac{z^{n+c_1}}{(z-a_1)^{c_1}}\left(1+{\cal O}\left(\frac{1}{N^{\infty}}\right)\right),& z\in \Omega_0,\vspace{0.1cm}\\
\displaystyle -\frac{a_1(1-|a_1|^2)^{c_1-1}}{N^{1-c_1}\Gamma(c_1)}\frac{\ee^{N(\overline{a}_1z+\log a_1-|a_1|^2)}}{z-a_1}\left(1+\mathcal
{O}\left(\frac{1}{N}\right)\right),& z\in\Omega_{1},
\end{cases}
\end{equation}
where ${\cal O}(1/N^\infty)$ stands for ${\cal O}(1/N^m)$ for an arbitrary $m>0$. In $\Omega_0$ the branch is chosen such that $z^{n+c_1}/(z-a_1)^{c_1}\sim z^n$ as $|z|\to \infty$ with the branch cut $[0,a_1]$.

\begin{figure}\label{fig single}
\begin{center}
\includegraphics[width=0.49\textwidth]{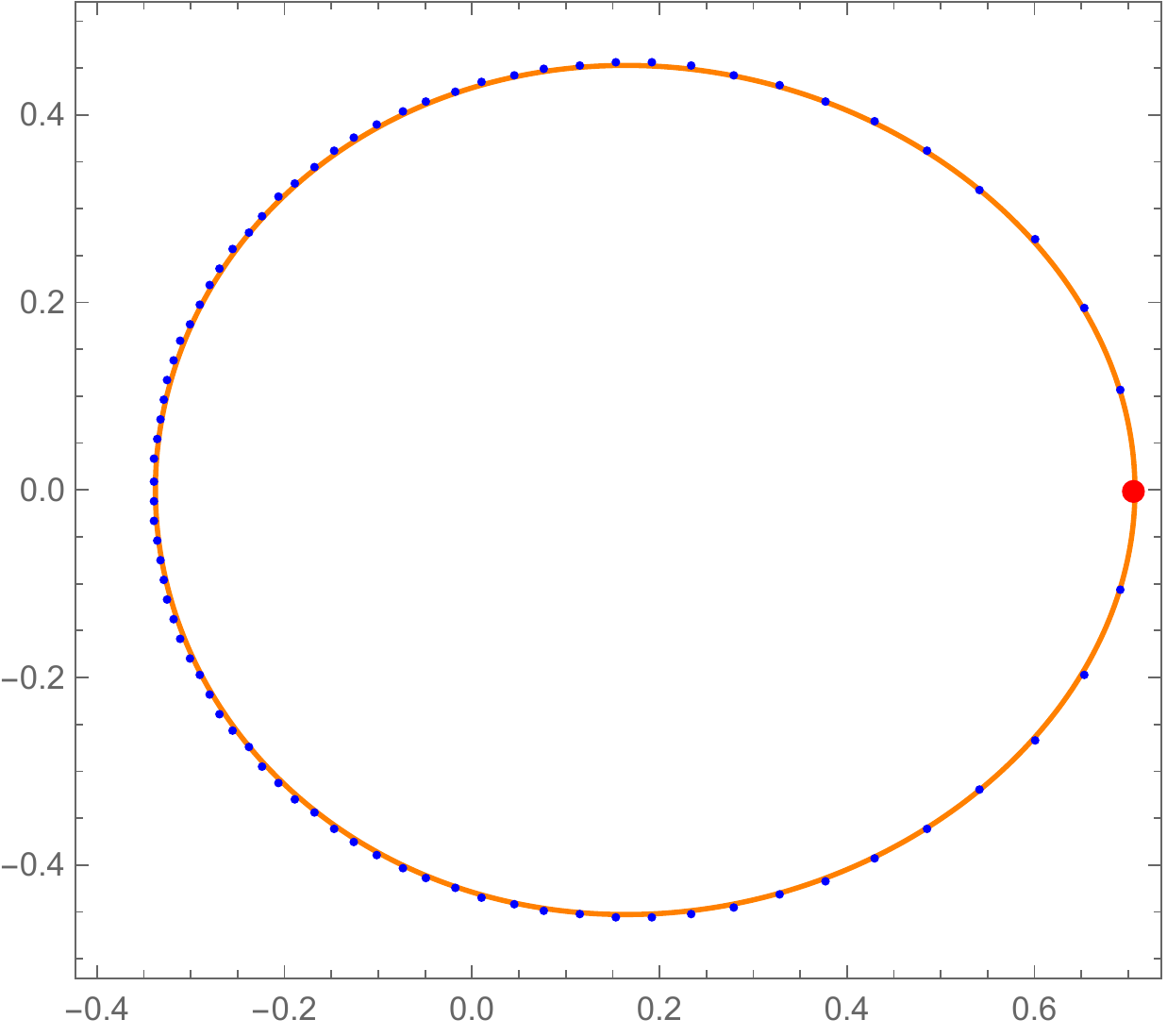}
\end{center}
\caption{The zeros (blue dots) of orthogonal polynomials for $\nu=1$ with $n=80$,
$c_1=1$ and $a_1=1/\sqrt{2}$. Zeros are close to
the generalized Szeg\"o curve (orange).} 
\end{figure}

When $z$ is near $\Gamma$ but away from $a_1$ the strong asymptotics is given by the sum of the two asymptotic expressions given above, hence zeros of $p_n(z)$ line up along $\Gamma$ with the inter-distance of order ${\cal O}(1/N)$.

When $z$ is near $a_1$ we define the local zooming coordinate 
\begin{equation}\nonumber
   \zeta(z)= -N(\overline{a}_1z-\log z+\log a_1-|a_1|^2),
\end{equation}
which maps $[0,a_1]$ to the negative real axis. We have
\begin{equation}\label{eq7}
p_n(z)=\frac{z^{n+c_1}}{(z-a_1)^{c_1}}\frac{\zeta(z)^{c_1}}{\ee^{\zeta(z)}}\left(\frac{\ee^{\zeta(z)}}{\zeta(z)^{c_1}}-f_{c_1}(\zeta(z))+{\cal O}\left(\frac{1}{N}\right)\right),  
\end{equation}
where the multivalued function $\zeta^c$ is defined with the principal branch and $f_{c}(\zeta)$ is defined by the two conditions: $f_{c}(\zeta)\to 0$ as $|\zeta|\to 0$, and $\ee^\zeta/\zeta^c-f_{c}(\zeta)$ is entire. See Appendix \ref{appendix} for more details about $f_c(\zeta)$.
Zeros of the above entire function are shown in Figure \ref{localzeros}.

\begin{figure}
\begin{center}
\includegraphics[width=0.32\textwidth]{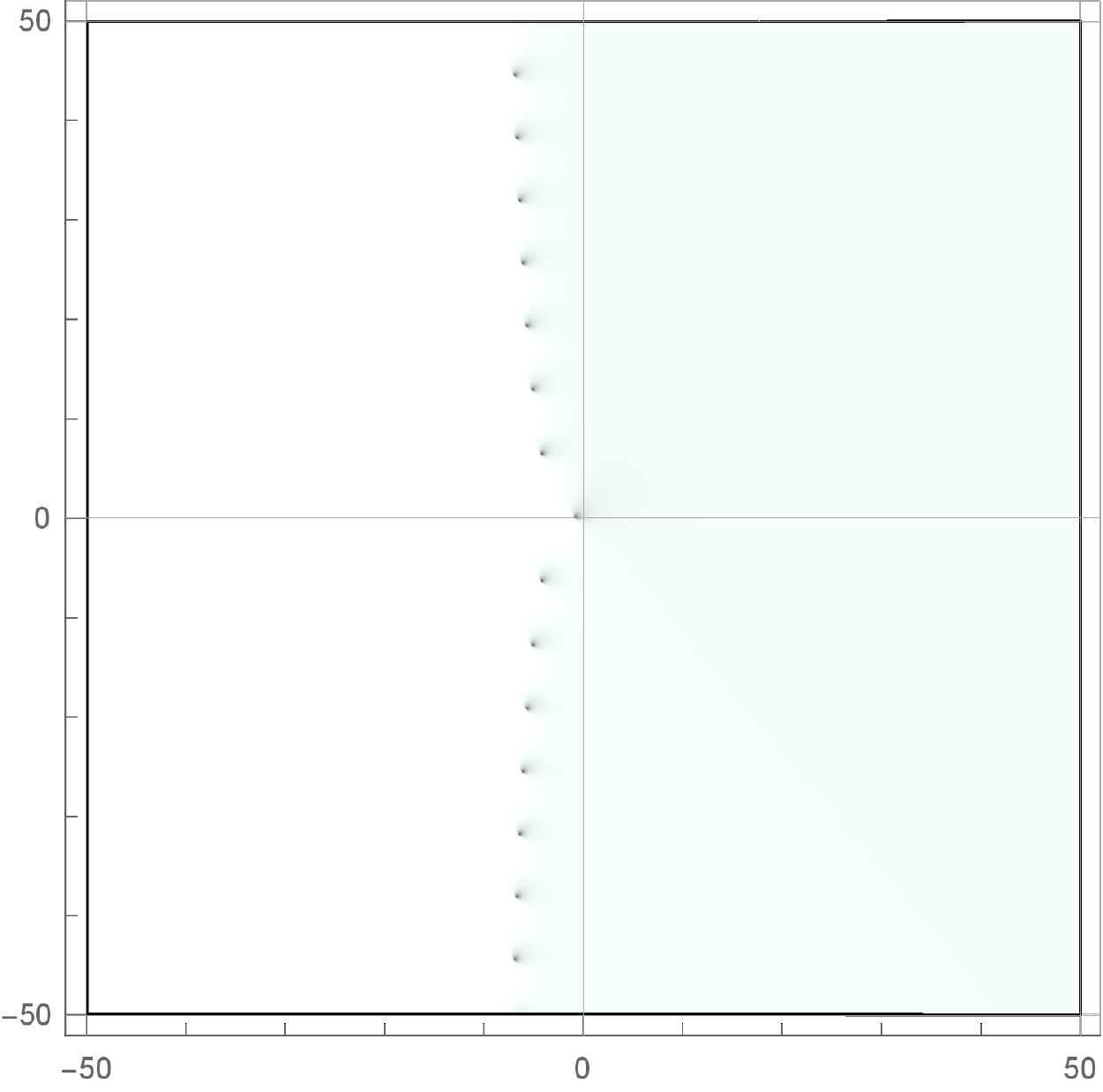}
\includegraphics[width=0.32\textwidth]{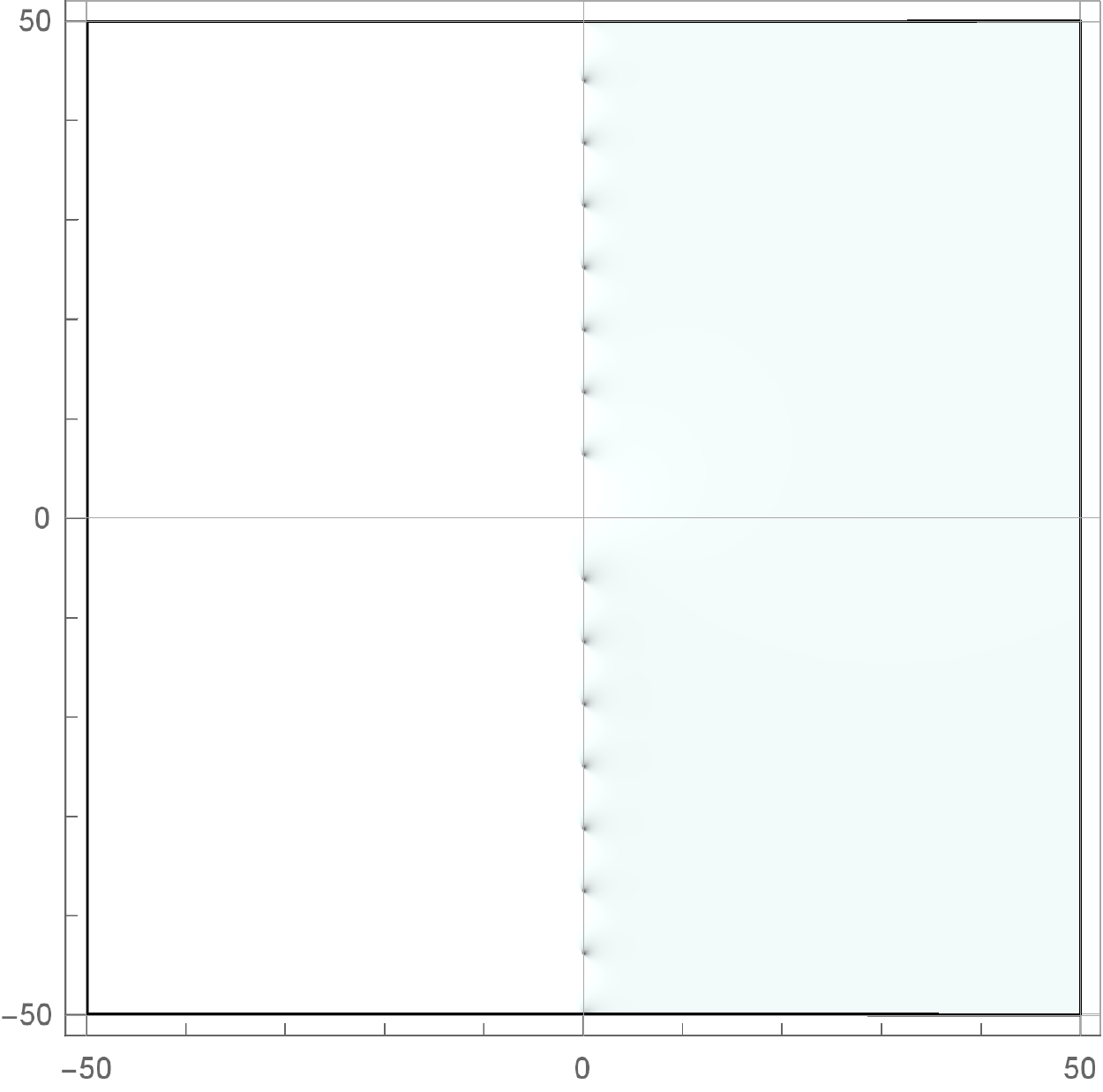}
\includegraphics[width=0.32\textwidth]{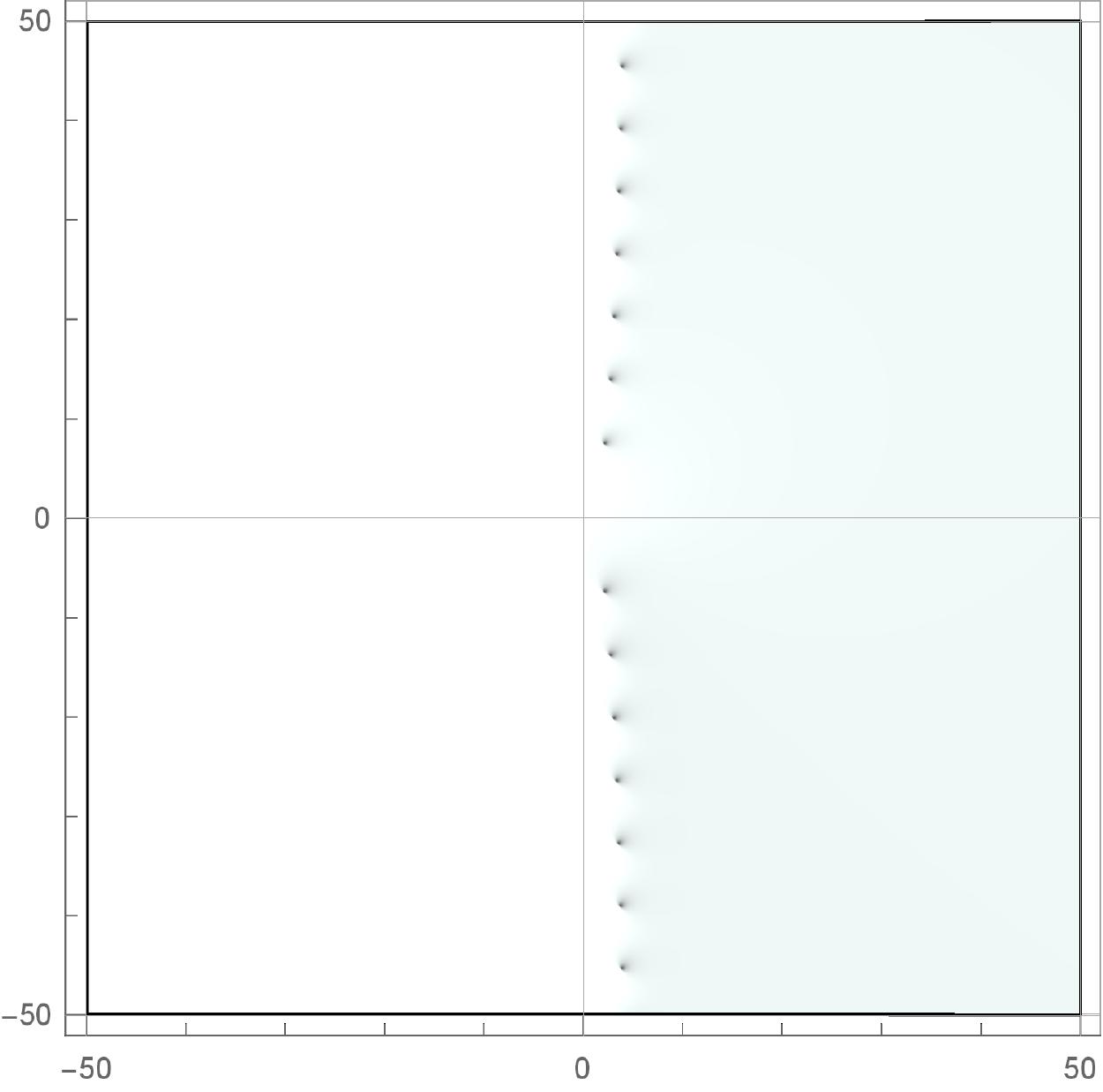}
\end{center}
 \caption{Zeros (dots) of $\ee^{\zeta}/\zeta^c-f_{c}(\zeta)$ for $c=-0.5,1,2$ (from left).
} \label{localzeros} 
\end{figure}

%\begin{figure}
%\begin{center}
%\includegraphics[width=0.32\textwidth]{lz1.eps}
%\includegraphics[width=0.32\textwidth]{lz2.eps}
%\includegraphics[width=0.32\textwidth]{lz3.eps}
%\end{center}
 %\caption{Zeros (dots) of %$\ee^{\zeta}/\zeta^c-f_{c}(\zeta)$ for %$c=-0.5,1,2$ (from left).
%} \label{localzeros} 
%\end{figure}

\bigskip
\noindent{{\bf Multiple Szeg\H{o} curve:}}
The goal of this paper is to generalize these results to the case of $\nu>1$.   
We obtain that the roots of the polynomial converge towards what we call the {\it multiple Szeg\H{o} curve}, a certain merger of $\nu$ number of the generalized Szeg\H{o} curves.  The multiple Szeg\H o curve, that we will denote by $\Gamma$, is determined in terms of $\{a_1,\dots,a_\nu\}\subset{\mathbb D}$ and it divides the plane into $\nu+1$ domains, the unbounded domain $\Omega_0$, and the $\nu$ number of bounded domains: $\Omega_1,\dots,\Omega_\nu$ such that ${\mathbb C}=\Gamma\cup\Omega_0\cup\Omega_1\cup\dots\cup\Omega_\nu$.  See Figure \ref{szego0} for an example when $\nu=4$.

\begin{figure}
\begin{center}
\includegraphics[width=0.5\textwidth]{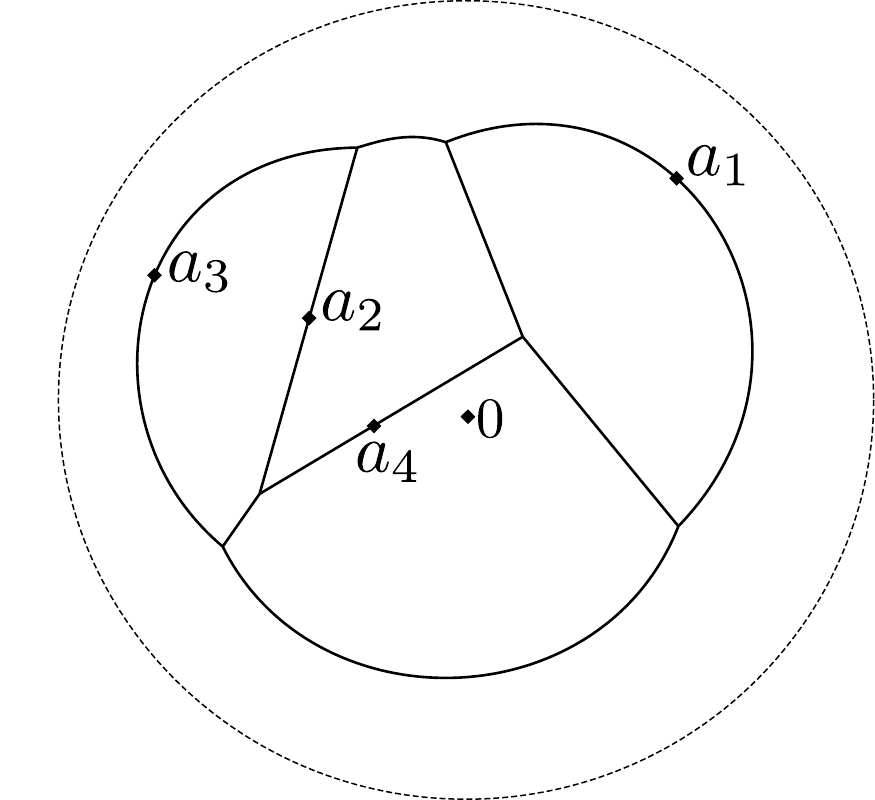}
\end{center}
 \caption{The multiple Szeg\H{o} curve with $\nu=4$.  Dotted line is the unit circle.  The plane is divided into five domains by the curve; the bounded region adjacent to $a_1$ is $\Omega_1$, the bounded region adjacent to $a_3$ is $\Omega_3$, the region containing the origin is $\Omega_4$ and the last remaining bounded region is $\Omega_2$.
} \label{szego0}
\end{figure}

To define the multiple Szeg\H o curve, let $L=(l_1,\dots,l_\nu)$ be a set of real numbers. We define a continuous function $\Phi^L(z):\DD\to\RR$ by
\begin{equation}\label{eq:PhiL} \Phi^L(z) = \max \{\log|z|, {\rm Re}(\overline a_1 z)+l_1,\dots,{\rm Re}(\overline a_\nu z)+l_\nu\}.
\end{equation}
We define the bounded domains that depend on $L$ by
\begin{equation}\label{def omegaj}
\Omega_j = {\rm Int}\{z\in \DD\,|\,\Phi^L(z)={\rm Re}(\overline a_j z)+l_j\},\quad j=1,\dots,\nu,
\end{equation}
where ${\rm Int}A$ stands for the interior of $A$, the largest open subset of $A$.
We also define the unbounded domain $\Omega_0$ by
\begin{equation}\label{def omega0}
    \Omega_0 = {\mathbb D}^c\cup{\rm Int}\{ z\in {\mathbb D}| \Phi^L(z)=\log|z|\}.
\end{equation}
Note that $\Omega_j$ for $j\neq 0$ can be empty in some cases.

\begin{thm}\label{Thm1}
For a given $\{a_1,\dots,a_\nu\}$ there exists the unique set of real numbers, $L=(l_1,\dots,l_\nu)$, such that 
\begin{equation}\nonumber
    a_j\in\partial \Omega_j \text{~~ for $j=1,\dots,\nu$}.
\end{equation}
\end{thm}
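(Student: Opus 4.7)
My plan is to recast the geometric condition as a finite system of linear constraints on $L$, solve the resulting linear program to produce a valid $L^\star$, and use a cycle-sum identity for uniqueness. Writing $\phi_0(z)=\log|z|$ and $\phi_k(z)=\mathrm{Re}(\overline{a}_k z)+l_k$ for $1\leq k\leq\nu$, so that $\Phi^L=\max_k\phi_k$, the condition $a_j\in\overline{\Omega_j}$ unpacks as
\[
l_j\ge M_j:=\log|a_j|-|a_j|^2,\qquad l_j-l_k\ge R_{kj}:=\mathrm{Re}(\overline{a}_k a_j)-|a_j|^2\qquad (k\ne j,\ 1\le k\le\nu),
\]
and the sharper requirement $a_j\in\partial\Omega_j$ is equivalent to feasibility plus, for each $j$, at least one of these inequalities being an equality. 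A short local-expansion argument is needed to confirm that any such equality actually places $a_j$ on $\partial\Omega_j$; this uses that $\log|z|$ and the affine $\phi_k$ have pairwise distinct gradients at every point of $\DD\setminus\{0\}$.

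\textbf{Existence.} I would minimize $\sum_{j=1}^\nu l_j$ over the feasible polyhedron above. Feasibility of the difference constraints amounts to absence of positive-weight cycles; for any cycle $j_1\to\cdots\to j_m\to j_1$ a direct computation via $\mathrm{Re}(\overline{w}z)=(|w|^2+|z|^2-|w-z|^2)/2$ and telescoping gives
\[
\sum_{i=1}^{m}R_{j_{i+1},j_i}=-\frac{1}{2}\sum_{i=1}^m|a_{j_i}-a_{j_{i+1}}|^2<0,
\]
so the difference system is feasible, and an upward shift realizes $l_j\ge M_j$ simultaneously. The recession cone of the feasible set is $\{(c,c,\dots,c):c\ge 0\}$, on which the objective is nonnegative and vanishes only at $0$; together with the lower bound $\sum_j l_j\ge\sum_j M_j$ this gives boundedness, so the minimum is attained at some $L^\star$. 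For each $j$, at least one constraint in which $l_j$ appears with positive coefficient must be tight at $L^\star$---otherwise a slight decrease of $l_j$ would preserve feasibility and strictly improve the objective---which supplies the required equality, and hence $a_j\in\partial\Omega_j$.

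\textbf{Uniqueness.} Suppose $L,L'$ both satisfy the theorem, set $d_j=l_j-l_j'$, and declare $d_0:=0$. For each $j$, the condition $a_j\in\partial\Omega_j^L$ provides some $k(j)\in\{0,\dots,\nu\}\setminus\{j\}$ with $\phi_j^L(a_j)=\phi_{k(j)}^L(a_j)$; pairing this equality at $L$ with the corresponding feasibility inequality at $L'$ yields $d_j\le d_{k(j)}$. Let $j^\star$ maximize $d$. If $d_{j^\star}>0$, then $d_{k(j^\star)}\ge d_{j^\star}$ forces $d_{k(j^\star)}=d_{j^\star}>0$, so $k(j^\star)\ne 0$; iterating, the orbit $j^\star,k(j^\star),k(k(j^\star)),\dots$ remains inside $\{1,\dots,\nu\}$ at the constant value $d_{j^\star}$ and hence cycles. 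Along this cycle the tight equalities at $L$ combine via the identity above into $0=-\tfrac{1}{2}\sum|a_{j_i}-a_{j_{i+1}}|^2<0$, a contradiction. So $\max_j d_j\le 0$; the symmetric argument (swapping the roles of $L$ and $L'$) gives $\min_j d_j\ge 0$, whence $L=L'$.

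The main obstacle I anticipate is the first-step equivalence between the geometric boundary condition and the algebraic tightness condition, especially the verification that the LP minimum gives $\Omega_j$ nonempty in every neighborhood of $a_j$ (rather than $a_j$ being an isolated contact point where three or more of the $\phi_k$'s coincide). Once this local analysis is handled, the cycle-sum identity is the single clean algebraic engine that simultaneously powers LP feasibility and the uniqueness contradiction.
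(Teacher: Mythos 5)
Your proof is correct and takes a genuinely different route from the paper, so let me compare.  The paper organizes the argument around the non-convex feasible set $\mathcal{S}=\{\Lambda : a_j\notin\mathrm{Int}(K_j^\Lambda)\text{ for all }j\}$, proves existence of a \emph{maximal} element of $\mathcal S$ (Theorem 2.1) by closedness plus a coordinate-wise sup construction, observes that maximality forces $a_j\in\partial K_j^L$, and finally proves uniqueness by exhibiting an explicit $\nu$-step iterative algorithm that converges to $L$ regardless of the starting vector.  You instead work on the \emph{complementary} convex polyhedron (the $a_j\in K_j^L$ side of the boundary), write it as a difference-constraint system, and minimize $\sum_j l_j$; tightness at the LP optimum then supplies the boundary condition, and uniqueness follows from a short cycle contradiction.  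The common engine in both proofs is the same telescoping identity $\sum R_{j_{i+1},j_i}=-\tfrac12\sum|a_{j_i}-a_{j_{i+1}}|^2<0$: it is exactly the inner-product computation the paper performs in the proof of Lemma 2.2 (their $({\bf A}-\widehat{\bf A})\cdot({\bf A}-\widehat{\bf A})^*$), used there to bound $\mathcal S$ and to prove Lemma 2.3, and used by you to establish LP feasibility and to kill the cycle in the uniqueness step.  Your uniqueness argument is more direct and self-contained than the paper's algorithmic one; what the paper's route buys is the explicit finite-step algorithm for computing $L$ (which they actually use to generate their figures) and the chain/level structure that the rest of the paper relies on.  One caveat applies equally to both proofs: the tightness condition (yours) and $a_j\in\partial K_j^L$ (theirs) do not literally coincide with $a_j\in\partial\Omega_j=\partial(\mathrm{Int}\,K_j^L)$ in degenerate configurations where $\mathrm{Int}\,K_j^L$ fails to be dense in $K_j^L$ near $a_j$ (cf.\ the paper's Remark 1.7 and Figure 3, where $\Omega_j$ can even be empty); you flag this as the chief obstacle, and the ``pairwise distinct gradients'' observation is necessary but not sufficient to close that gap, so the theorem is best read, as the paper implicitly does, with $\partial\Omega_j$ replaced by $\partial K_j^L$.
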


Given $\{a\text{'s}\}$ the above theorem uniquely determines $L$ and, in turn, $\Phi^L$ and $\Omega_j$'s.  It allows us to define the following.  
\begin{defn}
For a given $\{a_j\text{'s}\}$ we define {\em the multiple Szeg\H o curve} $\Gamma$ by
\begin{equation}\label{def-gamma}
    \Gamma=\bigcup_{j=1}^\nu\partial\Omega_j,
\end{equation}
where $\Omega_j$'s are defined by \eqref{def omegaj} in terms of the unique $L$ that is given by Theorem \ref{Thm1}.
\end{defn}

Theorem \ref{Thm1} says that  $a_j\in\partial\Omega_j$.  It means that $a_j$ is adjacent to another domain $\Omega_k$ for some $k\neq j$. In such case we define the following notation:
\begin{equation}\label{def-arrow}
    j\to k   \quad\Longleftrightarrow\quad a_j\in\partial\Omega_k \mbox{ and $j\neq k$}.
\end{equation}

\begin{defn}\label{def chain of aj}
Let {\em the chain of $a_j$} be the ordered subset $(k_s,k_{s-1},\dots,k_1)\subset\{1,\dots,\nu\}$ such that $k_s=j$ and
\begin{equation}\label{chain13}
k_s\to k_{s-1}\to\dots\to k_1\to 0.
\end{equation}
\end{defn}

\begin{remark}\label{remark1}
In this paper we consider only generic cases when the multiple Szeg\H o curve is smooth at every $a_j$.  This means that the point $a_j$ is on the boundary of exactly two domains $\Omega_j$ and $\Omega_k$ for $k\neq j$, which implies that the chain of $a_j$ is unique.    It is possible that $a_j$ belongs to the boundary of three or more domains. See Figure \ref{nongeneric case} (the left picture). Though we omit such cases for brevity our method still applies to such non-generic cases and it modifies only Theorem \ref{thm02}.  It is also possible that $\Omega_j$ is empty for some $j$. See Figure \ref{nongeneric case} (the middle and the right pictures). In this case we expect new type of local behavior showing up near $a_j$. 
\end{remark}

\begin{figure}
\begin{center}
\includegraphics[width=0.32\textwidth]{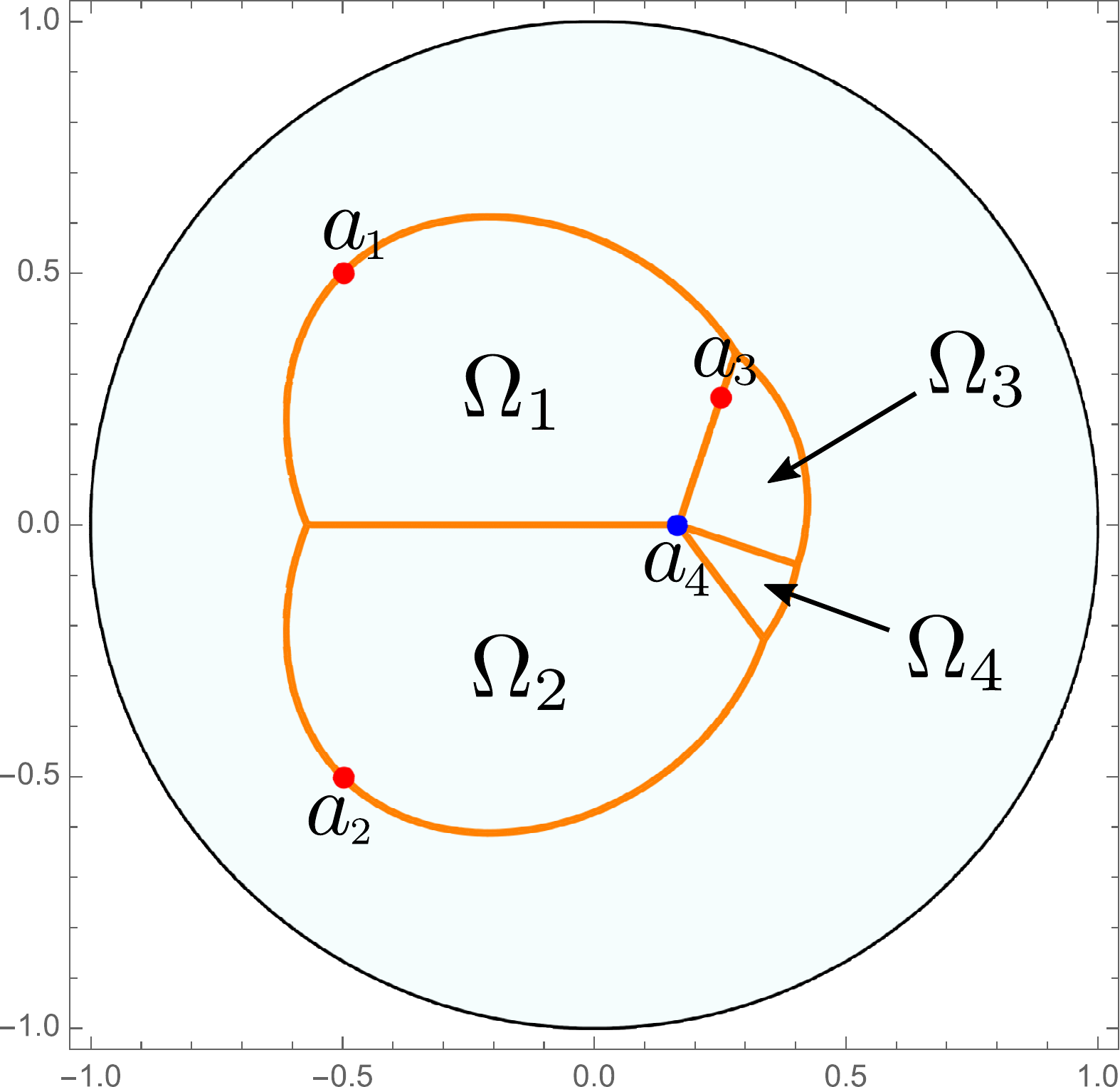}
\includegraphics[width=0.32\textwidth]{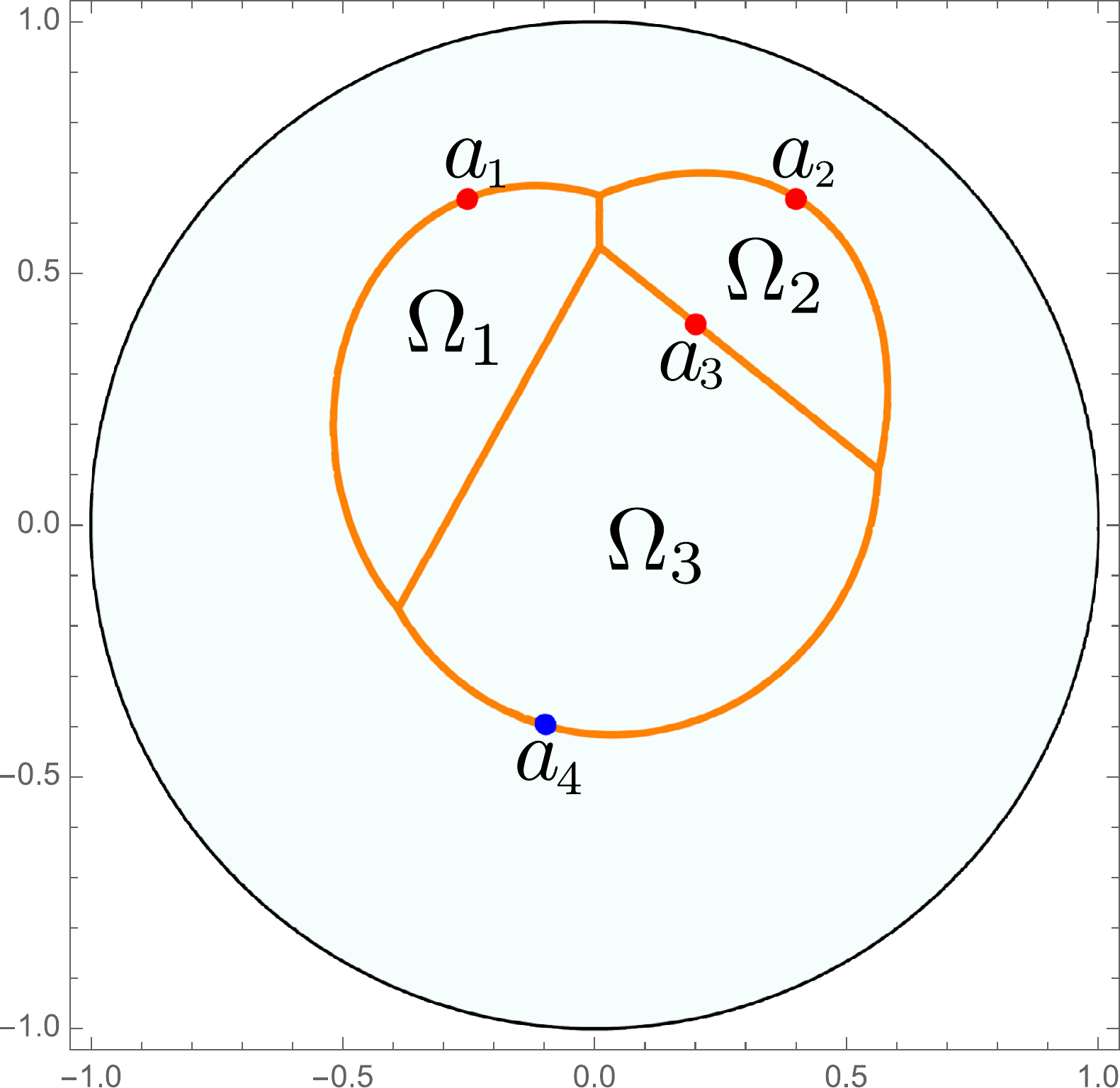}
\includegraphics[width=0.32\textwidth]{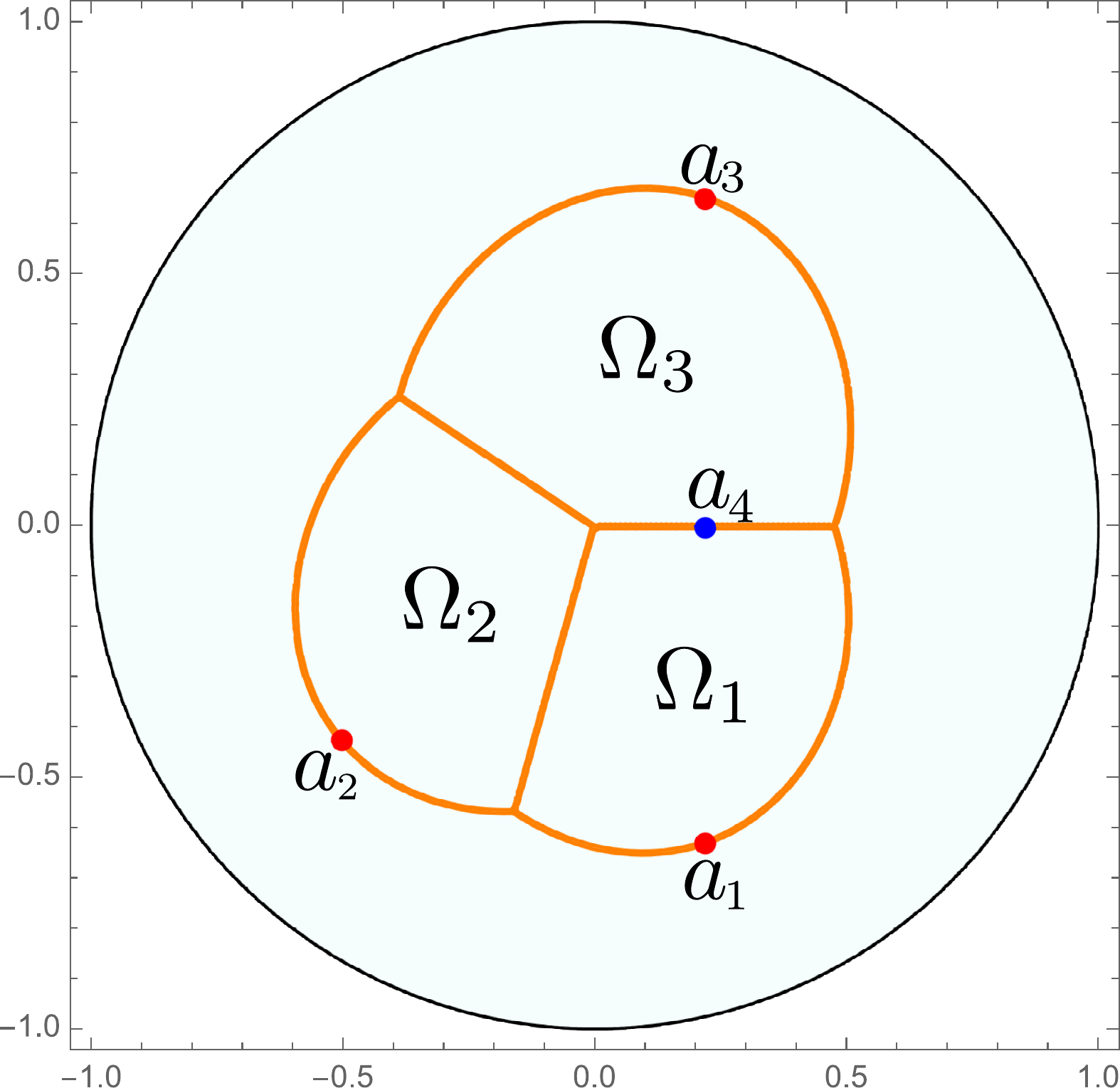}
\end{center}
 \caption{The nongeneric multiple Szeg\H{o} curves with $\nu=4$. In the left picture the point $a_4$ (blue) is at the boundary of more than two (four) regions. In the middle and the right ones the region $\Omega_4$ is empty. } \label{nongeneric case}
\end{figure}

When $j\to k$, the continuity of $\Phi^L(z)$ at $a_j\in \partial\Omega_j\cap \partial\Omega_k$, $$\lim_{\substack{z\to a_j\\z\in\Omega_j}}\Phi^L(z)=\lim_{\substack{z\to a_j\\z\in\Omega_k}}\Phi^L(z),$$ gives
\begin{equation}\label{l relation}
|a_j|^2+ l_j=\begin{cases}
     {\rm Re}(\overline a_ka_j) + l_k &\mbox{ if $k\neq 0$},
\\
\log |a_j|  &\mbox{ if $k = 0$}.
\end{cases}
\end{equation}
When $j\to k$ we define the complex numbers $\ell_j$'s such that
\begin{equation}\label{ell relation}
|a_j|^2+ \ell_j=\begin{cases}
     \overline a_ka_j + \ell_k &\mbox{ if $k\neq 0$},
\\
\log a_j  &\mbox{ if $k = 0$}.
\end{cases}
\end{equation}
These relations uniquely determine all the $\ell_j$'s inductively for a given chain; for example, the chain in \eqref{chain13} gives $\ell_{k_1}=\log a_{k_1}-|a_{k_1}|^2$ and $\ell_{k_2}= -|a_{k_2}|^2 +   \overline a_{k_1}a_{k_2} + \ell_{k_1}$ and so on.  Solving the relations inductively for the chain in \eqref{chain13} we get
\begin{equation}\nonumber
    \ell_j=\log a_{k_1}-|a_{k_1}|^2 + \sum_{i=2}^{s}\left(\overline a_{k_{i-1}}a_{k_i}-|a_{k_i}|^2   \right).
\end{equation}
By \eqref{l relation} and \eqref{ell relation}, one can also observe that ${\rm Re}\ell_j=l_j.$

\bigskip
\noindent{\bf Branch cuts for non-integer $c_j$'s:}
Whenever there is a non-integer exponent $c_j$ one must be careful about the branch of the multivalued function.  For example, an expression like $z^{c_1}$ has infinitely many branches when $c_1$ is irrational.  The precise definition of the branches of the multivalued functions are needed to state the main results.

First we define various branch cuts; See Figure \ref{fig-branch}.
\begin{align}\label{def bhat}
   \widehat{\bf B} &=\bigcup_{j=1}^\nu\widehat{\bf B}_j \mbox{ where } \widehat{\bf B}_j=\{a_j t:0\leq t\leq 1\},
   \\\label{def B}
   {\bf B}&=\bigcup_{j=1}^\nu {\bf B}_j \mbox{ where }  {\bf B}_j=\{a_j t: t\geq 1\},
   \\\label{def bk}
   {\bf B}[k]&=\bigg(\bigcup_{\substack{j=1\\j\neq k}}^\nu {\bf B}_{jk}\bigg)\cup {\bf B}_k \mbox{ where }  {\bf B}_{jk}=\{a_j+(a_j-a_k) t: t\geq 0\} .
\end{align}
In all these branch cuts, we define the  orientations of the branch cuts by the directions of increasing $t$.

\begin{figure} 
\centering
\begin{tikzpicture}[scale=0.6]

\draw[->] (2,1) -- (12,6);
\draw[->] (-2,4) -- (-6,12);
\draw[->] (5,7) -- (7.798,10.9174);
\draw[->] (1,9) -- (1.48159,13.3343);

\draw[red, postaction={decorate, decoration={markings, mark = at position 0.5 with {\arrow{>}}}} ]
				(0,0) -- (-2,4);
\draw[red, postaction={decorate, decoration={markings, mark = at position 0.5 with {\arrow{>}}}} ]
				(0,0) -- (2,1);
\draw[red, postaction={decorate, decoration={markings, mark = at position 0.5 with {\arrow{>}}}} ]
				(0,0) -- (5,7);
\draw[red, postaction={decorate, decoration={markings, mark = at position 0.5 with {\arrow{>}}}} ]
				(0,0) -- (1,9);			

\draw[blue,->]  (-2,4) -- (-10,10);
\draw[blue,->]  (5,7) -- (7.2,11.4);
\draw[blue,->]  (1,9) -- (0.45,13.4);

\draw[blue,dotted]  (2,1) -- (5,7);
\draw[blue,dotted]  (2,1) -- (1,9);
\draw[blue,dotted]  (2,1) -- (-2,4);

\foreach \Point/\PointLabel in {(0,0)/0, (2,1)/a_1, (5,7)/a_2, (1,9)/a_3,(-2,4)/ }
\draw[fill=black] \Point circle (0.07) node[below right] {$\PointLabel$};

\foreach \Point/\PointLabel in {(-1,4.8)/a_4, (12.5,5.5)/{\bf B}_{1}, (9,10.5)/{\bf B}_{2},(2.8,13)/{\bf B}_{3}, (-4.3,12)/{\bf B}_{4}, (6.7,11.5)/{\bf B}_{21}, (0.5,13)/{\bf B}_{31}, (-9,9)/{\bf B}_{41}, (2,0.5)/{\widehat{\bf B}}_{1}, (3,5.5)/{\widehat{\bf B}}_{2}, (0.5,6)/{\widehat{\bf B}}_{3}, (-1.2,2.7)/{\widehat{\bf B}}_{4} }
\draw[fill=black]  \Point
 node[below left] {$\PointLabel$};

 \end{tikzpicture}
 \caption{Various branch cuts for $\nu=4$. ${\bf B}=\{{\bf B}_1,{\bf B}_2,{\bf B}_3,{\bf B}_4\}$ (black rays), $\widehat{\bf B}=\{\widehat{\bf B}_1,\widehat{\bf B}_2,\widehat{\bf B}_3,\widehat{\bf B}_4\}$ (red rays), and  ${\bf B}[1]={\bf B}_1\cup\{{\bf B}_{21},{\bf B}_{31},{\bf B}_{41}\}$ (blue rays). The branch cuts of $W(z)$ are ${\bf B}$. The branch cuts of $W_1(z)$ are ${\bf B}[1]$. }\label{fig-branch}
 \end{figure}

\noindent For the sake of presentation we will assume that 
no three points from  $\{0,a_1,\dots,a_\nu\}$ are collinear, which implies that the branch cuts do not overlap with each other. Such assumption can be disposed of with a perturbation argument.

We now define the exact branches of the multivalued functions that appear in this paper. 
\begin{itemize}
  \setlength\itemsep{-0.2em}
    \item[(i)] $(z-a_j)^{c_j}$ is analytic away from ${\bf B}_j$. One may choose {\em any branch} for this function but one should stick to the choice throughout the paper.
    \item[(ii)] For $j\neq k$ we define $\big[(z-a_j)^{c_j}\big]_{{\bf B}[k]}$ to be analytic away from ${\bf B}_{jk}$ and \begin{equation}\label{def subbk}
        \big[(z-a_j)^{c_j}\big]_{{\bf B}[k]}=(z-a_j)^{c_j}\mbox{ when }z\in {\bf B}_k\cup\widehat{\bf B}_k \mbox{ and $j\neq k$}.
    \end{equation}   We also define $\big[(z-a_j)^{c_j}\big]_{{\bf B}[j]}=(z-a_j)^{c_j}$.
    \item[(iii)] We define 
    \begin{equation}\label{def w}
        W(z)=\prod_{j=1}^\nu (z-a_j)^{c_j}.
    \end{equation}
    \item[(iv)]  We define 
    \begin{equation}\label{def wk}
        W_k(z)=\prod_{j=1}^\nu \big[(z-a_j)^{c_j}\big]_{{\bf B}[k]}.
    \end{equation}  It follows that $W_k(z) = W(z)$ when $z$ is in a neighborhood of ${\bf B}_k\cup\widehat{\bf B}_k$.  Note that $W_k(z)$ has the branch cut on ${\bf B}[k]$.
        \item[(v)] $z^{c_j}$ is analytic away from ${\bf B}_j\cup \widehat{\bf B}_j$. We select the branch such that $(z-a_j)^{c_j}/z^{c_j}\to 1$ as $z$ goes to $\infty$ along ${\bf B}_j$. 
    \item[(vi)] $\big[z^{c_j}\big]_{{\bf B}[k]}$ has the branch cut on ${\bf B}_{jk}\cup \widehat{\bf B}_j$. We select the branch such that \begin{equation}\label{eq24}
            \big[z^{c_j}\big]_{{\bf B}[k]}=z^{c_j}\mbox{ when }z\in{\bf B}_k\cup\widehat {\bf B}_k.
        \end{equation}
    \item[(vii)] We define $z^{\sum c}=\prod_{j=1}^\nu z^{c_j}$. We use the shortened notation $\sum c=\sum_{j=1}^\nu c_j.$
\item[(viii)] We define $\big[z^{\sum c}\big]_{{\bf B}[k]}=\prod_{j=1}^\nu \big[z^{c_j}\big]_{{\bf B}[k]}$.
\end{itemize}

When $c_j$'s are all integer-valued there is no ambiguity in the choice of branches. We have $z^{c_j}=\big[z^{c_j}\big]_{{\bf B}[k]}$ and $(z-a_j)^{c_j}=\big[(z-a_j)^{c_j}\big]_{{\bf B}[k]}$ and therefore $W_k(z)=W(z)$. Also the final results should be independent of the choice of the branch made in (i).

 \begin{defn}
 Let us define the phase factor $\widetilde{\eta}_{kj}$ by
\begin{align}\label{def etakjtilde}
 \widetilde{\eta}_{kj}:&=\frac{[(z-a_j)^{c_j}]_{{\bf B}[k]}}{(z-a_j)^{c_j}}\frac{W_j(z)}{W_k(z)},\quad z\in{\bf B}_{jk}.
\end{align}
Let $(k_s,k_{s-1},\dots,k_1)$ be the chain of $a_j$. Then we define the constant ${\sf chain}(j)$  by
\begin{equation}\label{chain15}
 {\sf chain}(j)= \frac{a_{k_1}^{1+\sum_{i\neq k_1} c_i}N^{\sum_{i=1}^s(c_{k_i}-1)}}{\Gamma(c_{k_1})(1-|a_{k_1}|^2)^{1-c_{k_1}}}\prod_{i=1}^{s-1} \frac{\widetilde{\eta}_{k_i,k_{i+1}}(a_{k_{i+1}}-a_{k_i})^{c_{k_i}}|a_{k_i}-a_{k_{i+1}}|^{2(c_{k_{i+1}}-1)}}{\Gamma(c_{k_{i+1}})
 (a_{k_i}-a_{k_{i+1}})^{c_{k_{i+1}}} }. 
\end{equation}
 \end{defn}
Above $z^{n+\sum c} = z^n\cdot z^{\sum c}$ and $(a_{k_i}-a_{k_{i+1}})^{c_{k_{i+1}}}$ that appears in ${\sf chain}[j]$ is the evaluation of $(z-a_{k_{i+1}})^{c_{k_{i+1}}}$ at $z=a_{k_i}$.

\bigskip
\noindent{\bf Strong asymptotics of $p_n$:} We now state the main results.  See Figure 4 for numerical support.

\begin{figure}
\begin{center}
\includegraphics[width=0.4\textwidth]{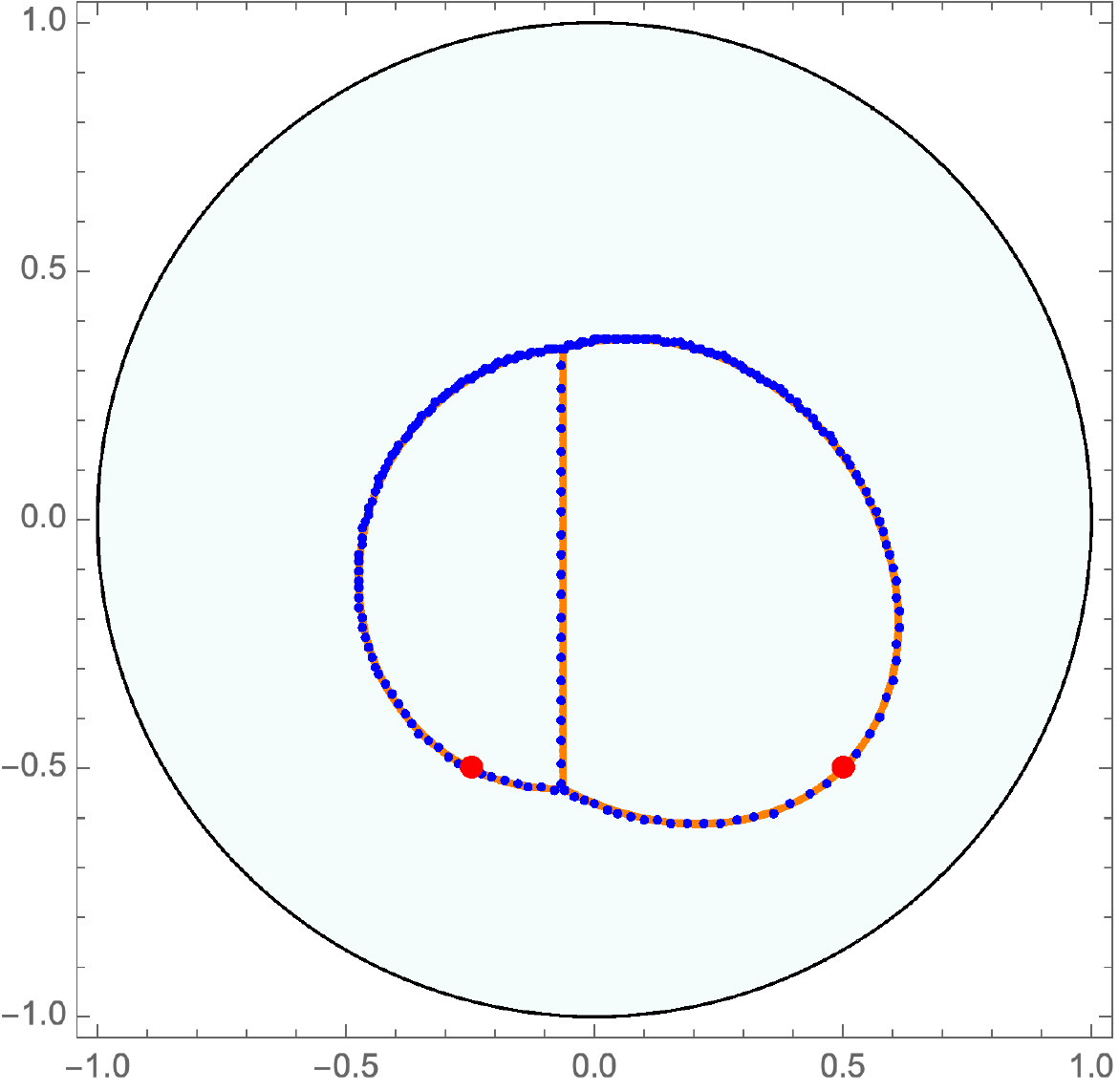}
\includegraphics[width=0.4\textwidth]{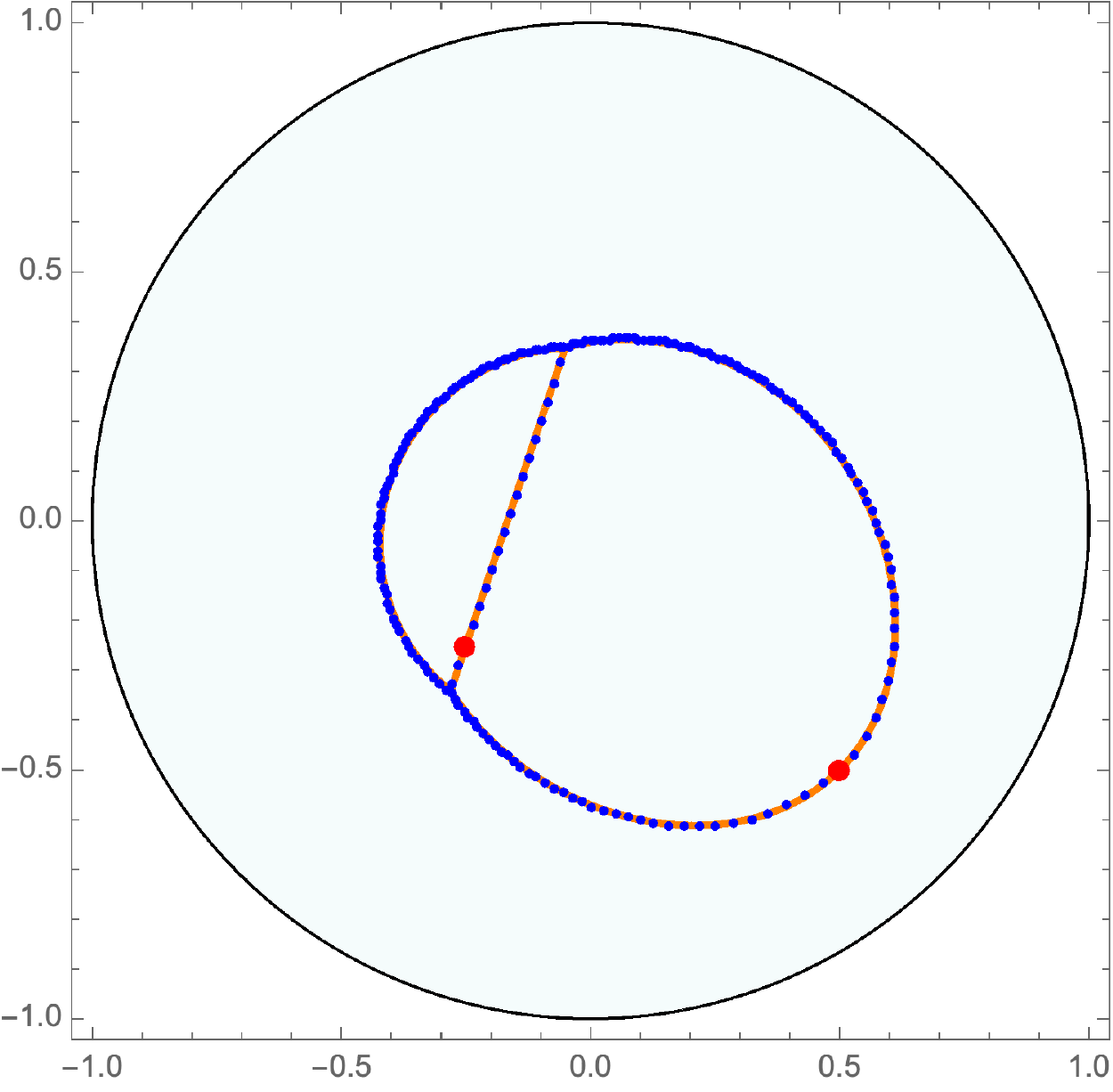}
\\
\includegraphics[width=0.4\textwidth]{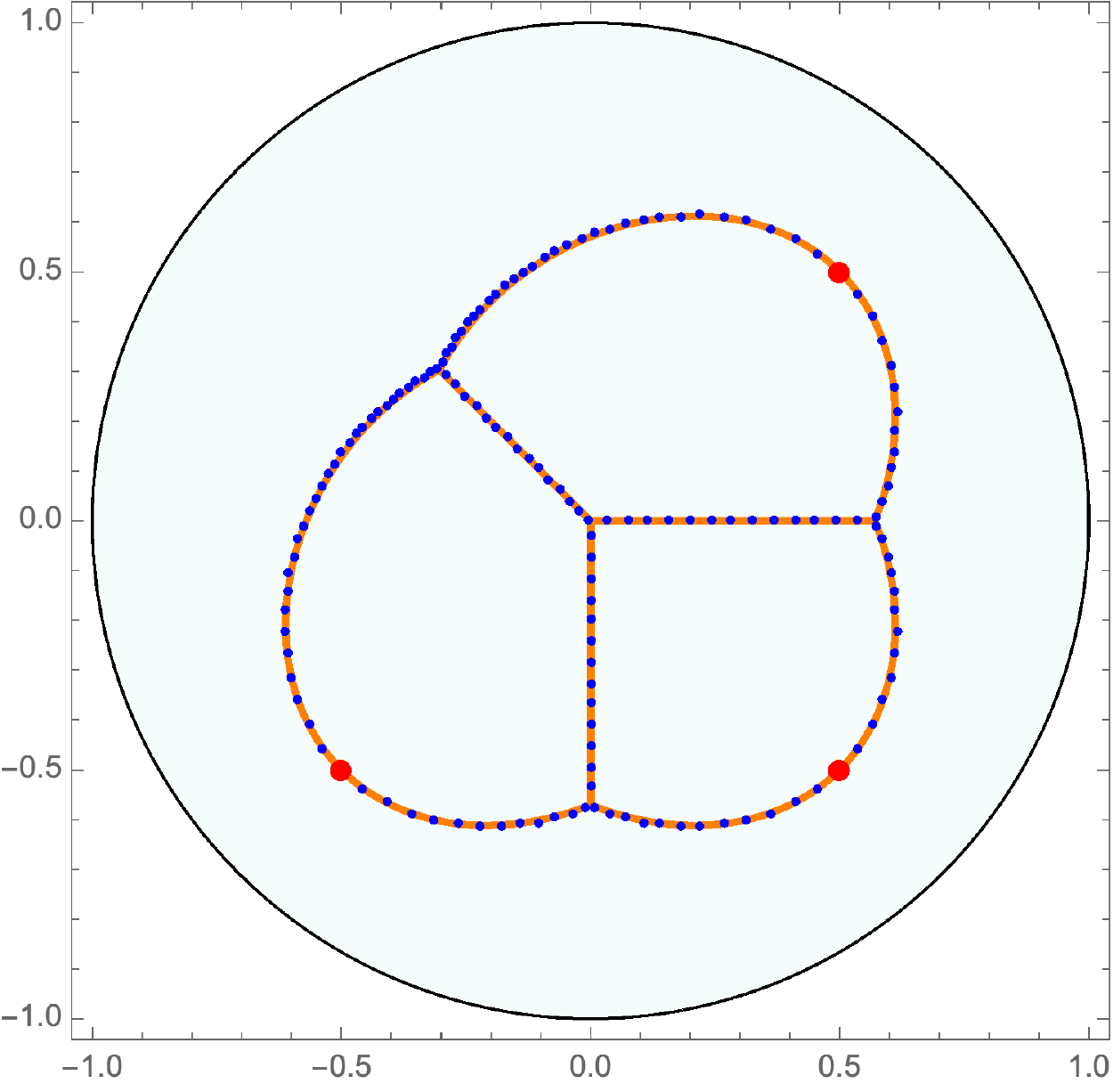}
\includegraphics[width=0.4\textwidth]{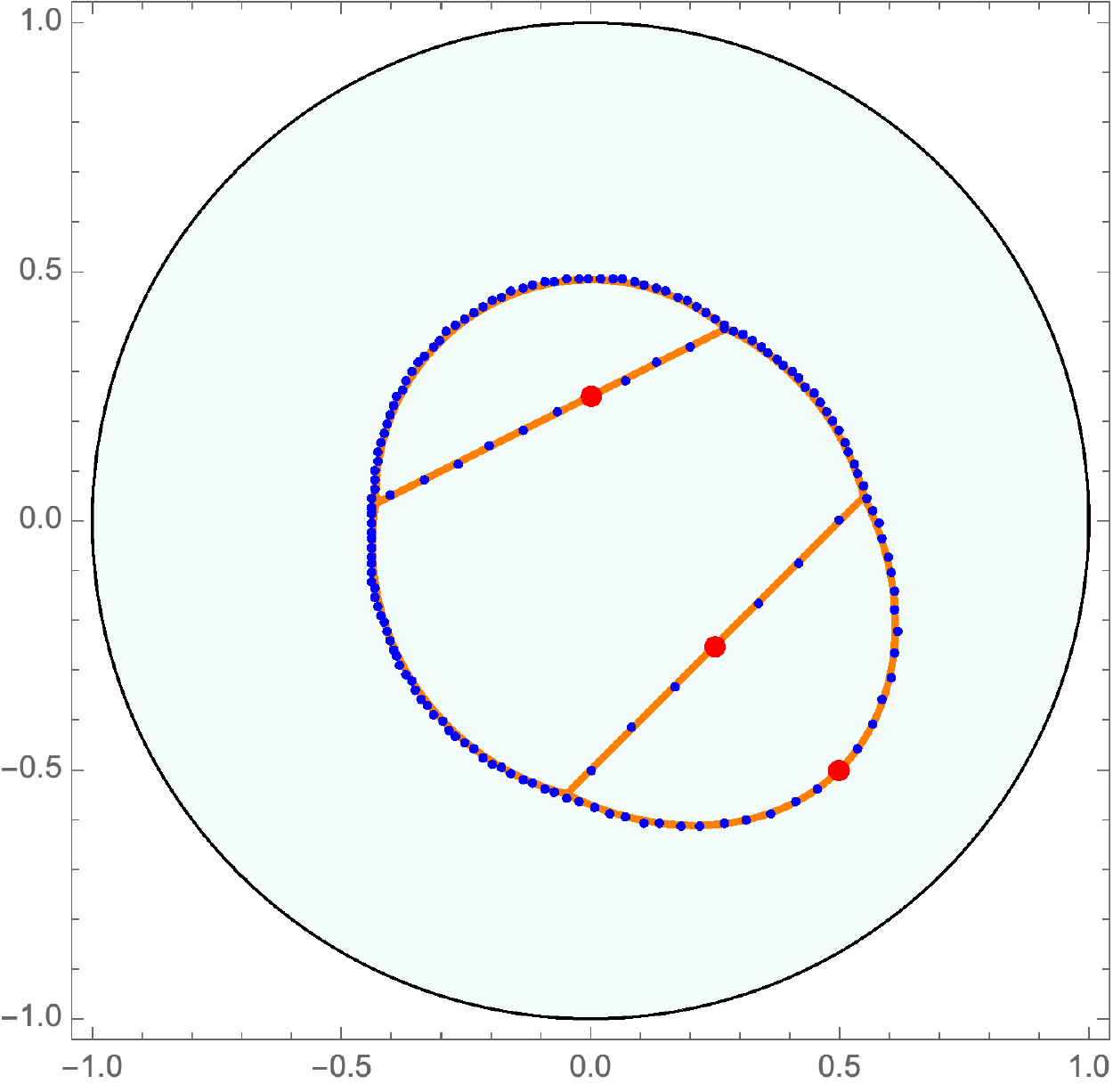}
\end{center}
 \caption{The roots of $p_n(z)$ (blue dots) with the multiple Szeg\H o curves (orange lines) for $n=200$ (first row) and $n=150$ (second row).  From the top left in clockwise order: $(a_1,a_2)=(0.5- 0.5\ii,-0.25- 0.5\ii)$, $(a_1,a_2)=(0.5-0.5\ii,-0.25-0.25\ii)$, $(a_1,a_2, a_3)=(0.5+0.5\ii,-0.5-0.5\ii,0.5-0.5\ii)$, $(a_1,a_2, a_3)=(0.5-0.5\ii,0.25-0.25\ii,0.25\ii)$.  For all cases, $c_j=1$ for all $j$.  With the current resolution of the pictures, the roots are observed right on top of the theoretical limiting curves.
} \label{szegozeros}
\end{figure}

\begin{thm}\label{thm01}
 Let $\{a_1,\dots,a_\nu\}\subset{\mathbb D}\setminus\{0\}$.
In a generic case (see Remark \ref{remark1} above), if $z$ is away from $\Gamma$,
as $N\to\infty$ such that $n/N= 1$,  the polynomial $p_n$ satisfies
\begin{equation}\label{result1}
p_n(z)=\begin{cases}
\displaystyle\frac{z^{n+\sum c}}{W(z)}\left(1+{\cal O}\left(\frac{1}{N^{\infty}}\right)\right),& z\in\Omega_{0},\vspace{0.1cm}\\
\displaystyle-\frac{\exp\big[N(\overline a_jz+\ell_j)\big](z-a_j)^{c_j}}{W_j(z)}\frac{{\sf chain}(j)}{z-a_j}\left(1+{\cal O}\left(\frac{1}{N}\right)\right),& z\in\Omega_{j}.
\end{cases}
\end{equation}
The error bounds are uniform over a compact subset in the corresponding regions.

When $z$ is near $\Gamma$ but away from $a_j$'s the strong asymptotics of $p_n$ is given by the sum of the two asymptotic expressions from the adjacent domains as below.  
\begin{equation}\label{result}
p_n(z)=\begin{cases}
\displaystyle \frac{z^{n+\sum c}}{W(z)}\left(1+{\cal O}\left(\frac{1}{N^{\infty}}\right)\right)\\ \displaystyle
\qquad\qquad -\frac{\exp\big[N(\overline a_jz+\ell_j)\big](z-a_j)^{c_j}}{W_j(z)}\frac{{\sf chain}(j)}{z-a_j}\left(1+{\cal O}\left(\frac{1}{N}\right)\right), \\
{\qquad\qquad\qquad\qquad\qquad\qquad\qquad\qquad\qquad\qquad\qquad\qquad\mbox{when $z$ near $\Gamma_{j0}$}},\vspace{0.1cm}
\\
\displaystyle
-\frac{\exp\big[N(\overline a_jz+\ell_j)\big](z-a_j)^{c_j}}{W_j(z)}\frac{{\sf chain}(j)}{z-a_j}\left(1+{\cal O}\left(\frac{1}{N}\right)\right)\vspace{0.1cm}
\\\displaystyle\qquad\qquad 
-\frac{\exp\big[N(\overline a_kz+\ell_k)\big](z-a_k)^{c_k}}{W_k(z)}\frac{{\sf chain}(k)}{z-a_k}\left(1+{\cal O}\left(\frac{1}{N}\right)\right),
\\{\qquad\qquad\qquad\qquad\qquad\qquad\qquad\qquad\qquad\qquad\qquad\qquad\mbox{when $z$ near $\Gamma_{jk}$}}.
\end{cases}
\end{equation}
The error bounds are uniform over a compact subset of $\Omega_0\cup \Omega_j\cup \Gamma_{j0}\setminus\{a\text{'s}\}$ for the former and $\Omega_k\cup \Omega_j\cup \Gamma_{jk}\setminus\{a\text{'s}\}$ for the latter.  The error bound ${\cal O}(1/N^\infty)$ stands for ${\cal O}(1/N^m)$ for all $m>0$. 
\end{thm}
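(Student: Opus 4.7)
The plan is to apply the Deift--Zhou nonlinear steepest descent method to the $(\nu+1)\times(\nu+1)$ matrix Riemann--Hilbert problem (RHP) for $Y$ posed in \cite{Lee 2017}, whose $(1,1)$-entry encodes $p_n(z)$. The role of the scalar $g$-function is played by $\Phi^L$ from \eqref{eq:PhiL} together with its complex lifts $\ell_j$ from \eqref{ell relation}, and the geometric input of Theorem~\ref{Thm1} together with the decomposition $\mathbb{C}=\Omega_0\cup\Omega_1\cup\cdots\cup\Omega_\nu\cup\Gamma$ supplies the signs needed to dominate all off-diagonal jumps. I would then run the standard cascade $Y\mapsto T\mapsto S\mapsto R$.

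The first transformation $T=Y\cdot D(z)$ conjugates by the diagonal matrix $D(z)=\mathrm{diag}\bigl(z^{-(n+\sum c)}W(z),\,e^{-N(\overline a_1 z+\ell_1)},\dots,e^{-N(\overline a_\nu z+\ell_\nu)}\bigr)$, with branches chosen domain by domain according to $W_k$ from \eqref{def wk} so that $D$ is analytic on each $\Omega_k$. By the very definition of $\Phi^L$, the $(k{+}1)$-th diagonal entry of $T$ is exponentially dominant exactly on $\Omega_k$ and subdominant elsewhere. In the second step I would open lenses along each smooth arc of $\Gamma$: the jump is factored by the usual swap of the two dominant rows/columns corresponding to the two adjacent domains, yielding a matrix $S$ whose lens jumps are $I+{\cal O}(e^{-cN})$ uniformly on compact subsets of $\Gamma\setminus\{a\text{'s}\}$ and whose remaining nontrivial jumps live on $\Gamma$ (constant) and inside small discs around each $a_j$. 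The global parametrix $P^{(\infty)}$ is an explicit piecewise constant/branch-cut matrix solving the constant jump, with branch structure dictated by $W$ and the $W_k$'s. The local parametrix $P^{(a_j)}$ is built, following the $\nu=1$ prototype \eqref{eq7}, out of the entire function $e^{\zeta}/\zeta^{c_j}-f_{c_j}(\zeta)$ in a local coordinate $\zeta_j(z)$ modeled on $-N(\overline a_j z+\ell_j - \text{(dominant exponent of the neighbor)})$, then dressed by $W_j$ and the phase factor $\widetilde\eta_{k,j}$ from \eqref{def etakjtilde}.

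With parametrices in hand, $R=S\cdot P^{-1}$ (with $P=P^{(\infty)}$ or $P^{(a_j)}$ as appropriate) solves a small-norm RHP whose jumps are $I+{\cal O}(1/N)$ on the disc boundaries and $I+{\cal O}(e^{-cN})$ elsewhere, so standard small-norm theory yields $R=I+{\cal O}(1/N)$ uniformly. Unwinding the transformations recovers \eqref{result1} on each $\Omega_k$; near a smooth arc $\Gamma_{jk}$ the lens factorization simultaneously exposes the two dominant sheets and produces the additive formulas \eqref{result}. The hard part is the construction and matching of the local parametrices $P^{(a_j)}$: the model cannot depend only on $(a_j,c_j)$ as in the $\nu=1$ case, because the chain $(k_s,\dots,k_1)$ of Definition~\ref{def chain of aj} prescribes through which neighbor $\Omega_{k_{s-1}}$ the parametrix must feed and, iterated, how the exponentials and branch factors telescope down to $\Omega_0$. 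The matching on each disc boundary must reproduce exactly the product defining ${\sf chain}(j)$ in \eqref{chain15}, and I expect this to be handled by induction along the chain, each arrow $k_i\to k_{i-1}$ contributing one factor $\widetilde\eta_{k_i,k_{i+1}}\,(a_{k_{i+1}}-a_{k_i})^{c_{k_i}}|a_{k_i}-a_{k_{i+1}}|^{2(c_{k_{i+1}}-1)}\big/[\Gamma(c_{k_{i+1}})(a_{k_i}-a_{k_{i+1}})^{c_{k_{i+1}}}]$. Tracking the branches of $W_k$ across ${\bf B}[k]$ along the whole chain is where most of the delicate bookkeeping lies.
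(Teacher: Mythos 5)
Your outline correctly identifies the method family (Deift--Zhou steepest descent on the $(\nu+1)\times(\nu+1)$ RHP with $\Phi^L$ playing the role of the $g$-function), but there is a genuine gap that would prevent the argument from closing, and the paper's actual transformations differ in ways that are not cosmetic. First, the $T$-transform in the paper is not a conjugation by a diagonal matrix $D(z)$: it conjugates by the full $\nu\times\nu$ matrices $\Psi_j(z)^{-1}$ built from the multiple-orthogonality weights, and before that there is a $\widetilde Y$-step that re-routes the contour through the branch cuts ${\bf B}$ and replaces $W\widetilde\psi_1$ by $W_j\psi_j$ inside each $\Omega_j$. The laborious branch bookkeeping along ${\bf B}[k]$ that you mention in passing is precisely the content of the paper's Section~3, and it is a prerequisite, not an afterthought.

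The more serious gap is in the error analysis. Inside $\Omega_j$ the quantity you want, $p_n(z)$, sits in the $(0,0)$ entry and is \emph{exponentially smaller} than the neighboring column entries of the parametrix (which contain the dominant $E_j(z)$). The leading behavior is therefore produced by an off-diagonal $(0,j)$ mixing, and a bare small-norm bound $R=I+{\cal O}(1/N)$ cannot identify which of the many ${\cal O}(1/N)$ corrections actually survive. Worse, the constant ${\sf chain}(j)$ carries an explicit algebraic $N$-dependence $N^{\sum_i(c_{k_i}-1)}$, so for a chain of length greater than one (or for small $c$'s) the true answer can be $o(1/N)$ and would be drowned out by an unresolved ${\cal O}(1/N)$ error. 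The paper resolves this by building a rational matrix ${\cal R}(z)$ with poles at the $a_j$'s into the global parametrix (the ``partial Schlesinger transform''): its entries $r_{p,j}(z)$ are defined by a recursive singular-part extraction along the chain, and the coefficient ${\sf chain}(j)$ is \emph{read off} from $r_{0,j}(z)$ rather than obtained by a heuristic ``induction along the chain.'' With ${\cal R}$ absorbed, the jump of the error matrix on $\partial D_{a_j}$ becomes ${\cal O}(N^{-m})$ for any prescribed $m$ (by truncating $f_{c_j}$ at order $m$), so the small-norm estimate beats every algebraic power of $N$. Your proposal, as stated, would leave the matching accuracy at ${\cal O}(1/N)$ only, which is insufficient for \eqref{result1} in $\Omega_j$.
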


In a neighborhood of $a_j\in \partial\Omega_k\cap\partial\Omega_j$, $k\neq j$, we define the local zooming coordinate $\zeta$ by
\begin{align}
\zeta(z)&=-N(\overline{a}_jz-\log z+\log a_j-|a_j|^2)&\text{when $k=0$},
\\
\zeta(z)&=-N((\overline{a}_j-\overline{a}_k)z-(\overline{a}_j-\overline{a}_k)a_j)&\text{when $k\neq 0$}.
\end{align}

\begin{thm}\label{thm02}
In a generic case (see Remark \ref{remark1} above), if $a_j\in \partial\Omega_j\cap\partial\Omega_k$, $k\neq j$, we have the following asysmptotic behavior when $N\to\infty$  such that $n/N= 1$
\begin{equation}
p_n(z)={\cal A}_k(z) \frac{\zeta^{c_j}}{\ee^{\zeta}}\left(\frac{\ee^{\zeta}}{\zeta^{c_j}}-f_{c_j}(\zeta)+{\cal O}\left(\frac{1}{N}\right)\right),\quad z\in D_{a_j},
\end{equation}
where $D_{a_j}$ is a sufficiently small disk centered at $a_j$ with a fixed radius and ${\cal A}_k(z)$ is the leading asymptotics in $\Omega_k$ as written in Theorem \ref{thm01}, i.e.,
\begin{align}
{\cal A}_k(z)=\begin{cases}
\displaystyle\frac{z^{n+\sum c}}{W(z)},& k=0,\\
\displaystyle-\frac{\exp\big[N(\overline a_kz+\ell_k)\big](z-a_k)^{c_k}}{W_k(z)}\frac{{\sf chain}(k)}{z-a_k},& k\neq 0.
\end{cases}
\end{align}
The error bound is uniform over $D_{a_j}$.
\end{thm}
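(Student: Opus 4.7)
The plan is to obtain Theorem~\ref{thm02} as a byproduct of the local parametrix construction carried out during the Deift--Zhou nonlinear steepest descent analysis of the $(\nu+1)\times(\nu+1)$ RH problem of \cite{Lee 2017}. The proof of Theorem~\ref{thm01} will already have produced a global outer parametrix $P^{\rm out}(z)$ whose appropriate entry yields $\mathcal{A}_0$ in $\Omega_0$ and $\mathcal{A}_j$ in $\Omega_j$. Inside a fixed disk $D_{a_j}$ the outer parametrix fails to be $I+\mathcal{O}(1/N)$-close to the exact solution because of the singular weight factor $|z-a_j|^{2c_j}$ and the coincidence of two competing phase functions at $a_j$, one coming from $\Omega_j$ and the other from $\Omega_k$. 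The content of Theorem~\ref{thm02} is the explicit local parametrix that heals this discrepancy and interpolates smoothly between $\mathcal{A}_k$ and $\mathcal{A}_j$.

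Concretely, I first invoke the local change of variable $\zeta(z)$ as specified in the statement, chosen so that $N$ times the difference of the two competing phase functions equals $\zeta$ up to an additive constant, and so that $\zeta$ is biholomorphic on $D_{a_j}$ and maps the relevant branch cut to $\{\zeta\le 0\}$. After this reduction the local jumps depend only on $c_j$ and involve the pair of factors $\zeta^{c_j}$ and $e^\zeta$. The resulting local model is solved explicitly using $f_{c_j}(\zeta)$: the defining property that $e^\zeta/\zeta^{c_j}-f_{c_j}(\zeta)$ be entire (see Appendix~\ref{appendix}) is precisely what allows the model solution to cross the branch cut in $\zeta$. The relevant entry of the local parametrix $P_j(z)$ is proportional to $\zeta^{c_j}e^{-\zeta}\bigl(e^\zeta/\zeta^{c_j}-f_{c_j}(\zeta)\bigr)$, which already exhibits the bracketed structure of the theorem.

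The matching $P_j(z)=P^{\rm out}(z)(I+\mathcal{O}(1/N))$ on $\partial D_{a_j}$ is then verified using the asymptotic behavior of $f_{c_j}$ for large $\zeta$: on the $\Omega_k$-side the bracket reduces to $e^\zeta/\zeta^{c_j}$, so that the prefactor collapses to $\mathcal{A}_k(z)$; on the $\Omega_j$-side the leading part of $f_{c_j}$ combined with $\zeta(z)\propto z-a_j$ produces a simple pole at $z=a_j$ along with exactly the constant needed to recover $\mathcal{A}_j(z)$. Standard small-norm RH arguments then yield the $\mathcal{O}(1/N)$ error uniformly in $D_{a_j}$, and extracting the relevant entry of the solution produces $p_n(z)$. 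The main obstacle will be the bookkeeping of phases: for general $\nu$, verifying that the $\Omega_j$-side matching reproduces the product \eqref{chain15} requires inductively tracing the phase factors $\widetilde{\eta}_{k_i,k_{i+1}}$ together with residual Gamma-function factors through the entire chain of $a_j$ back to $\Omega_0$, with each term in ${\sf chain}(j)$ corresponding to one link in that chain.
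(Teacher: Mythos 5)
Your high-level skeleton --- local coordinate $\zeta$, local parametrix built from $f_{c_j}$, outer parametrix, small-norm argument --- agrees with the paper's strategy, and your account of how the bracket $\ee^{\zeta}/\zeta^{c_j}-f_{c_j}(\zeta)$ interpolates between $\mathcal{A}_k$ and $\mathcal{A}_j$ is correct. But the error analysis as you describe it has a genuine gap: a matching $P_j(z)=P^{\rm out}(z)(I+\mathcal{O}(1/N))$ on $\partial D_{a_j}$ combined with the small-norm theorem does \emph{not} yield the stated result, and in fact fails badly. The point is that the first column of $\Phi M_j G_j^{-1}$ (see \eqref{eq phimg}) has a $(1,1)$ entry $z^{n+\sum c}/W_j(z)$ that is \emph{exponentially small} compared to the other entries $-(z-a_i)^{c_i}E_i(z)/W_i(z)$ when $z$ is near $a_j$ inside $\Omega_j$. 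So the first row of the error matrix, of size $\mathcal{O}(1/N)$, multiplied into these exponentially large lower entries, completely swamps the main term and gives nothing like the claimed formula. More concretely, without further improvement the asymptotics of $p_n$ in $\Omega_j$ would not even pick up the factor $\exp\big[N(\overline a_kz+\ell_k)\big]$, since that term lives in the off-diagonal entries of the parametrix, not the $(1,1)$ entry.

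The step you are missing is the partial Schlesinger transform of Section \ref{section local}. The paper improves the global parametrix from $\Phi$ to ${\cal R}\Phi$, where ${\cal R}$ is a rational matrix built from the first $m$ terms of the large-$\zeta$ expansion of $f_{c_j}$; this raises the matching order on $\partial D_{a_j}$ from $\mathcal{O}(1/N)$ to $\mathcal{O}(N^{c_j-1-m})$ with $m$ arbitrary, so the small-norm theorem gives ${\cal E}=I+\mathcal{O}(1/N^\infty)$. It is precisely the entries $r_{0,j}(z)$ of ${\cal R}$ that carry the chain constants ${\sf chain}(j)$ and produce the dominant behavior in $\Omega_j$. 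The $\mathcal{O}(1/N)$ appearing \emph{additively inside the bracket} in Theorem \ref{thm02} does not come from the small-norm theorem at all: it comes from the explicit estimates \eqref{estimate h} and \eqref{estimate h2}, which compare $r_{0,j}(z)$ (suitably rescaled) with the first term of $f_{c_j}$'s expansion and show that the discrepancy $h_j(z)$ is $\mathcal{O}(1/N)$ uniformly over $D_{a_j}$. You need to add the ${\cal R}$ step to your plan, track that the error matrix is $\mathcal{O}(1/N^\infty)$ rather than $\mathcal{O}(1/N)$, and re-derive the $\mathcal{O}(1/N)$ term from the $h_j$ estimate rather than from the small-norm theorem.
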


\begin{remark}
We did not pursue the case when one of the $a_j$'s is at the origin.  We believe that such case yields the same results. 
\end{remark}

\noindent {\bf Plan of the paper:} In Section \ref{section msc}, we prove Theorem \ref{Thm1} which essentially states that the multiple Szeg\H o curve is uniquely given in terms of $\{a_j\}_{j=1}^\nu$.  The basic idea is to determine the constants $l_j$'s such that the maximal function $\Phi^L$ \eqref{eq:PhiL}, which is in fact the logarithmic potential of the limiting zeros, has the discontinuities exactly at each $a_j$'s.   In the proof, we present an algorithm to determine $l_j$'s in finite steps, which we also use to generate the figures of multiple Szeg\H o curves in this paper.

In Section \ref{section noninteger}, we construct the $\nu\times\nu$ matrix function $\Psi(z)$ that will be used in the subsequent Riemann-Hilbert analysis.  
This construction is needed mostly to handle the non-integer values of $c$'s. 
We need a systematic placement of all the branch cuts, especially ${\bf B}[k]$'s, such that the jumps along the branch cuts decay properly at the end. 
One can skip this section and simply put $\Psi(z)=\widetilde{\Psi}(z)$ \eqref{def psitilde and hat} if $c$'s are all integers.

In Section \ref{steepest} we apply the nonlinear steepest decent method \cite{DKMVZ 1999} on the corresponding Riemann-Hilbert problem of size $(\nu+1)\times(\nu+1)$ and perform successive transformations, $Y\to \widetilde{Y}$ \eqref{tildeY}, $\widetilde{Y}\to T$ \eqref{T} and $T\to S$ \eqref{def S}.  We define the global parametrix $\Phi$ \eqref{def global phi} that satisfies the approximate Riemann-Hilbert problem of $S$.

In Section \ref{section local} we construct the local parametrices near each $a_j$'s to match the global parametrix. We also construct a rational function ${\cal R}$ to improve the global parametrix $\Phi$ into ${\cal R}\Phi$ such that to match the local parametrices better. This construction, called ``partial Schlesinger transform'' has been introduced in \cite{Bertola 2008} and also used in \cite{Lee 2017}.

In Section \ref{section strong}, based on the global and the local parametrices in the previous sections, we define $S^\infty$ \eqref{Sfinal}, which is the strong asymptotics of $S$.  By applying the small norm theorem to $S^\infty S^{-1}$ we prove Theorem \ref{thm01} and Theorem \ref{thm02}. 

In Appendix \ref{appendix} we explain the special function, a certain truncation of the exponential function, that appears in the local parametrix.

\noindent {\bf Acknowledgement.}  We thank Tom Claeys, Alfredo Dea\~no, Arno Kuijlaars and Nick Simm for the discussions and their interests in this project. The second author is supported by the Fonds de la Recherche Scientifique-FNRS under EOS project O013018F.  

\section{Multiple \texorpdfstring{Szeg\H{o}}{szego} curve}\label{section msc}

In this section we define the multiple Szeg\H{o} curve that depends on the set of points:
\begin{equation}\nonumber
\{a_1,\dots,a_\nu\} \subset \DD, \quad\nu\geq 2.
\end{equation}
Let $\Lambda$ be an $\nu$-dimensional vector with real entries $$ \Lambda=(\lambda_1,\dots,\lambda_\nu),$$ 
and $\Phi^\Lambda$ be the continuous and piecewise smooth function given by
\begin{equation}\label{eq:Phi} \Phi^\Lambda(z) = \max \{\log|z|, {\rm Re}(\overline a_1 z)+\lambda_1,\dots,{\rm Re}(\overline a_\nu z)+\lambda_\nu\}. 
\end{equation}
Then we define the regions,
\begin{equation}\label{eq:K}
\begin{split}
K^\Lambda_j &= \{z\in \DD|\Phi^\Lambda(z)={\rm Re}(\overline a_j z)+\lambda_j\},\quad j=1,\dots,\nu,
\\
K^\Lambda_0 &=\{z\in \DD|\Phi^\Lambda(z)=\log|z|\}\cup \DD^c.
\end{split}
\end{equation}
One can visualize the function $\Phi^\Lambda$ as follows. The graphs of the $\nu+1$ functions inside the $\max$ function in \eqref{eq:Phi} give $\nu+1$ surfaces among which the $\nu$ of them are planes.  The aerial view of these mutually intersecting surfaces selects the maximal function in each corresponding region defined in \eqref{eq:K}.   As one increases or decreases $\lambda_j$ the corresponding surface moves up or down respectively, and the regions $K^\Lambda_j$ expands or shrinks as well.
\bigskip 

\begin{thm}\label{thm1}
There exists a {\em maximal vector} $L = (l_1,\dots,l_\nu)$ such that 
\begin{equation}\label{eq:aj} a_j \notin {\rm Int}(K^L_j) \mbox{ for all } j=1,\dots, \nu.
\end{equation}
By {\em maximal vector} we mean that, for any other vector $L'>L$, the property \eqref{eq:aj} does not hold for some $j$.   We say $L'>L$ if the the vector $L'-L$ is a nonzero vector without any negative entry.
\end{thm}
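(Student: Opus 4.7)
The plan is to write the set $\mathcal{L}:=\{\Lambda\in\RR^\nu:\eqref{eq:aj}\text{ holds}\}$ as the closed solution set of a finite system of continuous inequalities, show each coordinate is uniformly bounded above on $\mathcal{L}$, and then invoke Zorn's lemma to produce a maximal element.

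First I would convert \eqref{eq:aj} into an algebraic condition. The point $a_j$ lies in $\text{Int}(K_j^\Lambda)$ iff the function ${\rm Re}(\overline{a}_jz)+\lambda_j$ is the \emph{strict} maximum among the $\nu+1$ competitors in \eqref{eq:Phi} at $z=a_j$: strict dominance at $a_j$ persists in a neighborhood by continuity, while equality with another competitor at $a_j$ makes their difference a function vanishing at $a_j$ with nonzero gradient there (using $|a_j|<1$ for the comparison with $\log|z|$, and pairwise distinctness of the $a_j$'s for the comparison with the other planes), so $a_j$ cannot be interior. Hence
\begin{equation*}\mathcal{L}=\Big\{\Lambda\in\RR^\nu:\;|a_j|^2+\lambda_j\leq \max\big\{\log|a_j|,\,\max_{k\neq j}({\rm Re}(\overline{a}_ka_j)+\lambda_k)\big\}\text{ for all }j=1,\dots,\nu\Big\},\end{equation*}
which is closed in $\RR^\nu$. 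The set is nonempty: $\Lambda=(-M,\dots,-M)$ with $M$ sufficiently large satisfies every inequality because all plane values sink below $\log|a_j|$.

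The main content of the proof is a uniform upper bound on each coordinate of $\mathcal{L}$. I would proceed by an iterative chain argument. Fix $\Lambda\in\mathcal{L}$ and any index $j_0$. The inequality at $a_{j_0}$ forces one of two alternatives: either $\log|a_{j_0}|$ attains the maximum on the right, delivering the explicit bound $\lambda_{j_0}\leq \log|a_{j_0}|-|a_{j_0}|^2<0$, or there is an index $j_1\neq j_0$ with $\lambda_{j_1}-\lambda_{j_0}\geq |a_{j_0}|^2-{\rm Re}(\overline{a}_{j_1}a_{j_0})$. Iterating the second alternative produces a chain $j_0\to j_1\to j_2\to\cdots$ of distinct consecutive indices, which cannot close into a cycle $j_m=j_0$: summing the chain inequalities around the cycle and using the identity
\begin{equation*}\sum_{l=0}^{m-1}\Big(|a_{j_l}|^2-{\rm Re}(\overline{a}_{j_{l+1}}a_{j_l})\Big)=\tfrac{1}{2}\sum_{l=0}^{m-1}|a_{j_l}-a_{j_{l+1}}|^2\end{equation*}
would force every consecutive pair $a_{j_l},a_{j_{l+1}}$ to coincide, contradicting distinctness. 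Pigeonhole on the $\nu$-element index set then forces the first alternative within at most $\nu$ steps, and telescoping the chain of inequalities yields an explicit bound on $\lambda_{j_0}$ in terms of the $a_j$'s alone.

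With $\mathcal{L}$ closed and every coordinate bounded above, every totally ordered chain in $\mathcal{L}$ has its coordinate-wise supremum as an upper bound; by total order the supremum is approached by a monotone sequence drawn from the chain, so it belongs to $\mathcal{L}$ by closedness. Zorn's lemma then delivers a maximal element $L$, completing the proof. The principal obstacle I anticipate is the non-cycling step of the boundedness argument: monotonicity of the natural fixed-point map $T(\Lambda)_j:=\max\{\log|a_j|,\max_{k\neq j}({\rm Re}(\overline{a}_ka_j)+\lambda_k)\}-|a_j|^2$ already shows that any $\Lambda\in\mathcal{L}$ with $\Lambda\not=T(\Lambda)$ can be pushed up, but ruling out indefinite growth of $\max_j\lambda_j$ requires the cyclic identity above together with the distinctness of the $a_j$'s.
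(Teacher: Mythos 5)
Your proof is correct and its central technical step—bounding coordinates of $\mathcal{S}$ by chaining inequalities $j_0\to j_1\to\cdots$ and ruling out a cycle via the identity $\sum_l(|a_{j_l}|^2-{\rm Re}(\overline a_{j_{l+1}}a_{j_l}))=\tfrac12\sum_l|a_{j_l}-a_{j_{l+1}}|^2$—is exactly the paper's Lemma~\ref{lemma1}, with the same contradiction from distinctness of the $a_j$'s.

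Where you diverge is at the two ends. At the front, you reformulate $\mathcal{S}$ explicitly as the solution set of the non-strict inequalities $|a_j|^2+\lambda_j\leq\max\{\log|a_j|,\max_{k\neq j}({\rm Re}(\overline a_k a_j)+\lambda_k)\}$ and justify the equivalence with $a_j\notin{\rm Int}(K^\Lambda_j)$ by a nonvanishing-gradient argument; the paper simply asserts closedness of $\mathcal{S}$ without argument, so your version is more explicit on a point the paper glosses over (and the case $a_j=0$, which the paper's later algorithm treats separately, is worth a one-line remark in your gradient discussion since the comparison with $\log|z|$ then degenerates). At the back, you invoke Zorn's lemma, whereas the paper builds the maximal element constructively by taking iterated suprema $l_1=\max_{\mathcal{S}}\lambda_1$, then $l_j=\sup\{\lambda_j:\Lambda\in\mathcal{S},\,\lambda_1=l_1,\dots,\lambda_{j-1}=l_{j-1}\}$, each step legitimate by closedness. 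Both routes are valid; the paper's is more elementary and yields an explicit candidate (which it reuses in the uniqueness/algorithm portion of Theorem~\ref{Thm1}), while yours is shorter once boundedness and closedness are in hand. Your upper-bound-in-chains step is also fine, though terse: for a totally ordered chain one should pass to a monotone sequence along which $\sum_j\lambda_j$ increases to its supremum (so that it converges coordinatewise to a point of $\mathcal{S}$), rather than taking the raw coordinatewise supremum, which need not lie in the closure of the chain.
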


The property \eqref{eq:aj} defines a closed set in the parameter space of $\Lambda$, i.e., the set $${\cal S}=\{ \Lambda : a_j \notin {\rm Int}(K^\Lambda_j) \mbox{ for all } j=1,\dots, \nu \}\subset{\mathbb R}^\nu $$ is closed. The set ${\cal S}$ is also nonempty.
When $a_j\neq 0$ we can choose $\lambda_j$ to be sufficiently small such that $\log|a_j|>|a_j|^2+\lambda_j$ which leads to $a_j\notin K^\Lambda_j$. Then $\lambda_k$ for $k\neq j$ can be chosen sufficiently small such that ${\rm Re}(\overline{a}_j a_k)+\lambda_j>|a_k|^2+\lambda_k$ hence $a_k\notin K^\Lambda_k$.  In this way, we can find $\Lambda$ such that $a_j \notin K^L_j$ for all $j$'s.

\begin{lemma}\label{lemma1}
If $\Lambda\in{\cal S}$ there exists some $j$ such that $\Phi^\Lambda(a_j)=\log|a_j|$.  
\end{lemma}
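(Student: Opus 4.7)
The plan is to argue by contradiction. Suppose $\Phi^\Lambda(a_j) > \log|a_j|$ for every $j\in\{1,\dots,\nu\}$, since $\Phi^\Lambda(a_j) \geq \log|a_j|$ always holds. Then the maximum in \eqref{eq:Phi} evaluated at $a_j$ is attained by one of the affine pieces, so there is some index $k=k(j)\in\{1,\dots,\nu\}$ with
\[
\Phi^\Lambda(a_j) = \mathrm{Re}(\overline a_{k} a_j) + \lambda_{k}.
\]
I would then use the hypothesis $a_j\notin\mathrm{Int}(K^\Lambda_j)$ to show $k(j)$ can always be chosen to be different from $j$. Indeed, if $a_j\notin K^\Lambda_j$ there is nothing to check; if $a_j\in K^\Lambda_j$, then the assumption forces $a_j\in\partial K^\Lambda_j$, which means the maximum at $a_j$ is tied between at least two of the defining functions. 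The tied function cannot be $\log|z|$ by our standing assumption, so it must be some plane indexed by $k\neq j$.

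Thus one obtains a map $\phi:\{1,\dots,\nu\}\to\{1,\dots,\nu\}$ with $\phi(j)\neq j$ for all $j$. On a finite set, any such map contains a cycle of length $m\geq 2$, say $j_1\to j_2\to\cdots\to j_m\to j_1$ with distinct indices. Along this cycle we have
\[
\Phi^\Lambda(a_{j_i}) = \mathrm{Re}(\overline a_{j_{i+1}} a_{j_i}) + \lambda_{j_{i+1}} \quad\text{(indices mod $m$)},
\]
and, since $\Phi^\Lambda(a_{j_i})\geq |a_{j_i}|^2+\lambda_{j_i}$ from the $j_i$-th plane, this yields
\[
\mathrm{Re}(\overline a_{j_{i+1}} a_{j_i}) + \lambda_{j_{i+1}} \;\geq\; |a_{j_i}|^2 + \lambda_{j_i},\qquad i=1,\dots,m.
\]

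The final step is to sum these $m$ inequalities cyclically; the $\lambda$'s telescope and one obtains
\[
\sum_{i=1}^{m}\mathrm{Re}(\overline a_{j_{i+1}} a_{j_i}) \;\geq\; \sum_{i=1}^{m}|a_{j_i}|^2.
\]
The reverse inequality, however, holds term-by-term from $\mathrm{Re}(\overline a_{j_{i+1}} a_{j_i})\leq |a_{j_{i+1}}||a_{j_i}|\leq \tfrac{1}{2}(|a_{j_{i+1}}|^2+|a_{j_i}|^2)$, so equality must hold throughout. Equality in Cauchy--Schwarz forces $a_{j_i}$ and $a_{j_{i+1}}$ to share the same argument, and equality in the AM--GM step forces $|a_{j_i}|=|a_{j_{i+1}}|$; together $a_{j_i}=a_{j_{i+1}}$, contradicting the distinctness of the $a_j$'s. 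Hence the contrary assumption fails, completing the proof.

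The only subtle point I anticipate is justifying the choice $k(j)\neq j$ when $a_j$ lies on $\partial K_j^\Lambda$; this is where the hypothesis $a_j\notin\mathrm{Int}(K_j^\Lambda)$ (rather than the stronger $a_j\notin K_j^\Lambda$) is used in an essential way.
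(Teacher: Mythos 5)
Your proof is correct and follows essentially the same route as the paper's: contradiction, extracting a cycle $j_1\to j_2\to\dots\to j_m\to j_1$, summing the resulting inequalities so the $\lambda$'s telescope, and then using positivity to force $a_{j_1}=\dots=a_{j_m}$. The only cosmetic differences are that you spell out the case analysis justifying $k(j)\neq j$ (which the paper states without elaboration) and you close the argument by termwise Cauchy--Schwarz and AM--GM rather than the paper's equivalent identity $\sum_i\big(|a_{j_i}|^2-\mathrm{Re}(\overline a_{j_{i+1}}a_{j_i})\big)=\tfrac12\sum_i|a_{j_i}-a_{j_{i+1}}|^2$.
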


The lemma gives the upper bound on ${\cal S}$ since, if $\Lambda\in{\cal S}$, ${\rm Re}(\overline a_k a_j)+\lambda_k\leq \Phi^\Lambda(a_j)=\log|a_j|$ and, therefore,
$\lambda_k\leq \log|a_j|-{\rm Re}(\overline a_k a_j)\leq \log|a_j|+|a_j|$. 

To prove the {\em existence of a maximal vector} let $l_1=\sup_{\cal S}\lambda_1=\max_{\cal S}\lambda_1$ where we use that ${\cal S}$ is closed.  Inductively we define $$l_j = \sup\{\lambda_j|\Lambda\in{\cal S}, \lambda_1=l_1,\dots,\lambda_{j-1}=l_{j-1}\}.$$
Then $L=(l_1,\dots,l_\nu)\in{\cal S}$ and it is a maximal vector.  Hence Theorem \ref{thm1} is proven with the following proof of Lemma \ref{lemma1}.

\begin{proof}{(Proof of Lemma \ref{lemma1})}
Assuming otherwise, for $\Lambda\in{\cal S}$ and for any $j$, there exists $k\neq j$ such that
 $$\Phi^\Lambda(a_j)={\rm Re}(\overline a_k a_j) + \lambda_k .  $$   Since the condition \eqref{eq:aj} means that $|a_j|^2 + \lambda_j \leq \Phi^\Lambda(a_j)$ we get that
$$ |a_j|^2 -{\rm Re}(\overline a_k a_j)  \leq  \lambda_k-\lambda_j\quad \mbox{ for some $k\neq j$}. $$
Let us use the notation $j\leadsto k$ to represent the above inequality.   Repeating the argument, there exists some $\ell\neq k$ such that $k\leadsto\ell$. Since the index set is finite, the chain of arrows (i.e. the chain of inequalities), $j\leadsto k\leadsto\ell\leadsto\dots$, must eventually repeat some entry and form a closed loop.  Without losing generality let the closed loop be $1\leadsto 2\leadsto \dots \leadsto s\leadsto 1$.  Adding up the corresponding inequalities, we get
\begin{equation}\label{eq:loop}
\begin{split}
 \big(|a_1|^2 -{\rm Re}(\overline a_2 a_1)\big) + \big(|a_2|^2 -{\rm Re}(\overline a_3 a_2)\big) + \dots +\big(|a_s|^2 -{\rm Re}(\overline a_1 a_s)\big) 
 \\
\leq (\lambda_1 - \lambda_2) + (\lambda_2 - \lambda_3)+\dots (\lambda_s - \lambda_1) =0.
\end{split}
\end{equation}
The left hand side is the half the inner product, $({\bf A} - \widehat{\bf A})\cdot({\bf A} - \widehat{\bf A})^*$ (with the complex conjugation denoted by $*$), where the vectors ${\bf A}$ and $\widehat{\bf A}$ are given by
\begin{align}\nonumber
{\bf A} = ( a_1,\dots, a_s),\quad \widehat{\bf A} = ( a_2, a_3,\dots, a_s, a_1). 
\end{align}
This leads to $a_1= a_2=\dots = a_s$, a contradiction.
\end{proof}

If $L$ be maximal in the sense of Theorem \ref{thm1} and if $a_j\notin K^L_j$ for some $j$ then one can increase $l_j$ slightly without breaking the condition \eqref{eq:aj}.
This shows that the maximal $L$ occurs only if every $a_j$ is in $\partial K^L_j$.
Then there are two possibilities. 
\begin{align}
\label{eq:0} &\Phi^L(a_j)=|a_j|^2+l_j = \log|a_j|  \quad &&\mbox{if $a_j\in\partial K^L_0$}, 
\\
\label{eq:arrow}
&\Phi^L(a_j)=|a_j|^2 + l_j  =  {\rm Re}(\overline a_k a_j) + l_k \quad &&\mbox{if $a_j\in\partial K^L_k$ for some $k\neq j$.}
\end{align}
According to the notation defined in \eqref{def-arrow} we note that the former case corresponds to $j\to 0$ and the latter case corresponds to $j\to k$.

\begin{lemma}\label{lem:chain}
Given $j\neq 0$, the chain of arrows, $j\to k\to \ell\to \dots$, eventually leads to $\dots \to 0$ without repeating any entry. 
\end{lemma}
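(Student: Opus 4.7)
The plan is to separate the claim into two independent facts: (i) every index $j\neq 0$ admits at least one outgoing arrow, so the chain can always be extended until it hits $0$; and (ii) the chain cannot revisit any index in $\{1,\dots,\nu\}$. Since the index set is finite, these two facts together force termination at $0$ in finitely many steps without repeats.

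Step (i) is essentially already in hand. The maximality argument preceding \eqref{eq:0}--\eqref{eq:arrow} establishes that $a_j \in \partial K^L_j$ for every $j\in\{1,\dots,\nu\}$, and hence $a_j\in\partial K^L_k$ for some $k\neq j$, with $k=0$ allowed. By the definition \eqref{def-arrow}, this gives the outgoing arrow $j\to k$. So as long as the current endpoint of the chain is not $0$, the chain can be prolonged by one more arrow.

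For step (ii), I argue by contradiction using the same squared-distance identity that drives Lemma~\ref{lemma1}. Suppose the chain contains a closed loop $k_1\to k_2\to\dots\to k_s\to k_1$ with all $k_i\in\{1,\dots,\nu\}$. The case $s=1$ is excluded by the constraint $j\neq k$ built into \eqref{def-arrow}. For $s\ge 2$, none of the arrows can be of the type \eqref{eq:0}, so each must be of the type \eqref{eq:arrow}, giving
\[
|a_{k_i}|^2 + l_{k_i} = \mathrm{Re}(\overline a_{k_{i+1}} a_{k_i}) + l_{k_{i+1}}, \qquad i=1,\dots,s \pmod s.
\]
Summing around the loop, the $l$-terms telescope away, and using $2\sum_i |a_{k_i}|^2=\sum_i(|a_{k_i}|^2+|a_{k_{i+1}}|^2)$ one rewrites the result as $\sum_{i=1}^s |a_{k_i}-a_{k_{i+1}}|^2 = 0$, forcing $a_{k_1}=\dots=a_{k_s}$. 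This contradicts the assumed distinctness of the points, ruling out any loop.

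The main conceptual content is thus borrowed wholesale from Lemma~\ref{lemma1}; the only new input is that equalities \eqref{eq:arrow} hold along the arrows (not merely inequalities as in the $\leadsto$ relation), which actually makes the calculation slightly cleaner. I do not anticipate any obstacle: once (i) is read off from the maximality discussion and (ii) from the loop-sum identity, concatenation finishes the proof.
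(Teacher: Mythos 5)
Your proposal is correct and follows the same route as the paper: the paper's proof of Lemma~\ref{lem:chain} likewise forms the loop-sum identity with the equalities \eqref{eq:arrow}, telescopes the $l$-terms to zero, and invokes the argument after \eqref{eq:loop} (the squared-distance rewriting $\sum_i|a_{k_i}-a_{k_{i+1}}|^2=0$) to get a contradiction with the distinctness of the $a_j$'s. Your presentation is slightly more explicit in separating out the two ingredients — that an outgoing arrow always exists (from the maximality discussion before \eqref{eq:0}--\eqref{eq:arrow}), and that no loop of nonzero indices can occur — and in handling $s=1$ and the exclusion of type-\eqref{eq:0} arrows inside a loop, but the mathematical content is identical.
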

\begin{proof}
Idea of the proof is similar to that of Lemma \ref{lemma1}.
Given successive relations, $j\to k \to \ell \to \dots$, it is enough to show that the chain of arrows never visits any nonzero number twice.  To prove this statement, assume that we have a loop $j_1\to j_2\to \dots \to j_s\to j_1$ with all $j$'s being nonzero.  We get
\begin{align}\nonumber
\begin{split}
 \big(|a_{j_1}|^2 -{\rm Re}(\overline a_{j_2} a_{j_1})\big) + \big(|a_{j_2}|^2 -{\rm Re}(\overline a_{j_3} a_{j_2})\big) + \dots +\big(|a_{j_s}|^2 -{\rm Re}(\overline a_{j_1} a_{j_s})\big) 
 \\
 =  (l_1 - l_2) + (l_2 - l_3)+\dots (l_s - l_1) =0.
 \end{split}
\end{align}
By the argument after the equation \eqref{eq:loop}, we obtain $a_{j_1}=a_{j_2}=\dots=a_{j_s}$, a contradiction.
\end{proof}

\begin{defn}\label{def chain}
From Lemma \ref{lem:chain}, for each $a_j$, there exists chains $j\to\dots\to 0$.   We define {\em the level of $a_j$} by the smallest number of arrows among all the chains that starts with $j$.  For example, the level of $a_j$ is one if $j\to0$.  Lemma \ref{lem:chain} says that the level of $a_j$ should be $\leq \nu$.
\end{defn}

\begin{remark}
For a generic choice of $\{a_1,\dots,a_m\}\subset {\mathbb D}$, an $a_j$ is adjacent to exactly two regions, i.e., $a_j\in K^L_j\cap K^L_k$ for exactly one $k\neq j$ among $0\leq k\leq m$.  It means that $j\to k$.  For a generic case there exists a unique chain $j\to \dots \to 0$ for each $a_j$.   For a non-generic case $a_j$ can be at the boundary of three or more regions. To avoid too much technicality we do not consider such case.   See Figure \ref{nongeneric case} for a non-generic case.    
\end{remark}

\begin{proof}(Proof of Theorem \ref{Thm1})  Since the existence part is proven in Theorem \ref{thm1} we only prove the uniqueness.
We claim that the following iterative steps finds $L=(l_1,\dots,l_\nu)$ in Theorem \ref{Thm1}, hence $L$ is unique.   
\\\noindent{\bf Algorithm to find $L$.}
\begin{itemize}
  \setlength\itemsep{-0.2em}
\item[1.] Set $\lambda_j = \log|a_j|-|a_j|^2$ for all $j=1,\dots,\nu$.  (If $a_j=0$ then set $\lambda_j= \log|a_k|-|a_k|^2$ for some $k\neq j$.)
\item[2.] Define $\widetilde \lambda_j= \Phi^\Lambda(a_j)-|a_j|^2$ for all $j$. Note that $\widetilde \lambda_j\geq \lambda_j$ since $\Phi^\Lambda(a_j)\geq |a_j|^2+\lambda_j$.
\item[3.] Redefine $\lambda_j = \widetilde\lambda_j$ and, accordingly, $\Lambda$. 
\item[4.] Repeat the above two steps $\nu$ times.  
\item[5.] Set $L=\Lambda$.
\end{itemize}
If $a_j$ is of level one, i.e. $j\to 0$, $\lambda_j=l_j$ is obtained by the step 1.  Since, for all $j$, 
$|a_j|^2+l_j =\Phi^L(a_j)\geq \log|a_j|$, we get $\Lambda\leq L$ after the step 1, i.e. $\lambda_j\leq l_j$ for all $j$.    
In the prospect of using induction, let us assume that $\lambda_j=l_j$ for all the $a_j$'s up to the $k$th level while $\Lambda\leq L$.  Let $a_q$ be of level $k+1$, i.e. $q\to q'\to\dots 0$ where $a_{q'}$ is of level $k$.
Since we already have $\lambda_{q'}=l_{q'}$ by the assumption, the step 2 gives $\widetilde\lambda_q=\Phi^\Lambda(a_q)-|a_q|^2\geq {\rm Re}(\overline{a}_{q'} a_q)+l_{q'}-|a_q|^2=l_q$  where the last equality is from $q\to q'$.  On the other hand, we have, for all $j$, $\widetilde\lambda_j=\Phi^\Lambda(a_j)-|a_j|^2\leq \Phi^L(a_j)-|a_j|^2=l_j$ since $\Lambda\leq L$.   The last two sentences lead to $\widetilde\lambda_q=l_q$ and $\widetilde\Lambda\leq L$.  The step 3 will then give $\lambda_q=l_q$ and $\Lambda\leq L$.   This shows that the step 2 and 3 can be repeated inductively.   
 
Since the largest level of $a_j$ is $\leq\nu$, the induction will give $\Lambda = L$ after $\nu$ iterations.

This ends the proof of Theorem \ref{Thm1}.
\end{proof}

Let us repeat the definition of $\Gamma$ in \eqref{def-gamma} using that $\Omega_j={\rm Int} K_j^L$ as can be seen from \eqref{def omegaj}.  
\begin{defn}\label{def msc}
(Multiple Szeg\H{o} curve)  Given $\{a_j\}_{j=1}^\nu\subset \DD$, we define the {\em multiple Szeg\H{o} curve}
$$ \Gamma = \bigcup_{j=1}^\nu\partial K^L_j,$$
where $K^L_j$ are defined by \eqref{eq:K} and $L$ is the unique vector that is asserted in Theorem \ref{Thm1} and can be obtained explicitly  by the iterative algorithm.  We define the oriented arc,
\begin{equation}\label{def gammajk}
   \Gamma_{jk} = K^L_j\cap K^L_k , \quad j\neq k,\quad  \{j,k\} \subset  \{0,1,\dots,\nu\},  
\end{equation}
whose orientation is such that $K^L_j$ sits to the left with respect to the traveller who follows the orientation along the arc, i.e., $K^L_j$ is at the + side of $\Gamma_{jk}$. 
\end{defn}

From the definition, the contours, $\Gamma_{jk}$ and $\Gamma_{kj}$, have the opposite orientations.  Also, $\Gamma_{jk}$ will be a straight line segment only if both indices are nonzero.

\begin{lemma}
For $\{a_j\}_{j=1}^\nu\subset \DD$, the corresponding multiple Szeg\H{o} curve $\Gamma$ is in $\DD$, i.e. $\partial K_0\subset \DD$.
\end{lemma}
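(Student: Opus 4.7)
My plan is to reduce the lemma to the pointwise strict inequality $|a_j| + l_j < 0$ for every $j \in \{1,\dots,\nu\}$. Granting this, for any $z \in \partial \DD$ each linear piece of $\Phi^L$ satisfies ${\rm Re}(\overline a_j z) + l_j \leq |a_j| + l_j < 0 = \log|z|$, so $\Phi^L(z) = \log|z|$. By continuity this persists on an open neighborhood $U$ of $\partial \DD$, placing $U \subset K_0^L$. Combined with $\CC \setminus K_0^L \subset \DD$ (which forces $\partial K_0 \subset \overline{\DD}$), this yields $\partial K_0 \subset \DD$.

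To establish $|a_j| + l_j < 0$, I would exploit the chain $(k_s, k_{s-1}, \dots, k_1)$ of $a_j$ from Definition~\ref{def chain of aj} together with the recursive identities \eqref{l relation}. Writing $b_i = a_{k_i}$, these read $l_{k_1} = \log|b_1| - |b_1|^2$ and $l_{k_i} - l_{k_{i-1}} = {\rm Re}(\overline b_{i-1} b_i) - |b_i|^2$ for $i \geq 2$. The polarization identity $2\,{\rm Re}(\overline b_{i-1} b_i) = |b_{i-1}|^2 + |b_i|^2 - |b_i - b_{i-1}|^2$ rewrites each increment in a form whose telescoping sum collapses cleanly, giving
\[
l_j = l_{k_s} = \log|b_1| - \tfrac{1}{2}\big(|b_1|^2 + |b_s|^2\big) - \tfrac{1}{2}\sum_{i=2}^s |b_i - b_{i-1}|^2.
\]
From here, AM-GM gives $-\tfrac{1}{2}(|b_1|^2 + |b_s|^2) \leq -|b_1||b_s|$, and discarding the nonpositive sum of squares yields $|a_j| + l_j \leq \log|b_1| + |b_s|(1-|b_1|)$. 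Since $|b_s| \leq 1$ and $1-|b_1| > 0$, this is bounded above by $\log|b_1| + 1 - |b_1|$, and the elementary inequality $\log x + 1 - x < 0$ for $x \in (0,1)$ (concavity, with unique zero at $x=1$) closes the estimate.

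The main obstacle is identifying the right inductive invariant: a naive induction on chain level using only the bound $|a_{k_{i-1}}| + l_{k_{i-1}} < 0$ fails to propagate when $|a_{k_{i-1}}| < |a_{k_i}|$, because the available estimate $|a_{k_{i-1}}|(|a_{k_i}|-1)$ does not improve to $|a_{k_i}|(|a_{k_i}|-1)$. The telescoped closed form above sidesteps this by keeping both endpoints $b_1$ and $b_s$ of the chain symmetrically in the final bound, reducing every chain length to the same one-variable scalar inequality familiar from the $\nu=1$ Szeg\H o-curve computation.
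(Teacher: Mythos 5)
Your proof is correct. The key move --- telescoping the chain relations via the polarization identity $2\,\mathrm{Re}(\overline b_{i-1}b_i)=|b_{i-1}|^2+|b_i|^2-|b_i-b_{i-1}|^2$ to obtain the closed form
\[
l_j=\log|b_1|-\tfrac12\bigl(|b_1|^2+|b_s|^2\bigr)-\tfrac12\sum_{i=2}^{s}|b_i-b_{i-1}|^2
\]
and then showing $|a_j|+l_j<0$ --- is sound, and the concluding inequality $\log x+1-x<0$ for $x\in(0,1)$ is applied correctly. The passage from the pointwise bound to $\partial\DD\subset\mathrm{Int}\,K_0^L$ is also fine.

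This is a genuinely different route from the paper's, though the two proofs share the same algebraic kernel. The paper first establishes the global sandwich $\log|z|\le\Phi^L(z)\le\tfrac12(|z|^2-1)$ on every $K_j$ by chain induction (the inductive step uses that $\tfrac12(|z|^2-1)-\mathrm{Re}(\overline a_j z)-l_j=\tfrac12(|z-a_j|^2-|a_j|^2-1)-l_j$ is globally minimized at $a_j$), and then derives a contradiction from an alleged point $p\in K_j\cap\partial\DD$ by comparing the value of $\tfrac12(|z|^2-1)-\Phi^L$ at $p$ to its value at the strict interior minimum $a_j$. Your proof bypasses the domain inequality and the contradiction entirely: it specializes the same quadratic identity along the chain, telescopes, and concludes from a direct scalar estimate on the constants $l_j$. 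In fact the evaluation of the paper's sandwich at $z=a_j$ gives $l_j\le -\tfrac12(|a_j|^2+1)$, hence $|a_j|+l_j\le -\tfrac12(1-|a_j|)^2<0$, which is exactly your target inequality; so the two arguments are tightly linked, but yours is more explicit (it produces the closed-form value of $l_j$) and makes the boundary mechanism transparent, whereas the paper's is shorter once the sandwich is in hand and also yields the stronger domain-wide bound $\Phi^L(z)\le\tfrac12(|z|^2-1)$, which has independent uses.

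One small presentational remark: the derivation relies implicitly on the generic/unique-chain hypothesis (Remark~\ref{remark1}), since you compute $l_j$ along \emph{the} chain of $a_j$. This is consistent with the paper's standing assumption, but it is worth flagging, as the paper's induction works with any chain without needing uniqueness.
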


\begin{proof}
Let $L$ be the unique vector in Theorem \ref{Thm1}. We first show that 
\begin{equation}\label{eq:sandwich} \log|z|\leq\Phi^L(z)\leq \frac{1}{2}(|z|^2-1).
\end{equation}
The first inequality is trivial from the definition of $\Phi^L$.  
Let us prove the latter inequality. 
Let $1\to 2\to 0$ \eqref{def-arrow} be a chain. Since $a_2\in K_0$ we have $\Phi^L(a_2)=\log|a_2|\leq \frac{1}{2}(|a_2|^2-1)$, i.e. the inequality holds at $a_2$.  In fact, the inequality holds for all $z\in {\rm Int} K_2$ because
\begin{equation}\label{eq:z2-1}
    \frac{1}{2}(|z|^2-1) - {\rm Re}(\overline a_2 z)-l_2 = \frac{1}{2}(|z-a_2|^2-|a_2|^2-1)-l_2
\end{equation}
has the global minimum at $z=a_2$.
Since $\Phi^L$ is continuous, the inequality holds at $a_1\in \partial K_2$.  By the same argument the inequality holds for all $z\in {\rm Int} K_1$.  By induction, the argument applies to any chain and, therefore, the inequality holds for all $K_j$, $j=1,\dots,\nu$.

Now we assume that $K_j$ intersects $\partial \DD$.  Then there exists a point $p\in K_j\cap \partial \DD$.  By the squeezing inequalities \eqref{eq:sandwich} we have $\Phi^L(p)=\frac{1}{2}(|p|^2-1)=0$.  Since $a_j$ is the unique minimum of $\frac{1}{2}(|z|^2-1) - \Phi^L(z)$ in $K_j$ by \eqref{eq:z2-1} applied to $a_j$, we have $\frac{1}{2}(|a_j|^2-1)- \Phi^L(a_j)<\frac{1}{2}(|p|^2-1)- \Phi^L(p)=0$, a contradiction to the established inequality \eqref{eq:sandwich}.
\end{proof}

\section{Dealing with non-integer \texorpdfstring{$c$}{c}'s}\label{section noninteger}

We use Theorem 1 in \cite{Lee 2018} which states that the polynomial $p_n$ is a multiple orthogonal polynomial of type II \cite{Fi 2017,Assche 2001,Ku 2010} and Theorem 2 in \cite{Lee 2018} which relates such multiple orthogonal polynomial to a $(\nu+1)\times(\nu+1)$ Riemann-Hilbert problem. Let us recapture both theorems by Theorem \ref{thm 3.1} and Theorem \ref{thm 3.2} below.

Let us assume that $a_j$'s are all distinct for a simple presentation.  
Without losing generality we set
\begin{equation}\nonumber
0\leq \arg a_1<\dots<\arg a_\nu<2\pi   
\end{equation}
and let $\gamma$ be a simple closed curve given by $\overrightarrow{a_1a_2}\cup \overrightarrow{a_2a_3}\cup \dots \cup \overrightarrow{a_\nu a_1}$ where $\overrightarrow{AB}$ stands for the line segment connecting $A$ and $B$. 
We assign the orientation to $\gamma$ such that the curve encloses the origin in counterclockwise direction. 
\begin{thm}\label{thm 3.1}
Let $\kappa=\lfloor{n/\nu}\rfloor$ and set $(n_1,\dots,n_\nu)$ by
\begin{align}\nonumber
 n_j = \begin{cases}
 \kappa + 1 \quad &\text{if}\quad j\leq n-\kappa \nu,
 \\ \kappa \quad & \text{otherwise.}
 \end{cases}
\end{align}
For a fixed $1\leq k\leq \nu$ the polynomial $p_n(z)$ satisfies the orthogonality,
\begin{equation}\label{orth}
\int_\gamma p_n(z)z^i \chi_j^{(1)}(z)\dd z=0\quad \text{for}\quad 0\leq i\leq n_j-1\quad \mbox{and}\quad 1\leq j\leq \nu, \end{equation}
with respect to the $\nu$ multiple measures $\{\chi_1^{(1)}dz,\dots,\chi_\nu^{(1)}dz\}$ given by
\begin{equation*}
    \chi_j^{(k)}(z)=W(z)\int_{\overline{a}_k}^{{\overline z}\times\infty} \overline{W(\overline s)}\prod_{i=1}^\nu{(s-\overline{a}_i)^{n_i-\delta_{ij}}}\, \ee^{-N z s } \dd s.
\end{equation*}
Above the integration contour starts at $\overline{a}_k$ and extends to $\infty$ in the angular direction of $\arg \bar z$ while avoiding $\bigcup_j \{ \overline{a}_j t: t\geq 1\}$.  We will use an alternative but equivalent expression 
\begin{equation}\label{chi}
    \chi_j^{(k)}(z)=W(z)\bigg(\int_{a_k}^{z\times\infty} W( s)\prod_{i=1}^\nu{(s-a_i)^{n_i-\delta_{ij}}}\, \ee^{-N \overline z s } \dd s\bigg)^*,
\end{equation}
where the superscript $*$ stands for the complex conjugation,
and the integration contour starts from $a_k$ and escapes to $z\times\infty$, the infinity in the angular direction of $\arg z$, while avoiding ${\bf B}$ \eqref{def B}.  We also note that the above definitions make sense only for $z\notin {\bf B}$.\end{thm}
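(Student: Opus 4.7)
The plan is to derive the stated contour orthogonality from the defining two-dimensional orthogonality \eqref{eq1} by an application of Stokes' theorem, following Theorem 1 of \cite{Lee 2018}. The starting observation is that the antiholomorphic weight can be written as a $\bar z$-derivative: for each $j=1,\dots,\nu$, the kernel
\[
\tilde F_j(z,\bar z)=\int_{\bar a_1}^{\bar z}\prod_{l=1}^\nu(\tau-\bar a_l)^{c_l+n_l-\delta_{lj}}\,e^{-Nz\tau}\,d\tau,
\]
with the $\tau$-contour avoiding $\overline{{\bf B}}$, satisfies
\[
\partial_{\bar z}\tilde F_j(z,\bar z)=\prod_{l=1}^\nu(\bar z-\bar a_l)^{n_l-\delta_{lj}}\cdot \overline{W(z)}\,e^{-Nz\bar z}.
\]
A conjugate substitution $\tau\mapsto\bar s$ identifies $W(z)\tilde F_j(z,\bar z)$, continued along the $\bar z\times\infty$-contour, with $\chi_j^{(1)}(z)$; thus the integral kernel of $\chi_j^{(1)}$ is precisely a $\bar z$-antiderivative of the appropriate antiholomorphic factor.

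Next, I would apply Green's identity to $p_n(z)W(z)z^i\tilde F_j(z,\bar z)$: because $p_n(z)W(z)z^i$ is holomorphic in $z$ away from ${\bf B}$, the bulk integral over $\CC$ collapses to a boundary integral. The boundary $\partial$ comprises (i) the circle at infinity, whose contribution vanishes by the Gaussian decay of $e^{-Nz\tau}$ along the chosen integration contour; (ii) both sides of ${\bf B}$, which cancel once the branches of $(\tau-\bar a_l)^{c_l}$ in $\tilde F_j$ are aligned with those of $W(z)$ fixed in Section \ref{section noninteger}; and (iii) the residual jump set of $\tilde F_j$, which after a contour deformation collapses to the closed polygon $\gamma=\overrightarrow{a_1a_2}\cup\cdots\cup\overrightarrow{a_\nu a_1}$. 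The jump of $\tilde F_j$ across $\gamma$ reproduces, after the $\tau\mapsto\bar s$ conjugation, the full integral defining $\chi_j^{(1)}$; hence
\[
\int_\CC p_n(z)\,z^i\prod_l(\bar z-\bar a_l)^{n_l-\delta_{lj}}\,|W(z)|^2 e^{-N|z|^2}\,dA(z) = c_{j,i}\int_\gamma p_n(z)\,z^i\,\chi_j^{(1)}(z)\,dz
\]
for an explicit nonzero constant $c_{j,i}$.

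The main obstacle is to show that the left-hand side vanishes. Since the test function $z^i\prod(\bar z-\bar a_l)^{n_l-\delta_{lj}}$ is a polynomial in both $z$ and $\bar z$ rather than a pure antipolynomial, the 2D orthogonality \eqref{eq1} does not apply directly. The remedy is to expand $p_n(z)z^i$ in the 2D-orthogonal basis $\{p_k(z)\}_{k\le n+i}$: the cross-terms $\int p_k(z)\prod(\bar z-\bar a_l)^{n_l-\delta_{lj}}\rho\,dA$ for $k\le n-1$ vanish by \eqref{eq1}, and the remaining $k\ge n$ contributions are killed by a combinatorial cancellation specific to the multi-index choice $(n_1,\dots,n_\nu)$ with $\sum n_l=n$. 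The identity $\sum_j n_j=n$ matches the dimension counts of the two orthogonality systems and thereby upgrades the implication to an equivalence, so no information is lost. A secondary technical point is the branch-cut bookkeeping of $\tilde F_j$ for non-integer $c_l$, which has to cancel against the cuts of $W(z)$ on ${\bf B}$ precisely so that only the jump along $\gamma$ survives; the basepoint $\bar a_1$ for the $\tau$-integration, and the prescription that the contour avoid $\overline{{\bf B}}$, are chosen exactly to make this cancellation work.
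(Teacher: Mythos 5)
First, note that the paper itself does not prove Theorem~\ref{thm 3.1}: it is explicitly \emph{recaptured} from Theorem~1 of \cite{Lee 2018} (``We use Theorem 1 in \cite{Lee 2018} which states\ldots Let us recapture both theorems\ldots''), so there is no in-paper proof to compare against. Your attempt is therefore a blind reconstruction of the argument from the cited work. The Stokes/Green part of your sketch --- writing the antiholomorphic weight as a $\partial_{\bar z}$-derivative of a kernel, and collapsing $\int_\CC\partial_{\bar z}(p_n z^i W\widetilde F_j)\,dA$ to a boundary integral that, after branch-cut bookkeeping, reproduces $\int_\gamma p_n z^i\chi_j^{(1)}\,dz$ --- is indeed the right skeleton.

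However, your treatment of the ``main obstacle'' is wrong in a way that matters. You assert that in the expansion $p_n z^i=\sum_k c_k p_k$ the cross-terms $\int p_k\,\overline{\widetilde q_j}\,\rho\,dA$ vanish for $k\le n-1$ by \eqref{eq1}, while the $k\ge n$ terms require a ``combinatorial cancellation.'' This is exactly backwards. Writing $\widetilde q_j=\sum_{m\le n-1}d_m p_m$ (it has degree $n-1$), the identity \eqref{eq1} gives $\int p_k\,\overline{\widetilde q_j}\,\rho\,dA=\bar d_k h_k$ for $k\le n-1$, which is generically nonzero, whereas for $k\ge n$ the integral vanishes trivially because $\deg\widetilde q_j<k$. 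So the problematic range is $k\le n-1$, and no ``combinatorial cancellation specific to $\sum n_l=n$'' is visible from this expansion. The mechanism that actually makes $\int_\CC p_n z^i\,\overline{\widetilde q_j}\,\rho\,dA$ vanish is integration by parts in $z$ using the Gaussian factor: from $\bar z^i e^{-N|z|^2}=(-N)^{-i}\partial_z^i e^{-N|z|^2}$ one moves $\partial_z^i$ onto $\widetilde q_j(z)W(z)$ and obtains (schematically)
\begin{equation*}
\int_\CC p_n\,z^i\,\overline{\widetilde q_j}\,\rho\,dA
= \frac{1}{N^i}\,\overline{\int_\CC p_n(z)\,\overline{\bigg(\frac{\partial_z^i\big[\widetilde q_j(z)W(z)\big]}{W(z)}\bigg)}\,\rho\,dA}\,.
\end{equation*}
The quantity $\partial_z^i[\widetilde q_j W]/W$ is a \emph{polynomial} of degree $n-1-i<n$ precisely because the balanced partition $n_l\in\{\kappa,\kappa+1\}$ with $\sum n_l=n$ guarantees the exponents $n_l-\delta_{lj}-k_l$ stay nonnegative for every term of the Leibniz expansion when $i\le n_j-1$; then \eqref{eq1} kills it. This is where the ``multi-index choice'' enters, and it is not a cancellation among terms $k\ge n$ but a degree-and-pole bookkeeping that your sketch omits entirely. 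Without this step your proof has a genuine gap at the crux of the theorem.
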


\begin{thm}\label{thm 3.2}Let $\gamma$ and $ \chi_j^{(k)}(z)$ be given above, then the Riemann-Hilbert problem,
\begin{equation}
\begin{cases}
Y(z)\,\,\mbox{has the continuous limit values from each side of  $\gamma$ and is holomorphic in $\CC\setminus\gamma$},
\\
Y_+(z)=Y_-(z)\begin{bmatrix}
1&\chi_1^{(1)}(z)&\cdots&\chi_{\nu}^{(1)}(z) \\
0&1&\cdots&0\\
\vdots&\vdots&\ddots&\vdots\\
0&0&\cdots&1
\end{bmatrix},& z\in\gamma,
\\
Y(z)=\displaystyle\left(I_{\nu+1}+{\cal O}\left(\frac{1}{z}\right)\right){\rm diag}\left(z^{n},z^{-n_1},\dots,z^{-n_\nu}\right),& z\to\infty,
\end{cases}
\end{equation}
has the unique solution given by
\[
Y(z) =\begin{blockarray}{ccccc}
\begin{block}{[cccc]c}
p_n(z) &\displaystyle \frac{1}{2\pi
\mathrm{i}}\int_\gamma\frac{p_n(w)\chi_1^{(1)}(w)}{w-z}\,\dd w
 &\cdots &\displaystyle \frac{1}{2\pi \mathrm{i}}\int_\gamma\frac{p_n(w)\chi_\nu^{(1)}(w)}{w-z}\,\dd w&  \vspace{0.1cm}\\
\vdots&\vdots&\vdots&\vdots &  \vspace{0.1cm}\\
q_{n-1}^{(j)}(z)&\displaystyle \frac{1}{2\pi
\mathrm{i}} \int_\gamma\frac{q_{n-1}^{(j)}(w)\chi_1^{(1)}(w)}{w-z}\,\dd w&\cdots&\displaystyle \frac{1}{2\pi
\mathrm{i}} \int_\gamma\frac{q_{n-1}^{(j)}(w)\chi_\nu^{(1)}(w)}{w-z}\,\dd w & \leftarrow(j+1)\text{th}\,\, \text{row}, \vspace{0.1cm}\\
\vdots&\vdots&\vdots&\vdots &  \vspace{0.1cm}\\
\end{block}
\end{blockarray}
 \] where $q_{n-1}^{(j)}(z)$ is
the polynomial of degree $n-1$ satisfying the
orthogonality condition:
\begin{equation}\nonumber
\displaystyle \int_\gamma q_{n-1}^{(j)}(z)\,z^m\chi_i^{(1)}(z)\,\dd z=0, \quad 0\leq
m\leq n_i-1-\delta_{ij},\quad 1\leq i,\, j\leq \nu,\end{equation} and the leading coefficient of $q_{n-1}^{(j)}(z)$ is given by
$\left(\frac{-1}{2\pi \mathrm{i}} \int_\gamma
q_{n-1}^{(j)}(w)w^{n_j-1}\chi_j^{(1)}(w)\,\dd
w\right)^{-1}.  $
\end{thm}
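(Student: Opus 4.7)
The plan is to verify that the displayed matrix $Y(z)$ satisfies all three items of the RHP (analyticity off $\gamma$, the prescribed jump across $\gamma$, and the large-$z$ asymptotic), and then to deduce uniqueness by the standard argument that $\det Y \equiv 1$ combined with Liouville's theorem. The only inputs beyond basic complex analysis are Plemelj-Sokhotski on the closed rectifiable curve $\gamma$ and the orthogonality relations furnished by Theorem \ref{thm 3.1} (for $p_n$) together with the defining orthogonality and normalization of $q_{n-1}^{(j)}$ given in the statement itself.

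First I would check analyticity and the jump. The first-column entries are polynomials, hence entire, while every other entry has the form $\frac{1}{2\pi \ii}\int_\gamma \frac{g(w)}{w-z}\dd w$ with $g$ continuous on $\gamma$, and is therefore holomorphic in $\CC\setminus\gamma$ with continuous boundary values from either side. The Plemelj-Sokhotski formula gives $(Cg)_+(z)-(Cg)_-(z)=g(z)$ on $\gamma$, applied row by row with $g$ equal to the first-column entry of that row multiplied by $\chi_k^{(1)}(z)$. This reproduces exactly the upper-triangular jump matrix in the statement, since the first column is continuous across $\gamma$ and the $(i,k+1)$ entry jumps additively by $Y_{i,1}\cdot\chi_k^{(1)}$.

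Next I would verify the large-$z$ asymptotic by expanding $(w-z)^{-1}=-\sum_{m\geq 0}w^m z^{-m-1}$ inside each Cauchy integral and invoking the orthogonality. The $(1,1)$ entry is $p_n(z)=z^n(1+O(1/z))$ directly, and $q_{n-1}^{(j)}(z)=O(z^{n-1})$ matches the off-diagonal requirement $O(1/z)\cdot z^n$ in the first column. For the $(1,k+1)$ entry, the moments $\int_\gamma p_n(w)w^m\chi_k^{(1)}(w)\dd w$ vanish for $0\leq m\leq n_k-1$ by \eqref{orth}, yielding the required $O(z^{-n_k-1})$ behavior. The $(j+1,k+1)$ entries for $j,k\geq 1$ are the only place the normalization of $q_{n-1}^{(j)}$ plays a role: for $j\neq k$ the first $n_k$ moments again vanish; for $j=k$ the orthogonality holds only up to $m=n_j-2$, so the coefficient of $z^{-n_j}$ equals $-\frac{1}{2\pi\ii}\int_\gamma q_{n-1}^{(j)}(w)w^{n_j-1}\chi_j^{(1)}(w)\dd w$, which is engineered to equal $1$ by the prescribed normalization of the leading coefficient.

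Uniqueness then follows at once from $\det Y\equiv 1$. The jump matrix is upper triangular with unit diagonal, so $\det Y$ has no jump across $\gamma$ and extends to an entire function, and at infinity $\det Y(z)\sim z^{n-\sum_j n_j}=1$ because $\sum_j n_j=n$ by construction; thus $\det Y\equiv 1$ by Liouville and $Y$ is everywhere invertible. For any second solution $\widetilde Y$, the product $\widetilde Y Y^{-1}$ is then entire and tends to $I$ at infinity, hence equals $I$. The one step that requires genuine work beyond bookkeeping is the existence and uniqueness of $q_{n-1}^{(j)}$ with the stated orthogonality and normalization; this is the main technical obstacle and it reduces to the invertibility of an $n\times n$ linear system, equivalently to the perfectness of the multiple orthogonal system on $\gamma$ determined by $\chi_1^{(1)},\dots,\chi_\nu^{(1)}$, which itself must be traced back to the nondegeneracy of the planar orthogonality in \eqref{eq1} via Theorem \ref{thm 3.1}.
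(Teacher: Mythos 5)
Your proof is correct, but note that the paper does not actually prove Theorem~\ref{thm 3.2} in this manuscript: it is stated verbatim as a restatement of Theorem~2 of \cite{Lee 2018}, so there is no in-paper argument to compare against. What you have written is the standard Van~Assche--Geronimo--Kuijlaars-style verification, and every step holds: the first column is polynomial (entire) and the remaining columns are Cauchy transforms, Plemelj--Sokhotski reproduces the unit upper-triangular jump because the first column is continuous across $\gamma$, the geometric expansion of $(w-z)^{-1}$ together with the orthogonality relations \eqref{orth} (and the corresponding ones for $q_{n-1}^{(j)}$) gives the decay $z^{-n_k-1}$ in column $k+1$, and the normalization is precisely what forces the coefficient of $z^{-n_j}$ in the $(j+1,j+1)$ entry to equal $1$. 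Your Liouville argument for uniqueness via $\det Y\equiv 1$ also uses $\sum_j n_j=n$, which indeed holds by the choice of $\kappa=\lfloor n/\nu\rfloor$ and $n_j$ in Theorem~\ref{thm 3.1}.

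Two small remarks. First, the way the statement phrases the normalization (``the leading coefficient equals $(\tfrac{-1}{2\pi\ii}\int q\,w^{n_j-1}\chi_j^{(1)}\dd w)^{-1}$'') is an implicit scalar condition; as you note, the constraint that actually feeds the asymptotic requirement is $\tfrac{-1}{2\pi\ii}\int_\gamma q_{n-1}^{(j)}(w)w^{n_j-1}\chi_j^{(1)}(w)\,\dd w = 1$, which fixes the scale of $q_{n-1}^{(j)}$ inside the one-dimensional solution space cut out by the $n-1$ orthogonality conditions among polynomials of degree $\le n-1$. Second, you correctly flag the real content as the existence/uniqueness of the polynomials $q_{n-1}^{(j)}$, which reduces to nondegeneracy of an $n\times n$ moment matrix (perfectness of the system $\chi_1^{(1)},\dots,\chi_\nu^{(1)}$ along $\gamma$); this is exactly what is delegated to \cite{Lee 2018}, where it is ultimately tied to the positivity of the planar inner product in \eqref{eq1} via Theorem~\ref{thm 3.1}. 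So the only ``gap'' is one that the present paper itself does not close, since it imports the result.
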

\begin{remark}
In all the Riemann-Hilbert problems that we subsequently write, we will state only the jump conditions and the boundary behaviors.  We assume the analyticity of the solution away from the jump contours and the existence of the continuous boundary values, that we denote by the subscripts $+$ and $-$, alongside the contours.\end{remark}

The goal of the next four subsections is to define the $\nu\times\nu$ matrix function $\Psi(z)$ that we will use for the subsequent Riemann-Hilbert analysis.  
These sections are needed mostly to be able to handle non-integer values of $c$'s.  We first define $\widetilde\Psi$ which has its branch cut on ${\bf B}$.  The method of Riemann-Hilbert analysis is to construct, by a succession of transformations, the Riemann-Hilbert problem with a desirable jump condition.  The jump on ${\bf B}$ that is originated from the non-integer $c$'s turns out not desirable -- the jump matrix could increase exponentially in $N$.  The cure is to deform the jump conditions on ${\bf B}$ into the jump conditions on ${\bf B}[k]$'s.  This is exactly done in \hyperref[sec 3.3]{Section 3.3} by defining the transformation matrix ${\bf V}$ and $\Psi(z)={\bf V}(z) \widetilde \Psi(z)$.  

This section, unfortunately, involves rather lengthy manipulation of intertwined branch cuts.  If one is interested in only integer $c_j$'s then one can skip ahead to Section 4, simply by using $\widetilde\Psi$ \eqref{def psitilde and hat} in the place of $\Psi$.

\subsection{Construction of \texorpdfstring{$\widehat\Psi$}{widehatpsi} and \texorpdfstring{$\widetilde\Psi$}{widetildepsi}}

%To define $\Psi$ first we need to prepare $\widehat\Psi$ and $\widetilde\Psi$.

%Let us recall ${\bf B}$ \eqref{def B} and $\widehat{\bf B}$ \eqref{def bhat} as defined in Introduction.
%Both ${\bf B}$ and $\widehat{\bf B}$ are oriented radially towards the infinity.
Let us define a shorthand notation
\begin{equation}\nonumber
  \eta_j = \ee^{-2\pi i c_j},\quad j=1,\dots,\nu.
\end{equation}
It will be convenient to define piecewise analytic row vectors for $j=1,\dots,\nu$ by
\begin{equation}\label{psi tilde}
    {\widetilde \psi}_j(z)=W(z)^{-1}\Big[\chi_1^{(j)}(z),\dots,\chi_\nu^{(j)}(z) \Big],  \qquad z\notin{\bf B}\cup\widehat{\bf B}. 
\end{equation}
%Let us recall the integral representation \eqref{chi}
%$$[\widetilde \psi_j(z)]_k =\overline{\int_{a_j}^{z\times\infty} W( s)\prod_{i=1}^\nu{(s-a_i)^{n_i-\delta_{ik}}}\, \ee^{-N \overline z s } \dd s} $$
%where $[A]_k$ stands for the $k$th entry of the vector $A$. The branch cuts of the integrand are at ${\bf B}$ and the integration contour is in ${\mathbb C}\setminus {\bf B}$ running from $a_j$ to $z\times\infty$, the infinity in the angular direction of $\arg z$.  As a result $\widetilde\psi_j$ has discontinuity along ${\bf B}\cup\widehat{\bf B}$.
Using the above definitions, let us define $\nu\times\nu$ matrix:
\begin{equation}\label{def psitilde and hat}
    {\widetilde \Psi}(z)=\begin{bmatrix}
 \widetilde\psi_1(z)\\
\vdots\\
 \widetilde\psi_\nu(z)
\end{bmatrix}.
\end{equation}

\begin{lemma}\label{lem 3.3}
The matrix $\widetilde \Psi$ satisfies a jump discontinuity,
\begin{equation}\label{eq-branch}
{\widetilde \Psi}_+(z)=\widetilde{J}_{j}{\widetilde \Psi}_-(z), \quad z\in \widehat{\mathbf{B}}_j\cup{\bf B}_j,
\end{equation}
where
\begin{equation}\label{eq: Jpsi}
\widetilde{J}_{j}=\left[{\begin{array}{c:c:c}
\begin{matrix}
\vspace{-0.4cm}\\I_{j-1}
\end{matrix}
&\begin{matrix}\vspace{-0.4cm}\\\eta_j^{-1}-1\\\vdots\\\eta_j^{-1}-1\end{matrix} &\begin{matrix}\vspace{-0.4cm}\\{\bf
0}\end{matrix}\vspace{0.1cm}\\\hdashline
\begin{matrix}
\vspace{-0.4cm}\\{\bf 0}
\end{matrix}
&\begin{matrix}\vspace{-0.4cm}\\\eta_j^{-1}\end{matrix}
&\begin{matrix}\vspace{-0.4cm}\\{\bf 0}\end{matrix}\vspace{0.1cm}\\\hdashline
\begin{matrix}\vspace{-0.4cm}\\{\bf 0}\end{matrix}
&\begin{matrix}\vspace{-0.4cm}\\\eta_j^{-1}-1\\\vdots\\\eta_j^{-1}-1\end{matrix} &\begin{matrix}
\vspace{-0.4cm}\\I_{\nu-j}
\end{matrix}\end{array}}\right],\quad \text{for}\quad j=1,\dots,\nu.    
\end{equation}
\end{lemma}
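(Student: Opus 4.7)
The plan is to decompose the jump of $\widetilde\Psi$ into a scalar branch-cut jump of the prefactor $W(z)^{-1}$ and a contour-deformation jump of the integrals $\chi_m^{(i)}(z)$. First I would note that $W(z)=\prod_k(z-a_k)^{c_k}$ has a genuine branch jump only on $\mathbf{B}_j$, with $W_+/W_-=\eta_j$, and is continuous across $\widehat{\mathbf{B}}_j$. Hence the nontrivial content sits in how $\chi_m^{(i)}(z)$ behaves as $z$ crosses the full ray $\widehat{\mathbf{B}}_j\cup\mathbf{B}_j$: this is precisely the locus where the angular direction $\arg z$ swaps the endpoint of the integration contour in \eqref{chi} from one side of the cut $\mathbf{B}_j$ (in the $s$-variable) to the other.

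I would then split the analysis into the cases $i=j$ and $i\neq j$. For $i=j$ the contour from $a_j$ to $z\times\infty$ runs essentially along $\mathbf{B}_j$ itself, so the integrand picks up the $\pm$ boundary value of $W(s)$ according to the side on which the endpoint sits; this produces the pointwise identity $I_m^{(j),+}=\eta_j I_m^{(j),-}$ before the outer complex conjugation in \eqref{chi}. Reassembling through that conjugation (which sends $\eta_j$ to $\overline{\eta_j}=\eta_j^{-1}$) and dividing by $W(z)$ turns this into $[\widetilde\psi_j]_+ = \eta_j^{-1}[\widetilde\psi_j]_-$ both on $\widehat{\mathbf{B}}_j$ (where the $W$-factor is continuous) and on $\mathbf{B}_j$ (where the $W$-jump and the jump of the integral cancel cleanly), matching the diagonal entry of $\widetilde J_j$.

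For $i\neq j$, let $C_\pm$ denote the two contours from $a_i$ to $z\times\infty$ as $z$ approaches the ray from the $+$ side and the $-$ side. Since $a_i$ lies off the ray through the origin and $a_j$ (by the no-three-collinear assumption), the difference $C_+-C_-$ is a closed loop based at $a_i$ that encircles $\mathbf{B}_j$ once clockwise. I would deform this loop so that it hugs $\mathbf{B}_j$: the bridging segments from $a_i$ to $a_j$ (traversed above in one direction and below in the other) cancel by analyticity of the integrand on a neighborhood of $[a_i,a_j]$, the small circular arc around the branch point $a_j$ gives a contribution of order $\epsilon^{1+c_j}$ which vanishes in the limit because $c_j>-1$, and what remains is the jump integral $\int_{a_j}^{\infty}(\hat f_+-\hat f_-)\,ds=(\eta_j-1)I_m^{(j),-}$, where $\hat f$ is the integrand in \eqref{chi}. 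Complex-conjugating and re-incorporating the $W$-prefactor (trivially on $\widehat{\mathbf{B}}_j$; on $\mathbf{B}_j$ the $W_+=\eta_j W_-$ jump combines with the interior $\eta_j$ factor to give the same formula) converts this into $[\widetilde\psi_i]_+=[\widetilde\psi_i]_- + (\eta_j^{-1}-1)[\widetilde\psi_j]_-$ on the whole ray, which is exactly the off-diagonal column of $\widetilde J_j = I_\nu + (\eta_j^{-1}-1)\mathbf{1}\,e_j^{\,T}$.

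The main obstacle I expect is the homotopy bookkeeping of the off-diagonal case: one has to pin down unambiguously the orientation of the loop $C_+-C_-$, track which boundary value of the multivalued integrand applies on each arc of the deformed contour, and verify the vanishing of the small-circle contribution at $a_j$ (which is where the hypothesis $c_j>-1$ genuinely enters). Every sign in the final matrix $\widetilde J_j$ depends on handling these three items consistently with the outer complex conjugation in \eqref{chi} — which inverts $\eta_j$ — so the proof is essentially a careful sign-tracking exercise on top of a single deformation argument.
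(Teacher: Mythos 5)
Your verification of the diagonal entry coincides with the paper's, though you reach it by a somewhat roundabout route. A quicker observation: since $\chi_m^{(j)}(z)$ carries the explicit prefactor $W(z)$ and $\widetilde\psi_j=W(z)^{-1}\big[\chi_1^{(j)},\dots,\chi_\nu^{(j)}\big]$, those $W(z)$'s cancel outright and $[\widetilde\psi_j(z)]_m$ is simply the conjugated contour integral in \eqref{chi}. There is therefore no separate $z$-dependent $W$-jump in $\widetilde\psi_j$ to play off against a jump of $\chi^{(j)}$; the distinction you draw between $\widehat{\bf B}_j$ (where $W$ is continuous) and ${\bf B}_j$ (where $W$ jumps) is self-inflicted bookkeeping. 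The jump is the same on both parts of the ray because only the deformation of the integration contour matters, which is the conclusion you reach anyway.

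For the off-diagonal entries your route is genuinely different from the paper's. The paper isolates one structural lemma --- $\widetilde\psi_j-\widetilde\psi_k$ is entire in $z$, being the conjugate of an integral over a fixed compact arc from $a_j$ to $a_k$ --- and then reads off $[\widetilde\psi_k]_+=[\widetilde\psi_k]_-+(\eta_j^{-1}-1)[\widetilde\psi_j]_-$ in two algebraic lines. You instead carry out the contour deformation on the loop $C_+-C_-$ directly. This is correct and lands on the same matrix, but it is longer and requires care precisely at the spots you flag as the main obstacle: (i) the bridge cancellation should not be justified by analyticity on a neighborhood of $[a_i,a_j]$ --- that segment may cross some ${\bf B}_\ell$ for $\ell\neq i,j$ --- but by the fact that the two bridges are the same arc traversed in opposite directions inside $\mathbb{C}\setminus{\bf B}$; (ii) the small-circle exponent at $a_j$ is $1+c_j+n_j-\delta_{jm}$ rather than $1+c_j$, since the integer factor $(s-a_j)^{n_j-\delta_{jm}}$ also contributes, so the vanishing is more robust than $c_j>-1$ alone; (iii) the clockwise orientation of the deformed loop must be cross-checked against the orientation on ${\bf B}_j$ and the outer conjugation, which flips $\eta_j$ to $\eta_j^{-1}$. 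The paper's factorization sidesteps all three of these and, in addition, records the entirety of $\widetilde\psi_j-\widetilde\psi_k$ as a fact that it reuses later (see the end of the proof of Proposition \ref{poles}). Your computation is a perfectly good alternative derivation, but ``$\widetilde\psi_j-\widetilde\psi_k$ is entire'' is the cleaner hinge on which to hang the off-diagonal column.
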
  

Let us remark about the notations that will be used throughout the paper. The subscript $\pm$ will be used for the boundary values on the sides of the said contour.  In \eqref{eq-branch} for example, $\widetilde\Psi_+(z)$ refers to the boundary value of $\widetilde\Psi(z)$ on the $+$ side of $\widehat{\mathbf{B}}_j\cup{\bf B}_j$, i.e., the left side from the point of view of the traveler that walks along the oriented contour.  Note that the orientations of $\widehat{\bf B}_j$ and ${\bf B}_j$ are given by \eqref{def bhat} and \eqref{def B}.

Here and below, we will express matrices in block form as in $\widetilde{J}_{j}$ \eqref{eq: Jpsi}. We will use $I_k$ for an identity matrix (block) of size $k\times k$, and ${\bf 0}$ for zero matrix (block) of appropriate size that is determined by the size of its neighboring blocks. For example the four ${\bf 0}$'s in $\widetilde{J}_{j}$ are of sizes $(j-1)\times(\nu-j)$, $1\times(j-1)$, $1\times(\nu-j)$, and $(\nu-j)\times (j-1)$, respectively.  

\begin{proof}{(Proof of Lemma \ref{lem 3.3})}
From the jump conditions
\begin{equation}\nonumber
{W_+(z)}=\eta_j\,W_-(z),\quad z\in{\bf B}_j,\quad j=1,\dots,\nu,\end{equation} 
we obtain
\begin{equation}\nonumber
\big[\widetilde \psi_j(z)\big]_+ =\eta_j^{-1}\,\big[\widetilde \psi_j(z)\big]_-  ,\quad z\in{\bf B}_j\cup\widehat{\bf B}_j.
\end{equation}

Using that $\widetilde\psi_j(z)-\widetilde\psi_k(z)$ is analytic in the whole complex plane we get 
\begin{align*}
\big[\widetilde \psi_k(z)\big]_+
&=\big[\widetilde \psi_k(z)-\widetilde \psi_j(z)\big]_++\big[\widetilde \psi_j(z)\big]_+
\\
&=\big[\widetilde \psi_k(z)-\widetilde \psi_j(z)\big]_-+\eta_j^{-1}\,\big[\widetilde \psi_j(z)\big]_-\\
&=\big[\widetilde \psi_k(z)\big]_- +\big(\eta_j^{-1}-1\big) \big[\widetilde \psi_j(z)\big]_-,  \quad z\in{\bf B}_j\cup\widehat{\bf B}_j.
\end{align*}
This proves the lemma.
\end{proof}

We also define $\widehat \psi_j$ to be analytic in $\{z|\arg z \neq \arg a_j+\pi\}$ such that
\begin{equation}\label{psi hat}
    {\widehat \psi}_j(z)=\widetilde\psi_j(z)  \quad \text{when~}\arg a_j<\arg z<\arg a_{j+1}, 
\end{equation}
where we use $a_{\nu+1}= a_1$ when $j=\nu$. 

Using the above definitions, let us define $\nu\times\nu$ matrix:
\begin{equation}
    {\widehat \Psi}(z)=\begin{bmatrix}
 \widehat\psi_1(z)\\
\vdots\\
 \widehat\psi_\nu(z)
\end{bmatrix}.
\end{equation}

Next we find the linear transform from $\widehat \Psi(z) $ to $\widetilde \Psi(z) $.   To describe the transform, we need notations to handle the situation: given $z\in{\mathbb C}$ we want to refer to $\{a_j\mbox{'s}\}$ by the order of $\{{\arg }(z/a_j)\mbox{'s}\}$, i.e., by the order of the angular distances from $z$ with respect to the origin.

Let 
$$ {\cal I}= \{1,\dots,\nu\}$$
and we define
\begin{equation}\nonumber
 {\mathfrak s}=   {\mathfrak s}(z)= \#\{k\in {\cal I} :  0<\arg(z/a_k)\leq \pi \}.
\end{equation}
We define $\mathfrak r(k)=\mathfrak r(k;z)$ and $\mathfrak l(k)=\mathfrak l(k;z)$ to be the renaming of the indices in ${\cal I}$, $${\cal I}= \{{\mathfrak r}(1;z),\dots,{\mathfrak r}({\mathfrak s};z)\}\cup\{{\mathfrak l}(1;z),\dots,{\mathfrak l}(\nu-{\mathfrak s};z)\}, $$ 
such that to satisfy
$$-\pi< 
\arg \frac{z}{a_{{\mathfrak l}(1)}}< 
\arg \frac{z}{a_{{\mathfrak l}(2)}}<
\dots<
\arg \frac{z}{a_{{\mathfrak l}(\nu-{\mathfrak s})}}<
0 <
\arg \frac{z}{a_{{\mathfrak r}({\mathfrak s})}}<
\dots<
\arg \frac{z}{a_{{\mathfrak r}(2)}}< 
\arg \frac{z}{a_{{\mathfrak r}(1)}}\leq \pi. $$
Above and below we sometimes write, for example, $\mathfrak r(1)$ and $\mathfrak l(2)$ instead of $\mathfrak l(1,z)$ and $\mathfrak l(2,z)$ when the second argument $z$ is clear from the context.  The same will be true for ${\mathfrak s}$, i.e. $\mathfrak s$ instead of $\mathfrak s(z)$. See Figure \ref{indices}.

\begin{figure}
\begin{center}
\includegraphics[width=0.3\textwidth]{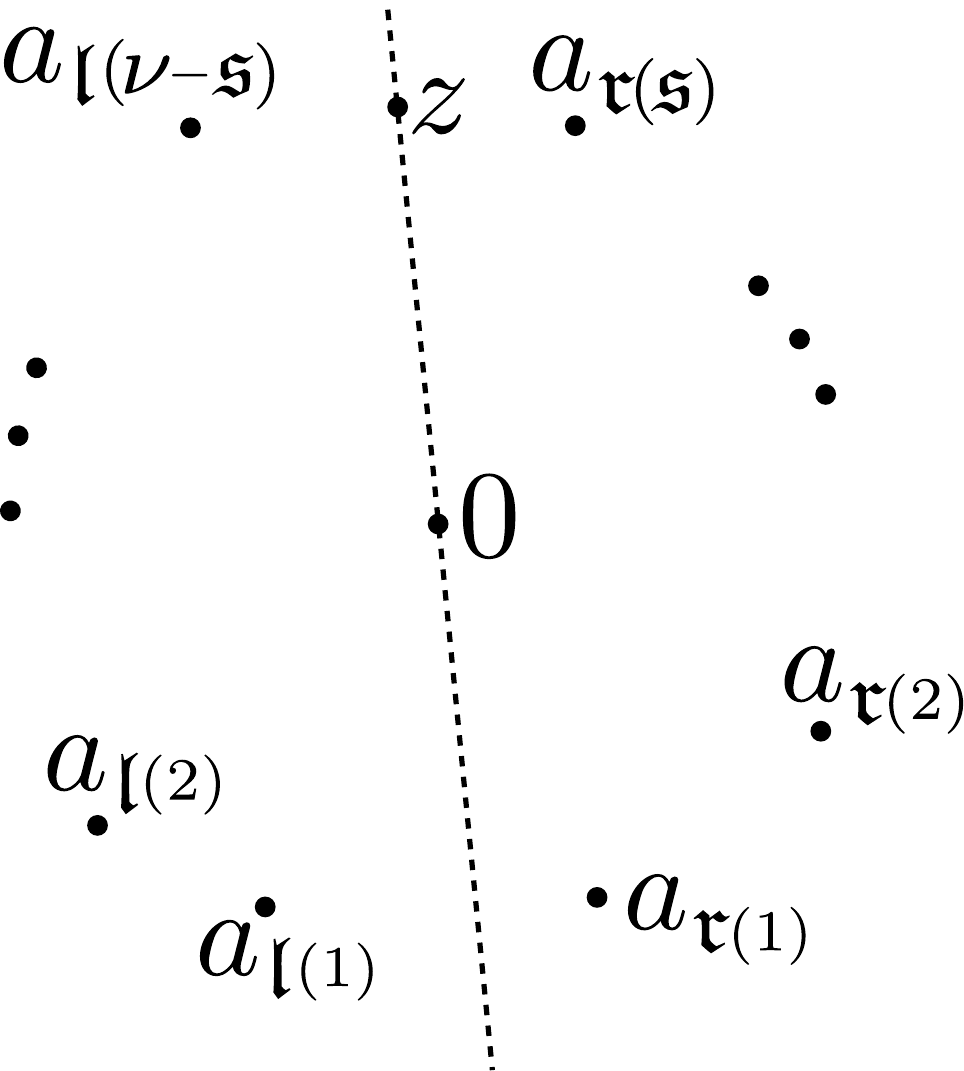}
\end{center}
 \caption{Renaming the indices of $\{a\mbox{'s}\}$ by the relative argument with respect to $z$.  The variables ${\frak l},{\frak r}$ and ${\frak s}$ are determined in terms of $z$, i.e., ${\frak l}(j)={\frak l}(j,z)$, ${\frak r}(j)={\frak r}(j,z)$ and ${\frak s}={\frak s}(z)$.
}\label{indices}
\end{figure}

\begin{lemma}\label{lemma 3.4}
Let ${\mathfrak l}(*)={\mathfrak l}(*;z)$, ${\mathfrak r}(*)={\mathfrak r}(*;z)$ and ${\mathfrak s}={\mathfrak s}(z)$ be defined as above. We have
\begin{align*}
\widetilde\psi_{\mathfrak r(k)}(z) &=\widehat\psi_{\mathfrak r(k)}(z) +\sum_{j=k+1}^{\mathfrak s}(1-\eta_{{\mathfrak r}(j)})\widehat \psi_{\mathfrak r(j)}(z), &&k=1,\dots, \mathfrak{s},
\\
\widetilde\psi_{\mathfrak l(k)}(z) &=\eta_{\mathfrak l(k)} \widehat\psi_{\mathfrak l(k)}(z) +\sum_{j=k+1}^{\nu-\mathfrak s}(\eta_{{\mathfrak l}(j)}-1)\widehat \psi_{\mathfrak l(j)}(z),  &&k=1,\dots, \nu-\mathfrak{s},
\end{align*}
for $z\notin \widehat{\mathbf{B}}\cup{\bf B}$ and $-z\notin \widehat{\mathbf{B}}\cup{\bf B}$.
\end{lemma}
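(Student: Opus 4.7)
The strategy is to verify each identity by analytic continuation of $\widetilde\psi_j$ from the reference sector of $\widehat\psi_j$ (where $\widetilde\psi_j=\widehat\psi_j$ by \eqref{psi hat}) to the sub-sector containing $z$, using only the jump conditions from Lemma~\ref{lem 3.3}. The base case $k=\mathfrak s$ of the first formula is immediate: when $z$ lies in $\arg a_m<\arg z<\arg a_{m+1}$ with $m=\mathfrak r(\mathfrak s)$, we have $\widetilde\psi_m(z)=\widehat\psi_m(z)$ directly from \eqref{psi hat}, which is exactly the empty-sum statement; similarly $k=\nu-\mathfrak s$ of the second formula is handled after the first CW crossing described below.

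For $\widetilde\psi_{\mathfrak r(l)}$ with $l<\mathfrak s$, I would continue counter-clockwise from the reference sector of $\widehat\psi_{\mathfrak r(l)}$. Since no element of $\mathfrak l$ can have $\arg$ between $\arg a_{\mathfrak r(l)}$ and $\arg z$, this reference sector is exactly $\arg a_{\mathfrak r(l)}<\arg\zeta<\arg a_{\mathfrak r(l+1)}$, and the CCW path reaches $z$ by crossing precisely $\widehat{\mathbf B}_{\mathfrak r(l+1)},\dots,\widehat{\mathbf B}_{\mathfrak r(\mathfrak s)}$. At the crossing of $\widehat{\mathbf B}_{\mathfrak r(i)}$ from the $-$ to the $+$ side, the $\mathfrak r(l)$-th row of $\widetilde J_{\mathfrak r(i)}$ contributes $(\eta_{\mathfrak r(i)}^{-1}-1)[\widetilde\psi_{\mathfrak r(i)}]_-$. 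The sector just CW of $\widehat{\mathbf B}_{\mathfrak r(i)}$ is itself the reference sector of $\widehat\psi_{\mathfrak r(i-1)}$, and a single CW continuation of $\widetilde\psi_{\mathfrak r(i)}$ across $\widehat{\mathbf B}_{\mathfrak r(i)}$ from its own reference sector yields $[\widetilde\psi_{\mathfrak r(i)}]_-=\eta_{\mathfrak r(i)}\widehat\psi_{\mathfrak r(i)}$, so the increment collapses to $(1-\eta_{\mathfrak r(i)})\widehat\psi_{\mathfrak r(i)}$. Summing these for $i=l+1,\dots,\mathfrak s$ produces the claimed formula. The $\widetilde\psi_{\mathfrak l(k)}$ identity is established symmetrically by clockwise continuation: the first crossing, $\widehat{\mathbf B}_{\mathfrak l(k)}$ itself, provides the prefactor $\eta_{\mathfrak l(k)}$, and each subsequent crossing of $\widehat{\mathbf B}_{\mathfrak l(i)}$ for $i=k+1,\dots,\nu-\mathfrak s$ adds $(\eta_{\mathfrak l(i)}-1)\widehat\psi_{\mathfrak l(i)}$ via $\widetilde J_{\mathfrak l(i)}^{-1}$, using that the $+$ side of each such crossing is the reference sector of $\widehat\psi_{\mathfrak l(i)}$.

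The essential reason this propagation is clean is that the CCW path lies entirely in $(\arg z-\pi,\arg z)$ and the CW path in $(\arg z,\arg z+\pi)$, whereas the branch cuts $-\widehat{\mathbf B}_{\mathfrak r(j)}$ and $-\widehat{\mathbf B}_{\mathfrak l(j)}$ of the $\widehat\psi_l$'s that appear on the right-hand sides occupy the opposite half-circle; thus the path never encounters a branch cut of any $\widehat\psi_l$ that appears in the formula. The path may still traverse branch cuts of the complementary $\widehat\psi$'s (e.g.\ the $-\widehat{\mathbf B}_{\mathfrak l(j)}$ during the CCW continuation), but those functions do not enter the formula, and the accompanying reindexing of $\mathfrak r$, $\mathfrak l$, $\mathfrak s$ merely rewrites the same linear combination without changing its value. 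The principal obstacle will be the combinatorial bookkeeping --- tracking which rays are crossed, in what order, and ensuring each intermediate sector is correctly identified with a reference sector --- but once this is in place the algebra falls out directly from \eqref{eq: Jpsi} and its inverse. Consistency of the formula across the rays $-\widehat{\mathbf B}_k$ (where $\widetilde\psi_j$ is analytic but $\widehat\psi_k$ jumps) is automatic from the defining analytic continuation of $\widehat\psi_k$ and needs no separate verification.
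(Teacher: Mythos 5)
Your proof is correct, but it follows a genuinely different route from the paper's. The paper proves the lemma by contour deformation of the integral representation: the integration path of $\widetilde\psi_{\mathfrak r(k)}$ from $a_{\mathfrak r(k)}$ to $z\times\infty$ is pulled apart into a sum of loops around the rays ${\bf B}_j$ (illustrated in Figure~\ref{int cont decop}), and each clockwise loop around ${\bf B}_j$ evaluates to $(1-\eta_j)\widehat\psi_j$ by residue of the multivaluedness of $W$; the branch-cut argument you give in your last paragraph then appears in the paper to justify that the resulting $\widehat\psi$'s are single-valued in the relevant sector. Your argument instead never unpacks the integrals: it takes the jump matrices $\widetilde J_j$ of Lemma~\ref{lem 3.3} as a black box and tracks the analytic continuation of a single row of $\widetilde\Psi$ sector-by-sector from the reference sector to $z$, collapsing each increment $(\eta_{\mathfrak r(i)}^{-1}-1)[\widetilde\psi_{\mathfrak r(i)}]_- = (1-\eta_{\mathfrak r(i)})\widehat\psi_{\mathfrak r(i)}$ using a one-step CW continuation of $\widetilde\psi_{\mathfrak r(i)}$ across its own ray. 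This buys you a purely algebraic derivation that reuses the already-established jump structure and avoids any explicit integral manipulation, at the cost of more bookkeeping (identifying each intermediate sector with a reference sector, verifying that each accumulated $\widehat\psi_{\mathfrak r(j)}$ stays analytic along the remainder of the path). The paper's contour-surgery approach is essentially one shot and more visual, but is harder to keep track of without the figure. Both hinge on the same observation — that the anti-rays $-\widehat{\bf B}_{\mathfrak r(j)}$ sit in the opposite half-circle from the continuation path — and both are valid.
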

\begin{proof}
These expansions are obtained from the integral representations.
For example, the integration contour for  $\widetilde\psi_{\frak r(k)}(z)$, which is from $a_{\frak r(k)}$ to $z\times\infty$, is the sum of the contours around ${\bf B}_j$'s as shown in Figure \ref{int cont decop} for $\frak r(k)=1$. The red dashed contour can be expressed into the sum of blue contours enclosing $\{{\bf B}_1,{\bf B}_2,\dots,{\bf B}_6\}$.
The lemma follows since the contour enclosing  ${\bf B}_j$ clockwise corresponds to $(1-\eta_j)\widehat\psi_j$.  Since $\widehat\psi_j$ has the branch cut on $\{z|\arg z= \arg a_j +\pi \}$, $\widehat\psi_{\frak r(k)}$ (resp., $\widehat\psi_{\frak l(k)}$) is analytic in the angular sector from the argument of ${\bf B}_{\frak r(k)}$ (resp., ${\bf B}_{\frak l(k)}$) to $\arg z$ in the counterclockwise (resp., clockwise) direction.       \end{proof}
\begin{figure}  
\centering
\begin{tikzpicture}[scale=0.6]
\draw[dotted]  (0,0) -- (5,5);
\draw[dotted]  (0,0) -- (5,7);
\draw[dotted]  (0,0) -- (2,4);
\draw[dotted]  (0,0) -- (1,9);
\draw[dotted]  (0,0) -- (-1,5);
%\draw[dotted]  (0,0) -- (-3,10);
\draw[dotted]  (0,0) -- (-4,8);
%\draw[dotted]  (0,0) -- (-6,7);

\draw[rotate around={-45:(5,5)},blue,dashed,->] (5,5) parabola (4.5,11);
%\draw[rotate around={-45:(5,5)},blue,dashed] (5,5) parabola (5.5,11);
\draw[rotate around={-35:(5,7)},blue,dashed,->] (5,7) parabola (4.5,11.5);
\draw[rotate around={-35:(5,7)},blue,dashed] (5,7) parabola (5.5,11.5);
\draw[rotate around={-27:(2,4)},blue,dashed,->] (2,4) parabola (1.5,12.5);
\draw[rotate around={-27:(2,4)},blue,dashed] (2,4) parabola (2.5,12.5);
\draw[rotate around={-8:(1,9)},blue,dashed,->] (1,9) parabola (0.05,13);
\draw[rotate around={-8:(1,9)},blue,dashed] (1,9) parabola (1.95,13);
\draw[rotate around={11:(-1,5)},blue,dashed,->] (-1,5) parabola (-2,13);
\draw[rotate around={11:(-1,5)},blue,dashed] (-1,5) parabola (-0.0,13);
\draw[rotate around={26:(-4,8)},blue,dashed,->] (-4,8) parabola (-4.8,13.5);
\draw[rotate around={26:(-4,8)},blue,dashed] (-4,8) parabola (-3.2,13.5);

%\draw[rotate around={22:(-4,8)},blue,dashed] (-4,8) parabola (-3.7,13.5);

%\begin{scope}[yscale=0,xscale=0]
%\draw[rotate around={22:(-4,8)},blue,dashed,->] (-4,8) parabola (-4.8,13.5); % Mirror Image
%\end{scope}

%\draw[rotate around={45:(5,5)},red,dashed,->] (5,5) parabola (-4,10);

%\draw[red, dashed,->] (5,5) .. controls +(down:6.5cm) and +(left:1.5cm) .. (-4.4,8.36);

\draw[red, dashed,->] (5,5) .. controls (0,0) and (0,0) .. (-8,13);

\draw[ ->] (5,5) -- (9,9);
\draw[ ->] (5,7) -- (7.5,10.5);
\draw[ ->] (2,4) -- (5.7,11.4);
\draw[ ->] (1,9) -- (1.4,12.6);
\draw[ ->] (-1,5) -- (-2.5,12.5);
%\draw[thick,->] (-3,10) -- (-3.8,38/3);
\draw[ ->] (-4,8) -- (-6.3,12.6);
%\draw[thick,->] (-6,7) -- (-9.5,66.5/6);

%\draw[blue,thick, dashed]  (5,5) -- (-4,8);
%\draw[blue, dashed]  (5,5) -- (2,4);
%\draw[blue, dashed]  (5,5) -- (-1,5);

%\draw[blue, thick,->]  (-4,8) -- (-10.5,30.5/3);
%\draw[blue, thick,->]  (2,4) -- (-9,1/3);
%\draw[blue, thick,->]  (-1,5) -- (-9,5);
%\draw[blue, thick,->]  (-6,7) -- (-10,85/11);

\foreach \Point/\PointLabel in {(0,0)/0, (5,5)/a_1, (5,7)/a_2, (2,4)/a_3, (1,9)/a_4, (-1,5)/a_5, (-4,8)/a_6}
\draw[fill=black] \Point circle (0.05) node[below right] {$\PointLabel$};
%\foreach \Point/\PointLabel in {(-9,9.9)/z}
%\draw[fill=red] \Point circle (0.05) node[left] {$\PointLabel$};

\foreach \Point/\PointLabel in {(-4.45,8.36)/z}
\draw[fill=red] \Point circle (0.07) node[left] {$\PointLabel$};

\foreach \Point/\PointLabel in {(9,9)/{\bf B}_1, (7.5,10.5)/{\bf B}_2, (5.7,11.4)/{\bf B}_3, (1.4,12.6)/{\bf B}_4, (-2.5,12.5)/{\bf B}_5,(-6.2,12.6)/{\bf B}_6}
\draw[fill=black]  \Point
 node[below right] {$\PointLabel$};
 
%\foreach \Point/\PointLabel in {(-10,10)/{\bf B}_{61}}
%\draw[fill=black]  \Point
% node[below left] {$\PointLabel$};
 
 %\draw[fill=pink, nearly transparent]  (0,0) -- (5,5) -- (-4,8) -- cycle;
 
  % \draw[fill=orange, nearly transparent]  (5,5) -- (-4,8) -- (-6,12) -- (8,8) -- cycle;
 
  %\draw[fill=red, nearly transparent]  (-4,8) -- (-6,12) -- (-10,10) -- cycle;
  
    %\draw[fill=yellow, nearly transparent]  (0,0)--(-4,8) -- (-10,10) -- (-6.2,1.13) -- cycle;
 \end{tikzpicture}
\caption{The red dashed line is the integration contour of $\widetilde\psi$ \eqref{psi tilde} and the blue dashed lines are the integration contours of $\widehat\psi$'s \eqref{psi hat}. The red contour is equivalent to the sum of the blue contours. }\label{int cont decop} 
\end{figure}
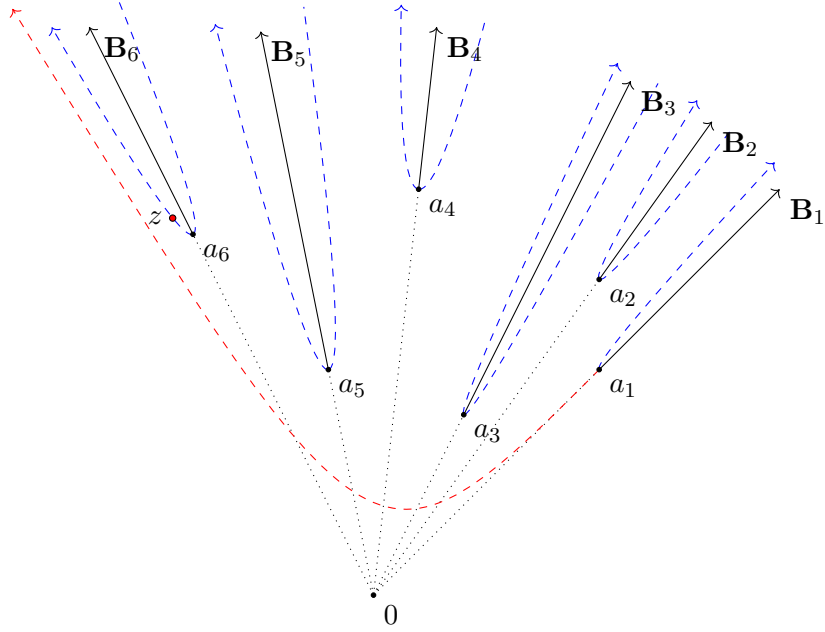

Alternatively one can also understand this relation between $\widetilde\psi$ and $\widehat\psi$ through the integral representations as explained in Figure \ref{int cont decop}.

Let us define the piecewise constant $\nu\times \nu$ matrix $\widetilde{\bf V}(z)$
\begin{align}
   \widetilde{\bf V}(z) &=\displaystyle\sum_{k=1}^\frak{s}\bigg({\bf e}_{\frak r(k),\frak r(k)}+\sum_{j=k+1}^{\frak{s}}(1-\eta_{\frak r(j)}){\bf e}_{\frak r(k),\frak r(j)}\bigg) + \sum_{k=1}^{\nu-\frak{s}} \bigg(\eta_{\frak l(k)}{\bf e}_{\frak l(k),\frak l(k)}+\sum_{j=k+1}^{\nu-\frak{s}} (\eta_{\frak l(j)}-1){\bf e}_{\frak l(k),\frak l(j)}\bigg)\\ \label{eq:C tilde}
 &=\displaystyle\sum_{j=1}^\frak{s}\bigg({\bf e}_{\frak r(j),\frak r(j)}+(1-\eta_{\frak r(j)})\sum_{k=1}^{j-1}{\bf e}_{\frak r(k),\frak r(j)}\bigg) + \sum_{j=1}^{\nu-\frak{s}} \bigg(\eta_{\frak l(j)}{\bf e}_{\frak l(j),\frak l(j)}+(\eta_{\frak l(j)}-1)\sum_{k=1}^{j-1} {\bf e}_{\frak l(k),\frak l(j)}\bigg),  
\end{align}
where ${\bf e}_{ij}$ stands for the basis of $\nu\times\nu$ matrices whose only nonzero entry is 1 at $(i,j)$th entry.

Using the above matrix $\widetilde{\bf V}(z)$ we have, by Lemma \ref{lemma 3.4},
\begin{equation}\label{eq:psi tilde}
\widetilde \Psi(z) = \widetilde{\bf V}(z) \widehat \Psi(z).
\end{equation} 

\subsection{Construction of \texorpdfstring{$\Psi$}{psi}}\label{sec 3.3}

We will define $\Psi(z)$ by 
\begin{equation}\label{eq:psi}
  \Psi(z) = {\bf V}(z) \widetilde \Psi(z),
\end{equation} 
in terms of the piecewise constant $\nu\times\nu$ matrix ${\bf V}(z)$ that we define below.

For $z\notin\widehat{\mathbf{B}}\cup{\bf B}$ and for $i\in{\cal I}$ consider the line segment $\overrightarrow{a_iz}=\{ a_i + t (z-a_i) ~|~ 0\leq t\leq 1\}$.  We define 
\begin{equation}\nonumber
    q=q(i)=q(i;z)=\#\{\text{intersections of $\overrightarrow{a_i z}$ with ${\bf B}\setminus {\bf B}_i$}\}.
\end{equation}
If $0<\arg(z/a_i)\leq \pi$ then by simple geometry $\overrightarrow{a_i z}$ can intersect only ${\bf B}_{\frak r(*)}$ but not ${\bf B}_{\frak l(*)}$, and vice versa for $-\pi<\arg(z/a_i)\leq 0$. For example, $q(1;z)=3$ in Figure \ref{geo inter}. And by versus for $-\pi<\arg(z/a_i)\leq 0$.

Let us consider the case $i=\frak r(k)$ for some $k$, then all the $q$ number of intersections occur with ${\bf B}_{\frak r(p)}$ for $k<p\leq \frak s$. Hence we can define $p_1(i;z),p_2(i;z),\dots,p_q(i;z)$ such that ${\bf B}_{\frak r(p_*)}$ intersects $\overrightarrow{a_i z}$ and satisfies
$$ k<p_1<p_2<\dots<p_q\leq \frak s\qquad \text{where $i=\frak r(k)$.}$$
Similarly, for $-\pi<\arg(z/a_i)\leq 0$, we define $p_1(i;z),p_2(i;z),\dots,p_q(i;z)$ such that ${\bf B}_{\frak l(p_*)}$ intersects $\overrightarrow{a_i z}$ and satisfies
$$ k<p_1<p_2<\dots<p_q\leq \nu-\frak s\qquad \text{where $i=\frak l(k)$.}$$
We note that $p_*$'s are determined by the two arguments, $i$ and $z$. Therefore we will write $p_*=p_*(i;z)$ or, if the second argument is identified from the context, simply $p_*(i)$.  Similarly we write $q(i;z)$ or $q(i)$.

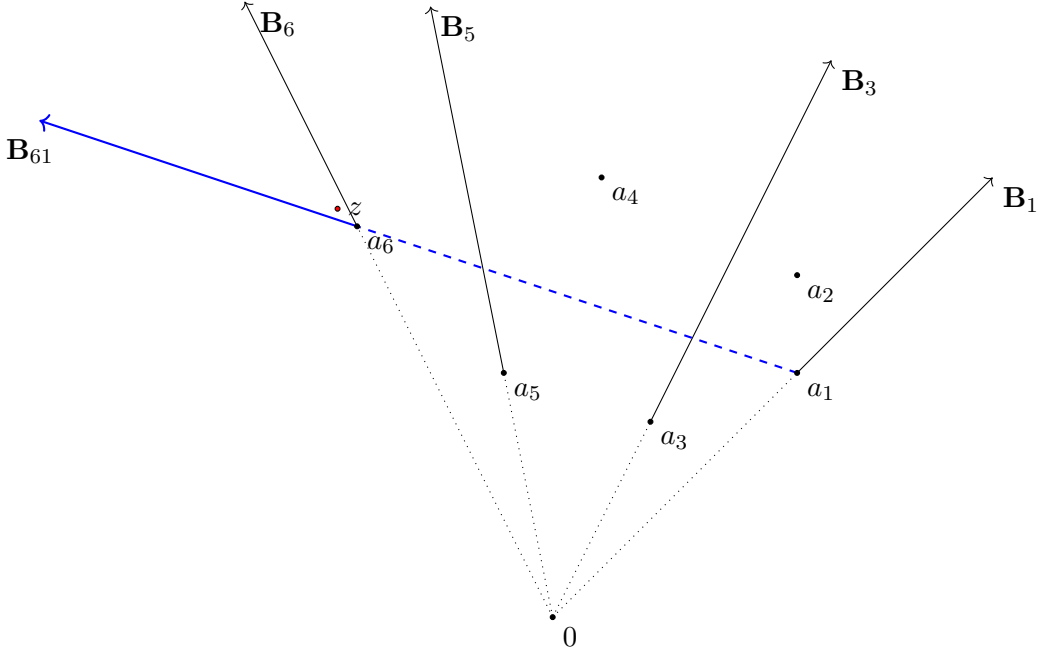
\begin{figure}   
\centering
\begin{tikzpicture}[scale=0.65]
\draw[dotted]  (0,0) -- (5,5);
%\draw[dotted]  (0,0) -- (5,7);
\draw[dotted]  (0,0) -- (2,4);
%\draw[dotted]  (0,0) -- (1,9);
\draw[dotted]  (0,0) -- (-1,5);
%\draw[dotted]  (0,0) -- (-3,10);
\draw[dotted]  (0,0) -- (-4,8);
%\draw[dotted]  (0,0) -- (-6,7);

\draw[->] (5,5) -- (9,9);
%\draw[thick,->] (5,7) -- (7.5,10.5);
\draw[ ->] (2,4) -- (5.7,11.4);
%\draw[thick,->] (1,9) -- (1.4,12.6);
\draw[ ->] (-1,5) -- (-2.5,12.5);
%\draw[thick,->] (-3,10) -- (-3.8,38/3);
\draw[ ->] (-4,8) -- (-6.3,12.6);
%\draw[thick,->] (-6,7) -- (-9.5,66.5/6);

\draw[blue,thick, dashed]  (5,5) -- (-4,8);
%\draw[blue, dashed]  (5,5) -- (2,4);
%\draw[blue, dashed]  (5,5) -- (-1,5);

\draw[blue,  thick,->]  (-4,8) -- (-10.5,30.5/3);
%\draw[blue, thick,->]  (2,4) -- (-9,1/3);
%\draw[blue, thick,->]  (-1,5) -- (-9,5);
%\draw[blue, thick,->]  (-6,7) -- (-10,85/11);

\foreach \Point/\PointLabel in {(0,0)/0, (5,5)/a_1, (5,7)/a_2, (2,4)/a_3, (1,9)/a_4, (-1,5)/a_5, (-4,8)/a_6}
\draw[fill=black] \Point circle (0.05) node[below right] {$\PointLabel$};
%\foreach \Point/\PointLabel in {(-9,9.9)/z}
%\draw[fill=red] \Point circle (0.05) node[left] {$\PointLabel$};

\foreach \Point/\PointLabel in {(-4.4,8.36)/z}
\draw[fill=red] \Point circle (0.05) node[right] {$\PointLabel$};

\foreach \Point/\PointLabel in {(9,9)/{\bf B}_1, (5.7,11.4)/{\bf B}_3, (-2.5,12.5)/{\bf B}_5,(-6.2,12.6)/{\bf B}_6}
\draw[fill=black]  \Point
 node[below right] {$\PointLabel$};
 
\foreach \Point/\PointLabel in {(-10,10)/{\bf B}_{61}}
\draw[fill=black]  \Point
 node[below left] {$\PointLabel$};
 
% \draw[fill=pink, nearly transparent]  (0,0) -- (5,5) -- (-4,8) -- cycle;
 
%   \draw[fill=orange, nearly transparent]  (5,5) -- (-4,8) -- (-6,12) -- (8,8) -- cycle;
 
%  \draw[fill=red, nearly transparent]  (-4,8) -- (-6,12) -- (-10,10) -- cycle;
  
%    \draw[fill=yellow, nearly transparent]  (0,0)--(-4,8) -- (-10,10) -- (-6.2,1.13) -- cycle;
 \end{tikzpicture}
  \caption{The line segment $\protect\overrightarrow{a_1 z}$ intersects with ${\bf B}_3$, ${\bf B}_5$ and ${\bf B}_6$.  Therefore $q(1;z)=3$ and $p_1(1;z)=3$, $p_2(1;z)=5$, $p_3(1;z)=6$. }\label{geo inter} \end{figure}

Now we can define the $\nu\times\nu$ matrix ${\bf V}(z)$ for $z\notin\widehat{\mathbf{B}}\cup{\bf B}$.  
\begin{equation}
\begin{split}\label{eq:C}
    {\bf V}(z)=I_\nu+ &\sum_{k=1}^{\frak s} \sum_{i=1}^{q({\frak r}(k))}
\Big(\prod_{j=1}^{i-1} \eta_{{\frak r}(p_j)}\Big)
(\eta_{{\frak r}(p_i)}-1){\bf e}_{\frak r(k),{{\frak r}(p_i)}}\,\,\qquad \text{(where $p_*=p_*({\frak r}(k);z)$)}
\\
+&\sum_{k=1}^{\nu-\frak s} \sum_{i=1}^{q({\frak l}(k))}
\Big(\prod_{j=1}^{i-1} \eta_{{\frak l}(p_j)}^{-1}\Big)
\left(\eta_{{\frak l}(p_i)}^{-1}-1\right){\bf e}_{\frak l(k),{{\frak l}(p_i)}} \qquad \text{(where $p_*=p_*({\frak l}(k);z)$)}.
\end{split}
\end{equation}
One can see that ${\bf V}(z)$ is a block matrix with $\frak s\times\frak s$ block and $(\nu-\frak s)\times(\nu-\frak s)$ block.  And each block has triangular structure due to $p_*>k$, which leads to $\det {\bf V}(z)=1$.

We also note that when $z$ is near ${\bf B}_j\cup\widehat{\bf B}_j$ the line segment $\overrightarrow{a_j z}$ does not intersect ${\bf B}\setminus\{a_j\}$ and therefore $q(j;z)=0$.  This implies that the the $j$th row of ${\bf V}(z)$ vanishes except $[{\bf V}(z)]_{jj}=1$, where $[M]_{jk}$ stands for the $(j,k)$th entry of the matrix $M$. Then it follows that the $j$th row of ${\bf V}(z)^{-1}$ also vanishes except $[{\bf V}(z)^{-1}]_{jj}=1$.

Let us recall \eqref{def bk}
$$ {\bf B}[j] =\bigcup_{i\neq j} {\bf B}_{ij}\cup{\bf B}_j  \mbox{ where } {\bf B}_{ij}=\{ a_i + (a_i-a_j)t~ |~ 0\leq t<\infty  \},  \quad 1\leq i,j\leq \nu,$$
with the orientation given by the direction of increasing $t$.
We recall \eqref{def wk} that $W_j(z)$ is the analytic continuation of $W(z)$ \eqref{def w} such that $W_j(z)=W(z)$ in a neighborhood of ${\bf B}_j$ and $W_j(z)$ is analytic away from ${\bf B}[j]$.

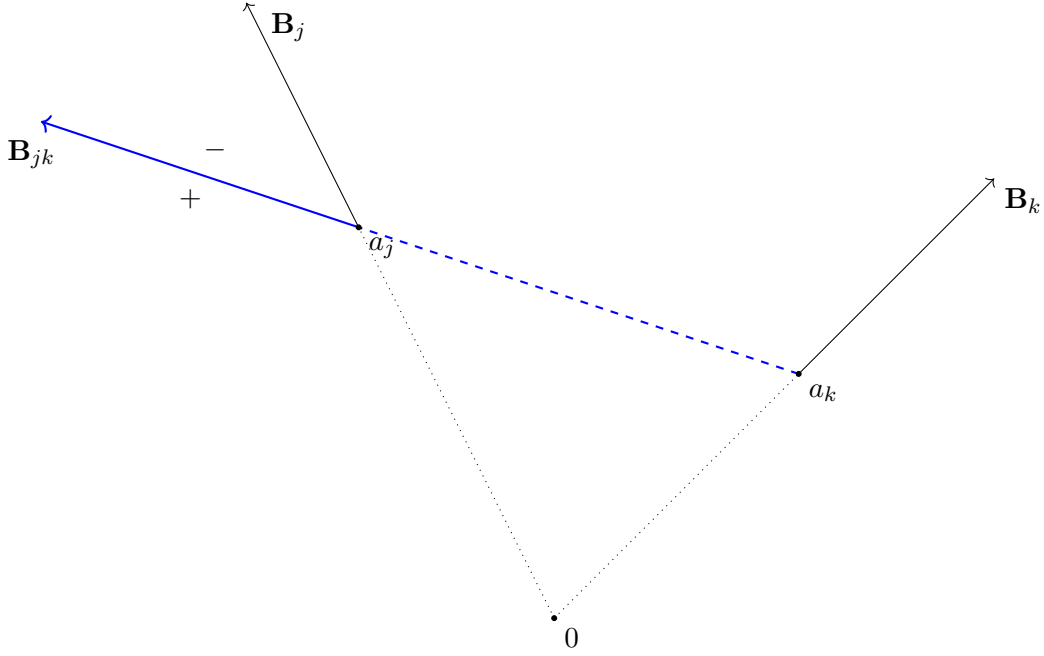
\begin{figure}   
\centering
\begin{tikzpicture}[scale=0.65]
\draw[dotted]  (0,0) -- (5,5);
\draw[dotted]  (0,0) -- (-4,8);

\draw[->] (5,5) -- (9,9);
\draw[ ->] (-4,8) -- (-6.3,12.6);

\draw[blue,thick, dashed]  (5,5) -- (-4,8);

\draw[blue,  thick,->]  (-4,8) -- (-10.5,30.5/3);

\foreach \Point/\PointLabel in {(0,0)/0, (5,5)/a_k, (-4,8)/a_j}
\draw[fill=black] \Point circle (0.05) node[below right] {$\PointLabel$};

\foreach \Point/\PointLabel in {(9,9)/{\bf B}_k, (-6,12.6)/{\bf B}_j}
\draw[fill=black]  \Point
 node[below right] {$\PointLabel$};
 
\foreach \Point/\PointLabel in {(-10,10)/{\bf B}_{jk}, (-7,9)/+, (-6.5,10)/-}
\draw[fill=black]  \Point
 node[below left] {$\PointLabel$};
 \end{tikzpicture}
 \caption{The branch cuts for $0<\arg (a_j/a_k)\leq \pi$. }\label{sub br} \end{figure}

\begin{lemma}\label{za ratio}
When $z\in{\bf B}_{jk}$, we have 
\begin{equation}\label{lem 350}
   \bigg[\frac{(z-a_j)^{c_j}}{[(z-a_j)^{c_j}]_{{\bf B}[k]}}\bigg]_+ = \begin{cases}
1, &\mbox{$0<\arg (a_j/a_k)\leq \pi$},  \\
1/\eta_j, &\mbox{$-\pi<\arg (a_j/a_k)\leq 0$},  
\end{cases}
\end{equation}
where $+$ denotes the boundary value evaluated from the $+$ side of ${\bf B}_{jk}$.
\end{lemma}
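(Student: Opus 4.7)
The plan is to interpret the ratio
\[
h(z) = \frac{(z-a_j)^{c_j}}{[(z-a_j)^{c_j}]_{{\bf B}[k]}}
\]
as an analytic function on $\CC\setminus({\bf B}_j\cup{\bf B}_{jk})$ that is locally equal to a power of $\eta_j$ (hence constant on each connected component), and then to determine its value on the $+$ side of ${\bf B}_{jk}$ by tracking connected components and jump relations.

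First, since both the numerator and the denominator are single-valued analytic branches of $(z-a_j)^{c_j}$ on $\CC\setminus{\bf B}_j$ and $\CC\setminus{\bf B}_{jk}$ respectively, their ratio $h$ is analytic on $\CC\setminus({\bf B}_j\cup{\bf B}_{jk})$ and, locally, is a ratio of two branches of the same multi-valued function --- hence locally a constant from $\eta_j^{\mathbb Z}$, and thus constant on each connected component of its domain. The non-collinearity assumption on $\{0,a_j,a_k\}$ ensures that ${\bf B}_j$ and ${\bf B}_{jk}$ are two distinct (non-parallel) rays from $a_j$, so together they split $\CC$ into exactly two connected regions. By the defining normalization \eqref{def subbk}, $h\equiv 1$ on whichever component contains ${\bf B}_k\cup\widehat{\bf B}_k$.

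Next, I locate the $+$ side of ${\bf B}_{jk}$ relative to these two components. Rotating to place $a_j$ on the positive real axis so that ${\bf B}_j$ points due east, a direct computation gives $\arg(a_j-a_k)\in(0,\pi)$ when $\arg(a_j/a_k)\in(0,\pi]$ (Case 1) and $\arg(a_j-a_k)\in(-\pi,0)$ when $\arg(a_j/a_k)\in(-\pi,0]$ (Case 2). Comparing the left-normal direction of ${\bf B}_{jk}$ with the narrow wedge carved out at $a_j$ by the two rays then shows that in Case 1 the $+$ side lies in the \emph{same} component as ${\bf B}_k$, whereas in Case 2 it lies in the \emph{opposite} component --- the narrow wedge between the two rays --- while ${\bf B}_k$ always sits in the outer component.

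In Case 1 the conclusion $[h]_+=1$ on ${\bf B}_{jk}$ is immediate. In Case 2, I move from the component containing ${\bf B}_k$ (where $h=1$) to the opposite component by crossing ${\bf B}_j$ once, from its $+$ to its $-$ side. Across ${\bf B}_j$ the denominator is analytic (its cut is ${\bf B}_{jk}$, not ${\bf B}_j$), while the numerator satisfies the standard jump $f_+/f_-=\eta_j$ inherited from $W_+=\eta_j W_-$ on ${\bf B}_j$. Hence $h$ picks up a factor $1/\eta_j$, giving $[h]_+=1/\eta_j$ on ${\bf B}_{jk}$. The main care needed is the geometric step verifying which component contains the $+$ side of ${\bf B}_{jk}$ in each case; this is where the non-collinearity hypothesis and the precise orientation conventions for the branch cuts and $\pm$ labels feed into the argument.
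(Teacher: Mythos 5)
Your proof is correct and follows essentially the same strategy as the paper's own argument, which normalizes the ratio to $1$ on $\mathbf{B}_k\cup\widehat{\mathbf{B}}_k$ by definition and then determines the value on the $+$ side of $\mathbf{B}_{jk}$ by analytic continuation with reference to Figure~\ref{sub br}. You simply supply the explicit two-component geometry and the single $\mathbf{B}_j$-crossing computation that the paper leaves implicit in its figure reference.
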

\begin{proof} When $0<\arg (a_j/a_k)\leq \pi$, by the definition of $[(z-a_j)^{c_j}]_{{\bf B}[k]}$ in \eqref{def subbk} we have
$$[(z-a_j)^{c_j}]_{{\bf B}[k]}=(z-a_j)^{c_j},\quad\mbox{ when }z\in {\bf B}_k\cup\widehat{\bf B}_k.$$
As $\big[(z-a_j)^{c_j}\big]_{{\bf B}[k]}$ is analytic away from ${\bf B}_{jk}$, it follows from Figure \ref{sub br} that
\begin{equation}\label{lem 351}
[(z-a_j)^{c_j}]_{{\bf B}[k]}=(z-a_j)^{c_j},\quad\mbox{when $z$ is in the $(+)$ side of ${\bf B}_{jk}$}.    \end{equation}
For $-\pi<\arg (a_j/a_k)\leq 0$ we have, by the similar argument,
\begin{equation}\label{lem 352}
[(z-a_j)^{c_j}]_{{\bf B}[k]}=\eta_j(z-a_j)^{c_j},\quad\mbox{when $z$ is in the $(+)$ side of ${\bf B}_{jk}$}.    \end{equation}
\end{proof}
 
\begin{lemma}\label{W ratio}
Let $z\in {\bf B}_{jk}$ be sufficiently close to $a_j$ such that the line segment $\overrightarrow{a_j z}$ does not intersect ${\bf B}\setminus\{a_j\}$.
Let 
$${\frak r}(*)={\frak r}(*;z) \quad\text{and}\quad {\frak l}(*)={\frak l}(*;z) ,$$and
\begin{equation}\nonumber
    q=q(k;z),\qquad p_i = p_i(k;z)\quad \text{ for } i=1,\dots,q.
\end{equation}
For $0<\arg(z/a_k)\leq\pi$
we have
\begin{equation}\label{62}
    \bigg[\frac{W_k(z)}{W_j(z)}\bigg]_- = \prod_{i=1}^q\eta^{-1}_{\frak r(p_i)}
    \text{~~and~~}
    \bigg[\frac{W_k(z)}{W_j(z)}\bigg]_+= \eta_j\prod_{i=1}^q\eta^{-1}_{\frak r(p_i)},
    \qquad z\in{\bf B}_{jk},
\end{equation}
and for $-\pi<\arg(z/a_k)\leq 0$ we have
\begin{equation}\nonumber
    \bigg[\frac{W_k(z)}{W_j(z)}\bigg]_+ =  \prod_{i=1}^{q}\eta_{\frak l(p_i)}
    \text{~~and ~~}
    \bigg[\frac{W_k(z)}{W_j(z)}\bigg]_- = \eta_j^{-1} \prod_{i=1}^{q}\eta_{\frak l(p_i)},  \qquad z\in{\bf B}_{jk}.
\end{equation}
\end{lemma}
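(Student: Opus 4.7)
The plan is to first pin down the jump of the ratio $W_k/W_j$ across ${\bf B}_{jk}$, and then to compute one of the two boundary values by an analytic continuation argument along (a perturbation of) the segment $\overrightarrow{a_kz}$. First I would write
\[
\frac{W_k(z)}{W_j(z)}=\prod_{i=1}^\nu \frac{[(z-a_i)^{c_i}]_{{\bf B}[k]}}{[(z-a_i)^{c_i}]_{{\bf B}[j]}}
\]
and observe that only the $i=j$ factor has ${\bf B}_{jk}$ in its branch cut locus: for $i\neq j$ the cuts of $[(z-a_i)^{c_i}]_{{\bf B}[k]}$ and $[(z-a_i)^{c_i}]_{{\bf B}[j]}$ are ${\bf B}_{ik}$ and ${\bf B}_{ij}$ respectively, which are rays with different starting points, both disjoint from ${\bf B}_{jk}$. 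Since the $j$-th factor $[(z-a_j)^{c_j}]_{{\bf B}[k]}/(z-a_j)^{c_j}$ has multiplicative jump $[\cdot]_+/[\cdot]_-=\eta_j$ across ${\bf B}_{jk}$ (its numerator has the branch cut there with the standard outward orientation, its denominator is analytic across), so does $W_k/W_j$. Thus it is enough to compute $[W_k/W_j]_-$ in the first case (resp.\ $[W_k/W_j]_+$ in the second), and the opposite side follows automatically; this is consistent with the two boundary values in \eqref{62} differing by exactly $\eta_j$ in each case.

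Next I would reduce the computation of $W_k/W_j$ to that of $W_k/W$, where $W$ is in its standard branch with cut along ${\bf B}$. Since $z$ is sufficiently close to $a_j$, I pick $\rho<\min_{i\neq j}|a_i-a_j|$ so that $D(a_j,\rho)$ is disjoint from every ${\bf B}_i$ and every ${\bf B}_{ij}$ with $i\neq j$. On the simply connected domain $D(a_j,\rho)\setminus{\bf B}_j$ both $W$ and $W_j$ are analytic; they agree on a neighborhood of $\widehat{\bf B}_j\cap D(a_j,\rho)$ by the defining property $W_j=W$ near $\widehat{\bf B}_j\cup{\bf B}_j$, so by analytic continuation they coincide on the whole punctured disk, in particular at both boundary values along ${\bf B}_{jk}$. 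Hence the ratio in \eqref{62} equals $W_k(z)/W(z)$. To evaluate $[W_k/W]_-(z)$ in the first case $0<\arg(z/a_k)\leq\pi$, start from a point in the neighborhood of ${\bf B}_k$ (where $W_k=W$) and continue $W_k$ along a small perturbation of $\overrightarrow{a_kz}$ whose endpoint is pushed to the $-$ side of ${\bf B}_{jk}$. By definition of $q$ and the $p_i$'s this perturbed path crosses exactly the rays ${\bf B}_{\frak r(p_1)},\dots,{\bf B}_{\frak r(p_q)}$ from ${\bf B}\setminus{\bf B}_k$, with the ${\bf B}_j$-intersection at the endpoint of that ray (i.e.\ at $a_j$) picked up as an honest transverse crossing exactly when the endpoint is pushed to the $-$ side. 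At each crossing $W$ picks up the jump factor $\eta_{\frak r(p_i)}^{\pm 1}$ while $W_k$ stays continuous, since ${\bf B}_{\frak r(p_i)}$ is not a cut of $W_k$ and the perturbation stays away from ${\bf B}[k]$. A direct orientation check shows that for $0<\arg(z/a_k)\leq\pi$ each crossing goes from the $-$ side to the $+$ side of the corresponding ray, so $W_k/W$ is multiplied by $\eta_{\frak r(p_i)}^{-1}$. The product gives $[W_k/W]_-(z)=\prod_{i=1}^q\eta_{\frak r(p_i)}^{-1}$, and $[W_k/W_j]_+$ follows from the jump analysis.

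For $-\pi<\arg(z/a_k)\leq 0$ I would repeat the argument mirror-reflected: the segment $\overrightarrow{a_kz}$ now crosses ${\bf B}_{\frak l(p_i)}$'s, with each crossing from the $+$ side to the $-$ side (opposite orientation), and the endpoint has to be pushed to the $+$ side of ${\bf B}_{jk}$ to pick up the $a_j$-intersection. This yields $[W_k/W_j]_+=\prod_{i=1}^q\eta_{\frak l(p_i)}$; the $-$ side again follows from the jump analysis. The main obstacle is the careful bookkeeping in the path computation: one must verify that perturbing the endpoint to the correct side of ${\bf B}_{jk}$ makes the path cross ${\bf B}_j$ near $a_j$ in the same orientation as all other crossings (so that the $a_j$ intersection is correctly counted as $\frak r(p_q)=j$ in the first case and $\frak l(p_q)=j$ in the second), and that no other spurious crossings appear. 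Once these geometric signs are pinned down by the one-time check, the full formula reduces to the combinatorial product stated in the lemma.
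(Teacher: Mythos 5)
Your proposal is correct and follows essentially the same route as the paper's proof: both establish that $W_j=W$ near $a_j$ so the ratio reduces to $W_k/W$, both evaluate $W_k/W$ by tracking jump factors $\eta_{\frak r(p_i)}^{\mp 1}$ picked up as the segment $\overrightarrow{a_kz}$ crosses the rays ${\bf B}_{\frak r(p_i)}$ (resp. ${\bf B}_{\frak l(p_i)}$), and both handle the two boundary values on ${\bf B}_{jk}$ by observing that the grazing intersection with ${\bf B}_j$ at $a_j$ is resolved into a genuine crossing on exactly one side. The only presentational difference is that you compute one boundary value and deduce the other from the jump $[W_k/W_j]_+/[W_k/W_j]_-=\eta_j$, whereas the paper computes both limits directly; this is a minor reorganization, not a different argument.
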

\begin{proof} Let $z\in {\bf B}_{jk}$ be sufficiently close to $a_j$ such that the line segment $\overrightarrow{a_j z}$ does not intersect ${\bf B}\setminus\{a_j\}$. For such $z$ we have $$\frac{W(z)}{W_j(z)}=1$$ because $W_j(z)=W(z)$ when $z$ is near ${\bf B}_j$.  

For $z\notin {\bf B}[k]$ the line segment $\overrightarrow{a_kz}$ does not intersect ${\bf B}[k]$.  If $0<\arg(z/a_k)\leq\pi$ the same line segment crosses ${\bf B}_{\frak r(p_i;z)}$ for $i=1,\dots,q$ where $p_i=p_i(k;z)$ and $q=q(k;z)$.  We get 
\begin{equation}\nonumber
\frac{W_k(z)}{W(z)} =  \prod_{i=1}^q\eta^{-1}_{\frak r(p_i)}.
\end{equation}
If $-\pi<\arg(z/a_k)\leq 0$ the similar consideration gives
\begin{equation}\nonumber
\frac{W_k(z)}{W(z)} =  \prod_{i=1}^q\eta_{\frak l(p_i)}.
\end{equation}
Now let $z$ be very close to ${\bf B}_{jk}$.
For $z$ in the $+$ side of  ${\bf B}_{jk}$, we get ${\frak r(p_q)}=j$.  And we get, for $0<\arg(z/a_k)\leq\pi$,
\begin{equation}\nonumber
    \frac{W_{k,-}(z)}{W(z)} =  \prod_{i=1}^q\eta^{-1}_{\frak r(p_i)}
    \text{~~~ and ~~~}
    \frac{W_{k,+}(z)}{W(z)} = \eta_j \prod_{i=1}^{q}\eta^{-1}_{\frak r(p_i)}.
\end{equation}
For $-\pi<\arg(z/a_k)\leq 0$, the same consideration gives
\begin{equation}\nonumber
\frac{W_{k,+}(z)}{W(z)} =  \prod_{i=1}^q\eta_{\frak l(p_i)}
    \text{~~~ and ~~~}
\frac{W_{k,-}(z)}{W(z)} =  \eta^{-1}_j\prod_{i=1}^{q}\eta_{\frak l(p_i)}.
\end{equation}
This proves the lemma when $z\in{\bf B}_{jk}$ is near $a_j$.  
Since ${\bf B}_{jk}$ is a branch cut of $W_k$ and since it does not intersect the branch cuts of $W_j$, $[W_k(z)/W_j(z)]_\pm$ is a constant function over the whole ${\bf B}_{jk}$.  
\end{proof}

Let us define the constant $\eta_{kj}$ by
\begin{align}\label{def etakj}
\eta_{kj}:&=\frac{W_j(z)}{W_{k,+}(z)},\quad z\in{\bf B}_{jk},
\end{align}
where $+$ means the boundary value evaluated from the $+$ side of ${\bf B}_{jk}$.

When $z\in D_{a_j}$, by Lemma \ref{za ratio} and Lemma \ref{W ratio} with the definitions of $\widetilde{\eta}_{kj}$ \eqref{def etakjtilde} and $\eta_{kj}$ \eqref{def etakj}, one can see that
\begin{equation}\label{eq: wj wk}
 \widetilde{\eta}_{kj}=\begin{cases}\displaystyle
\eta_{kj}, &\mbox{$0<\arg (a_j/a_k)\leq \pi$},\\\displaystyle
\eta_j\eta_{kj}, &\mbox{$-\pi<\arg (a_j/a_k)\leq 0$}. 
\end{cases} \end{equation}

The jump condition of $\Psi$ is given by the following proposition.

\begin{prop}\label{jump of psi}
Let $\Psi(z)$ be defined in \eqref{eq:psi} and ${\bf W}(z)$ be defined by 
\begin{equation}\label{def bf w}
  {\bf W}(z)={\rm diag}(W_1(z),\dots,W_\nu(z)),  \end{equation}
 we have the following jump conditions of $\Psi(z)$
\begin{equation}\label{eq: jump of Bbar}
\begin{cases}
\Psi_+(z)=\left(I_\nu -(\eta_j-1)\displaystyle \eta_{kj}\,   {\bf e}_{kj}\right) \Psi_-(z),\quad &z\in{\bf B}_{jk}, 
\\
[{\bf W}(z)\Psi(z)]_+=[{\bf W}(z)\Psi(z)]_-,\quad &z\in{\bf B}_j,
\\
[{\bf W}(z)\Psi(z)]_+=\widetilde{J}_{j}\,
[{\bf W}(z)\Psi(z)]_-,\quad &z\in \widehat{\mathbf{B}}_j.
\end{cases}
\end{equation}
\end{prop}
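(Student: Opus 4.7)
The plan is to exploit the factorization $\Psi = {\bf V}\widetilde{\Psi}$ and compute the jump of $\Psi$ on each of the three types of contours as the composition of the jumps of $\widetilde{\Psi}$ (given by Lemma \ref{lem 3.3}) and of ${\bf V}$ (obtained from the combinatorial formula \eqref{eq:C}). The piecewise constant matrix ${\bf V}(z)$ has potential discontinuities wherever the ordered labelings $\mathfrak{r},\mathfrak{l}$, the count $\mathfrak{s}(z)$, or any of the intersection counts $q(i;z)$ jumps, namely on ${\bf B}\cup\widehat{\bf B}\cup\bigcup_{j\neq k}{\bf B}_{jk}$.

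For $z\in{\bf B}_{jk}$ with $j\ne k$, the contour is disjoint from ${\bf B}\cup\widehat{\bf B}$ in its interior under the standing no-three-collinear assumption, so $\widetilde{\Psi}$ is analytic across ${\bf B}_{jk}$ and $\Psi_+={\bf V}_+{\bf V}_-^{-1}\Psi_-$. The same hypothesis forces crossing ${\bf B}_{jk}$ to modify only the count $q(k;z)$, by $\pm 1$, corresponding to the segment $\overrightarrow{a_k z}$ gaining or losing an intersection with ${\bf B}_j$. Reading off \eqref{eq:C} on both sides of the contour, only the $k$th row of ${\bf V}$ changes, by a single rank-one summand supported at the $(k,j)$ entry. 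A direct calculation of ${\bf V}_+{\bf V}_-^{-1}$, combined with Lemma \ref{W ratio} (which identifies the product of $\eta$-factors multiplying ${\bf e}_{kj}$ with the ratio $W_j/W_{k,+}=\eta_{kj}$), yields the claimed rank-one jump $I_\nu-(\eta_j-1)\eta_{kj}{\bf e}_{kj}$.

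For the two remaining cases, note first that ${\bf B}_j\subset{\bf B}[k]$ exactly when $k=j$, and no ${\bf B}[k]$ contains $\widehat{\bf B}_j$. Consequently ${\bf W}(z)$ jumps across ${\bf B}_j$ by the diagonal factor $D_j=\operatorname{diag}(1,\ldots,1,\eta_j,1,\ldots,1)$ (with $\eta_j$ in the $j$th slot) and is continuous across $\widehat{\bf B}_j$. Combined with $\widetilde{\Psi}_+=\widetilde{J}_j\widetilde{\Psi}_-$ from Lemma \ref{lem 3.3}, the two proposed identities $[{\bf W}\Psi]_+=[{\bf W}\Psi]_-$ on ${\bf B}_j$ and $[{\bf W}\Psi]_+=\widetilde{J}_j[{\bf W}\Psi]_-$ on $\widehat{\bf B}_j$ reduce respectively to
\[
D_j\,{\bf V}_+\widetilde{J}_j{\bf V}_-^{-1}=I_\nu\quad\text{on ${\bf B}_j$},\qquad {\bf V}_+\widetilde{J}_j{\bf V}_-^{-1}={\bf W}^{-1}\widetilde{J}_j{\bf W}\quad\text{on $\widehat{\bf B}_j$}.
\]
I would verify both identities column by column: $\widetilde{J}_j$ is non-trivial only in its $j$th column, so only the $j$th columns of ${\bf V}_+$ and ${\bf V}_-^{-1}$ enter the computation, and Lemma \ref{W ratio} closes the loop by matching the resulting $\eta$-products against the ratios $W_j/W_k$ that appear from the conjugation by ${\bf W}$ or the factor $D_j$.

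The main obstacle is the combinatorial bookkeeping in these last two cases. A single crossing of ${\bf B}_j$ or $\widehat{\bf B}_j$ can simultaneously flip $\mathfrak{s}(z)$, reshuffle the labelings $\mathfrak{r},\mathfrak{l}$ near the corresponding angle, and toggle $q(i;z)$ for each index $i$ whose ray $\overrightarrow{a_i z}$ traverses ${\bf B}_j$ at the crossing. The no-three-collinear assumption on $\{0,a_1,\ldots,a_\nu\}$ is essential here: it keeps these simultaneous changes disentangled and confined to a structured pattern, which is what allows the telescoping of the $\eta_j$-monomials in ${\bf V}_+\widetilde{J}_j{\bf V}_-^{-1}$ to match exactly the diagonal conjugation on the right-hand side.
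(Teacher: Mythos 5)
Your plan follows the paper's proof exactly: both decompose $\Psi = {\bf V}\widetilde{\Psi}$, identify the jump of $\widetilde{\Psi}$ from Lemma \ref{lem 3.3}, and reduce the three claimed jump relations to the identities ${\bf V}_+(z)\widetilde{J}_j = \mathrm{diag}(I_{j-1},\eta_j^{-1},I_{\nu-j}){\bf V}_-(z)$ on ${\bf B}_j$, ${\bf V}_+(z)\widetilde{J}_j{\bf V}_-(z)^{-1} = {\bf W}(z)^{-1}\widetilde{J}_j{\bf W}(z)$ on $\widehat{\bf B}_j$, and a rank-one factorization of ${\bf V}_+{\bf V}_-^{-1}$ on ${\bf B}_{jk}$, invoking Lemma \ref{W ratio} to identify the $\eta$-products with the constants $\eta_{kj}$ and the ratios $W_j/W_k$.

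One inaccuracy is worth flagging because it could derail a careful execution. You assert that crossing ${\bf B}_{jk}$ changes ${\bf V}$ ``by a single rank-one summand supported at the $(k,j)$ entry.'' This additive picture is generally false: writing $\mathfrak{r}(p_s)=j$, the segment $\overrightarrow{a_k z}$ may cross further branch cuts beyond $a_j$, so $s$ can lie strictly inside $\{1,\dots,q\}$, and then \emph{every} product $\prod_{\xi<i}\eta_{\mathfrak{r}(p_\xi)}$ with $i>s$ gains or loses the factor $\eta_j$ across ${\bf B}_{jk}$. Thus several entries of the $k$th row of ${\bf V}$ change simultaneously, not just the $(k,j)$ entry. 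What \emph{is} rank-one is the left-multiplicative jump ${\bf V}_+{\bf V}_-^{-1}=I_\nu-\bigl(\prod_{i<s}\eta_{\mathfrak{r}(p_i)}\bigr)(\eta_j-1){\bf e}_{kj}$, and establishing this (the paper's \eqref{C+C-}) requires the telescoping cancellation across the whole tail $i>s$. Your closing sentence recognizes that a ``direct calculation'' remains, so the plan is salvageable, but you should replace the additive description with the multiplicative one before attempting that calculation; the same caveat applies to the ${\bf B}_j$ and $\widehat{\bf B}_j$ cases, where the $j$th column of ${\bf V}$ likewise changes in several entries at once and the identity \eqref{eq: CJpsi} is obtained only after the analogous telescoping.
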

\begin{proof}
Let us find the jump at $z\in{\bf B}_{jk}$. 

We first assume $0<\arg(z/a_k)\leq\pi$.
The line segment $\overrightarrow{a_k z}$ intersects the branch cuts $\{{\bf B}_{\frak r(p_i)}\}_{i=1}^q$, where $p_i=p_i(k;z)$ and $q=q(k;z)$.  Among them will be ${\bf B}_j$ and we will define $1\leq s\leq q$ such that $\frak r(p_s;z)=j$.  From \eqref{62} it follows that  
$$\eta_{kj}=\frac{W_j(z)}{W_{k,+}(z)}=\prod_{i=1}^{s-1}\eta_{{\frak r}(p_i)} .$$

Let $z_+$ be approaching $z$ from the $+$ the side of  ${\bf B}_{jk}$, and $z_-$ from the $-$ side of  ${\bf B}_{jk}$.  
Then $\overrightarrow{a_k z_-}$ intersects all of $\{{\bf B}_{\frak r(p_i)}\}_{i=1}^q$   whereas $\overrightarrow{a_k z_+}$ intersects all but ${\bf B}_{\frak r(p_s)}={\bf B}_j$.   
One can also see that the line segment $\overrightarrow{a_jz}$, which is a subset of $\overrightarrow{a_k z}$,  intersects exactly $\{{\bf B}_{p_i}\}_{i=s+1}^q$ away from $a_j$.

From this observation we obtain the $j$th row and the $k$th row of the matrix ${\bf V}(z)$ \eqref{eq:C}.
\begin{align*}
{\bf V}_-(z) = I_\nu+\sum_{i=1}^q  \Big(\prod_{\xi=1}^{i-1}\eta_{\frak r(p_\xi)} \Big)
(\eta_{\frak r(p_i)}-1){\bf e}_{k \frak r(p_i)}  +\sum_{i=s+1}^q  \Big(\prod_{\xi=s+1}^{i-1}\eta_{\frak r(p_\xi)} \Big)
(\eta_{\frak r(p_i)}-1){\bf e}_{j \frak r(p_i)} + \dots,
\\
{\bf V}_+(z) = I_\nu+\sum_{\substack{i=1\\i\neq s}}^q  \Big(\prod_{\substack{\xi=1 \\ \xi\neq s}}^{i-1}\eta_{\frak r(p_\xi)} \Big)
(\eta_{\frak r(p_i)}-1){\bf e}_{k\frak r(p_i)}  +\sum_{i=s+1}^q  \Big(\prod_{\xi=s+1}^{i-1}\eta_{\frak r(p_\xi)} \Big)
(\eta_{\frak r(p_i)}-1){\bf e}_{j\frak r(p_i)} + \dots.
\end{align*}
The other rows does not change across ${\bf B}_{jk}$.  
We claim
\begin{equation}\label{C+C-}
{\bf V}_+(z)= \bigg[I_\nu-
\Big(\prod_{i=1}^{s-1} \eta_{\frak r(p_i)}\Big)
(\eta_j-1){\bf e}_{kj}\bigg] {\bf V}_-(z).
\end{equation}
It is enough to check the jump on $k$th row.  
First we check the $(k,{\frak r}(p_i))$th entry of \eqref{C+C-} for $i>s$.
\begin{align*}
\big[{\bf V}_-(z)\big]_{k,{\frak r(p_i)}}&=
\Big(\prod_{\xi=1}^{i-1}\eta_{\frak r(p_\xi)} \Big)
(\eta_{\frak r(p_i)}-1)  -
\Big(\prod_{i'=1}^{s-1} \eta_{\frak r(p_{i'})}\Big)
(\eta_j-1) 
\Big(\prod_{\xi=s+1}^{i-1}\eta_{\frak r(p_\xi)} \Big)
(\eta_{\frak r(p_i)}-1)
\\ &=
 \Big(\prod_{\substack{\xi=1 \\ \xi\neq s} }^{i-1}\eta_{\frak r(p_\xi)} \Big)
(\eta_{\frak r(p_i)}-1)=\big[{\bf V}_+(z)\big]_{k,{\frak r(p_i)}},
\end{align*}
where we have used $\eta_j=\eta_{\frak r(p_s)}$.  For $i=s$, the $(k,{\frak r(p_s)})=(k,j)$th entry of \eqref{C+C-} becomes
\begin{align*}
0=\big[{\bf V}_+(z)\big]_{kj}=\big[{\bf V}_-(z)\big]_{kj}-
\Big(\prod_{\xi=1}^{i-1}\eta_{\frak r(p_\xi)} \Big)
(\eta_{\frak r(p_i)}-1).
\end{align*}
A similar and simpler calculation gives the identity for $i<s$.

From Lemma \ref{W ratio} the jump relation \eqref{C+C-} gives the jump relation in this lemma.

Let us repeat the similar proof for $-\pi<\arg(z/a_k)\leq 0$. 
If $z_+$ approaches $z$ from the + side of ${\bf B}_{jk}$, $\overrightarrow{a_k z_+}$ intersects the branch cuts $\{{\bf B}_{\frak l(p_i)}\}_{i=1}^q$.  Whereas approaching from the $-$ side, $\overrightarrow{a_k z_-}$ intersects $\{{\bf B}_{\frak l(p_i)}\}_{i=1}^q$  except ${\bf B}_{\frak l(p_s)}={\bf B}_j$. 
And we obtain the $j$th row and the $k$th row of the matrix ${\bf V}(z)$ \eqref{eq:C}.
\begin{align*}
{\bf V}_+(z) = I_\nu+\sum_{i=1}^q  \Big(\prod_{\xi=1}^{i-1}\eta^{-1}_{\frak l (p_\xi)} \Big)
(\eta^{-1}_{\frak l (p_i)}-1){\bf e}_{k, \frak l (p_i)}  +\sum_{i=s+1}^q  \Big(\prod_{\xi=s+1}^{i-1}\eta^{-1}_{\frak l (p_\xi)} \Big)
(\eta^{-1}_{\frak l (p_i)}-1){\bf e}_{j,\frak l (p_i)} + \dots,
\\
{\bf V}_-(z) = I_\nu+\sum_{\substack{i=1\\i\neq s}}^q  \Big(\prod_{\substack{\xi=1 \\ \xi\neq s}}^{i-1}\eta^{-1}_{\frak l (p_\xi)} \Big)
(\eta^{-1}_{\frak l (p_i)}-1){\bf e}_{k,\frak l  (p_i)}  +\sum_{i=s+1}^q  \Big(\prod_{\xi=s+1}^{i-1}\eta^{-1}_{\frak l (p_\xi)} \Big)
(\eta^{-1}_{\frak l (p_i)}-1){\bf e}_{j,\frak l (p_i)} + \dots.
\end{align*}
This gives the jump relation by
\begin{align}\nonumber
{\bf V}_+(z)
&= \bigg[I_\nu+
\Big(\prod_{i=1}^{s-1} \eta^{-1}_{\frak l (p_i)}\Big)
(\eta^{-1}_j-1){\bf e}_{kj}\bigg] {\bf V}_-(z)
= \bigg[I_\nu+
\eta_{kj}(1-\eta_j){\bf e}_{kj}\bigg] {\bf V}_-(z),
\end{align}
where the last equality is obtained by Lemma \ref{W ratio}.  This ends the proof of the jump on ${\bf B}_{jk}$.

Now let us prove the jump on $z\in{\bf B}_j$.

We claim that the jump relation on ${\bf B}_j$ in Proposition \ref{jump of psi} is equivalent to the following identity using Lemma \ref{W ratio}.
\begin{equation}\label{eq: CJpsi}
{\bf V}_+(z)\widetilde{J}_{j}
 =  \begin{bmatrix}
    I_{j-1}&&\vspace{0.1cm}\\
    &\eta_j^{-1}&\vspace{0.1cm}\\
    &&I_{\nu-j}
    \end{bmatrix} {\bf V}_-(z).    
\end{equation}
It is clear that only the $j$th column of ${\bf V}(z)$ can have discontinuity on ${\bf B}_j$.  So it is enough to look at the $(*,j)$th entries of the above claim.      Since the $j$th row of ${\bf V}(z)$ consists of zeros except 1 at $(j,j)$th entry, the $(j,j)$th entry holds the identity.
    
For $k\neq j$ we consider $0<\arg(z/a_k)\leq\pi$ and $-\pi<\arg(z/a_k)\leq 0$ separately.  For $0<\arg(z/a_k)\leq\pi$, $\overrightarrow{a_k z_+}$ intersects ${\bf B}_j$ while $\overrightarrow{a_k z_-}$ does not intersect ${\bf B}_j$.  Let $q=q(k;z_+)$ and $p_i=p_i(k;z_+)$ such that $\frak r(p_q)=j$.  
The $k$th row of ${\bf V}_\pm(z)$ are given by
\begin{align*}
{\bf V}_+(z) = {\bf e}_{kk} + \sum_{i=1}^{q}
\Big(\prod_{\xi=1}^{i-1} \eta_{\frak r(p_\xi)}\Big)
(\eta_{\frak r(p_i)}-1){\bf e}_{k,\frak r(p_i)},
\\
{\bf V}_-(z) = {\bf e}_{kk} + \sum_{i=1}^{q-1}
\Big(\prod_{\xi=1}^{i-1} \eta_{\frak r(p_\xi)}\Big)
(\eta_{\frak r(p_i)}-1){\bf e}_{k,\frak r(p_i)}.
\end{align*}
The $(k,j)$th entry of the left hand side of \eqref{eq: CJpsi} becomes
\begin{align*}
& (\eta_j^{-1}-1)\bigg[1 + \sum_{i=1}^{q}
\Big(\prod_{\xi=1}^{i-1} \eta_{\frak r(p_\xi)}\Big)
(\eta_{\frak r(p_i)}-1)\bigg] + \Big(\prod_{\xi=1}^{q-1} \eta_{\frak r(p_\xi)}\Big)
(\eta_{\frak r(p_q)}-1)
\\= ~&
(\eta_j^{-1}-1)\bigg[1 + \sum_{i=1}^{q}
\Big(\prod_{\xi=1}^{i-1} \eta_{\frak r(p_\xi)}\Big)
(\eta_{\frak r(p_i)}-1)- \Big(\prod_{\xi=1}^{q} \eta_{\frak r(p_\xi)}\Big)
\bigg]=0,
\end{align*}
where the last equality is obtained by telescoping cancellation.  This is exactly the $(k,j)$th entry of ${\bf V}_-(z)$ because $\overrightarrow{a_k z_-}$ does not intersect ${\bf B}_j$.    

We repeat the same argument for $-\pi<\arg(z/a_k)\leq 0$.  The line segment $\overrightarrow{a_k z_-}$ intersects ${\bf B}_j$ while $\overrightarrow{a_k z_+}$ does not intersect ${\bf B}_j$. Let $q=q(k;z_-)$ and $p_i=p_i(k;z_-)$ such that $\frak l(p_q)=j$. We can write
\begin{align*}
{\bf V}_-(z) = {\bf e}_{kk} + \sum_{i=1}^{q}
\Big(\prod_{\xi=1}^{i-1} \eta^{-1}_{\frak l(p_\xi)}\Big)
(\eta^{-1}_{\frak l(p_i)}-1){\bf e}_{k,\frak l(p_i)},
\\
{\bf V}_+(z) = {\bf e}_{kk} + \sum_{i=1}^{q-1}
\Big(\prod_{\xi=1}^{i-1} \eta^{-1}_{\frak l(p_\xi)}\Big)
(\eta^{-1}_{\frak l(p_i)}-1){\bf e}_{k,\frak l(p_i)}.
\end{align*}
The $(k,j)$th entry of the left hand side of \eqref{eq: CJpsi} becomes
\begin{align}\nonumber
(\eta_j^{-1}-1)\bigg[1 + \sum_{i=1}^{q-1}
\Big(\prod_{\xi=1}^{i-1} \eta^{-1}_{\frak l(p_\xi)}\Big)
(\eta^{-1}_{\frak l(p_i)}-1)\bigg] 
=
(\eta_j^{-1}-1) \Big(\prod_{\xi=1}^{q-1} \eta^{-1}_{\frak l(p_\xi)}\Big),
\end{align}
which is exactly the $(k,j)$th entry of ${\bf V}_-(z)$.    

Since 
$${\bf W}_+(z)^{-1}{\bf W}_-(z)=\sum_{i=1}^\nu \frac{W_{j,-}(z)}{W_{j,+}(z)}{\bf e}_{ii} = \begin{bmatrix}
    I_{j-1}&&\vspace{0.1cm}\\
    &\eta_j^{-1}&\vspace{0.1cm}\\
    &&I_{\nu-j}
    \end{bmatrix},
$$
the identity \eqref{eq: CJpsi} proves the jump relation on ${\bf B}_j$.

Lastly, we claim that the jump relation on $\widehat{\bf B}_j$ in Proposition \ref{jump of psi} is equivalent to the following identity using Lemma \ref{W ratio}.
\begin{equation}\label{eq: CJpsiC}
{\bf V}_+(z)\widetilde{J}_{j}{\bf V}_-(z)^{-1}
 ={\bf W}(z)^{-1}\widetilde{J}_{j}{\bf W}(z).    
\end{equation}

We first evaluate the product of ${\bf V}(z)$ with a matrix with only $j$th column being nonzero as follows:
\begin{align*}
    {\bf V}(z)\bigg( \sum_{i=1}^\nu {\bf e}_{ij}\bigg)&=
    \sum_{k=1}^{\frak s} {\bf e}_{\frak r(k),j}
    \bigg(1+
    \sum_{i=1}^{q({\frak r}(k);\zeta)}
\Big(\prod_{\xi=1}^{i-1} \eta_{{\frak r}(p_\xi)}\Big)
(\eta_{{\frak r}(p_i)}-1)\bigg)
&&\text{where $p_*=p_*({\frak r}(k);z)$}
\\
&\quad + \sum_{k=1}^{\nu-\frak s} {\bf e}_{\frak l(k),j}\bigg(1+
    \sum_{i=1}^{q({\frak l}(k);\zeta)}
\Big(\prod_{\xi=1}^{i-1} \eta^{-1}_{{\frak l}(p_\xi)}\Big)
(\eta^{-1}_{{\frak r}(p_i)}-1)\bigg)
&&\text{where $p_*=p_*({\frak l}(k);z)$}
\\
&=\sum_{k=1}^{\frak s} {\bf e}_{\frak r(k),j}
\Big(\prod_{\xi=1}^{q(\frak r(k))} \eta_{{\frak r}(p_\xi)}\Big)+\sum_{k=1}^{\nu-\frak s} {\bf e}_{\frak l(k),j}
\Big(\prod_{\xi=1}^{q(\frak l(k))} \eta^{-1}_{{\frak l}(p_\xi)}\Big)&&\text{with $p_*$ as above}
\\
&=\sum_{i=1}^\nu \frac{W(z)}{W_i(z)} {\bf e}_{ij}.
    \end{align*}

From above we get
\begin{equation}\nonumber
    {\bf V}(z) \bigg(\sum_{i=1}^\nu {\bf e}_{ij}\bigg){\bf V}(z)^{-1} = \sum_{i=1}^\nu \bigg[\frac{W(z)}{W_j(z)}\bigg] {\bf e}_{ij}= \sum_{i=1}^\nu \frac{W_j(z)}{W_i(z)} {\bf e}_{ij},\qquad \text{$z\in\widehat{\bf B}_j$}.
\end{equation}
The identity \eqref{eq: CJpsiC} is proved because $\widetilde{J}_{j}=I_\nu+(\eta_j^{-1}-1)\sum_{i=1}^\nu{\bf e}_{ij}.$
\end{proof}

\subsection{Large \texorpdfstring{$z$}{z} behavior of \texorpdfstring{$\Psi(z)$}{psiz}}

We will identify the asymptotic behavior of $\Psi(z)$ as $z$ goes to $\infty$.  

\begin{lemma}
Let $\Psi(z)$ and ${\widetilde \Psi}(z)$ be defined in \eqref{eq:psi} and \eqref{eq:psi tilde} respectively, we have 
\begin{equation}\nonumber
{\Psi}(z) = {\bf V}(z){\widetilde {\bf V}}(z){\widehat \Psi}(z)  
\end{equation}
and
\begin{equation}\label{eq:cctilde}
\begin{split}
{\bf V}(z) {\widetilde {\bf V}}(z)=\displaystyle I_\nu+&\sum_{k=1}^{\frak s}\sum_{\substack{j=k+1\\j\notin {\cal P}({\frak r(k)};z)}}^{{\frak s}}
\Big(\prod_{\substack{i=1\\p_{i}<j}}^{q({\frak r(k)};z)} \eta_{\frak r(p_{i})}\Big)
(1-\eta_{\frak r(j)}){\bf e}_{{\frak r(k)},{\frak r(j)}}\quad \text{where $p_*=p_*({\frak r}(k);z)$}\\
+&\displaystyle \sum_{k=1}^{\nu-{\frak s}}\sum_{\substack{j=k+1\\j\notin {\cal P}({\frak l(k)};z)}}^{\nu-{\frak s}}
\Big(\prod_{\substack{i=1\\p_{i}<j}}^{q({\frak l(k)};z)} \eta_{\frak l(p_{i})}^{-1}\Big)
(\eta_{\frak l(j)}-1){\bf e}_{\frak l(k),\frak l(j)}\,\,\,\quad \text{where $p_*=p_*({\frak l}(k);z)$}, \end{split}
\end{equation} where
$ {\cal P}(k;z)=\{p_1(k;z),\dots,p_q(k;z)\}$ and $q=q(k;z).$
\end{lemma}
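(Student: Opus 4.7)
The first identity is immediate: substituting $\widetilde\Psi = \widetilde{\bf V}\widehat\Psi$ (from Lemma \ref{lemma 3.4} rewritten as \eqref{eq:psi tilde}) into $\Psi = {\bf V}\widetilde\Psi$ (from \eqref{eq:psi}) yields $\Psi = {\bf V}\widetilde{\bf V}\widehat\Psi$. The work is in establishing the second identity, which amounts to an explicit matrix multiplication.

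The plan is to exploit three structural features. First, both ${\bf V}(z)$ and $\widetilde{\bf V}(z)$ are block-diagonal with respect to the partition ${\cal I}=\{\frak r(1),\dots,\frak r(\frak s)\}\cup\{\frak l(1),\dots,\frak l(\nu-\frak s)\}$, so the product decomposes accordingly and I can treat the $\frak r$-block and the $\frak l$-block independently. Second, within the $\frak r$-block the only nonzero off-diagonal entries of ${\bf V}$ in row $\frak r(k)$ live at columns $\frak r(p_i)$ with $p_i=p_i(\frak r(k);z)>k$, with value $\bigl(\prod_{\xi=1}^{i-1}\eta_{\frak r(p_\xi)}\bigr)(\eta_{\frak r(p_i)}-1)$, while $\widetilde{\bf V}_{\frak r(m),\frak r(j)}=(1-\eta_{\frak r(j)})$ for $m<j$ and equals the diagonal entry for $m=j$. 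Third, the same pattern holds in the $\frak l$-block with $\eta_{\frak r(\cdot)}$ replaced by $\eta_{\frak l(\cdot)}^{-1}$.

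Treating the $\frak r$-block, for $k<j$ I would expand
\begin{equation*}
({\bf V}\widetilde{\bf V})_{\frak r(k),\frak r(j)} \;=\; {\bf V}_{\frak r(k),\frak r(j)} \;+\; (1-\eta_{\frak r(j)})\Bigl[\,1+\sum_{\substack{m\\k<m<j}}{\bf V}_{\frak r(k),\frak r(m)}\Bigr],
\end{equation*}
where the $1$ inside the bracket is the contribution of the diagonal entry at $m=k$. The inner sum retains only indices of the form $m=p_i$ with $k<p_i<j$, and the telescoping identity
\begin{equation*}
1+\sum_{i:\,p_i<j}\Bigl(\prod_{\xi=1}^{i-1}\eta_{\frak r(p_\xi)}\Bigr)\bigl(\eta_{\frak r(p_i)}-1\bigr) \;=\; \prod_{i:\,p_i<j}\eta_{\frak r(p_i)}
\end{equation*}
collapses the bracket. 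The computation then reduces to a two-case analysis on whether $j\in{\cal P}(\frak r(k);z)$. If $j=p_{i^\ast}$ lies in ${\cal P}$, then ${\bf V}_{\frak r(k),\frak r(j)}=\bigl(\prod_{i<i^\ast}\eta_{\frak r(p_i)}\bigr)(\eta_{\frak r(j)}-1)$ cancels the other term exactly and the entry vanishes; if $j\notin{\cal P}$, then ${\bf V}_{\frak r(k),\frak r(j)}=0$ and only the surviving term $(1-\eta_{\frak r(j)})\prod_{i:\,p_i<j}\eta_{\frak r(p_i)}$ remains, matching the claimed summand. The $\frak l$-block is handled by the same three-step recipe (expand, telescope, bifurcate) after systematically exchanging $\eta_{\frak r(p_i)}$ with $\eta_{\frak l(p_i)}^{-1}$ and $(1-\eta_{\frak r(j)})$ with $(\eta_{\frak l(j)}-1)$.

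The main obstacle is not computational but notational: keeping the indices $\frak r(\cdot),\frak l(\cdot),p_i(\cdot;z)$ and the set ${\cal P}(\cdot;z)$ straight, and verifying carefully that no cross-block contributions arise (which follows since ${\bf V}$ and $\widetilde{\bf V}$ permute $\frak r$-indices among themselves and $\frak l$-indices among themselves). Once the bookkeeping is set, the two cancellations above and the telescoping identity do all the work.
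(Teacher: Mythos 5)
Your proposal is correct and follows essentially the same approach as the paper: both compute the $(\frak r(k),\frak r(j))$th entry directly by pairing the $\frak r(k)$th row of ${\bf V}$ with the $\frak r(j)$th column of $\widetilde{\bf V}$, apply the telescoping identity $1+\sum_{i:p_i<j}\bigl(\prod_{\xi<i}\eta_{\frak r(p_\xi)}\bigr)(\eta_{\frak r(p_i)}-1)=\prod_{i:p_i<j}\eta_{\frak r(p_i)}$, and split into the two cases $j\in{\cal P}$ and $j\notin{\cal P}$, with the $\frak l$-block handled analogously after swapping $\eta_{\frak r(\cdot)}\leftrightarrow\eta_{\frak l(\cdot)}^{-1}$. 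You merely name the telescoping step explicitly where the paper uses it silently in its last equality; the content is identical.
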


\begin{proof}
By the definitions of ${\bf V}(z)$ and ${\widetilde {\bf V}}(z)$ in \eqref{eq:C} and \eqref{eq:C tilde}, we get the $({\frak r(k)},{\frak r(j)})$th entry of ${\bf V}(z){\widetilde {\bf V}}(z)$ by the following calculation
\begin{align*}
&\Big[{\bf V}(z) {\widetilde {\bf V}}(z)\Big]_{({\frak r(k)},{\frak r(j)})}=\Big[{\bf V}(z)\Big]_{{\frak r(k)}\text{th row}}\Big[ {\widetilde {\bf V}}(z)\Big]_{{\frak r(j)}\text{th column}}\vspace{0.3cm}\\
&=\displaystyle\bigg[\bigg({\bf e}_{\frak r(k),\frak r(k)}+ \sum_{i=1}^{q({\frak r}(k))}
\Big(\prod_{j=1}^{i-1} \eta_{{\frak r}(p_j)}\Big)
(\eta_{{\frak r}(p_i)}-1){\bf e}_{\frak r(k),{{\frak r}(p_i)}}\bigg)\bigg({\bf e}_{\frak r(j),\frak r(j)}+(1-\eta_{\frak r(j)})\sum_{k=1}^{j-1}{\bf e}_{\frak r(k),\frak r(j)}\bigg)\bigg]_{({\frak r(k)},{\frak r(j)})}\\
&= \begin{cases}
    \displaystyle\bigg(1+\sum_{\substack{i=1\\p_i<j}}^{q({\frak r}(k))}
\Big(\prod_{s=1}^{i-1} \eta_{\frak r(p_{s})}\Big)
(\eta_{\frak r(p_i)}-1)\bigg)
\left(
1-\eta_{\frak r(j)}\right)+
\Big(\prod_{\substack{i=1\\p_{i}<j}}^{q({\frak r}(k))} \eta_{\frak r(p_{i})}\Big)
\left(
\eta_{\frak r(j)}-1\right) &\text{if $j\in{\cal P}({\frak r(k)};z)$}
  \\
   \displaystyle \bigg(1+\sum_{\substack{i=1\\p_i<j}}^{q({\frak r}(k))}
\Big(\prod_{s=1}^{i-1} \eta_{\frak r(p_{s})}\Big)
(\eta_{\frak r(p_i)}-1)\bigg)
\left(
1-\eta_{\frak r(j)}\right)&\text{if $j\notin{\cal P}({\frak r(k)};z)$}
    \end{cases}\\
&= \begin{cases}
    0 &\text{if $j\in{\cal P}({\frak r(k)};z)$}
  \\
 \displaystyle \Big(\prod_{\substack{i=1\\p_{i}<j}}^{q({\frak r}(k))} \eta_{\frak r(p_{i})}\Big)
\left(
1-\eta_{\frak r(j)}\right) &\text{if $j\notin{\cal P}({\frak r(k)};z)$}
    \end{cases}.     
\end{align*}
Similarly, we obtain the  $({\frak l(k)},{\frak l(j)})$th entry of ${\bf V}(z){\widetilde {\bf V}}(z)$
\begin{align*}
&\Big[{\bf V}(z) {\widetilde {\bf V}}(z)\Big]_{({\frak l(k)},{\frak l(j)})}=\Big[{\bf V}(z)\Big]_{{\frak l(k)}\text{th row}}\Big[ {\widetilde {\bf V}}(z)\Big]_{{\frak l(j)}\text{th column}}\\
&=\displaystyle\bigg[\bigg({\bf e}_{\frak l(k),\frak l(k)}+ \sum_{i=1}^{q({\frak l}(k))}
\Big(\prod_{j=1}^{i-1} \eta^{-1}_{{\frak l}(p_j)}\Big)
\left(\eta^{-1}_{{\frak l}(p_i)}-1\right){\bf e}_{\frak l(k),{{\frak l}(p_i)}}\bigg)\bigg(\eta_{\frak l(j)}{\bf e}_{\frak l(j),\frak l(j)}+(\eta_{\frak l(j)}-1)\sum_{k=1}^{j-1} {\bf e}_{\frak l(k),\frak l(j)}\bigg)\bigg]_{({\frak l(k)},{\frak l(j)})}\\
&= \begin{cases}
    \displaystyle\bigg(1+\sum_{\substack{i=1\\p_i<j}}^{q({\frak l}(k))}
\Big(\prod_{s=1}^{i-1} \eta_{\frak l(p_{s})}^{-1}\Big)
(\eta_{\frak l(p_i)}^{-1}-1)\bigg)
\left(
\eta_{\frak l(j)}-1\right)+
\Big(\prod_{\substack{i=1\\p_{i}<j}}^{q({\frak l}(k))} \eta_{\frak l(p_{i})}^{-1}\Big)
\left(
\eta_{\frak l(j)}^{-1}-1\right) &\text{if $j\in{\cal P}({\frak l(k)};z)$}
   \\
   \displaystyle \bigg(1+\sum_{\substack{i=1\\p_i<j}}^{q({\frak l}(k))}
\Big(\prod_{s=1}^{i-1} \eta_{\frak l(p_{s})}^{-1}\Big)
(\eta_{\frak l(p_i)}^{-1}-1)\bigg)
\left(
\eta_{\frak l(j)}-1\right)&\text{if $j\notin{\cal P}({\frak l(k)};z)$}
    \end{cases}\\
&= \begin{cases}
    0 &\text{if $j\in{\cal P}({\frak l(k)};z)$}
   \\
 \displaystyle
\Big(\prod_{\substack{i=1\\p_{i}<j}}^{q({\frak l}(k))} \eta_{\frak l(p_{i})}^{-1}\Big)
\left(
\eta_{\frak l(j)}-1\right) &\text{if $j\notin{\cal P}({\frak l(k)};z)$}
    \end{cases}.    
\end{align*}
\end{proof}

\begin{prop}\label{eq: psi definition}
Let $K>0$ be sufficiently large such that ${\bf B}\cup {\bf B}[j]$ do not intersect in $\{z:|z|>K\}$. For $|z|>K$, we define $\psi_j(z)$ to be the row vector whose $k$th entry is given by
\begin{equation}\label{eq def psi}
[\psi_j(z)]_k =\bigg(\int_{\gamma_j} W_j( s)\prod_{i=1}^\nu{(s-a_i)^{n_i-\delta_{ik}}}\, \ee^{-N \overline z s } \dd s\bigg)^*,    
\end{equation}
where the integration contour $\gamma_j=\{a_j+(z-a_j)t,~t\geq 0\}$ is oriented in the direction of increasing $t$. Then we have
$$[\Psi(z)]_{jk}=[\psi_j(z)]_k,$$
where $\Psi(z)$ is defined at \eqref{eq:psi}.
\end{prop}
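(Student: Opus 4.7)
My plan is to prove \eqref{eq def psi} by anchoring the identity at a convenient $z_0$ and matching the jumps of both sides across the branch cuts reaching infinity, then invoking the uniqueness of analytic continuation on $\{|z|>K\}$.

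\textbf{Anchor.} I would take $z_0$ in the wedge $\arg a_j<\arg z_0<\arg a_{j+1}$ with $|z_0|$ sufficiently large and generic, so that the straight ray $\gamma_j(z_0)=\{a_j+(z_0-a_j)t:t\ge 0\}$ avoids every ${\bf B}_i$ ($i\neq j$) and every ${\bf B}_{ij}$. For such $z_0$, inspection of \eqref{eq:C} shows $q(j;z_0)=0$, so the $j$th row of ${\bf V}(z_0)$ reduces to ${\bf e}_j^\top$; hence $[\Psi(z_0)]_{jk}=[\widetilde\psi_j(z_0)]_k=[\widehat\psi_j(z_0)]_k$. The contour in \eqref{chi} defining $\chi_k^{(j)}(z_0)$ can be deformed to $\gamma_j(z_0)$, and along $\gamma_j(z_0)$ one has $W(s)=W_j(s)$ (the two branches agree near ${\bf B}_j$ and neither jumps along the ray), giving \eqref{eq def psi} at $z_0$.

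\textbf{Jump matching.} Proposition~\ref{jump of psi} implies that on $\{|z|>K\}$ the $j$th row of $\Psi$ has jumps only on ${\bf B}_{lj}$ ($l\neq j$) and on ${\bf B}_j$ (since $\widehat{\bf B}\subset\DD$). I would compute the jump of $F(z):=\int_{\gamma_j(z)}W_j(s)\prod_i(s-a_i)^{n_i-\delta_{ik}}e^{-N\bar z s}\,ds$ directly. As $z$ crosses ${\bf B}_{lj}$, the ray $\gamma_j(z)$ rotates around $a_j$ and sweeps through the branch cut ${\bf B}_{lj}$ of $W_j$, passing through $a_l$ at the crossing instant. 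Using exponential decay of the integrand at infinity, the wedge bounded by $\gamma_j(z_+)$ and $\gamma_j(z_-)$ collapses onto ${\bf B}_{lj}$, producing (up to an orientation sign)
\[
F_+(z)-F_-(z)\;=\;\pm\int_{{\bf B}_{lj}}\bigl([W_j]_+-[W_j]_-\bigr)(s)\prod_i(s-a_i)^{n_i-\delta_{ik}}e^{-N\bar z s}\,ds.
\]
The only factor of $W_j$ jumping on ${\bf B}_{lj}$ is $[(s-a_l)^{c_l}]_{{\bf B}[j]}$, giving $[W_j]_+/[W_j]_-=\eta_l$, and \eqref{def etakj} yields $[W_j]_+=W_l/\eta_{jl}$. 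Since $\gamma_l(z)={\bf B}_{lj}$ whenever $z\in{\bf B}_{lj}$, the right-hand side is a scalar multiple of $\overline{[\psi_l(z)]_k}$. Conjugating and using $|\eta_{jl}|=1$ (so $\overline{\eta_{jl}}=\eta_{jl}^{-1}$), with the correct choice of sign one recovers
\[
[\psi_{j,+}(z)]_k-[\psi_{j,-}(z)]_k\;=\;-(\eta_l-1)\eta_{jl}\,[\psi_l(z)]_k,
\]
which is exactly the jump prescribed by Proposition~\ref{jump of psi}. The jump on ${\bf B}_j$ is handled analogously using $W_j=W$ near ${\bf B}_j$, and matches $[{\bf W}\Psi]_+=[{\bf W}\Psi]_-$ from \eqref{eq: jump of Bbar}.

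\textbf{Conclusion.} Defining $\Psi'(z)$ by $[\Psi'(z)]_{jk}:=[\psi_j(z)]_k$, both $\Psi$ and $\Psi'$ satisfy the same jump relations on $\{|z|>K\}$ and agree at $z_0$, whence $R:=\Psi^{-1}\Psi'$ is analytic on $\{|z|>K\}$ with $R(z_0)=I$; a straightforward growth argument at infinity forces $R\equiv I$, giving \eqref{eq def psi}. The hardest part will be the orientation and $\eta$-factor bookkeeping in the wedge-collapse step, namely combining Lemma~\ref{za ratio} and Lemma~\ref{W ratio} to produce exactly the coefficient $-(\eta_l-1)\eta_{jl}$.
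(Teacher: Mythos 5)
Your proposal takes a genuinely different route from the paper. The paper's proof of this proposition is a direct contour--deformation computation: it rewrites $\psi_{\mathfrak r(j)}$ as $\widehat\psi_{\mathfrak r(j)}$ plus contour integrals around ${\bf B}_{\mathfrak r(i),\mathfrak r(j)}$, smoothly deforms those to contours around ${\bf B}_{\mathfrak r(i)}$ while tracking $\eta$-phases, and then matches the result entry-by-entry against the $\mathfrak r(j)$th row of $\Psi={\bf V}\widetilde{\bf V}\widehat\Psi$ using \eqref{eq:cctilde}, finally passing to general $\{a_i\}$ by analytic continuation in the parameters. Your strategy (anchor on a good wedge, match jumps, invoke a uniqueness argument) is a cleaner Riemann--Hilbert style framing, and the anchor and wedge-collapse jump computations are, modulo the orientation bookkeeping you flag, sound.

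However, the concluding uniqueness step has a genuine gap. You claim that $R:=\Psi^{-1}\Psi'$ is analytic on $\{|z|>K\}$, that $R(z_0)=I$, and that ``a straightforward growth argument at infinity forces $R\equiv I$.'' On the exterior of a disk this is false: $R(z)=I+(z-z_0)/z^2$ is analytic on $\{|z|>K\}$, tends to $I$ at infinity, and equals $I$ at $z_0$ without being constant. What you actually need is that $\Psi$ and $\Psi'$ agree on an \emph{open set}, after which the identity theorem finishes the job. Your anchor does give agreement on an open wedge, but only for the $j$th row with $z_0$ restricted to the wedge $\arg a_j<\arg z<\arg a_{j+1}$, and this wedge is different for different $j$. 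It is not generally possible to find a single $z_0$ for which $q(j;z_0)=0$ simultaneously for all $j$, so you cannot conclude that the full matrices $\Psi$ and $\Psi'$ agree on any common open set. Worse, the jump of the $j$th row on ${\bf B}_{lj}$ is
\[
[\Psi]_{j,+}=[\Psi]_{j,-}-(\eta_l-1)\eta_{jl}\,[\Psi]_{l},
\]
which couples rows: to propagate the identity $[\Psi]_j=\psi_j$ across ${\bf B}_{lj}$ you must already know $[\Psi]_l=\psi_l$ \emph{on} ${\bf B}_{lj}$, which need not lie in the anchor wedge $W_l$ of row $l$. There is no obvious ordering of indices that makes this coupling triangular (both ${\bf B}_{lj}$ and ${\bf B}_{jl}$ appear as jump contours), so the propagation is circular as written. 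This, rather than the $\eta$-bookkeeping, is the real obstacle; the paper sidesteps it entirely by computing the contour deformations directly.
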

\begin{proof}
For $z$ as given in the proposition and $1\leq j <{\frak s(z)}$, we will show that
\begin{equation}\label{eq: psi relation}
   [\Psi(z)]_{{\frak r(j)}\text{th row}}=\psi_{\frak r(j)}(z).
\end{equation}

There exist ${\cal P}({\frak r(j)};z)=\{p_1({\frak r(j)};z),\,p_2({\frak r(j)};z),\dots,p_q({\frak r(j)};z)\}$ and $q=q({\frak r(j)};z)$ such that exactly $\{{\bf B}_{\frak r(p_1)},\,{\bf B}_{\frak r(p_2)},\dots,{\bf B}_{\frak r(p_q)}\}$ among ${\bf B}\setminus\{{\bf B}_{\frak r(j)}\}$ intersects the line segment $\overrightarrow{a_{\frak r(j)}z}$. Then we have
\begin{equation}\nonumber
[\psi_{\mathfrak r(j)}(z)]_k =[\widehat\psi_{\mathfrak r(j)}(z)]_k +\sum_{\substack{i>j\\i\notin{\cal P}({\frak r(j)};z) }}^{\mathfrak s(z)}\bigg(\oint_{{\bf B}_{\frak r(i),\frak r(j)}} W_{\frak r(j)}( s)\prod_{i=1}^\nu{(s-a_i)^{n_i-\delta_{ik}}}\, \ee^{-N \overline z s } \dd s\bigg)^*,
\end{equation} 
where the integration contour is enclosing ${{\bf B}_{\frak r(i),\frak r(j)}}$  in the clockwise orientation. See Figure \ref{fig:sub}.

\begin{figure}   
\centering
\begin{tikzpicture}[scale=0.65]
%\draw[dotted]  (0,0) -- (5,5);
%\draw[dotted]  (0,0) -- (5,7);
%\draw[dotted]  (0,0) -- (2,4);
%\draw[dotted]  (0,0) -- (1,9);
%\draw[dotted]  (0,0) -- (-1,5);
%\draw[dotted]  (0,0) -- (-3,10);
%\draw[dotted]  (0,0) -- (-4,8);
%\draw[dotted]  (0,0) -- (-6,7);

\draw[->] (5,5) -- (9,9);
%\draw[thick,dotted,->] (5,7) -- (7.5,10.5);
%\draw[thick,->] (2,4) -- (5,10);
%\draw[thick,dotted,->] (1,9) -- (1.4,12.6);
%\draw[thick,->] (-1,5) -- (-2.5,12.5);
%\draw[thick,->] (-3,10) -- (-3.8,38/3);
%\draw[thick,->] (-4,8) -- (-6.5,13);
%\draw[thick,->] (-6,7) -- (-9.5,66.5/6);

\draw[blue, dotted]  (5,5) -- (-4,8);
\draw[blue, dotted]  (5,5) -- (5,7);
\draw[blue, dotted]  (5,5) -- (1,9);
%\draw[blue, dashed]  (5,5) -- (2,4);
%\draw[blue, dashed]  (5,5) -- (-1,5);

\draw[blue,->]  (-4,8) -- (-10.5,30.5/3);
\draw[blue,->]  (5,7) -- (5,12);
\draw[blue,->]  (1,9) -- (-3,13);
%\draw[blue, thick,->]  (2,4) -- (-9,1/3);
%\draw[blue, thick,->]  (-1,5) -- (-9,5);
%\draw[blue, thick,->]  (-6,7) -- (-10,85/11);

\draw[rotate around={-45:(5,5)},blue,dashed,->] (5,5) parabola (4,11);
%\draw[rotate around={10:(5,7)},blue,dashed,->] (5,7) parabola (3,13);
\draw[rotate around={0:(5,7)},blue,dashed,->] (5,7) parabola (3.5,12.2);
\draw[rotate around={0:(5,7)},blue,dashed] (5,7) parabola (6.5,12.2);
%\draw[rotate around={15:(1,9)},blue,dashed,->] (1,9) parabola (-5,14);
\draw[rotate around={45:(1,9)},blue,dashed,->] (1,9) parabola (-0.5,14.5);
\draw[rotate around={45:(1,9)},blue,dashed] (1,9) parabola (2.5,14.5);

\foreach \Point/\PointLabel in { (5,5)/a_1, (5,7)/a_2, (1,9)/a_4, (-4,8)/a_6}
\draw[fill=black] \Point circle (0.05) node[below right] {$\PointLabel$};
%\foreach \Point/\PointLabel in {(-9,9.9)/z}
%\draw[fill=red] \Point circle (0.05) node[left] {$\PointLabel$};

\foreach \Point/\PointLabel in {(-4.4,8.36)/z}
\draw[fill=red] \Point circle (0.07) node[right] {$\PointLabel$};
%\draw[red, thick,->]  (5,5) -- (-4.4,8.3);
\draw[red, dashed,->]  (5,5) -- (-10,10.3617);

\foreach \Point/\PointLabel in {(9,9)/{\bf B}_1, (-10,11)/{z\times\infty}}
\draw[fill=black]  \Point
 node[below right] {$\PointLabel$};
 
\foreach \Point/\PointLabel in {(5,12)/{\bf B}_{21}, (-3,13)/{\bf B}_{41}, (-10,10)/{\bf B}_{61}}
\draw[fill=black]  \Point
 node[below left] {$\PointLabel$};
 
 %\draw[fill=pink, nearly transparent]  (0,0) -- (5,5) -- (-4,8) -- cycle;
 
   %\draw[fill=orange, nearly transparent]  (5,5) -- (-4,8) -- (-6,12) -- (8,8) -- cycle;
 
  %\draw[fill=red, nearly transparent]  (-4,8) -- (-6,12) -- (-10,10) -- cycle;
  
    %\draw[fill=yellow, nearly transparent]  (0,0)--(-4,8) -- (-10,10) -- (-6.2,1.13) -- cycle;
 \end{tikzpicture}
 \caption{The red dashed line is the integration contour of $\psi$ \eqref{eq def psi} and the blue dashed lines are the integration contours of $\widehat\psi$'s \eqref{psi hat}. }\label{fig:sub}
 \end{figure}
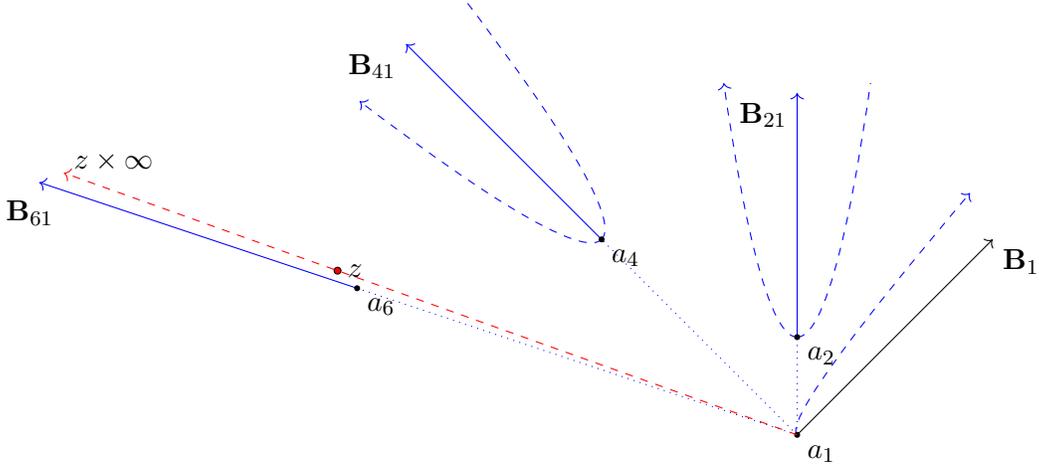

Let us assume that  $\{{\bf B}_{\frak r(i),\frak r(j)}\}_{{j<i\leq {\mathfrak s}(z)},i\notin{\cal P}({\frak r(j)};z) }$ and $\{{\bf B}_{\frak r(i)}\}_{{j<i\leq {\mathfrak s}(z)},i\notin{\cal P}({\frak r(j)};z) }$ are connected to $\infty$ by the same order in the following sense: when $i<i'$
\begin{equation}\label{eq: arg a}
    0<\arg a_{\frak r(i')}-\arg a_{\frak r(i)}<\pi,\quad  0<\arg (a_{\frak r(i')}-a_{\frak r(j)})-\arg (a_{\frak r(i)}-a_{\frak r(j)})<\pi.
\end{equation}
In such case, $\{{\bf B}_{\frak r(i),\frak r(j)}\}_{{j<i\leq {\mathfrak s}(z)},i\notin{\cal P}({\frak r(j)};z) }$ can be smoothly deformed into $\{{\bf B}_{\frak r(i)}\}_{{j<i\leq {\mathfrak s}(z)},i\notin{\cal P}({\frak r(j)};z) }$ without changing the homotopy relation in the branch cuts of $W_{\frak r(j)}(z)$. Let $W_{\frak r(j)}(z)$ change into ${\widetilde W}_{\frak r(j)}(z)$ by this deformation, then we have
\begin{equation}\nonumber
[\psi_{\mathfrak r(j)}(z)]_k =[\widehat\psi_{\mathfrak r(j)}(z)]_k +\sum_{\substack{i>j\\i\notin{\cal P}({\frak r(j)};z) }}^{\mathfrak s(z)}\bigg(\oint_{{\bf B}_{\frak r(i)}} {\widetilde W}_{\frak r(j)}(s)\prod_{i=1}^\nu{(s-a_i)^{n_i-\delta_{ik}}}\, \ee^{-N \overline z s } \dd s\bigg)^*.
\end{equation} 
For a given ${\frak r(i)}$ with $i\notin {\cal P}({\frak r(j)};z)$, we claim
\begin{equation}\label{eq: claim sub}
\bigg(\oint_{{\bf B}_{\frak r(i)}} {\widetilde W}_{\frak r(j)}(s)\prod_{i=1}^\nu{(s-a_i)^{n_i-\delta_{ik}}}\, \ee^{-N \overline z s } \dd s\bigg)^*=\Big(\prod_{\substack{\xi=1\\p_{\xi}<i}}^{q({\frak r(j)})} \eta_{\frak r(p_{\xi})}\Big)\bigg(\oint_{{\bf B}_{\frak r(i)}} {W}(s)\prod_{i=1}^\nu{(s-a_i)^{n_i-\delta_{ik}}}\, \ee^{-N \overline z s } \dd s\bigg)^*.
\end{equation} It follows from the fact that
$$\Big({\widetilde W}_{\frak r(j)}(w)\Big)^*=\Big(\prod_{\substack{\xi=1\\p_{\xi}<i}}^{q({\frak r(j)};z)} \eta_{\frak r(p_{\xi})}\Big)\Big(W(w)\Big)^*,\quad  \text{when $w$ is near $a_{\frak r(i)}$}.$$
This is proven because the line segment $\overrightarrow{a_{\frak r(j)}a_{\frak r(i)}}$ intersects exactly $\{{\bf B}_{\frak r(p_\xi)}\}_{\xi=1,p_\xi<i}^{q({\frak r(j)})}$ among the branch cuts of $W(z)$ and the same line segment does not intersect branch cuts of ${\widetilde W}_{\frak r(j)}(z)$.   If there is $\frak r(i')$ with $j<i'<i$ and $i'\notin {\cal P}({\frak r(j)};z)$ such that ${\bf B}_{\frak r(i')}$ intersects the line segment $\overrightarrow{a_{\frak r(j)}a_{\frak r(i)}}$, then we have 
\begin{equation}\nonumber
    0<\arg a_{\frak r(i)}-\arg a_{\frak r(i')}<\pi,\quad  0<\arg (a_{\frak r(i')}-a_{\frak r(j)})-\arg (a_{\frak r(i)}-a_{\frak r(j)})<\pi,
\end{equation}
which contradicts the assumption \eqref{eq: arg a}.

We also note that ${\widetilde W}_{\frak r(j)}(z)=W(z)$ when $z$ is near ${\bf B}_{\frak r(j)}$. As a consequence of the claim in \eqref{eq: claim sub}, we have
\begin{equation}\label{eq: sub result}
\begin{split}
[\psi_{\mathfrak r(j)}(z)]_k&=\displaystyle[\widehat\psi_{\mathfrak r(j)}(z)]_k +\sum_{\substack{i>j\\i\notin{\cal P}({\frak r(j)};z) }}^{\mathfrak s(z)}\bigg(\oint_{{\bf B}_{\frak r(i),\frak r(j)}} W_{\frak r(j)}( s)\prod_{i=1}^\nu{(s-a_i)^{n_i-\delta_{ik}}}\, \ee^{-N \overline z s } \dd s\bigg)^*\\ &=\displaystyle[\widehat\psi_{\mathfrak r(j)}(z)]_k +\sum_{\substack{i>j\\i\notin{\cal P}({\frak r(j)};z) }}^{\mathfrak s(z)}\bigg(\oint_{{\bf B}_{\frak r(i)}} {\widetilde W}_{\frak r(j)}(s)\prod_{i=1}^\nu{(s-a_i)^{n_i-\delta_{ik}}}\, \ee^{-N \overline z s } \dd s\bigg)^*\\ &=\displaystyle[\widehat\psi_{\mathfrak r(j)}(z)]_k +\sum_{\substack{i>j\\i\notin{\cal P}({\frak r(j)};z) }}^{\mathfrak s(z)}\Big(\prod_{\substack{\xi=1\\p_{\xi}<i}}^{q({\frak r(j)})} \eta_{\frak r(p_{\xi})}\Big)\bigg(\oint_{{\bf B}_{\frak r(i)}} {W}(s)\prod_{i=1}^\nu{(s-a_i)^{n_i-\delta_{ik}}}\, \ee^{-N \overline z s } \dd s\bigg)^*\\ &=\displaystyle[\widehat\psi_{\mathfrak r(j)}(z)]_k +\sum_{\substack{i>j\\i\notin{\cal P}({\frak r(j)};z) }}^{\mathfrak s(z)}\Big(\prod_{\substack{\xi=1\\p_{\xi}<i}}^{q({\frak r(j)})} \eta_{\frak r(p_{\xi})}\Big)(1-\eta_{\frak r(i)})[\widehat\psi_{\mathfrak r(i)}(z)]_k .
\end{split}
\end{equation} 
A similar calculation shows that
\begin{equation}\nonumber
   [\Psi(z)]_{{\frak l(j)}\text{th row}}=\psi_{\frak l(j)}(z).
\end{equation}

We observe that both ${\frak r(j)}$th row of ${\bf V}(z)$ \eqref{eq:C} and the matrix ${\widetilde {\bf V}}(z)$ \eqref{eq:C tilde} do not change under changing the locations of $ a_{\frak r(i)}$'s as long as $\arg a_{\frak r(i)}$'s are all preserved for the corresponding $i$'s and $\{{\bf B}_{\frak r(i)}\}$'s do not intersect $\overrightarrow{a_{\frak r(j)}z}$. 
One can see that any given $a_{\frak r(i)}$'s can be deformed in this way into the distribution described in \eqref{eq: arg a}. Under this deformation, both 
$\widehat\psi$'s \eqref{psi hat} and $\psi$'s \eqref{eq def psi} can be analytically continued in the space of parameters $\{\overline{a}_{\frak r(i)} \}_{i>j,i\notin {\cal P}}$.   Similar argument for $a_{\frak l(*)}$ proves \eqref{eq: sub result} for arbitrary  $a$'s. 
\end{proof}

\begin{prop}\label{poles}
For $j$ and $k$ are from $\{1,\dots,\nu\}$, $[\psi_j(z)]_k$ is analytic away from ${\bf B}\cup{\widehat{\bf B}}\cup {\bf B}[j]$ and the strong asymptotic behavior of $[\psi_j(z)]_k$ is given by
\begin{equation}\label{eq:poles}
[\psi_j(z)]_k=
\displaystyle\frac{C_{jk}\, \ee^{-N z \overline{a}_j}}{(Nz)^{c_j+n_j+1-\delta_{kj}}}\left(1+{\cal O}\left(\frac{1}{z}\right)\right),\quad z\to\infty,
\end{equation}
where 
\begin{equation}\nonumber
    C_{jk}=\lim_{\omega\to a_j}\frac{W_j(\omega)}{(\omega-a_j)^{c_j}} \prod_{i\neq j}^\nu{\left(a_j-a_i\right)^{n_i-\delta_{ik}}}\Gamma\left(c_j+n_j+1-\delta_{kj}\right).
\end{equation} Also $z^{n+\sum c}\psi_j(z)$ is bounded as $z$ goes to the origin. Furthermore, $W_j(z)\psi_j(z)-W_k(z)\psi_k(z)$ is bounded near the origin for $j,k\in\{1,\dots,\nu\}$.
\end{prop}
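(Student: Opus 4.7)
The statement bundles four assertions about $[\psi_j(z)]_k$ --- analyticity in $z$, the strong asymptotic formula at infinity, a controlled singularity at the origin after multiplication by $z^{n+\sum c}$, and a pairwise cancellation of leading singularities --- so I would establish them one at a time.

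For analyticity, I would observe that the integrand is analytic in $s$ away from ${\bf B}[j]$ and decays exponentially along $\gamma_j(z)$ because $\re(\bar z(z-a_j))>0$ for $|z|$ sufficiently large.  As $z$ varies, the ray $\gamma_j(z)$ rotates around $a_j$, so the integral continues analytically in $z$ until the rotation forces the contour to sweep past a branch cut of the integrand, producing a discontinuity exactly on ${\bf B}[j]$.  The remaining potential jumps, on ${\bf B}\cup\widehat{\bf B}$, come from the piecewise-constant transition matrix ${\bf V}(z)$ through the identity $\Psi={\bf V}\widetilde\Psi$ established in Proposition~\ref{eq: psi definition}.

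For the large-$z$ behavior I would apply Watson's lemma.  Parametrizing $s=a_j+(z-a_j)t$, $\dd s=(z-a_j)\,\dd t$, the exponential $\ee^{-N\bar z s}=\ee^{-N\bar z a_j}\ee^{-N\bar z(z-a_j)t}$ forces concentration near $t=0$, where the integrand behaves as $A_{jk}\,(s-a_j)^{c_j+n_j-\delta_{kj}}$ with $A_{jk}=\lim_{\omega\to a_j}\frac{W_j(\omega)}{(\omega-a_j)^{c_j}}\prod_{i\neq j}(a_j-a_i)^{n_i-\delta_{ik}}$.  Rescaling $u=N\bar z(z-a_j)t$ and evaluating the resulting Gamma integral gives the leading contribution $A_{jk}\,\Gamma(c_j+n_j+1-\delta_{kj})\,\ee^{-N\bar z a_j}(N\bar z)^{-(c_j+n_j+1-\delta_{kj})}$; complex-conjugating yields the stated formula with the prescribed $C_{jk}$, and the standard Watson error bound supplies the $1+\mathcal{O}(1/z)$ correction.

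For the behavior near the origin, the exponential factor no longer suppresses the integrand as $s\to\infty$, so I would extract its dominant large-$s$ asymptotic $W_j(s)\prod(s-a_i)^{n_i-\delta_{ik}}\sim c_\infty^{(j)} s^{n+\sum c-1}$ and recognize that integrating this against $\ee^{-N\bar z s}$ produces a leading pole $c_\infty^{(j)}\,\Gamma(n+\sum c)\,(N\bar z)^{-(n+\sum c)}$.  Multiplying by $z^{n+\sum c}$ (after conjugation) absorbs this divergence.  For the pairwise cancellation, the same leading pole appears in both $W_j(z)[\psi_j(z)]_l$ and $W_k(z)[\psi_k(z)]_l$; the main technical obstacle is to verify that the phase picked up by $W_j(s)/W_k(s)$ along $\gamma_j(z)$ versus $\gamma_k(z)$ as $s\to\infty$ matches the quotient $W_j(z)/W_k(z)$ of the analytic continuations near the origin.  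This branch-tracking step can be organized via the $\eta$-factors of Lemmas~\ref{za ratio} and~\ref{W ratio}, after which the leading singularities cancel exactly and $W_j\psi_j-W_k\psi_k$ remains bounded as $z\to 0$.
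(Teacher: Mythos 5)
Your analyticity argument and your large-$z$ Watson-type computation track the paper's proof closely (the paper uses the change of variable $X=N(s-a_j)(\bar z-\bar a_j)$, you use $u=N\bar z(z-a_j)t$; both rescalings turn the exponential into $\ee^{-u}$ and read off the Gamma integral). The two near-origin assertions are where I see problems.

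First, the integral representation \eqref{eq def psi} of $[\psi_j(z)]_k$ over $\gamma_j(z)$ is only asserted in Proposition~\ref{eq: psi definition} for $|z|>K$, since for small $|z|$ the ray $\gamma_j(z)$ may run afoul of the competing cuts ${\bf B}$ and ${\bf B}[j]$. You cannot therefore extract the leading $(Nz)^{-(n+\sum c)}$ pole by applying Watson's lemma at $s\to\infty$ directly to the $\psi_j$ integral in a neighborhood of $z=0$. The paper instead works with $\widetilde\psi_j$, whose integral representation \eqref{chi} (contour from $a_j$ to $z\times\infty$) is valid for small $|z|$: it splits the contour into a bounded piece $\int_{a_j}^0$ plus $\int_0^{z\times\infty}$, rescales $X=Ns\bar z$ on the second piece to produce $\Gamma(n+\sum c)/(Nz)^{n+\sum c}$, and then passes to $\psi_j$ through $\Psi={\bf V}\widetilde\Psi$ using that ${\bf V}$ is piecewise \emph{constant}, so multiplication by its entries cannot create new singularities. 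You should either reproduce this route or explain why the $\gamma_j$ representation still controls the origin behavior.

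Second, for the boundedness of $W_j\psi_j - W_k\psi_k$ your plan is to show the two leading poles cancel by matching the phase of $W_j(s)/W_k(s)$ at large $s$ along $\gamma_j$ vs.\ $\gamma_k$ against the prefactors $W_j(z),W_k(z)$. This is the hard branch-tracking you identify but do not carry out, and it is precisely what the paper's argument sidesteps. The paper observes that $\widetilde\psi_j-\widetilde\psi_k$ is \emph{entire} --- it is the integral of an entire function over the fixed compact segment $[a_j,a_k]$ --- then expands $W_j\psi_j$ in terms of $W\widetilde\psi_i$'s using the explicit block structure of ${\bf V}$ in \eqref{eq:C}, and shows via the telescoping product of Lemma~\ref{W ratio} that the expansion coefficients sum to $1$. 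Consequently $W_j\psi_j - W_k\psi_k$ is a linear combination of pairwise differences $W(\widetilde\psi_i-\widetilde\psi_{i'})$, all of which are entire, and boundedness follows with no phase computation. This is a genuinely different and more economical route; if you keep your approach you must still verify the cancellation exactly, which would amount to reproving Lemma~\ref{W ratio} in disguise.
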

\begin{proof}
By Proposition \ref{jump of psi}, we have $[\psi_j(z)]_k$ is analytic away from ${\bf B}\cup{\widehat{\bf B}}\cup {\bf B}[j]$. By Proposition \ref{eq: psi definition}, we get
$$[\psi_j(z)]_k =\bigg(\int_{\gamma_j} W_j( s)\prod_{i=1}^\nu{(s-a_i)^{n_i-\delta_{ik}}}\, \ee^{-N \overline z s } \dd s\bigg)^*, $$
where the integration contour $\gamma_j$ is described in \eqref{eq def psi}. Changing the integration variable such that $X=N(s-a_j)(\overline{z}-\overline{a}_j),$ or, equivalently,  $s={a}_j+\frac{X}{N(\overline{z}-\overline{a}_j)},$ we obtain
\begin{equation}\nonumber
\begin{split}
\displaystyle[\psi_j(z)]_k
&=\displaystyle\Bigg(\int_{0}^{\infty} W_j\left(\frac{X}{N(\overline{z}-\overline{a}_j)}+a_j\right)\prod_{i=1}^\nu{\left(\frac{X}{N(\overline{z}-\overline{a}_j)}+a_j-a_i\right)^{n_i-\delta_{ik}}}\frac{\ee^{-N\overline{z}\left(a_j+\frac{X}{N(\overline{z}-\overline{a}_j)}\right)  } \dd X}{N(\overline{z}-\overline{a}_j)}\Bigg)^*\\
&=\displaystyle\bigg(\int_{0}^{\infty}\lim_{\omega\to a_j}\frac{W_j(\omega)}{(\omega-a_j)^{c_j}} \prod_{i\neq j}^\nu{\left(a_j-a_i\right)^{n_i-\delta_{ik}}}\left(\frac{X}{N\overline{z}}\right)^{c_j+n_j-\delta_{kj}}\frac{\ee^{-N \overline{z} a_j-X } \dd X}{N\overline{z}}\bigg)^*\left(1+{\cal O}\left(\frac{1}{z}\right)\right).
\end{split}
\end{equation}
We have \eqref{eq:poles} by $\left(a_j-a_i+\frac{X}{N(\overline{z}-\overline{a}_j)}\right)\sim\left(a_j-a_i\right)$ and $(z-a_j)= z\left(1+{\cal O}\left(1/{z}\right)\right)$ as $z\to\infty.$ 

When $z\to0,$ by the definition of $[\widetilde \psi_j(z)]_k$ in \eqref{chi}, we have
\begin{align*}
[\widetilde \psi_j(z)]_k &=\displaystyle\bigg(\int_{a_j}^{z\times\infty} W( s)\prod_{i=1}^\nu{(s-a_i)^{n_i-\delta_{ik}}}\, \ee^{-N \overline z s } \dd s\bigg)^* \\
&=\displaystyle\bigg(\int_{a_j}^{0} W( s)\prod_{i=1}^\nu{(s-a_i)^{n_i-\delta_{ik}}}\, \ee^{-N \overline z s } \dd s\bigg)^*+\bigg(\int_{0}^{z\times\infty} W( s)\prod_{i=1}^\nu{(s-a_i)^{n_i-\delta_{ik}}}\, \ee^{-N \overline z s } \dd s\bigg)^*,
    \end{align*}
where the 1st term in the 2nd equality is finite. For the 2nd term, we will change the integration variable such that $X=Ns\overline{z},$ we obtain
\begin{equation}\nonumber
\begin{split}
\displaystyle\bigg(\int_{0}^{z\times\infty} W( s)\prod_{i=1}^\nu{(s-a_i)^{n_i-\delta_{ik}}}\, \ee^{-N \overline z s } \dd s\bigg)^*
&=\displaystyle\bigg(\int_{0}^{\infty} W\left( \frac{X}{N\overline{z}}\right)\prod_{i=1}^\nu{\left(\frac{X}{N\overline{z}}-a_i\right)^{n_i-\delta_{ik}}}\frac{\ee^{-X } \dd X}{N\bar{z}}\bigg)^*\\
&=\displaystyle\frac{\Gamma\left(n+\sum c\right)}{(Nz)^{n+\sum c}}\left(1+{\cal O}\left(z\right)\right),
\end{split}
\end{equation}
where we apply $\left(\frac{X}{Nz}-a_i\right)=\frac{X}{Nz}\left(1+{\cal O}\left(z\right)\right),\,\, z\to0$ to the last equality. Therefore, $z^{n+\sum c}[\widetilde \psi_j(z)]_k$ is bounded as $z$ goes to the origin. It follows by $\Psi(z) = {\bf V}(z) \widetilde \Psi(z)$ in \eqref{eq:psi} and ${\bf V}(z)$ \eqref{eq:C} is piecewise constant function, $z^{n+\sum c}\psi_j(z)$ is also bounded as $z$ goes to the origin.

To prove $W_j(z)\psi_j(z)-W_k(z)\psi_k(z)$ is bounded near the origin for $j,k\in\{1,\dots,\nu\}$, it is enough to show that $\widetilde\psi_j(z)-\widetilde\psi_k(z)$ is bounded as $z$ goes to the origin for $j,k\in\{1,\dots,\nu\}$.
By the definition of $[\widetilde \psi_j(z)]_{i'}$ in \eqref{chi}, we have
\begin{equation}\nonumber
\begin{array}{lll}
[\widetilde \psi_j(z)-\widetilde \psi_k(z)]_{i'} &=&\displaystyle\bigg(\int_{a_j}^{a_k} W( s)\prod_{i=1}^\nu{(s-a_i)^{n_i-\delta_{ii'}}}\, \ee^{-N \overline z s } \dd s\bigg)^* ,
\end{array}
    \end{equation}
which is the integral of an entire function over a compact set, this shows that $\widetilde\psi_j(z)-\widetilde\psi_k(z)$ is bounded near the origin for $j,k\in\{1,\dots,\nu\}$.  Expanding $W_{\frak r(j)}(z)\psi_{\frak r(j)}(z)$ in terms of $W(z)\widetilde\psi$'s by \eqref{eq:C} and \eqref{eq:psi}, we observe that
the sum of the linear coefficient of the expansion is independent of ${\frak r(j)}$ and given by
$$\frac{W_{\frak r(j)}(z)}{W(z)}\left(\sum_{i=1}^{q({\frak r}(j))}
\Big(\prod_{j=1}^{i-1} \eta_{{\frak r}(p_j)}\Big)
(\eta_{{\frak r}(p_i)}-1)\right)=\frac{W_{\frak r(j)}(z)}{W(z)}\prod_{i=1}^{q({\frak r}(j))}\eta_{\frak r(p_i)}=1,$$
where we apply Lemma \ref{W ratio} to the last identity. Similarly, we expand $W_{\frak l(j)}(z)[\psi_{\frak l(j)}(z)]_{i'}$ in terms of $W(z)\widetilde\psi$'s, we have
$$\frac{W_{\frak l(j)}(z)}{W(z)}\left(\sum_{i=1}^{q({\frak l}(k))}
\Big(\prod_{j=1}^{i-1} \eta^{-1}_{{\frak l}(p_j)}\Big)
\left(\eta^{-1}_{{\frak l}(p_i)}-1\right)\right)=\frac{W_{\frak l(j)}(z)}{W(z)}\prod_{i=1}^{q({\frak l}(j))}\eta^{-1}_{\frak r(p_i)}=1.$$
Consequently, if one expands $\left[W_j(z)\psi_j(z)-W_k(z)\psi_k(z)\right]_{i'}$ in terms of $W(z)\widetilde\psi$'s the sum of the linear coefficient is zero and, therefore, $\left[W_j(z)\psi_j(z)-W_k(z)\psi_k(z)\right]_{i'}$ can be expressed as the sum of pairwise difference of $W(z)\widetilde\psi$'s. Since the difference of a pair, $[\widetilde\psi_j(z)-\widetilde\psi_k(z)]$, is entire for $j,k\in\{1,\dots,\nu\}$, we have that $W_j(z)\psi_j(z)-W_k(z)\psi_k(z)$ is bounded near the origin for $j,k\in\{1,\dots,\nu\}$.
\end{proof}

Let us define the $\nu\times \nu$ matrix functions $\Psi_0(z)$ and $\Psi_j(z)$ by,
\begin{equation}\label{Psi0Psij} \begin{cases}
\Psi_0(z):={\bf C}\,{\bf E}(z){\bf N}_0(z)\Psi(z),\\
\Psi_j(z):={\bf W}(z)^{-1}\,{\bf C}\,{\bf E}(z){\bf N}_j(z){\bf W}(z)\,\Psi(z),\quad j=1,\dots,\nu,
\end{cases}
\end{equation}
where
\begin{align}\label{const 1}
   &{\bf C}={\rm diag}\bigg(\frac{N^{c_1+n_1}}{C_{11}\ee^{N \ell_1}},\dots,\frac{N^{c_\nu+n_\nu}}{C_{\nu\nu}\ee^{N \ell_\nu}}\bigg),
   \\ 
   &C_{jj}=\lim_{\omega\to a_j}\frac{W_j(\omega)}{(\omega-a_j)^{c_j}} \prod_{i\neq j}^\nu{\left(a_j-a_i\right)^{n_i}}\Gamma\left(c_j+n_j\right), \\
   &{\bf E}(z)={\rm diag}(E_1(z),\dots,E_\nu(z))\quad \mbox{where }
   E_j(z)=\exp\big[N(\overline a_jz+ \ell_j)\big], \\
   \label{def N0}
&{\bf N}_0(z)={\rm diag}(z^{c_1+n_1},\dots,z^{c_\nu+n_\nu}),\\ 
&{\bf N}_j(z)=\left[{\begin{array}{c:c:c}
\begin{matrix}
\vspace{-0.4cm}\\ I_{j-1}
\end{matrix}
&\begin{matrix}\vspace{-0.4cm}\\ -1\\\vdots\\-1\end{matrix} &\begin{matrix}\vspace{-0.4cm}\\ {\bf
0}\end{matrix}\vspace{0.1cm}\\\hdashline
\begin{matrix}\vspace{-0.4cm}\\{\bf 0}\end{matrix}
&\begin{matrix}\vspace{-0.4cm}\\ [z^{n+\sum c}]_{{\bf B}[j]}\end{matrix}
&\begin{matrix}\vspace{-0.4cm}\\ {\bf 0}\end{matrix}\vspace{0.1cm}\\\hdashline
\begin{matrix}\vspace{-0.4cm}\\ {\bf 0}\end{matrix}
&\begin{matrix}\vspace{-0.4cm}\\ -1\\\vdots\\-1\end{matrix} &\begin{matrix}
\vspace{-0.4cm}\\ I_{\nu-j}
\end{matrix}\end{array}}\right].    \end{align}
We remind that $[z^{n+\sum c}]_{{\bf B}[j]}$ has the branch cuts on ${\bf B}[j]\cup \widehat {\bf B}$, see the definition in \eqref{eq24}.  

We also note that
\begin{equation} \label{zWzW}
    \frac{z^{\sum c}}{W(z)}=\frac{[z^{\sum c}]_{{\bf B}[j]}}{W_j(z)}.
\end{equation}

\begin{lemma} For $z\notin {\bf B}\cup{\bf B}[j]\cup {\widehat{\bf B}}$, $\{\Psi_j(z)\},j\in\{0,1,\dots,\nu\}$ satisfy the following properties,
\begin{itemize}
\setlength\itemsep{-0.2em}
    \item[(i)] $\Psi_j(z)$ for $j\neq 0$ is bounded near the origin;
    \item[(ii)] $\Psi_0(z)=I_\nu+{\cal O}\left(1/z\right),\quad z\to\infty;$
    \item[(iii)] $\det \Psi_0(z) = 1,\quad |\det \Psi_j(z)|=1, \quad j=1,\dots,\nu;$
    \item[(iv)] $\Psi_j(z)$ is analytic away from ${\bf B}\cup{\bf B}[j]\cup {\widehat{\bf B}}$.
\end{itemize}
\end{lemma}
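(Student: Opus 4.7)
The plan is to verify the four properties in sequence. Analyticity (iv) delimits where the jumps of $\Psi_j$ can live, boundedness (i) at the origin and asymptotics (ii) at infinity come from Proposition \ref{poles} combined with the block form of ${\bf N}_j$, and the determinant identities (iii) follow from (ii) together with a jump-cancellation plus Liouville argument.

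For (iv), I would track the branch contributions of each factor in \eqref{Psi0Psij}. When $j=0$ the matrices ${\bf C},{\bf E}(z)$ are entire, ${\bf N}_0(z)$ has branch cuts exactly on ${\bf B}\cup\widehat{\bf B}$, and Proposition \ref{jump of psi} confines the jumps of $\Psi$ to ${\bf B}\cup\widehat{\bf B}$. When $j\neq 0$ I would rewrite $\Psi_j={\bf W}^{-1}{\bf C}{\bf E}(z){\bf N}_j(z)\bigl({\bf W}\Psi\bigr)$ and use the middle identity of Proposition \ref{jump of psi}, which says that ${\bf W}\Psi$ is analytic across every ${\bf B}_k$; the new branch on ${\bf B}[j]$ enters only through $[z^{n+\sum c}]_{{\bf B}[j]}$ in the $(j,j)$-entry of ${\bf N}_j$, the jumps on $\widehat{\bf B}$ come from the third identity of Proposition \ref{jump of psi}, and the jumps on ${\bf B}_{jk}$ come from ${\bf W}^{-1}$ together with the first identity of Proposition \ref{jump of psi}.

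For (i), the explicit block form of ${\bf N}_j$ makes the $j$th row of $\Psi_j$ equal to $C_j\,\ee^{N(\overline a_jz+\ell_j)}\,[z^{n+\sum c}]_{{\bf B}[j]}\,\psi_j(z)$, which by \eqref{zWzW} equals $(W_j/W)\cdot z^{n+\sum c}\,\psi_j(z)$ up to entire factors; Proposition \ref{poles} gives boundedness of $z^{n+\sum c}\psi_j$ near the origin, while $W_j/W$ is (piecewise) holomorphic near $z=0$ since neither ${\bf B}$ nor ${\bf B}[j]$ passes through $0$ under the assumption $a_j\neq 0$. For $k\neq j$, the same block form yields $(C_k\,\ee^{N(\overline a_kz+\ell_k)}/W_k)(W_k\psi_k-W_j\psi_j)$, which is bounded near the origin by the last sentence of Proposition \ref{poles} together with $W_k(0)\neq 0$. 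For (ii) I would substitute \eqref{eq:poles} into the definition of $\Psi_0$ to get
\[
[\Psi_0]_{jk}=\frac{C_j C_{jk}\,\ee^{N\ell_j}}{N^{c_j+n_j+1-\delta_{kj}}}\,z^{\delta_{kj}-1}\bigl(1+\mathcal O(1/z)\bigr),
\]
with the exponentials $\ee^{N\overline a_jz}$ cancelling; the off-diagonal entries decay like $1/z$, and the normalization of ${\bf C}$ in \eqref{const 1} is tuned so that the $(j,j)$-coefficient equals $1$.

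For (iii), cofactor expansion along the $j$th row of ${\bf N}_j$ gives $\det{\bf N}_j=[z^{n+\sum c}]_{{\bf B}[j]}$ and $\det{\bf N}_0=z^{n+\sum c}$, so $\det\Psi_j=\det\Psi_0\cdot[z^{n+\sum c}]_{{\bf B}[j]}/z^{n+\sum c}$; since the two factors in this ratio have equal moduli, $|\det\Psi_j|\equiv 1$ once $\det\Psi_0\equiv 1$ is established. For the latter I would check that each jump of $\det\Psi_0$ cancels: the unipotent ${\bf B}_{jk}$-jumps of $\Psi$ have trivial determinant, the $\eta_j^{-1}$-jump of $\det\Psi$ on ${\bf B}_j$ exactly cancels the $\eta_j$-jump of $\det{\bf N}_0=z^{n+\sum c}$, and the same cancellation holds on $\widehat{\bf B}_j$ using $\det\widetilde J_j=\eta_j^{-1}$; combined with boundedness at $\{0,a_1,\dots,a_\nu\}$ (by arguments parallel to (i) together with the integrability at $a_j$ of the integrand of Proposition \ref{eq: psi definition}) and the asymptotics $1+\mathcal O(1/z)$ from (ii), Liouville's theorem yields $\det\Psi_0\equiv 1$. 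The main obstacle is the jump-cancellation bookkeeping in (iv) and (iii) --- in particular checking that the piecewise constant factor $W_j/W_k$ arising from ${\bf W}^{-1}{\bf N}_j{\bf W}$ pairs correctly with the $\eta_{kj}$-jumps of $\Psi$ on ${\bf B}_{jk}$ via Lemma \ref{W ratio}, and verifying boundedness of $\det\Psi_0$ at each $a_j$ where the integration contour in Proposition \ref{eq: psi definition} becomes degenerate.
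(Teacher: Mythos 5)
Your proof of (i), (ii) and (iv) tracks the paper's reasoning essentially step for step: all three are read off from Propositions \ref{jump of psi}, \ref{eq: psi definition} and \ref{poles}, and the block-form computations you give (e.g.\ identifying the $j$th row of $\Psi_j$ with $[z^{n+\sum c}]_{{\bf B}[j]}\psi_j$ up to nonvanishing holomorphic factors, and the $k$th rows with multiples of $W_k\psi_k-W_j\psi_j$) are exactly what underlies the paper's one-sentence citations.

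For (iii) you take a genuinely different route. The paper differentiates the integral representation of $\widetilde\psi_j$ in $z$, derives a first-order scalar ODE for $\det\widetilde\Psi$, and solves it to get $\det\widetilde\Psi(z)=C\,\ee^{-N(\sum\overline a_j)z}/z^{n+\sum c}$; the normalization (ii) then fixes $C$ so that $\det\Psi_0\equiv 1$, and the relation $\det\Psi_j/\det\Psi_0=[z^{n+\sum c}]_{{\bf B}[j]}/z^{n+\sum c}$ (which you also derive from $\det{\bf N}_j$) gives $|\det\Psi_j|=1$. You instead go through jump-cancellation plus Liouville. Your jump bookkeeping is correct --- the unipotent ${\bf B}_{jk}$-jumps have unit determinant, and $\det\widetilde J_j=\eta_j^{-1}$ cancels the $\eta_j$-jump of $\det{\bf N}_0=z^{n+\sum c}$ across ${\bf B}_j$ and $\widehat{\bf B}_j$. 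What the ODE approach buys is that boundedness of $\det\Psi_0$ at $0$ and at each $a_j$ is automatic; the Liouville route has to establish this separately, and that is precisely the spot where your sketch is thinnest. Boundedness at the origin is \emph{not} immediate ``by arguments parallel to (i)'': individual rows of ${\bf N}_0\Psi$ behave like $z^{c_j+n_j}\psi_j$, which is generally unbounded at $0$ since Proposition \ref{poles} only controls $z^{n+\sum c}\psi_j$. One has to exploit the determinant structure --- e.g.\ row-reduce $\det\widetilde\Psi$ by subtracting $\widetilde\psi_1$ from the other rows, using that $\widetilde\psi_j-\widetilde\psi_1$ is entire, to see that $\det\widetilde\Psi$ blows up at most like $z^{-n-\sum c}$, so that $z^{n+\sum c}\det\widetilde\Psi$ and hence $\det\Psi_0$ is bounded. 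With that filled in, your Liouville argument closes, but it is heavier bookkeeping than the Wronskian-type ODE argument the paper uses.
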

\begin{proof}
Analyticity of $\Psi_j(z)$ follows from Proposition \ref{jump of psi}. $\Psi_j(z), j\neq 0$ is bounded near the origin is because of the statements, $z^{n+\sum c}\psi_j(z)$ is bounded as $z$ goes to the origin and $W_j(z)\psi_j(z)-W_k(z)\psi_k(z)$ is bounded near the origin for $j,k\in\{1,\dots,\nu\}$,  in Proposition \ref{poles}. The strong asymptotics of $\Psi_0(z)$ is due to \eqref{eq:poles} in Proposition \ref{poles}.

Finally we prove $(iii)$. Since
 \begin{equation}\nonumber
 \begin{split}
 \partial_z [\widetilde \psi_j(z)]_k&=\displaystyle\bigg(\int_{{a}_j}^{z\times\infty}(-Ns)\prod_{i=1}^\nu(s-a_i)^{n_i+c_i-\delta_{ik}}e^{-N\bar{z}s}\dd s\bigg)^*\vspace{0.3cm}\\\displaystyle
 &=\displaystyle-\bigg(\int_{a_j}^{z\times\infty}\prod_{i=1}^\nu(s-a_i)^{n_i+c_i}e^{-N\bar{z}s}ds\bigg)^*-N\bar{a}_k[\widetilde\psi_j(z)]_k\vspace{0.3cm}\\
&=\displaystyle-\frac{\sum_{i=1}^\nu (n_i+c_i)[\widetilde\psi_j(z)]_k}{Nz}-N\overline{a}_k[\widetilde\psi_j(z)]_k,
 \end{split}
 \end{equation}
we have
\begin{equation}\nonumber
\begin{array}{lll}
\partial_z\widetilde\Psi(z)=-\widetilde\Psi(z)\left(\frac{1}{Nz}\begin{bmatrix}
n_1+c_1&\dots&n_1+c_1\vspace{0.2cm}\\
\vdots&\ddots&\vdots\vspace{0.2cm}\\
n_\nu+c_\nu&\dots&n_\nu+c_\nu
\end{bmatrix}+\begin{bmatrix}
N{\overline a}_1&&\vspace{0.2cm}\\
&\ddots&\vspace{0.2cm}\\
&&N{\overline a}_\nu
\end{bmatrix}\right).
\end{array}
\end{equation}
Therefore,
$$\partial_z\det\widetilde\Psi(z)=\det\widetilde\Psi(z)\left(-N\sum_{j=1}^\nu{\overline a}_j-\frac{n+\sum c}{Nz}\right).$$
Solving the differential equation, we get
$$\det\widetilde\Psi(z)=\displaystyle \frac{{\text Const.}\ee^{-N\left(\sum_{j=1}^\nu{\overline a}_j\right)z}}{z^{n+\sum c}},$$ where the constant term is a constant function in a connected region of $\CC\setminus\{ {\bf B}\cup {\widehat{\bf B}}\}.$
By the definition of $\Psi_0(z)$ with $\det {\bf V}(z)=1$ and the asymptotics of $\Psi_0$ in (ii), we have $\det \Psi_0 (z) = 1$. 
One compares $\Psi_0(z)$ and $\Psi_j(z)$, they must have the same determinant upto a phase factor.  The phase factor is due to the branch cuts chosen for $z^{n+\sum c}$.
\end{proof}

\section{Transformations of Riemann-Hilbert problem}\label{steepest}

In the previous section we have constructed a $\nu\times\nu$ matrix $\Psi$ \eqref{eq:psi}.  Now we deal with the full $(\nu+1)\times(\nu+1)$ matrices and we adopt the following notations.  We will use the index from $\{0,1,\dots,\nu\}$ to count the entries of the matrices such that, for $(\nu+1)\times(\nu+1)$ matrix $M$, $[M]_{jk}$ refers to the entry in the $(j+1)$th row and the $(k+1)$th column.  We prefer such numbering because our matrices are structured such that the 1st row and the 1st column play a distinct role than the other rows and columns.  By this new convention, from now on, the $j$th row in any $(\nu+1)\times(\nu+1)$ matrix will refer to the $(j+1)$th row in the old convention.

We apply the method of nonlinear steepest descent analysis \cite{DKMVZ 1999} and define successive transformations of $Y$ into $\widetilde Y$ \eqref{tildeY}, $T$ \eqref{T} and $S$ \eqref{def S}.  We will finish the section by defining the global parametrix $\Phi$ \eqref{def global phi}.   

\subsection{\texorpdfstring{$\widetilde Y$}{widetildeY} transform }

Let us redefine the Riemann-Hilbert problem for $Y$ in Theorem \ref{thm 3.2}, with the deformation of jump contours as described below. 

\noindent{\bf Riemann-Hilbert problem for $Y$:}
\begin{equation}
\begin{cases}
Y_+(z)=Y_-(z)\left[{\begin{array}{c:c}
\begin{matrix}
1
\end{matrix}
&\begin{matrix}W(z)\widetilde\psi_1(z)\end{matrix} \\\hdashline
\begin{matrix}
{\bf 0}
\end{matrix}
&\begin{matrix}I_{\nu}\end{matrix}
\end{array}}\right],& z\in\bigcup_{j} \Gamma_{j0},
\\
Y_+(z)=Y_-(z)\left[{\begin{array}{c:c}
\begin{matrix}
1
\end{matrix}
&\begin{matrix}W(z)\widetilde\psi_1(z)\end{matrix} \\\hdashline
\begin{matrix}
{\bf 0}
\end{matrix}
&\begin{matrix}I_{\nu}\end{matrix}
\end{array}}\right]_-^{-1}
\left[{\begin{array}{c:c}
\begin{matrix}
1
\end{matrix}
&\begin{matrix}W(z)\widetilde\psi_1(z)\end{matrix} \\\hdashline
\begin{matrix}
{\bf 0}
\end{matrix}
&\begin{matrix}I_{\nu}\end{matrix}
\end{array}}\right]_+
,& z\in  {\bf B}\cap (\Omega_0)^c,
\\
Y(z)=\displaystyle\left(I_{\nu+1}+{\cal O}\left(\frac{1}{z}\right)\right)\begin{bmatrix}
z^{n}&&&\\
&z^{-n_1}&&\\
&&\ddots&\\
&&&z^{-n_\nu}
\end{bmatrix},& z\to\infty,
\\
Y(z)= {\cal O}(1), & z\to a_j.
\end{cases}
\end{equation}
Here $\widetilde\psi_1$ is defined at \eqref{psi tilde}.

The above Riemann-Hilbert problem is by deforming the jump contour $\gamma$ in Theorem \ref{thm 3.2}, such that the resulting contour is along the boundary of the domain $({\rm clos\,}\Omega_0)^c \setminus {\bf B}$. See Figure \ref{figure contour}.  When the contour goes around the branch cut ${\bf B}$ \eqref{def B} the jump matrix can be expressed as in the second equation by the product of the jumps that come from either sides of the branch cut.   
The subscripts $\pm$ of the jump matrices on ${\bf B}\cap (\Omega_0)^c$ stand for the boundary values evaluated from the $\pm$ sides of ${\bf B}$ respectively.   
\bigskip

\begin{figure}
\begin{center}
\includegraphics[width=0.3\textwidth]{MSzego0.pdf}
\includegraphics[width=0.3\textwidth]{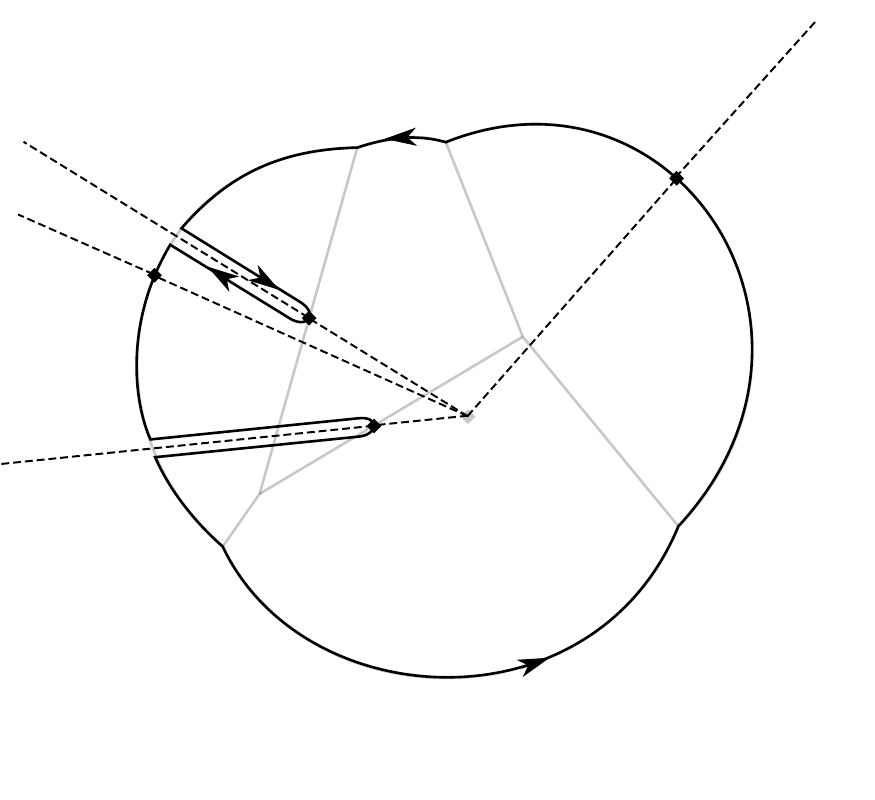}
\includegraphics[width=0.3\textwidth]{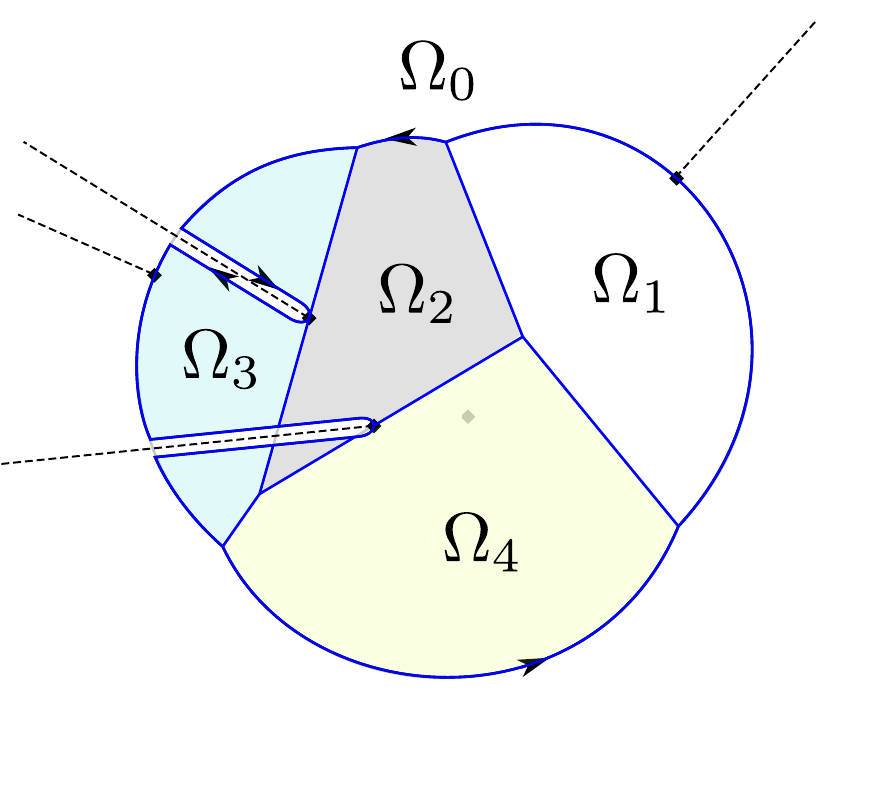}
\end{center}
 \caption{The jump contour of $Y$ for $\nu=4$ (thick line in the middle figure). The dotted lines in the middle figure are branch cuts ${\bf B}$ \eqref{def B} and $\widehat{\bf B}$ \eqref{def bhat}. The contour goes around the branch cuts ${\bf B}_2$ and ${\bf B}_4$.
} \label{figure contour}
\end{figure}

\begin{lemma}
${\bf B}_{kj}$ does not intersect $\Omega_j$.
\end{lemma}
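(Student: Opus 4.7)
The plan is to parametrize ${\bf B}_{kj}$ by $z(t)=a_k+(a_k-a_j)t$ for $t\geq 0$ and to compare, along the ray, the two candidates for $\Phi^L$ corresponding to the indices $j$ and $k$ in \eqref{eq:PhiL}. A direct computation yields
\begin{equation*}
\bigl[{\rm Re}(\overline a_j z(t))+l_j\bigr]-\bigl[{\rm Re}(\overline a_k z(t))+l_k\bigr]=\bigl[{\rm Re}(\overline a_j a_k)+l_j-|a_k|^2-l_k\bigr]-t\,|a_k-a_j|^2,
\end{equation*}
since $(\overline a_j-\overline a_k)(a_k-a_j)=-|a_k-a_j|^2$.

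By Theorem~\ref{Thm1}, $a_k\in\partial\Omega_k$, and by continuity of $\Phi^L$ together with \eqref{def omegaj}, this gives $\Phi^L(a_k)=|a_k|^2+l_k$. Since $\Phi^L$ is defined as a maximum, $\Phi^L(a_k)\geq {\rm Re}(\overline a_j a_k)+l_j$, so the bracketed constant above is $\leq 0$. For every $t>0$ the $-t|a_k-a_j|^2$ term is strictly negative (using $a_k\neq a_j$), so
\begin{equation*}
{\rm Re}(\overline a_j z(t))+l_j < {\rm Re}(\overline a_k z(t))+l_k \leq \Phi^L(z(t)),
\end{equation*}
which by the characterization \eqref{def omegaj} implies $z(t)\notin \Omega_j$.

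The one point left is $t=0$, i.e.\ $z=a_k$, where the strict inequality degenerates; this is precisely the scenario $k\to j$ of \eqref{def-arrow}. I would dispose of it by a short topological observation: the open sets $\Omega_j$ and $\Omega_k$ are disjoint, because on any common open subset $\Phi^L$ would equal both affine functions ${\rm Re}(\overline a_j w)+l_j$ and ${\rm Re}(\overline a_k w)+l_k$, forcing $a_j=a_k$, a contradiction. Since $a_k\in\partial\Omega_k\subset\overline{\Omega_k}$, every neighborhood of $a_k$ meets $\Omega_k$; if $a_k$ lay in the open set $\Omega_j$ such a neighborhood would be contained in $\Omega_j$ and simultaneously meet $\Omega_k$, violating the disjointness. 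Thus $a_k\notin \Omega_j$ as well.

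The only obstacle is really the boundary case $t=0$, since the clean comparison of linear exponents does not produce a strict inequality there; the simple open-set/disjointness argument above is the natural patch and uses nothing beyond Theorem~\ref{Thm1} and \eqref{def omegaj}.
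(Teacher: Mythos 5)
Your argument is the paper's proof, fleshed out: the paper likewise observes that at $z=a_k$ the $k$-affine function dominates the $j$-affine function (since $a_k\in{\rm clos\,}\Omega_k$), that the difference ${\rm Re}(\overline a_k z)-{\rm Re}(\overline a_j z)$ strictly increases along ${\bf B}_{kj}$, and hence that ${\rm Re}(\overline a_k z)+l_k\ge {\rm Re}(\overline a_j z)+l_j$ on all of ${\bf B}_{kj}$, which forces $z\notin\Omega_j$. You are a little more careful than the paper in singling out the endpoint $t=0$ and disposing of it with the disjointness of the open sets $\Omega_j$, $\Omega_k$, but the mechanism is identical.
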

\begin{proof}
Since $a_k\in{\rm clos\,} \Omega_k$ we have ${\rm Re}(\overline{a}_kz)+l_k\geq {\rm Re}(\overline{a}_jz)+l_j$ at $z=a_k$.  Since ${\rm Re}(\overline{a}_kz)-{\rm Re}(\overline{a}_jz)$ increases as one moves from $a_k$ to $\infty$ along ${\bf B}_{kj}$ \eqref{def bk}
we have that ${\rm Re}(\overline{a}_kz)+l_k\geq {\rm Re}(\overline{a}_jz)+l_j$ for $z\in{\bf B}_{kj}$.
\end{proof} 

By Proposition \ref{jump of psi} the $j$th row of $\Psi$ \eqref{eq:psi} has nontrivial jump only on ${\bf B}_{*j}$, where $*$ stands for all possible numbers from $\{1,\dots,\nu\}$. Therefore
$W_j\psi_j$ is analytic away from ${\bf B}_{*j}$ and $\widehat{\bf B}$ \eqref{def bhat}.  Since ${\bf B}_{*j}$ does not intersect $\Omega_j$ by the previous lemma $W_j\psi_j$ is analytic on $\Omega_j\setminus\widehat{\bf B}$.

Then $W_j(z)\psi_j(z)-W(z)\widetilde\psi_1(z)$ is analytic in $\Omega_j\setminus({\bf B}\cup\widehat{\bf B})$.

Let us define $\widetilde Y(z)$ by
\begin{equation}\label{tildeY}
\begin{cases}
\widetilde Y(z) = Y(z), & z\in\Omega_0, \\
    \widetilde Y(z) = Y(z)  \left[{\begin{array}{c:c}
\begin{matrix}
1
\end{matrix}
&\begin{matrix}W_j(z)\psi_j(z)-W(z)\widetilde\psi_1(z)\end{matrix} \\\hdashline
\begin{matrix}
{\bf 0}
\end{matrix}
&\begin{matrix}I_{\nu}\end{matrix}
\end{array}}\right], &z\in \Omega_j\setminus({\bf B}\cup \widehat{\bf B}),
\end{cases}
\end{equation}
where $j=1,\dots,\nu.$

\noindent{\bf Riemann-Hilbert problem for $\widetilde Y$:}
\begin{equation}\label{jump of T tilde}
\begin{cases}
\widetilde Y_+(z)=\widetilde Y_-(z)\left[{\begin{array}{c:c}
\begin{matrix}
1
\end{matrix}
&\begin{matrix}W_k(z)\psi_k(z)\end{matrix} \\\hdashline
\begin{matrix}
{\bf 0}
\end{matrix}
&\begin{matrix}I_{\nu}\end{matrix}
\end{array}}\right],& z\in\Gamma_{k0},
\\
\widetilde Y_+(z)=\widetilde Y_-(z)\left[{\begin{array}{c:c}
\begin{matrix}
1
\end{matrix}
&\begin{matrix}W_k(z)\psi_k(z)-W_j(z)\psi_j(z)\end{matrix} \\\hdashline
\begin{matrix}
{\bf 0}
\end{matrix}
&\begin{matrix}I_{\nu}\end{matrix}
\end{array}}\right],&z\in\Gamma_{kj},
\\
\widetilde Y_+(z)= \widetilde Y_-(z)
\left[{\begin{array}{c:c}
\begin{matrix}
1
\end{matrix}
&\begin{matrix}\left(W_j(z)\psi_j(z)-W(z)\widetilde \psi_1(z)\right)\Big|_-^+\end{matrix} \\\hdashline
\begin{matrix}
{\bf 0}
\end{matrix}
&\begin{matrix}I_{\nu}\end{matrix}
\end{array}}\right]
,& z\in \widehat{\bf B}\cap \Omega_j,
\\
\widetilde Y(z)=\left(I_{\nu+1}+{\cal O}\left(\displaystyle\frac{1}{z}\right)\right)\begin{bmatrix}
z^{n}&&&\\
&z^{-n_1}&&\\
&&\ddots&\\
&&&z^{-n_\nu}
\end{bmatrix},& z\to\infty,
\\
\widetilde Y(z)={\cal O}(1), & z\to a_j,
\end{cases}
\end{equation}
where $j\neq k$ and $1\leq j,k\leq\nu.$
One can check that the jump on ${\bf B}$ is absent because $W_j\psi_j$ is analytic on ${\bf B}$ by Proposition \ref{jump of psi}.

Using the fact that $\widetilde \psi_j(z)-\widetilde \psi_1(z)$ is analytic everywhere for any $j$, the jump on $\widehat{\bf B}\cap\Omega_j$ can be written by
\begin{equation}\label{YYBhat}
    \widetilde Y_{+}(z)= \widetilde Y_{-}(z)
\left[{\begin{array}{c:c}
\begin{matrix}
1
\end{matrix}
&\begin{matrix} W_j(z)\big(\psi_{j,+}(z)-\psi_{j,-}(z)\big)-W(z)\big(\widetilde \psi_{j,+}(z)-\widetilde \psi_{j,-}(z)\big) \end{matrix} \\\hdashline
\begin{matrix}
{\bf 0}
\end{matrix}
&\begin{matrix}I_{\nu}\end{matrix}
\end{array}}\right]
,\quad z\in \widehat{\bf B}\cap\Omega_j.
\end{equation}

Let us define
\begin{equation}
   \widetilde Y_0(z)= \begin{cases}
 \widetilde Y(z) ,&z\in \Omega_0,\\
\widetilde Y(z) \left[{\begin{array}{c:c}
\begin{matrix}
1
\end{matrix}
&\begin{matrix}-W_k(z)\psi_k(z)\end{matrix} \\\hdashline
\begin{matrix}
{\bf 0}
\end{matrix}
&\begin{matrix}I_{\nu}\end{matrix}
\end{array}}\right],&z\in \Omega_k \mbox { for $1\leq k\leq\nu$.}
\end{cases}
\end{equation}

By the jump condition of $\widetilde Y$  \eqref{jump of T tilde} the only jump of $\widetilde Y_0$ is at $\widehat{\bf B}$.

\subsection{T transform}
Let us define
\begin{equation}
\begin{cases}
G_0(z)={\rm diag}\left(z^{-n},z^{n_1},\dots,z^{n_\nu}\right),\\
G_j(z)=\begin{bmatrix}
E_j(z)^{-1}&&&\\
&I_{j-1}&&\\
&&E_j(z)&\\
&&&I_{\nu-j}
\end{bmatrix}\quad \mbox { for $j\neq 0$.}
\end{cases}
\end{equation}
Define $T(z)$ by
\begin{equation}\label{T}
T(z)=\begin{bmatrix}
1&{\bf 0}\\
{\bf 0}&{\bf C}^{-1}
\end{bmatrix}\widetilde Y(z)\begin{bmatrix}
1&{\bf 0}\\
{\bf 0}&\Psi_j(z)^{-1}
\end{bmatrix}\begin{bmatrix}
1&{\bf 0}\\
{\bf 0}&{\bf C}
\end{bmatrix}G_j(z),\qquad  z\in \Omega_j,
\end{equation}
for $j=0,1,\dots,\nu$, where we use $\Psi_0$ and $\Psi_j$ in \eqref{Psi0Psij}. 

We note that $G_0$ and $G_j$ are made of the exponents of those functions that appeared in the definitions of the multiple Szeg\"o curve, which corresponds to the support of the limiting roots of $p_n$.  The transform is to separate the leading exponential behavior of $Y$ as $n\to\infty$, and it corresponds to the so called $g$-function transform.

The jump of $T$ on $\Gamma_{j0}$ is given by
\begin{align}\label{Jump of T on Gamma}
T_{-}(z)^{-1}T_{+}(z)=G_0(z)^{-1}
\begin{bmatrix}
1&{\bf 0}\\
{\bf 0}&{\bf C}^{-1}
\end{bmatrix}\begin{bmatrix}
1&{\bf 0}\\
{\bf 0}&\Psi_0(z)
\end{bmatrix}\left[{\begin{array}{c:c}
\begin{matrix}
1
\end{matrix}
&\begin{matrix}W_j(z)\psi_j(z)\end{matrix} \\\hdashline
\begin{matrix}
{\bf 0}
\end{matrix}
&\begin{matrix}I_{\nu}\end{matrix}
\end{array}}\right]\begin{bmatrix}
1&{\bf 0}\\
{\bf 0}&\Psi_j(z)^{-1}
\end{bmatrix}\begin{bmatrix}
1&{\bf 0}\\
{\bf 0}&{\bf C}
\end{bmatrix}
G_j(z).
\end{align}
Using the fact that $\psi_j(z) \Psi(z)^{-1}$ is the $\nu$ dimensional row vector ${\bf e}_j:=(\dots,0,1,0,\dots)$ with only nonvanishing entry being at the $j$th entry, the above jump of $T$ on $\Gamma_{j0}$ is given by
\begin{equation}\label{jum of t gammaj0}
M_{j0}(z)=\left[\setstretch{1.55}{\begin{array}{c:c:c:c}
\begin{matrix}
\frac{z^n}{E_j(z)}
\end{matrix}
&\begin{matrix}{\bf 0}\end{matrix} &\begin{matrix}\frac{W_j(z)}{z^{\sum c}}\end{matrix}&\begin{matrix}{\bf
0}\end{matrix}\vspace{0.1cm}\\\hdashline
\begin{matrix}
{\bf 0}
\end{matrix}
&\begin{matrix}z^{c_1}&&\\
&\ddots&\\
&&z^{c_{j-1}}\end{matrix}
&\begin{matrix}\frac{E_1(z)z^{c_1}W_j(z)}{z^{n+\sum c}W_1(z)}\\\vdots\\\frac{E_{j-1}(z)z^{c_{j-1}}W_j(z)}{z^{n+\sum c}W_{j-1}(z)}\end{matrix}&\begin{matrix}{\bf
0}\end{matrix}\vspace{0.1cm}\\\hdashline
\begin{matrix}0\end{matrix}
&\begin{matrix}{\bf 0}\end{matrix} &\begin{matrix}
\frac{E_j(z)z^{c_j}}{z^{n+\sum c}}
\end{matrix}&\begin{matrix}{\bf
0}\end{matrix}\vspace{0.1cm}\\\hdashline
\begin{matrix}{\bf 0}\end{matrix}
&\begin{matrix}{\bf 0}\end{matrix} &\begin{matrix}
\frac{E_{j+1}(z)z^{c_{j+1}}W_j(z)}{z^{n+\sum c}W_{j+1}(z)}\\\vdots\\\frac{E_{\nu}(z)z^{c_{\nu}}W_j(z)}{z^{n+\sum c}W_\nu(z)}
\end{matrix}&\begin{matrix}z^{c_{j+1}}&&\\
&\ddots&\\
&&z^{c_{\nu}}\end{matrix}\end{array}}\right].
\end{equation}

The following decomposition will be useful.
\begin{equation}
M_{j0}(z)= M_0(z)^{-1}J_j(z)M_j(z), \quad j=1,\dots,\nu,
\end{equation}
where
$$M_0(z)=\left[\setstretch{1.55}{\begin{array}{c:c}
\begin{matrix}
1
\end{matrix}
&\begin{matrix}{\bf 0}\end{matrix} \\\hdashline
\begin{matrix}-\frac{z^{c_1}}{W_1(z)}\frac{E_1(z)}{z^n}\\
\vdots\\-\frac{z^{c_\nu}}{W_\nu(z)}\frac{E_\nu(z)}{z^n}\end{matrix}
&\begin{matrix}I_\nu\end{matrix} \end{array}}\right],\quad M_j(z)= \left[\setstretch{1.55}{\begin{array}{c:c:c:c}
\begin{matrix}
1
\end{matrix}
&\begin{matrix}{\bf 0}\end{matrix} &\begin{matrix}0\end{matrix}&\begin{matrix}{\bf
0}\end{matrix}\\\hdashline
\begin{matrix}
\frac{-E_1(z)}{E_j(z)W_1(z)}\\
\vdots\\
\frac{-E_{j-1}(z)}{E_j(z)W_{j-1}(z)}
\end{matrix}
&\begin{matrix}I_{j-1}\end{matrix}
&\begin{matrix}{\bf 0}\end{matrix}&\begin{matrix}{\bf
0}\end{matrix}\vspace{0.1cm}\\\hdashline
\begin{matrix}\frac{z^{n+\sum c}}{E_j(z)W_j(z)}\end{matrix}
&\begin{matrix}{\bf 0}\end{matrix} &\begin{matrix}
1
\end{matrix}&\begin{matrix}{\bf
0}\end{matrix}\vspace{0.1cm}\\\hdashline
\begin{matrix}\frac{-E_{j+1}(z)}{E_j(z)W_{j+1}(z)}\\
\vdots\\
\frac{-E_{\nu}(z)}{E_j(z)W_{\nu}(z)}\end{matrix}
&\begin{matrix}{\bf 0}\end{matrix} &\begin{matrix}
{\bf 0}
\end{matrix}&\begin{matrix}I_{\nu-j}\end{matrix}\end{array}}\right]$$
and
$$J_j(z)=\left[\setstretch{1.55}{\begin{array}{c:c:c:c}
\begin{matrix}
0
\end{matrix}
&\begin{matrix}{\bf 0}\end{matrix} &\begin{matrix}\frac{W_j(z)}{z^{\sum c}}\end{matrix}&\begin{matrix}{\bf
0}\end{matrix}\vspace{0.1cm}\\\hdashline
\begin{matrix}
{\bf 0}
\end{matrix}
&\begin{matrix}z^{c_1}&&\\
&\ddots&\\
&&z^{c_{j-1}}\end{matrix}
&\begin{matrix}{\bf 0}\end{matrix}&\begin{matrix}{\bf
0}\end{matrix}\\\hdashline
\begin{matrix}\frac{-z^{c_j}}{W_j(z)}\end{matrix}
&\begin{matrix}{\bf 0}\end{matrix} &\begin{matrix}
0
\end{matrix}&\begin{matrix}{\bf
0}\end{matrix}\vspace{0.1cm}\\\hdashline
\begin{matrix}{\bf 0}\end{matrix}
&\begin{matrix}{\bf 0}\end{matrix} &\begin{matrix}
{\bf 0}
\end{matrix}&\begin{matrix}z^{c_{j+1}}&&\\
&\ddots&\\
&&z^{c_{\nu}}\end{matrix}\end{array}}\right].$$
We set that $z^{c_j}$ has the branch cut on ${\bf B}_j$ and the branch cut of $z^{\sum c}$ comes from the factorization $\prod_{j=1}^\nu z^{c_j}$.  One can check that $M_0$ has no branch cut on ${\bf B}$.

Let us define $T_0(z)$ by
\begin{equation}\label{T0}
T_0(z)=
\begin{bmatrix}
1&{\bf 0}\\
{\bf 0}&{\bf C}^{-1}
\end{bmatrix}\widetilde Y_0(z)\begin{bmatrix}
1&{\bf 0}\\
{\bf 0}&\Psi_0(z)^{-1}
\end{bmatrix}\begin{bmatrix}
1&{\bf 0}\\
{\bf 0}&{\bf C}
\end{bmatrix}G_0(z).
\end{equation}
We have $$ T(z)=T_0(z),\quad z\in \Omega_0 $$ and for $j\neq 0$ we have
\begin{equation}\label{eq87}
   T(z)=T_0(z)M_{j0}(z),\quad z\in \Omega_j
\end{equation}
by the similar calculation as in the jump of $T$ on $\Gamma_{j0}$ \eqref{Jump of T on Gamma}.

Then the jump of $T$ on $\Gamma_{jk}$ is given in terms of the above definitions by
\begin{equation}\label{jum of t gammajk}
    T_{-}(z)^{-1}T_{+}(z)= M_{k0}(z)^{-1}T_0(z)^{-1} T_0(z)M_{j0}(z)=M_k(z)^{-1} J_k(z)^{-1}J_j(z)M_j(z).
\end{equation}

The jump of $T$ on ${\mathbf{B}}_{jk}\cap \Omega_0$ is given by
\begin{equation}\label{eq:TDD}
\begin{split}
T_{-}(z)^{-1}T_{+}(z)&=G_0(z)^{-1}
\begin{bmatrix}
1&{\bf 0}\\
{\bf 0}&{\bf C}^{-1}
\end{bmatrix}\begin{bmatrix}
1&{\bf 0}\\
{\bf 0}&\Psi_0(z)
\end{bmatrix}_-\begin{bmatrix}
1&{\bf 0}\\
{\bf 0}&\Psi_0(z)^{-1}
\end{bmatrix}_+\begin{bmatrix}
1&{\bf 0}\\
{\bf 0}&{\bf C}
\end{bmatrix}
G_0(z)\\
&={\bf D}(z)\left(I_{\nu+1} +\eta_{kj}(\eta_j-1)  {\bf e}_{kj}\right){\bf D}(z)^{-1},
    \end{split}
\end{equation}
where $${\bf D}(z)=G_0(z)^{-1}\begin{bmatrix}
1&{\bf 0}\\
{\bf 0}&{\bf E}(z)
\end{bmatrix}\begin{bmatrix}
1&{\bf 0}\\
{\bf 0}&{\bf N}_0(z)
\end{bmatrix}= {\rm diag}\big(z^{n},E_1(z) z^{c_1},\dots,E_\nu(z) z^{c_\nu}\big).$$

\begin{lemma}\label{jump invar}
We have
\begin{equation}
    M_{0,-}(z)T_{-}(z)^{-1}T_{+}(z)M_{0,+}(z)^{-1}=T_{-}(z)^{-1}T_{+}(z),\quad z\in {\mathbf{B}}_{jk}\cap \Omega_0.
\end{equation}
\end{lemma}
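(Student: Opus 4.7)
From \eqref{eq:TDD} the jump $J(z):=T_{-}(z)^{-1}T_{+}(z)$ on ${\bf B}_{jk}\cap\Omega_0$ is a similarity transform of $I_{\nu+1}+\eta_{kj}(\eta_j-1){\bf e}_{kj}$ by the diagonal matrix ${\bf D}(z)$. Since the branch cut of $z^{c_i}$ lies on ${\bf B}_i\cup\widehat{\bf B}_i$ and (by genericity) ${\bf B}_{jk}$ meets none of these in its interior, ${\bf D}(z)$ is continuous on ${\bf B}_{jk}$, so the conjugation simply rescales the rank-one correction to
\begin{equation}\nonumber
J(z)=I_{\nu+1}+\beta(z){\bf e}_{kj},\qquad \beta(z)=\eta_{kj}(\eta_j-1)\frac{E_k(z)z^{c_k}}{E_j(z)z^{c_j}}.
\end{equation}

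Next I would identify the jump of $M_0$ across ${\bf B}_{jk}$. Writing $M_0(z)=I_{\nu+1}-\sum_{i=1}^\nu u_i(z){\bf e}_{i0}$ with $u_i(z)=z^{c_i}E_i(z)/(W_i(z)z^n)$, only the entry $u_k$ is discontinuous on ${\bf B}_{jk}$: for $i\neq k$ the branch-cut set ${\bf B}[i]$ contains ${\bf B}_{ji}$ but not ${\bf B}_{jk}$, so $W_i$ is analytic across ${\bf B}_{jk}$, while $z^{c_i}$ and $E_i$ are continuous there. From \eqref{def etakj}, $W_{k,+}=W_j/\eta_{kj}$, and Lemma \ref{W ratio} (checked in either angular case) uniformly yields $W_{k,+}/W_{k,-}=\eta_j$. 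Combining these,
\begin{equation}\nonumber
M_{0,+}(z)=M_{0,-}(z)+\alpha(z){\bf e}_{k0},\qquad \alpha(z)=(\eta_j-1)\eta_{kj}\frac{E_k(z)z^{c_k}}{W_j(z)z^n}.
\end{equation}
Since ${\bf e}_{k0}^{\,2}=0$ and the zeroth row of $M_{0,-}^{-1}$ is $(1,0,\dots,0)$, this is equivalent to $M_{0,+}=(I_{\nu+1}+\alpha{\bf e}_{k0})M_{0,-}$, and hence $M_{0,+}^{-1}=M_{0,-}^{-1}(I_{\nu+1}-\alpha{\bf e}_{k0})$.

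Finally I would expand $M_{0,-}J M_{0,+}^{-1}$ using the elementary identities ${\bf e}_{i0}{\bf e}_{kj}=0$ (because $k\neq 0$) and ${\bf e}_{kj}{\bf e}_{i0}=\delta_{ij}{\bf e}_{k0}$. The first gives $M_{0,-}{\bf e}_{kj}M_{0,-}^{-1}={\bf e}_{kj}+u_{j,-}{\bf e}_{k0}$, whence
\begin{equation}\nonumber
M_{0,-}J M_{0,-}^{-1}=I_{\nu+1}+\beta{\bf e}_{kj}+\beta u_{j,-}{\bf e}_{k0}.
\end{equation}
Right-multiplying by $(I_{\nu+1}-\alpha{\bf e}_{k0})$ and invoking $j\neq k$ to kill the cross term $\beta\alpha{\bf e}_{kj}{\bf e}_{k0}$, the asserted identity reduces to the single scalar check $\beta(z)u_{j,-}(z)=\alpha(z)$. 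Since $u_j$ is continuous across ${\bf B}_{jk}$, $u_{j,-}=z^{c_j}E_j/(W_j z^n)$, and a direct substitution gives $\beta u_{j,-}=\eta_{kj}(\eta_j-1)E_k z^{c_k}/(W_j z^n)=\alpha$, closing the argument.

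The main obstacle is the branch-cut bookkeeping in the middle step: one has to correctly decide which of $\{W_i,z^{c_i},E_i\}$ is discontinuous across ${\bf B}_{jk}$ and to merge the two angular cases of Lemma \ref{W ratio} into the uniform statement $W_{k,+}/W_{k,-}=\eta_j$. Once that is in hand, the rest is formal algebra driven by the nilpotency ${\bf e}_{k0}^{\,2}=0$ and by the orthogonality relations among the remaining elementary matrices.
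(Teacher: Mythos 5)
Your proof is correct and follows essentially the same route as the paper's: both rely on the observation that only $W_k$ is discontinuous across ${\bf B}_{jk}$, the relations $W_{k,+}=\eta_j W_{k,-}$ and $\eta_{kj}=W_j/W_{k,+}$, and the nilpotency relations ${\bf e}_{i0}{\bf e}_{kj}=0$ and ${\bf e}_{kj}{\bf e}_{i0}=\delta_{ij}{\bf e}_{k0}$ to reduce the claim to the single scalar cancellation in the ${\bf e}_{k0}$ coefficient. The only cosmetic difference is that the paper first conjugates the whole expression by ${\bf D}(z)$ to strip $M_0$ and the jump matrix down to the undressed forms $I-\sum_i W_i^{-1}{\bf e}_{i0}$ and $I+\eta_{kj}(\eta_j-1){\bf e}_{kj}$, whereas you keep the dressing and instead factor the jump of $M_0$ as $M_{0,+}=(I+\alpha{\bf e}_{k0})M_{0,-}$, which is the same algebra packaged slightly more modularly.
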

\begin{proof}
On ${\bf B}_{jk}$ $W_k(z)$ \eqref{def wk} has nontrivial jump and we get
\begin{equation}
    \begin{split}
    &{\bf D}(z)^{-1}M_{0,-}(z){\bf D}(z)\left(I_{\nu+1} +\eta_{kj}(\eta_j-1) {\bf e}_{kj}\right){\bf D}(z)^{-1}M_{0,+}(z)^{-1}{\bf D}(z)\\
    &=\left[{\begin{array}{c:c}
\begin{matrix}
1
\end{matrix}
&\begin{matrix}{\bf 0}\end{matrix} \\\hdashline
\begin{matrix}\vspace{-0.4cm}\\\frac{1}{W_{1}(z)}\\
\vdots\\\frac{1}{W_{\nu}(z)}\end{matrix}
&\begin{matrix}\vspace{-0.4cm}\\I_\nu\end{matrix} \end{array}}\right]_-^{-1}\left(I_{\nu+1} +\eta_{kj}(\eta_j-1)  {\bf e}_{kj}\right)\left[{\begin{array}{c:c}
\begin{matrix}
1
\end{matrix}
&\begin{matrix}{\bf 0}\end{matrix} \\\hdashline
\begin{matrix}\vspace{-0.4cm}\\\frac{1}{W_{1}(z)}\\
\vdots\\\frac{1}{W_{\nu}(z)}\end{matrix}
&\begin{matrix}\vspace{-0.4cm}\\ I_\nu\end{matrix} \end{array}}\right]_+\\
&=I_{\nu+1} +\eta_{kj}(\eta_j-1)  {\bf e}_{kj}+\left(\frac{\eta_j-1}{W_{k,+}(z)}+\frac{1}{W_{k,+}(z)}-\frac{1}{W_{k,-}(z)}\right){\bf e}_{k0}\\
&=I_{\nu+1} +\eta_{kj}(\eta_j-1){\bf e}_{kj}.
    \end{split}
\end{equation}
The last equality is obtained by $W_{k,+}(z)=W_{k,-}(z)\eta_j$ for $z\in{\mathbf{B}}_{jk}.$  Conjugating by ${\bf D}(z)$ and using \eqref{eq:TDD} the proof is complete.
\end{proof}

Since $T(z)=T_0(z)M_{i0}(z)$ for $z\in\Omega_i$ when $i\neq 0$ \eqref{eq87}  the jump of $T$ on ${\mathbf{B}}_{jk}\cap \Omega_i$ is given by
\begin{equation}\label{jump of T in i}
\begin{split}
T_{-}(z)^{-1}T_{+}(z)&=\left(T_0(z)M_{i0}(z)\right)_-^{-1}\left(T_0(z)M_{i0}(z)\right)_+\\
&=\left[M_{0}(z)^{-1}J_{i}(z)M_{i}(z)\right]_-^{-1}T_{0,-}(z)^{-1}T_{0,+}(z)\left[M_{0}(z)^{-1}J_{i}(z)M_{i}(z)\right]_+
\\
&=M_{i,-}(z)^{-1}J_{i,-}(z)^{-1}T_{0,-}(z)^{-1}T_{0,+}(z)J_{i,+}(z)M_{i,+}(z)\\
&=\left({\bf D}(z)^{-1}J_{i}(z)M_{i}(z)\right)_-^{-1}\left(I_{\nu+1} +\eta_{kj}(\eta_j-1) {\bf e}_{kj}\right)\left({\bf D}(z)^{-1}J_{i}(z)M_{i}(z)\right)_+.
\end{split}
\end{equation}
The 3rd equality is obtained by Lemma \ref{jump invar}.

The jump of $T$ on ${\mathbf{B}}_{j}\cap \Omega_0$ is given by
\begin{equation}\label{jump of T in 0}
\begin{split}
T_{-}(z)^{-1}T_{+}(z)
&=G_0(z)^{-1}
\begin{bmatrix}
1&{\bf 0}\\
{\bf 0}&{\bf C}^{-1}
\end{bmatrix}\begin{bmatrix}
1&{\bf 0}\\
{\bf 0}&\Psi_0(z)
\end{bmatrix}_-\begin{bmatrix}
1&{\bf 0}\\
{\bf 0}&\Psi_0(z)^{-1}
\end{bmatrix}_+\begin{bmatrix}
1&{\bf 0}\\
{\bf 0}&{\bf C}
\end{bmatrix}
G_0(z)=I_{\nu+1}.
    \end{split}
\end{equation}
Here we have used that $\Psi_0(z)$ \eqref{Psi0Psij} does not jump on ${\bf B}$.  It can be seen by $\Psi_-(z)\Psi_+(z)^{-1}={\bf W}(z)_-^{-1}{\bf W}(z)_+$ from Proposition \ref{jump of psi} and that ${\bf N}_0(z)$ \eqref{def N0} satisfies the same jump as ${\bf W}(z)$ \eqref{def bf w} on ${\bf B}$.

Using again that $\Psi_0(z)$ does not jump on ${\bf B}$ one can also see that $T_0$ \eqref{T0} does not jump on ${\bf B}$ since $\widetilde Y_0$ does not jump on ${\bf B}$.  The jump of $T$ on ${\mathbf{B}}_{j}\cap \Omega_k$ for $k\neq 0$ is given by
\begin{equation}\label{jump of T in k}
\begin{split}
T_{-}(z)^{-1}T_{+}(z)&=\left(T_0(z)M_{k0}(z)\right)_-^{-1}\left(T_0(z)M_{k0}(z)\right)_+
\\&=M_{k0,-}(z)^{-1}M_{k0,+}(z)\\
&=M_{k,-}(z)^{-1}J_{k,-}(z)^{-1}M_{0,-}(z)M_{0,+}(z)^{-1}J_{k,+}(z)M_{k,+}(z)\\
&=M_{k,-}(z)^{-1}J_{k,-}(z)^{-1}J_{k,+}(z)M_{k,+}(z)\\
&=M_{k,-}(z)^{-1}\left(I_{\nu+1}+(\eta_j-1){\bf e}_{jj}\right)M_{k,+}(z).
\end{split}
\end{equation}
The 4th equality is obtained by the fact that $M_0$ has no jump on ${\bf B}$.
\bigskip

\begin{lemma}\label{jump NJN}
When $z\in \widehat{\mathbf{B}}_j$, we have
\begin{equation}\nonumber
{\bf N}_{k,-}(z)\,\widetilde{J}_{j}^{-1}\,{\bf N}_{k,+}(z)^{-1}=\left\{\begin{aligned}
&I_\nu,& k=j,\\
&I_\nu+(\eta_j-1)[z^{n+\sum c}]_-{\bf e}_{kj},&k\neq j.
\end{aligned}\right.
\end{equation}
Here $[z^{n+\sum c}]_-$ stands for the boundary value at the $-$ side of $\widehat{\bf B}_j$.

\end{lemma}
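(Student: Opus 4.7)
The lemma reduces to a rank-one linear-algebra calculation together with one branch identity across $\widehat{\bf B}_j$. From Lemma \ref{lem 3.3} I would first rewrite $\widetilde J_j$ as the rank-one perturbation of the identity,
\begin{equation*}
\widetilde J_j = I_\nu + (\eta_j^{-1}-1)\sum_{i=1}^\nu {\bf e}_{ij},
\end{equation*}
whose inverse, by Sherman--Morrison or by direct verification, is $\widetilde J_j^{-1} = I_\nu + (\eta_j-1)\sum_{i=1}^\nu {\bf e}_{ij}$. Writing ${\bf 1}=(1,\ldots,1)^T$ for the all-ones column, two identities I would use repeatedly are $\widetilde J_j^{-1}{\bf 1} = \eta_j{\bf 1}$ (since ${\bf e}_j^T{\bf 1}=1$) and the column-sum identity
\begin{equation*}
{\bf N}_k(z)\,{\bf 1} = [z^{n+\sum c}]_{{\bf B}[k]}\,{\bf e}_k,
\end{equation*}
which holds because the $-1$'s in the $k$-th column cancel the basis vectors coming from the other (identity) columns. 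Inverting column by column, ${\bf N}_k(z)^{-1}{\bf e}_\ell = {\bf e}_\ell$ for $\ell\neq k$ while ${\bf N}_k(z)^{-1}{\bf e}_k = [z^{n+\sum c}]_{{\bf B}[k]}^{-1}{\bf 1}$.

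Setting $\alpha:=[z^{n+\sum c}]_{{\bf B}[k],-}$ and $\beta:=[z^{n+\sum c}]_{{\bf B}[k],+}$, I would then compute ${\bf N}_{k,-}\widetilde J_j^{-1}{\bf N}_{k,+}^{-1}$ column by column. Every column indexed by $\ell\notin\{j,k\}$ is fixed by all three factors, giving ${\bf e}_\ell$. In the case $k\neq j$, applying ${\bf N}_{k,+}^{-1},\widetilde J_j^{-1},{\bf N}_{k,-}$ in turn sends ${\bf e}_j \mapsto {\bf e}_j \mapsto {\bf e}_j+(\eta_j-1){\bf 1} \mapsto {\bf e}_j+(\eta_j-1)\alpha\,{\bf e}_k$, and sends ${\bf e}_k \mapsto \beta^{-1}{\bf 1} \mapsto \eta_j\beta^{-1}{\bf 1} \mapsto (\eta_j\alpha/\beta)\,{\bf e}_k$. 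In the case $k=j$ the two special columns merge, and I would instead verify ${\bf N}_{j,-}^{-1}{\bf N}_{j,+} = \widetilde J_j^{-1}$ directly on the column ${\bf e}_j$: writing the $j$-th column of ${\bf N}_{j,+}$ as $(\beta+1){\bf e}_j-{\bf 1}$, the two identities above give ${\bf N}_{j,-}^{-1}\big((\beta+1){\bf e}_j-{\bf 1}\big) = (\beta/\alpha-1){\bf 1} + {\bf e}_j$, which matches the $j$-th column $(\eta_j-1){\bf 1}+{\bf e}_j$ of $\widetilde J_j^{-1}$ as soon as $\beta/\alpha=\eta_j$. Either way, the entire computation hinges on the single normalization $\beta = \eta_j\alpha$.

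To verify this branch identity on $\widehat{\bf B}_j$, I would decompose $[z^{n+\sum c}]_{{\bf B}[k]} = z^n\prod_{i=1}^\nu[z^{c_i}]_{{\bf B}[k]}$ and apply convention (vi): the factor $[z^{c_i}]_{{\bf B}[k]}$ has its branch cut on ${\bf B}_{ik}\cup\widehat{\bf B}_i$, so under the no-three-collinear-points assumption only the $i=j$ factor meets $\widehat{\bf B}_j$, and across that cut (oriented outward, $0\to a_j$) it carries the standard multiplicative jump $e^{-2\pi i c_j}=\eta_j$, consistent with the $W$-convention used in the proof of Lemma \ref{lem 3.3}. Substituting $\beta=\eta_j\alpha$ then collapses the $k$-th column (or the merged column, when $k=j$) to ${\bf e}_k$, producing exactly $I_\nu+(\eta_j-1)\alpha\,{\bf e}_{kj}$ when $k\neq j$ and $I_\nu$ when $k=j$, which is the lemma (with $[z^{n+\sum c}]_- = \alpha$). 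The one genuinely delicate step is pinning down the sign and orientation of the $\widehat{\bf B}_j$-jump correctly; the remainder is bookkeeping for rank-one products.
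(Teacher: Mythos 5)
Your proof is correct, and since the paper states Lemma~\ref{jump NJN} without proof, your argument is supplying exactly the kind of direct verification the authors presumably left to the reader. The rank-one decomposition $\widetilde J_j^{-1} = I_\nu + (\eta_j-1)\sum_i {\bf e}_{ij}$, the column-sum identities ${\bf N}_k{\bf 1}=[z^{n+\sum c}]_{{\bf B}[k]}{\bf e}_k$ and $\widetilde J_j^{-1}{\bf 1}=\eta_j{\bf 1}$, the column-by-column computation, and the reduction of everything to the single branch identity $[z^{n+\sum c}]_{{\bf B}[k],+}=\eta_j[z^{n+\sum c}]_{{\bf B}[k],-}$ across $\widehat{\bf B}_j$ are all accurate; I checked the $j$-th and $k$-th columns in both cases $k\neq j$ and $k=j$, and the bookkeeping holds. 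The branch identity itself is right: among the factors $z^n\prod_i[z^{c_i}]_{{\bf B}[k]}$, only the $i=j$ factor has a cut touching $\widehat{\bf B}_j$ (for $i=k$ the cut is on ${\bf B}_k\cup\widehat{\bf B}_k$, which misses $\widehat{\bf B}_j$ unless $k=j$, and that case also gives the $i=j$ factor), and for an outward-oriented ray emanating from the branch point the jump is $f_+=e^{-2\pi i c_j}f_-=\eta_j f_-$, matching the paper's convention $W_+=\eta_j W_-$ on ${\bf B}_j$. One small imprecision: convention (vi) in the paper is stated only for $i\neq k$; for $i=k$ you should fall back on convention (v) for $z^{c_k}$. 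This does not affect your conclusion.

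Your result also clarifies the paper's slightly loose notation: the $[z^{n+\sum c}]_-$ appearing in the lemma should be read as $[z^{n+\sum c}]_{{\bf B}[k],-}$, the branch carried by ${\bf N}_{k,-}$, which is consistent with its later use in \eqref{jump of T in kk}.
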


By Lemma \ref{jump NJN}, the jump of  $T$ on $\widehat{\mathbf{B}}_j\cap \Omega_k$ for $k\neq0$ is given by
\begin{align}\nonumber
&T_{-}(z)^{-1}T_{+}(z)\\\nonumber
=& G_k(z)^{-1}
\begin{bmatrix}
1&{\bf 0}\\
{\bf 0}&{\bf C}^{-1}
\end{bmatrix}\begin{bmatrix}
1&{\bf 0}\\
{\bf 0}&\Psi_{k,-}(z)
\end{bmatrix}\begin{bmatrix}
1&{\bf 0}\\
{\bf 0}&\Psi_{k,+}^{-1}(z)
\end{bmatrix}\begin{bmatrix}
1&{\bf 0}\\
{\bf 0}&{\bf C}
\end{bmatrix}
G_k(z)\\\nonumber
&+ G_k(z)^{-1}
\begin{bmatrix}
1&{\bf 0}\\
{\bf 0}&{\bf W}(z)^{-1}{\bf E}(z){\bf N}_k(z)
\end{bmatrix}_-\begin{bmatrix}
1&{\bf 0}\\
{\bf 0}&{\bf W}(z)\Psi(z)
\end{bmatrix}_-\\\nonumber
&\times
\begin{bmatrix}
1&{\bf 0}\\
{\bf 0}&{\bf W}(z)\Psi(z)
\end{bmatrix}_{+}^{-1}\begin{bmatrix}
1&{\bf 0}\\
{\bf 0}&{\bf N}_k(z)^{-1}{\bf E}(z)^{-1}{\bf W}(z)
\end{bmatrix}_+
G_k(z)\\\nonumber
=&G_k(z)^{-1}
\begin{bmatrix}
1&{\bf 0}\\
{\bf 0}&{\bf W}(z)^{-1}{\bf E}(z){\bf N}_k(z)
\end{bmatrix}_-\begin{bmatrix}
1&{\bf 0}\\
{\bf 0}&\widetilde{J}_{j}
\end{bmatrix}^{-1}\begin{bmatrix}
1&{\bf 0}\\
{\bf 0}&{\bf N}_k(z)^{-1}{\bf E}(z)^{-1}{\bf W}(z)
\end{bmatrix}_+
G_k(z)\\\nonumber
=&\begin{cases}
G_k(z)^{-1}
\begin{bmatrix}
1&{\bf 0}\\
{\bf 0}&{\bf W}(z)^{-1}{\bf E}(z)
\end{bmatrix}_-\begin{bmatrix}
1&{\bf 0}\\
{\bf 0}&{\bf E}(z)^{-1}{\bf W}(z)
\end{bmatrix}_+
G_k(z),&k=j,\\
G_k(z)^{-1}
\begin{bmatrix}
1&{\bf 0}\\
{\bf 0}&{\bf W}(z)^{-1}{\bf E}(z)
\end{bmatrix}_-\left(I_{\nu+1}+(\eta_j-1)[z^{n+\sum c}]_-{\bf e}_{kj}\right)\begin{bmatrix}
1&{\bf 0}\\
{\bf 0}&{\bf E}(z)^{-1}{\bf W}(z)
\end{bmatrix}_+
G_k(z),&k\neq j,
\end{cases}
\\
=&
\begin{cases}
I_{\nu+1},\qquad &k=j,
\\ \displaystyle
I_{\nu+1}+(\eta_j-1)[z^{n+\sum c}]_-\frac{W_j(z)}{E_j(z) W_k(z)} {\bf e}_{kj},\qquad &k\neq j.
\end{cases}\label{jump of T in kk}
\end{align}
In the 1st equality, we have used the fact that
$\widetilde Y_-^{-1}\widetilde Y_+$ \eqref{YYBhat} does not contribute to the jump because of the following computation.  When $z\in \widehat{\bf B}_j$ we have 
\begin{equation}\nonumber
    \begin{split}
   & \big[W_j(z)\big(\psi_{j,+}(z)-\psi_{j,-}(z)\big)-W(z)\big(\widetilde \psi_{j,+}(z)-\widetilde \psi_{j,-}(z)\big)\big] \Psi_+(z)^{-1}  
    \\
  = & W_j(z) \big[\Psi_-(z)\Psi_+(z)^{-1}-I_\nu\big]_\text{$j$th row} - W(z) \big[{\bf V}(z)^{-1} \Psi_-(z)\Psi_+(z)^{-1}-{\bf V}(z)^{-1}\big]_\text{$j$th row} 
    \\
  =&W_j(z) \big[{\bf W}(z)^{-1} \widetilde{J}_{j}^{-1}\,{\bf W}(z)-I_\nu\big]_\text{$j$th row} - W(z) \big[{\bf V}(z)^{-1} {\bf W}(z)^{-1} \widetilde{J}_{j}^{-1}\,{\bf W}(z) -{\bf V}(z)^{-1}\big]_\text{$j$th row}
        \\
   =&\big[ \widetilde{J}_{j}^{-1}\,{\bf W}(z)-I_\nu\big]_\text{$j$th row} - W(z) \big[{\bf V}(z)^{-1} {\bf V}(z) \widetilde{J}_{j}^{-1}\,{\bf V}(z)^{-1} -{\bf V}(z)^{-1}\big]_\text{$j$th row}
        \\
   =&\big[ \widetilde{J}_{j}^{-1}\,{\bf W}(z)-{\bf W}(z)\big]_\text{$j$th row} - W(z) \big[(\widetilde{J}_{j}^{-1}-I_\nu){\bf V}(z)^{-1}\big]_\text{$j$th row} 
    , \quad z\in \widehat{\bf B}_j.
\end{split}
\end{equation}
Since the $j$th row of $\widetilde{J}_{j}^{-1} = \eta_j {\bf e}_j$ where ${\bf e}_j$ is the row basis vector whose only nonzero entry being $1$ at the $j$th entry, the above becomes
\begin{align}
 W_j(z)    (\eta_j-1){\bf e}_j - W(z) (\eta_j-1) [{\bf V}(z)^{-1}]_\text{$j$th row} 
\end{align}
which vanishes because $[{\bf V}(z)^{-1}]_\text{$j$th row}={\bf e}_j$. 

Collecting all the jumps that we obtained in \eqref{Jump of T on Gamma},\eqref{jum of t gammajk},\eqref{eq:TDD},\eqref{jump of T in i},\eqref{jump of T in 0},\eqref{jump of T in k} and \eqref{jump of T in kk}, we have the following Riemann-Hilbert problem.

\bigskip
\noindent{\bf Riemann-Hilbert problem for $T$:}
\begin{equation}
\begin{cases}
T_+(z)=T_-(z)M_0(z)^{-1}J_j(z)M_j(z), &z\in{\Gamma}_{j0},\\
T_+(z)=T_-(z) M_k(z)^{-1}J_k(z)^{-1}J_j(z)M_j(z), &z\in{\Gamma}_{jk},\\
T_+(z)=T_-(z),  &z\in {\mathbf{B}}_j\cap \Omega_0,\quad z\in \widehat{\mathbf{B}}_j\cap \Omega_j,\\
T_+(z)=T_-(z)M_{k,-}(z)^{-1}\left(I_{\nu+1}+(\eta_j-1){\bf e}_{jj}\right)M_{k,+}(z),  &z\in {\mathbf{B}}_j\cap \Omega_i,\quad i\neq 0,\\
\displaystyle T_+(z)=T_-(z)\left(I_{\nu+1}+(\eta_j-1)[z^{n+\sum c}]_-\frac{W_j(z)}{E_j(z) W_k(z)} {\bf e}_{kj}\right),  &z\in \widehat{\mathbf{B}}_j\cap \Omega_k,\quad k\neq 0,\vspace{0.1cm}\\
\displaystyle T_+(z)=T_-(z){\bf D}(z)\left(I_{\nu+1} +\eta_{kj}(\eta_j-1)  {\bf e}_{kj}\right){\bf D}(z)^{-1}, &z\in {\mathbf{B}}_{jk}\cap \Omega_0,
\vspace{0.1cm}\\
\displaystyle T_+(z)=T_-(z)\left({\bf D}(z)^{-1}J_{i}(z)M_{i}(z)\right)_-^{-1}\\
\qquad\qquad\times\left(I_{\nu+1} +\eta_{kj}(\eta_j-1) {\bf e}_{kj}\right)\left({\bf D}(z)^{-1}J_{i}(z)M_{i}(z)\right)_+,  &z\in {\mathbf{B}}_{jk}\cap \Omega_i,\quad i\neq 0,
\vspace{0.1cm}\\
T(z)=I_{\nu+1}+{\cal O}\left(1/z\right), &z\to\infty,
\vspace{0.1cm}\\
T(z)={\cal O}(1), & z\to a_j,
\end{cases}
\end{equation}
where $j\neq k$ and $1\leq j,k\leq\nu.$

\subsection{S transform: lens opening}
Let us define $S(z)$ by 
\begin{equation}\label{def S}
    S(z)=\begin{cases}
       T(z),\quad &\text{ when } z\in \Omega_0 \setminus U,\\
     T(z)M_0(z)^{-1},\quad & \text{ when } z\in \Omega_0 \cap U, \\
    T(z)M_j(z)^{-1},\quad & \text{ when } z\in\Omega_j \text{ for } j=1,\dots,\nu,
    \end{cases}
\end{equation}
where $U\subset {\mathbb D}$ is a fixed neighborhood of $\bigcup_{j=1}^\nu\Gamma_{j0}$.

By the definition of $S$ in \eqref{def S} the jump of $S$ on $\partial U\cap\Omega_0$ is given by
\begin{equation}
S_{-}(z)^{-1}S_{+}(z)=M_0(z)^{-1}.
\end{equation}

By the jump of $T$ on $\Gamma_{j0}$ in \eqref{jum of t gammaj0} we have the jump of $S$ on $\Gamma_{j0}$ is given by
\begin{equation}
S_{-}(z)^{-1}S_{+}(z)=M_0(z)M_{j0}(z)M_j(z)^{-1}=J_j(z).
\end{equation}

By the jump of $T$ on $\Gamma_{jk}$ in \eqref{jum of t gammajk} we have the jump of $S$ on $\Gamma_{jk}$ is given by
\begin{equation}
S_{-}(z)^{-1}S_{+}(z)=M_k(z)M_{k0}(z)^{-1}M_{j0}(z)M_j(z)^{-1}=J_k(z)^{-1}J_j(z).
\end{equation}

By Lemma \ref{jump invar} the jump of $S$ on ${\mathbf{B}}_{jk}\cap \Omega_0$ is given by
\begin{equation}
\begin{split}
S_{-}(z)^{-1}S_{+}(z)&=M_0(z)T_{-}(z)^{-1}T_{+}(z)M_0(z)^{-1}=T_{-}(z)^{-1}T_{+}(z)\\
&={\bf D}(z)\left(I_{\nu+1} +\eta_{kj}(\eta_j-1)  {\bf e}_{kj}\right){\bf D}(z)^{-1}\\
&=\left(I_{\nu+1}+
\eta_{kj}(\eta_j-1)\frac{E_k(z)z^{c_k}}{E_j(z)z^{c_j}}{\bf e}_{kj}\right).
\end{split}
\end{equation}

By the jump of $T$ on ${\mathbf{B}}_{jk}\cap \Omega_i$ in \eqref{jump of T in i} we have the jump of $S$ on ${\mathbf{B}}_{jk}\cap \Omega_i$ is given by
\begin{equation}
\begin{split}
S_{-}(z)^{-1}S_{+}(z)&=M_{i,-}(z)T_{-}(z)^{-1}T_{+}(z)M_{i,+}(z)^{-1}\\
&=\left({\bf D}(z)^{-1}J_{i}(z)\right)_-^{-1}\left(I_{\nu+1} +\eta_{kj}(\eta_j-1) {\bf e}_{kj}\right)\left({\bf D}(z)^{-1}J_{i}(z)\right)_+\\
&=\begin{cases}
\displaystyle I_{\nu+1}+
\eta_{kj}(\eta_j-1)\frac{E_k(z)}{E_j(z)W_j(z)}{\bf e}_{k0},&i=j, \\
\displaystyle I_{\nu+1}+
\eta_{kj}(\eta_j-1)\frac{E_k(z)}{E_j(z)}{\bf e}_{kj},&i\neq j.
\end{cases}
\end{split}
\end{equation}

By the jump of $T$ in ${\mathbf{B}}_{j}\cap \Omega_0$ in \eqref{jump of T in 0} the jump of $S$ on ${\mathbf{B}}_{j}\cap \Omega_0$ is given by
\begin{equation}
S_{-}(z)^{-1}S_{+}(z)=M_{0,-}(z)T_{-}(z)^{-1}T_{+}(z)M_{0,+}(z)^{-1}=M_{0,-}(z)M_{0,+}(z)^{-1}=I_{\nu+1}.
\end{equation}

By the jump of $T$ in ${\mathbf{B}}_{j}\cap \Omega_k$ in \eqref{jump of T in k} the jump of $S$ on ${\mathbf{B}}_{j}\cap \Omega_k$ is given by
\begin{equation}
S_{-}(z)^{-1}S_{+}(z)=M_{k,-}(z)T_{-}(z)^{-1}T_{+}(z)M_{k,+}(z)^{-1}= I_{\nu+1}+(\eta_j-1){\bf e}_{jj}.
\end{equation}

By the jump of $T$ in $\widehat{\mathbf{B}}_j\cap \Omega_k$ in \eqref{jump of T in kk} the jump of $S$ on $\widehat{\mathbf{B}}_j\cap \Omega_k$ is given by
\begin{equation}
\begin{split}
S_{-}(z)^{-1}S_{+}(z)
&=M_{k,-}(z)T_{-}(z)^{-1}T_{+}(z)M_{k,+}(z)^{-1}\\
&=\begin{cases}
M_{k,-}(z)M_{k,+}(z)^{-1},&k=j, \\
\displaystyle M_{k,-}(z)\bigg(I_{\nu+1}+(\eta_j-1)[z^{n+\sum c}]_-\frac{W_j(z)}{E_j(z) W_k(z)} {\bf e}_{kj}\bigg)M_{k,+}(z)^{-1},&k\neq j,
\end{cases}\\
&=\begin{cases}
\displaystyle I_{\nu+1}-   \frac{\left(\eta_j-1\right)[z^{n+\sum c}]_-}{E_j(z)W_j(z)}{\bf e}_{j0},&k=j, \\
\displaystyle I_{\nu+1}+\left(\eta_j-1\right)\frac{W_j(z)}{W_k(z)} \frac{[z^{n+\sum c}]_-}{E_j(z)}{\bf e}_{kj},&k\neq j.
\end{cases}
\end{split}
\end{equation}
Collecting all the jumps of $S$ we have the following Riemann-Hilbert problem.

\bigskip
\noindent{\bf Riemann-Hilbert problem for $S$:}
\begin{equation}\label{srhp}
\begin{cases}
S_+(z)=S_-(z)J_{j}(z), &z\in{\Gamma}_{j0},\\
S_+(z)=S_-(z) J_{k}(z)^{-1}J_{j}(z), &z\in{\Gamma}_{jk},\\
S_{+}(z)=S_{-}(z)M_0(z)^{-1},&z\in\partial U\cap\Omega_0,\\
S_+(z)=S_-(z),  &z\in {\mathbf{B}}_j\cap \Omega_0,\quad z\in {\mathbf{B}}_j\cap \Omega_j,\\
S_+(z)=S_-(z)\left(I_{\nu+1}+(\eta_j-1){\bf e}_{jj}\right),  &z\in {\mathbf{B}}_j\cap \Omega_k,\\
\displaystyle S_+(z)=S_-(z)\left(I_{\nu+1}-\frac{\left(\eta_j-1\right)[z^{n+\sum c}]_-}{E_j(z)W_j(z)}{\bf e}_{j0}\right),  &z\in \widehat{\mathbf{B}}_j\cap \Omega_j,\vspace{0.1cm}\\
\displaystyle S_+(z)=S_-(z)\left(I_{\nu+1}+\left(\eta_j-1\right)\frac{W_j(z)}{W_k(z)}\frac{[z^{n+\sum c}]_-}{E_j(z)}{\bf e}_{kj}\right),  &z\in \widehat{\mathbf{B}}_j\cap \Omega_k,\vspace{0.1cm}\\
\displaystyle S_+(z)=S_-(z)\left(I_{\nu+1}+
\eta_{kj}(\eta_j-1)\frac{E_k(z)z^{c_k}}{E_j(z)z^{c_j}}{\bf e}_{kj}\right), &z\in {\mathbf{B}}_{jk}\cap \Omega_0,
\vspace{0.1cm}\\
\displaystyle S_+(z)=S_-(z)\left(I_{\nu+1}-
\eta_{kj}(\eta_j-1)\frac{E_k(z)}{E_j(z)W_j(z)}{\bf e}_{k0}\right),  &z\in {\mathbf{B}}_{jk}\cap \Omega_j,
\vspace{0.1cm}\\
\displaystyle S_+(z)=S_-(z)\left(I_{\nu+1}+
\eta_{kj}(\eta_j-1)\frac{E_k(z)}{E_j(z)}{\bf e}_{kj}\right), &z\in \mathbf{B}_{jk}\cap \Omega_i,\quad i\neq j,k,
\vspace{0.1cm}\\
S(z)=I_{\nu+1}+{\cal O}\left(1/z\right), &z\to\infty,
\vspace{0.1cm}\\
S(z)M_0(z)={\cal O}(1),\,\, &\text{as $z\to a_j$ in $\Omega_0$},
\vspace{0.1cm}\\
S(z)M_j(z)={\cal O}(1),\,\, &\text{as $z\to a_k$ or $z\to a_j$ in $\Omega_j$},
\end{cases}
\end{equation}
where $j\neq k$ and $1\leq j,k\leq\nu.$

We note that the jump conditions on $\widehat {\bf B}_j$ and on ${\bf B}_{jk}$ are all exponentially small as $N$ grows away from the points $\{a_1,\dots,a_\nu\}$ because $z^n/E_j(z)=\exp(N(\log z-\overline{a}_jz-\ell_j))$ is exponentially small on $\widehat {\bf B}_j$ and $E_k(z)/E_j(z) = \exp(N((\overline{a}_j-\overline{a}_j)z+\ell_j-\ell_k))$ is exponentially small on ${\bf B}_{jk}$.

\subsection{Global Parametrix}\label{global}

We set up the model Riemann-Hilbert problem of $\Phi(z)$ from that of $S$ by ignoring the jump matrices that are exponentially small as $N\to\infty$.

\begin{equation}\label{mrhp}
\begin{cases}
\Phi_+(z)=\Phi_-(z)J_{j}(z),& z\in{\Gamma}_{j0},\\
\Phi_+(z)=\Phi_-(z)J_{k}(z)^{-1}J_{j}(z),& z\in{\Gamma}_{jk},\\
\Phi_+(z)=\Phi_-(z)\left(I_{\nu+1}+(\eta_j-1){\bf e}_{jj}\right),& z\in \mathbf{B}_j\cap \Omega_k,~~ k\neq j,\\
\Phi(z)=I_{\nu+1}+{\cal O}(1/z),&z\to\infty,
\end{cases}
\end{equation}
where $1\leq j,k\leq\nu.$

A solution of the model Riemann-Hilbert problem is given by
\begin{equation}\label{def global phi}
\Phi(z)=
\begin{cases}
\displaystyle{\rm diag}\left(\frac{z^{\sum c}}{W(z)}, \left(\frac{z-a_1}{z}\right)^{c_1},\dots,\left(\frac{z-a_\nu}{z}\right)^{c_\nu}\right)
, & z\in\Omega_0,\\
\displaystyle\left[\setstretch{1.55}{\begin{array}{c:c:c:c}
\begin{matrix}
0
\end{matrix}
&\begin{matrix}{\bf 0}\end{matrix} & 1 &\begin{matrix}{\bf
0}\end{matrix}\\\hdashline
\begin{matrix}
{\bf 0}
\end{matrix}
&\begin{matrix}(z-a_1)^{c_1}&&\\
&\ddots&\\
&&(z-a_{j-1})^{c_{j-1}}\end{matrix}
&\begin{matrix}{\bf 0}\end{matrix}&\begin{matrix}{\bf
0}\end{matrix}\\\hdashline
\begin{matrix}\frac{-(z-a_j)^{c_j}}{W_j(z)}\end{matrix}
&\begin{matrix}{\bf 0}\end{matrix} &\begin{matrix}
0
\end{matrix}&\begin{matrix}{\bf
0}\end{matrix}\\\hdashline
\begin{matrix}{\bf 0}\end{matrix}
&\begin{matrix}{\bf 0}\end{matrix} &\begin{matrix}
{\bf 0}
\end{matrix}&\begin{matrix}(z-a_{j+1})^{c_{j+1}}&&\\
&\ddots&\\
&&(z-a_\nu)^{c_{\nu}}\end{matrix}\end{array}}\right],& z\in\Omega_j,
\end{cases}
\end{equation}
where $j=1,\dots,\nu$. We assign the branch cut of $(z-a_j)^{c_j}$ on ${\bf B}_j$ for each $j=1,\dots,\nu$.
One can check the jump on $\Gamma_{j0}$ by \eqref{zWzW}.
We note that $z^{\sum c}/W(z)$ is analytic away from $\widehat{\bf B}$.

We obtain the following jump relations.
\begin{align}\label{jump of sphi1}
[S(z)\Phi(z)^{-1}]_+&=[S(z)\Phi(z)^{-1}]_-\left(I_{\nu+1}+\frac{\left(\eta_j-1\right)[z^{n+\sum c}]_-}{E_j(z)(z-a_j)^{c_j}}{\bf e}_{0j}\right),  &&z\in \widehat{\mathbf{B}}_j\cap \Omega_j,\\\label{jump of sphi2}
[S(z)\Phi(z)^{-1}]_+&=[S(z)\Phi(z)^{-1}]_-\left(I_{\nu+1}+\frac{W_j(z)}{W_k(z)}\frac{\left(\eta_j-1\right)[z^{n+\sum c}]_-}{E_j(z)(z-a_j)^{c_j}}{\bf e}_{0j}\right), && z\in \widehat{\mathbf{B}}_j\cap \Omega_k,\\\label{jump of sphi3}
[S(z)\Phi(z)^{-1}]_+&=[S(z)\Phi(z)^{-1}]_-\left(I_{\nu+1}+
\eta_{kj}(\eta_j-1)\frac{E_k(z)(z-a_k)^{c_k}}{E_j(z)(z-a_j)^{c_j}}{\bf e}_{kj}\right), && z\in {\mathbf{B}}_{jk},\\
[S(z)\Phi(z)^{-1}]_+&=[S(z)\Phi(z)^{-1}]_-\left(I_{\nu+1}+
\sum_{i=1}^\nu\frac{E_i(z)(z-a_i)^{c_i}}{z^{n+\sum c}}{\bf e}_{i0}\right), && z\in \partial U\cap\Omega_0,\label{jump of sphi4}
\end{align} where $j\neq k$.

\section{Local Parametrices}\label{section local}

Near $a_j$'s the jump matrices of $\Phi$ \eqref{mrhp} do not converge to the jump matrices of $S$ \eqref{srhp}.   We therefore need the local parametrix around $a_j$ that satisfies the exact jump condition of $S$.  
In Section 5.1 and 5.2 we construct the local parametrices separately when $a_j\in \Gamma_{j0}$ and when $a_j\in \Gamma_{jk}$ for $k\neq 0$. 
In Section 5.3 we construct a rational matrix function ${\cal R}$ such that the improved global parametrix, ${\cal R}\Psi$, matches the local parametrix better. 

\subsection{\texorpdfstring{$a_j\in \Gamma_{j0}$}{ajgammaj0}}\label{subsection j0}

Let $D_{a_j}$ be a disk neighborhood of $a_j$ with a fixed radius $r$ such that the map
$\zeta:D_{a_j}\to \CC$ given below is univalent.
\begin{equation}\label{def zeta1}
\zeta(z)=-N(\overline{a}_jz-\log z+ \ell_j).
    \end{equation}
This is linearly approximated by 
\begin{equation}\label{eq linear 1}
 \zeta(z)=\frac{1-|a_j|^2}{a_j}N(z-a_j)(1+{\cal O}(z-a_j)),\quad\text{when}\,\,z\to a_j.   
\end{equation}
Note that $\zeta$ maps ${\bf B}_j$ \eqref{def B} into the positive real axis and $\widehat{\bf B}_j$ \eqref{def bhat} into the negative real axis.

Let us define the diagonal matrix function whose diagonal entries are nonvanishing and analytic at $a_j$ by
\begin{equation}\label{def qj1}
Q_j(z)=I_{\nu+1}+\left(\frac{\zeta(z)^{c_j/2}z^{\sum c/2}}{(z-a_j)^{c_j/2}}-1\right){\bf e}_{00}+\left(\frac{(z-a_j)^{c_j/2}}{\zeta(z)^{c_j/2}z^{\sum c/2}}-1\right){\bf e}_{jj}.
    \end{equation}
We choose the branch cut of $\zeta^{c_j}$ at the negative real axis such that $Q_j$ is analytic in $D_{a_j}$.

We define the matrix functions ${\cal U}(a_j,z)$ and ${\cal F}_j(\zeta(z))$ by 
\begin{align}\label{def uaj1}
{\cal U}(a_j,z):&=
\displaystyle I_{\nu+1}-
\Bigg(\sum_{i\neq 0,j}\frac{\widetilde{\eta}_{kj}E_i(z)(z-a_i)^{c_i}}{E_j(z)[(z-a_j)^{c_j}]_{{\bf B}[i]}}{\bf e}_{ij}\Bigg),\quad  a_j\in \Gamma_{j0},\\
\label{fbdry}
{\cal F}_j(\zeta(z)):&=I_{\nu+1}+f_{c_j}(\zeta(z)){\bf e}_{0j},\end{align}
where $f_c(\zeta)$ is defined at \eqref{def fc}. See Appendix \ref{appendix} for more detail about $f_c(\zeta)$.

Inside $D_{a_j}$, from the definitions of ${\cal U}(a_j,z)$ and ${\cal F}_j(\zeta(z))$, one can see that
\begin{align}\label{def uaj1}
 {\cal U}_+(a_j,z)&={\cal U}_-(a_j,z)\left(I_{\nu+1}+
\left(\frac{{\widetilde{\eta}_{kj}}E_k(z)(z-a_k)^{c_k}}{E_j(z)[(z-a_j)^{c_j}]_{{\bf B}[k]}}{\bf e}_{kj}\right)\bigg|_+^{-}\right),\quad z\in{\bf B}_{jk},\\\label{def faj1}
 {\cal F}_{j,+}(\zeta(z))&={\cal F}_{j,-}(\zeta(z))\left(I_{\nu+1}+\left(\eta_j-1\right)\frac{\ee^{\zeta(z)}}{[\zeta(z)^{c_j}]_-}{\bf e}_{0j}\right),\quad z\in \widehat{\mathbf{B}}_j,
\end{align}
where, in \eqref{def uaj1}, $\pm$ means the boundary value evaluated from the $+$ or $-$ side of ${\bf B}_{jk}$.

\begin{lemma}\label{lemma local1}
${\cal F}_j(\zeta)Q_j(z)^{-1}{\cal U}(a_j,z)\Phi(z)$ satisfies the exact Riemann-Hilbert problem of $S(z)$ in $D_{a_j}$.\end{lemma}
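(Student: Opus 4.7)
The plan is to verify contour-by-contour that $P(z) := {\cal F}_j(\zeta(z)) Q_j(z)^{-1} {\cal U}(a_j,z) \Phi(z)$ has exactly the same jumps as $S(z)$ on every piece of jump contour inside $D_{a_j}$, namely $\Gamma_{j0}$, ${\bf B}_j$, $\widehat{\bf B}_j$, and each ${\bf B}_{jk}$ for $k \neq 0,j$. The radius of $D_{a_j}$ is taken small enough that $D_{a_j}$ lies inside the lens-opening neighborhood $U$ used in \eqref{def S}, so no $\partial U$-jump enters the local picture. Once the jumps match, $S P^{-1}$ extends analytically across each local jump contour in $D_{a_j}$, proving the lemma.

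First I would verify that $Q_j$ is analytic in $D_{a_j}$. The linearization \eqref{eq linear 1} shows that $\zeta(z)/(z-a_j)$ is analytic and nonvanishing at $a_j$, so with compatible half-power branch choices for $\zeta^{c_j/2}$, $(z-a_j)^{c_j/2}$, and $z^{\sum c/2}$, the scalars in \eqref{def qj1} are analytic and nonvanishing on $D_{a_j}$; hence $Q_j^{\pm 1}$ contributes no jump. By construction, ${\cal U}(a_j,z)$ is discontinuous only across $\{{\bf B}_{jk}\}_{k \neq 0,j}$, with the jump \eqref{def uaj1}, while ${\cal F}_j(\zeta(z))$ is discontinuous only across $\widehat{\bf B}_j$, with the jump \eqref{def faj1}.

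With these observations the computations split cleanly. On $\Gamma_{j0} \cap D_{a_j}$, only $\Phi$ is discontinuous and its jump $J_j$ from \eqref{mrhp} matches the jump $J_j$ of $S$ in \eqref{srhp}. On ${\bf B}_j \cap D_{a_j}$, all four factors of $P$ are continuous ($\Phi$ because in $\Omega_0$ the factors $z^{\sum c}$ and $W(z)$ both acquire a factor $\eta_j$ that cancels, and similarly for $(z-a_j)^{c_j}/W_j(z)$ in $\Omega_j$), matching $S$'s trivial jump. On $\widehat{\bf B}_j \cap D_{a_j}$, combining the jump \eqref{def faj1} of ${\cal F}_j$ with the branch-cut jump of the $(0,0)$-entry $z^{\sum c}/W(z)$ of $\Phi$, and using that $E_j(z) \ee^{\zeta(z)} = z^N$ (up to branch factors from $\log z$) by \eqref{def zeta1}, reproduces the jumps \eqref{jump of sphi1}--\eqref{jump of sphi2} of $S\Phi^{-1}$. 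Finally on ${\bf B}_{jk} \cap D_{a_j}$, the jump \eqref{def uaj1} of ${\cal U}$ combines with the jump of $\Phi$ (from $W_j$ versus $W_k$ in \eqref{def global phi}) and the relation \eqref{eq: wj wk} between $\widetilde{\eta}_{kj}$ and $\eta_{kj}$ to reproduce the jump \eqref{jump of sphi3} of $S\Phi^{-1}$.

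The main obstacle is the branch-cut bookkeeping on ${\bf B}_{jk}$: one must reconcile $(z-a_j)^{c_j}$ versus $[(z-a_j)^{c_j}]_{{\bf B}[k]}$ across ${\bf B}_{jk}$ via Lemma \ref{za ratio}, track which of $W_j(z)$ and $W_k(z)$ appears on each side of ${\bf B}_{jk}$ via Lemma \ref{W ratio}, and split into the two cases $\arg(a_j/a_k) \in (0,\pi]$ and $\arg(a_j/a_k) \in (-\pi,0]$ as in \eqref{eq: wj wk}, verifying in each case that the prefactor $\widetilde{\eta}_{kj}$ in ${\cal U}$ absorbs exactly the phase discrepancies so that the net jump of $P$ agrees with that of $S$.
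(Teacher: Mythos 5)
Your plan correctly identifies the jump-matching strategy and the branch-cut bookkeeping, and your analyticity argument for $Q_j$ is right, but the proposal has a genuine gap: you never verify the boundedness conditions of the RHP for $S$ at $a_j$. The RHP \eqref{srhp} requires $S(z)M_0(z)={\cal O}(1)$ as $z\to a_j$ in $\Omega_0$ and $S(z)M_j(z)={\cal O}(1)$ as $z\to a_j$ in $\Omega_j$, and these conditions are part of what "the exact Riemann--Hilbert problem of $S$" means. Matching jumps only shows that $SP^{-1}$ has analytic continuation across each local jump arc, but leaves open a possible isolated singularity of $SP^{-1}$ at the vertex $a_j$ where all these arcs meet; without the boundedness constraint the local parametrix is not pinned down at all (one could multiply $P$ by any matrix rational in $(z-a_j)^{-1}$ and still satisfy the jump conditions). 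The paper's proof devotes its final paragraph precisely to this: it computes $Q_j{\cal F}_j Q_j^{-1}{\cal U}\Phi M_j$ and shows the would-be singular $(0,0)$-entry equals $\frac{\zeta^{c_j}z^{\sum c}}{W_j(z)}\bigl(\frac{\ee^\zeta}{\zeta^{c_j}}-f_{c_j}(\zeta)\bigr)$, which is bounded because $\ee^\zeta/\zeta^{c_j}-f_{c_j}(\zeta)$ is entire in $\zeta$. This step is where the specific choice of $f_{c_j}$ inside ${\cal F}_j$ is actually used, and without it the lemma is not proven.

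A secondary imprecision: on ${\bf B}_{jk}\cap D_{a_j}$, $\Phi$ does \emph{not} jump (the branch cuts ${\bf B}_{ij}$ of $W_j$ in ${\bf B}[j]$ emanate from $a_i$ with $i\neq j$, not from $a_j$). The factors $(z-a_j)^{c_j}$ and $(z-a_k)^{c_k}$ in the target jump \eqref{jump of sphi3} arise from conjugating the $S$-jump of \eqref{srhp} by the diagonal $\Phi$, not from any discontinuity of $\Phi$ itself; the only jump of $P\Phi^{-1}$ on ${\bf B}_{jk}\cap D_{a_j}$ comes from ${\cal U}$ via \eqref{def uaj1} together with Lemma \ref{za ratio} and the identity \eqref{eq: wj wk}. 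Your conclusion would still be correct once this is fixed, but as written the provenance of the jump is misattributed.
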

\begin{proof}
When $z\in\widehat{\mathbf{B}}_j\cap D_{a_j}$,
using the identity $z^n/E_j(z)=e^{\zeta}$ from the definition of $\zeta$ in \eqref{def zeta1}, the jump of ${\cal F}_j(\zeta)Q_j(z)^{-1}{\cal U}(a_j,z)$ is given by
\begin{equation}\label{eq:w}
\begin{split}
&\quad\left({\cal F}_j(\zeta)Q_j(z)^{-1}{\cal U}(a_j,z)\right)_-^{-1}\left({\cal F}_j(\zeta)Q_j(z)^{-1}{\cal U}(a_j,z)\right)_+\\
&={\cal U}(a_j,z)^{-1}Q_{j}(z){\cal F}_{j,-}(\zeta)^{-1}{\cal F}_{j,+}(\zeta)Q_{j}(z)^{-1}{\cal U}(a_j,z)\\
&={\cal U}(a_j,z)^{-1}Q_{j}(z)\left(I_{\nu+1}+\left(\eta_j-1\right)\frac{\ee^{\zeta(z)}}{[\zeta(z)^{c_j}]_-}{\bf e}_{0j}\right)Q_{j}(z)^{-1}{\cal U}(a_j,z)\\
&={\cal U}(a_j,z)^{-1}\left(I_{\nu+1}+\frac{\left(\eta_j-1\right)[z^{n+\sum c}]_-}{E_j(z)(z-a_j)^{c_j}}{\bf e}_{0j}\right){\cal U}(a_j,z)\\
&=I_{\nu+1}+\frac{\left(\eta_j-1\right)[z^{n+\sum c}]_-}{E_j(z)(z-a_j)^{c_j}}{\bf e}_{0j},
  \end{split}
\end{equation}
which matches the jump condition of $S\Phi^{-1}$ on $\widehat{\mathbf{B}}_j\cap D_{a_j}$ in \eqref{jump of sphi1}. In the 3rd equality we have used that $[\zeta^{c_j} z^{c_j}]_-=[\zeta^{c_j} z^{c_j}]_+$. We remark that $Q_j$ was chosen in hindsight such that \eqref{eq:w} holds while the jump of ${\cal F}_j$ at \eqref{def faj1} is written only in terms of the local coordinate $\zeta$.  

When $z\in{\bf B}_{jk}\cap D_{a_j}$, we have
\begin{equation}\label{eq cal p1}
\begin{split}
\left({\cal F}_j(\zeta)Q_j(z)^{-1}{\cal U}(a_j,z)\right)_-^{-1}\left({\cal F}_j(\zeta)Q_j(z)^{-1}{\cal U}(a_j,z)\right)_+&={\cal U}_-(a_j,z)^{-1}{\cal U}_+(a_j,z)\\
&=I_{\nu+1}+
\left(\frac{{\widetilde{\eta}_{kj}}E_k(z)(z-a_k)^{c_k}}{E_j(z)[(z-a_j)^{c_j}]_{{\bf B}[k]}}{\bf e}_{kj}\right)\bigg|_+^{-}
\\&=\displaystyle I_{\nu+1}+
(\eta_{j}-1)\frac{\eta_{kj}E_k(z)(z-a_k)^{c_k}}{E_j(z)(z-a_j)^{c_j}}{\bf e}_{kj},
  \end{split}
\end{equation}
which agrees with the jump of $S\Phi^{-1}$ on ${\mathbf{B}}_{jk}\cap D_{a_j}$ in \eqref{jump of sphi3}. In the 2nd equality $\pm$ means the boundary value evaluated from the $+$ or $-$ side of ${\bf B}_{jk}$. The last equality is obtained by Lemma \ref{za ratio} and the relation \eqref{eq: wj wk}.

Lastly, we need to show that ${\cal F}_jQ_j^{-1}{\cal U}\Phi$ satisfies the boundedness of $S$ in \eqref{srhp} as $z\to a_j$.  

When $z\in D_{a_j}\cap\Omega_j$ we have
\begin{align}\nonumber
Q_j(z){\cal F}_j(\zeta(z))Q_j(z)^{-1}{\cal U}(a_j,z)\Phi(z)M_j(z)
 &=\left(I_{\nu+1}+\frac{\zeta(z)^{c_j}z^{\sum c}}{(z-a_j)^{c_j}}f_{c_j}(\zeta(z)){\bf e}_{0j}\right){\cal U}(a_j,z)\Phi(z)M_j(z)
 \\\nonumber
 &=\Phi(z)+\frac{\zeta(z)^{c_j}z^{\sum c}}{W_j(z)}\left(\frac{z^{n}}{E_j(z)\zeta(z)^{c_j}}-f_{c_j}(\zeta(z))\right){\bf e}_{00}\\\nonumber
 &=\Phi(z)+\frac{\zeta(z)^{c_j}z^{\sum c}}{W_j(z)}\left(\frac{\ee^{\zeta(z)}}{\zeta(z)^{c_j}}-f_{c_j}(\zeta(z))\right){\bf e}_{00}.
\end{align}
Since $\ee^{\zeta}/\zeta^{c_j}-f_{c_j}(\zeta)$ is an entire function in $\zeta$, the $(0,0)$th entry is bounded. The boundedness of the other entries follow from the boundedness of the corresponding entries in $\Phi(z)$. By a similar argument, $Q_j(z){\cal F}_j(\zeta(z))Q_j(z)^{-1}{\cal U}(a_j,z)\Phi(z) M_0(z)={\cal O}(1)$ as $z\to a_j$ and $z\in\Omega_0$. This ends the proof of Lemma \ref{lemma local1}.
\end{proof}

\subsection{\texorpdfstring{$a_j\in \Gamma_{jk}$}{ajgammajk}}\label{subsection jk}
Similar to the above subsection we define
\begin{equation}\label{def zeta2}
\zeta(z)=-N((\overline{a}_j-\overline{a}_k)z+ \ell_j- \ell_k).    
\end{equation}
This is linearly approximated by 
$$
\zeta(z)=N(\overline{a}_k-\overline{a}_j)(z-a_j)(1+{\cal O}(z-a_j)),\quad \text{when}\,\, z\to a_j.
$$
Note that $\zeta$ maps $\Gamma_{jk}$ into the imaginary axis and ${\bf B}_{jk}$ into the negative real axis.

Let us define the diagonal matrix function whose diagonal entries are nonvanishing and analytic at $a_j$ by
\begin{equation}\label{def qj2}
   Q_j(z)=I_{\nu+1}+\left(\frac{[(z-a_j)^{c_j/2}]_{{\bf B}[k]}}{\zeta^{c_j/2}(z-a_k)^{c_k/2}}-1\right){\bf e}_{jj}+\left(\frac{\zeta^{c_j/2}(z-a_k)^{c_k/2}}{[(z-a_j)^{c_j/2}]_{{\bf B}[k]}}-1\right){\bf e}_{kk}. 
\end{equation}
We require $Q_j$ being analytic in $D_{a_j}$ therefore we set that $\zeta^{c_j/2}$ has the branch cut on the negative real axis which results in $(z-a_j)^{c_j/2}$ having the branch cut on ${\bf B}_{jk}$.  The subscript at $[(z-a_j)^{c_j/2}]_{{\bf B}[k]}$ refers to this fact. 

We define the matrix functions ${\cal U}(a_j,z)$ and ${\cal F}_j(\zeta)$ by 
\begin{align}\label{def uaj2}
{\cal U}(a_j,z):&=
\displaystyle I_{\nu+1}+\Bigg(\frac{ \eta' z^{n+\sum c}}{E_j(z)(z-a_j)^{c_j}}{\bf e}_{0j}-
\sum_{i\neq0, j,k}\frac{E_i(z)(z-a_i)^{c_i}}{E_j(z)[(z-a_j)^{c_j}]_{{\bf B}[i]}}{\bf e}_{ij}\Bigg),\quad a_j\in \Gamma_{jk},\\
{\cal F}_j(\zeta(z)):&=I_{\nu+1}-\widetilde{\eta}_{kj}f_{c_j}(\zeta(z)){\bf e}_{kj}.
\end{align}
Above we define the constant $\eta'=1$ if $\widehat{\bf B}_j$ \eqref{def bhat} sits in $\Omega_j$, while $\eta'=W_j(z)/W_k(z)\big|_{z\in\Omega_k\cap U_{a_j}}$ if  $\widehat{\bf B}_j$ sits in $\Omega_k$.

Inside $D_{a_j}$, from the definitions of ${\cal U}(a_j,z)$ and ${\cal F}_j(\zeta)$, one can see that
\begin{align}
 {\cal U}_+(a_j,z)&={\cal U}_-(a_j,z)\left(I_{\nu+1}+\left(\frac{z^{n+\sum c}}{E_j(z)(z-a_j)^{c_j}}{\bf e}_{0j}\right)\bigg|^+_-\right),\quad z\in\widehat{\bf B}_{j}\cap \Omega_j,\\
 {\cal U}_+(a_j,z)&={\cal U}_-(a_j,z)\left(I_{\nu+1}+\left(\frac{W_j(z)}{W_k(z)}\frac{z^{n+\sum c}}{E_j(z)(z-a_j)^{c_j}}{\bf e}_{0j}\right)\bigg|^+_-\right),\quad z\in\widehat{\bf B}_{j}\cap \Omega_k,\\
 \label{eq uaj2}{\cal U}_+(a_j,z)&={\cal U}_-(a_j,z)\left(I_{\nu+1}+
\left(\frac{E_i(z)(z-a_i)^{c_i}}{E_j(z)[(z-a_j)^{c_j}]_{{\bf B}[i]}}{\bf e}_{ij}\right)\bigg|_+^{-}\right),\quad z\in{\bf B}_{ji},\\\label{def faj2}
 {\cal F}_{j,+}(\zeta)&={\cal F}_{j,-}(\zeta)\left(I_{\nu+1}+\left(\eta_j-1\right)\frac{{\widetilde{\eta}_{kj}}\ee^{\zeta(z)}}{[\zeta(z)^{c_j}]_-}{\bf e}_{kj}\right),\quad z\in {\mathbf{B}}_{jk},
\end{align}
where, in the above equations, $\pm$ means the boundary value evaluated from the $+$ or $-$ side of the corresponding contour.

\begin{lemma}\label{lemma loc2}
${\cal F}_j(\zeta)Q_j(z)^{-1}{\cal U}(a_j,z)\Phi(z)$ satisfies the exact Riemann-Hilbert problem of $S(z)$ in $D_{a_j}$.\end{lemma}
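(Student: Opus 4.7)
The plan is to follow the template of the proof of Lemma \ref{lemma local1}, now adapted to the case $a_j\in\Gamma_{jk}$ with $\zeta$ given by \eqref{def zeta2} so that $f_{c_j}(\zeta)$ absorbs the jump on ${\bf B}_{jk}$ (the preimage of the negative real $\zeta$-axis) rather than on $\widehat{\bf B}_j$. Denote ${\cal P}(z):={\cal F}_j(\zeta)Q_j(z)^{-1}{\cal U}(a_j,z)\Phi(z)$. The proof consists of checking that the jump of ${\cal P}$ on each contour of \eqref{srhp} crossing $D_{a_j}$ matches the jump of $S$, and verifying boundedness at $a_j$.

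First I would sort out analyticity: $Q_j$ is analytic in $D_{a_j}$ by the construction \eqref{def qj2}; ${\cal F}_j$ is analytic across $\Gamma_{jk}$ (its image under $\zeta$ is the imaginary axis, whereas $f_{c_j}$ has cut on the negative real axis) and across $\widehat{\bf B}_j$, ${\bf B}_j$, and every ${\bf B}_{j\ell}$ with $\ell\neq k$; and ${\cal U}$ is analytic across $\Gamma_{jk}$ and across ${\bf B}_{jk}$, because the sum in \eqref{def uaj2} excludes $i=k$ and the $(z-a_j)^{c_j}$ in its $(0,j)$-entry has its cut on ${\bf B}_j$. Therefore the jump of ${\cal P}$ on $\Gamma_{jk}$ reduces to the jump of $\Phi$, which is the model jump \eqref{mrhp}; on ${\bf B}_j$ the analogous reduction holds once one observes that the $z$- and $(z-a_j)$-branches combine in the $(0,j)$-entry of ${\cal U}$ (through $z^{n+\sum c}/(z-a_j)^{c_j}$) and in $\Phi$ with fully cancelling jumps, so that only the model jump of $\Phi$ survives.

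The central computation is on ${\bf B}_{jk}$: there only ${\cal F}_j$ jumps, and using \eqref{def faj2} one finds
\[
({\cal P})_-^{-1}({\cal P})_+={\cal U}^{-1}Q_j\bigl({\cal F}_{j,-}^{-1}{\cal F}_{j,+}\bigr)Q_j^{-1}{\cal U}=I+(\eta_j-1)\widetilde{\eta}_{kj}\frac{e^\zeta}{[\zeta^{c_j}]_-}\frac{[Q_j]_{kk}}{[Q_j]_{jj}}{\bf e}_{kj}.
\]
Since $[Q_j]_{kk}/[Q_j]_{jj}=\zeta^{c_j}(z-a_k)^{c_k}/[(z-a_j)^{c_j}]_{{\bf B}[k]}$ is analytic in $D_{a_j}$, evaluating the $\zeta^{c_j}$-factor from the $-$ side cancels the $[\zeta^{c_j}]_-$ in the denominator; substituting $e^\zeta=E_k(z)/E_j(z)$ from \eqref{def zeta2} and applying Lemma \ref{za ratio} together with \eqref{eq: wj wk} then converts $\widetilde{\eta}_{kj}/[(z-a_j)^{c_j}]_{{\bf B}[k],-}$ into $\eta_{kj}/(z-a_j)^{c_j}$, producing the target jump $I+\eta_{kj}(\eta_j-1)E_k(z-a_k)^{c_k}/(E_j(z-a_j)^{c_j}){\bf e}_{kj}$ of \eqref{jump of sphi3}.

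The remaining cases are routine: on each ${\bf B}_{j\ell}$ with $\ell\neq 0,j,k$ only ${\cal U}$ jumps by \eqref{eq uaj2}, and a computation mirroring \eqref{eq cal p1} reproduces \eqref{jump of sphi3} with $k$ replaced by $\ell$; on $\widehat{\bf B}_j\cap\Omega_j$ and $\widehat{\bf B}_j\cap\Omega_k$ the jump is produced solely by the $(0,j)$-entry of ${\cal U}$, and the prefactor $\eta'\in\{1,\,W_j/W_k\}$ built into the definition \eqref{def uaj2} is precisely what is needed to match \eqref{jump of sphi1} and \eqref{jump of sphi2} respectively; and the boundedness at $a_j$ from both sides follows as in the closing computation of the proof of Lemma \ref{lemma local1}, since right-multiplication by $M_j$ or $M_k$ cancels the singular entries of $\Phi$ and leaves an expression proportional to the entire function $e^\zeta/\zeta^{c_j}-f_{c_j}(\zeta)$. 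The chief obstacle is the central computation above, where one must uniformly handle both cases $0<\arg(a_j/a_k)\leq\pi$ and $-\pi<\arg(a_j/a_k)\leq 0$ of Lemma \ref{za ratio} and \eqref{eq: wj wk} so that the analytic ratio $[Q_j]_{kk}/[Q_j]_{jj}$ converts the $\widetilde{\eta}_{kj}$ factor inherited from ${\cal F}_j$ into exactly the $\eta_{kj}$ factor demanded by \eqref{jump of sphi3}, regardless of the relative position of $a_j$ and $a_k$.
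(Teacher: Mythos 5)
Your proof proposal is correct and follows essentially the same route as the paper's argument: isolate the jump on ${\bf B}_{jk}$ to ${\cal F}_j$ alone, conjugate by the analytic $Q_j$ and by ${\cal U}$ (which fixes ${\bf e}_{kj}$), and then use Lemma~\ref{za ratio} together with \eqref{eq: wj wk} to convert $\widetilde{\eta}_{kj}$ to $\eta_{kj}$ while restoring the principal branch of $(z-a_j)^{c_j}$, reproducing \eqref{jump of sphi3}; the $\widehat{\bf B}_j$, ${\bf B}_{j\ell}$, and boundedness checks are likewise the paper's. The one place where the paper is slightly more explicit than you are is the factor accounting (the identity $[Q_j]_{kk}/[Q_j]_{jj}\cdot e^\zeta/[\zeta^{c_j}]_- = E_k(z-a_k)^{c_k}/(E_j[(z-a_j)^{c_j}]_{{\bf B}[k],+})$ with the $\pm$ flip between the $\zeta$-cut and ${\bf B}_{jk}$), but your final paragraph correctly identifies that uniformly handling both $\arg(a_j/a_k)$ cases in Lemma~\ref{za ratio} is the crux, and the conversion does go through in both cases.
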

\begin{proof}

On $\mathbf{B}_{jk}\cap D_{a_j}$, 
using the identity $E_k(z)/E_j(z)=e^{\zeta}$ from the definition of $\zeta$ in \eqref{def zeta2}, the jump of ${\cal F}_j(\zeta)Q_j(z)^{-1}{\cal U}(a_j,z)$ is given by
\begin{equation}\label{eq:fqu}
\begin{split}
&\quad\left({\cal F}_j(\zeta)Q_j(z)^{-1}{\cal U}(a_j,z)\right)_-^{-1}\left({\cal F}_j(\zeta)Q_j(z)^{-1}{\cal U}(a_j,z)\right)_+\\
&={\cal U}(a_j,z)^{-1}Q_{j}(z){\cal F}_{j,-}(\zeta)^{-1}{\cal F}_{j,+}(\zeta)Q_{j}(z)^{-1}{\cal U}(a_j,z)\\
&={\cal U}(a_j,z)^{-1}Q_{j}(z)\left(I_{\nu+1}+(\eta_{j}-1) \frac{{\widetilde{\eta}_{kj}}\ee^{\zeta}}{[\zeta^{c_j}]_-}{\bf e}_{kj}\right)Q_{j}(z)^{-1}{\cal U}(a_j,z)\\
&={\cal U}(a_j,z)^{-1}Q_{j}(z)\left(I_{\nu+1}+\eta_{kj}(\eta_{j}-1)\frac{[(z-a_j)^{c_j}]_+}{(z-a_j)^{c_j}}  \frac{\ee^{\zeta}}{[\zeta^{c_j}]_-}{\bf e}_{kj}\right)Q_{j}(z)^{-1}{\cal U}(a_j,z)\\
&={\cal U}(a_j,z)^{-1}\left(I_{\nu+1}+
\eta_{kj}(\eta_j-1)\frac{E_k(z)(z-a_k)^{c_k}}{E_j(z)(z-a_j)^{c_j}}{\bf e}_{kj}\right){\cal U}(a_j,z)\\
&=I_{\nu+1}+
\eta_{kj}(\eta_j-1)\frac{E_k(z)(z-a_k)^{c_k}}{E_j(z)(z-a_j)^{c_j}}{\bf e}_{kj},
  \end{split}
\end{equation}
which matches the jump condition of $S\Phi^{-1}$ on $\mathbf{B}_{jk}\cap D_{a_j}$ in \eqref{jump of sphi3}. In the 3rd equality $[(z-a_j)^{c_j}]_+$ comes from the $+$ side of $[(z-a_j)^{c_j}]_{{\bf B}[k]}$. The 3rd equality is obtained by Lemma \ref{za ratio} and the relation \eqref{eq: wj wk}.  We also note that the $-$ side of $\zeta^{c_j}$ corresponds to the $+$ side of ${\bf B}_{jk}$ \eqref{def bk}, hence $[\zeta^{c_j}]_-$ refers to the boundary values evaluated from the $+$ side of ${\bf B}_{jk}$. We remark that $Q_j$ was chosen in hindsight such that \eqref{eq:fqu} holds while the jump of ${\cal F}_j$ at \eqref{def faj2} is written only in terms of the local coordinate $\zeta$.  

The other jump conditions of $S\Phi^{-1}$ in \eqref{jump of sphi1}, \eqref{jump of sphi2} and \eqref{jump of sphi3} can be satisfied by the following calculations.

When $z\in\widehat{\bf B}_{j}\cap D_{a_j}\cap \Omega_j$ and $a_j\in\Gamma_{jk}$ we have
\begin{equation}\nonumber
\begin{split}
\left({\cal F}_j(\zeta)Q_j(z)^{-1}{\cal U}(a_j,z)\right)_-^{-1}\left({\cal F}_j(\zeta)Q_j(z)^{-1}{\cal U}(a_j,z)\right)_+&={\cal U}_-(a_j,z)^{-1}{\cal U}_+(a_j,z)\\
&=I_{\nu+1}+
\left(\frac{z^{n+\sum c}}{E_j(z)(z-a_j)^{c_j}}{\bf e}_{0j}\right)\bigg|_+^{-}
\\&=\displaystyle I_{\nu+1}+\frac{\left(\eta_j-1\right)[z^{n+\sum c}]_-}{E_j(z)(z-a_j)^{c_j}}{\bf e}_{0j}.
  \end{split}
\end{equation}

When $z\in\widehat{\bf B}_{j}\cap D_{a_j}\cap \Omega_k$ and $a_j\in\Gamma_{jk}$ we have
\begin{equation}\nonumber
\begin{split}
\left({\cal F}_j(\zeta)Q_j(z)^{-1}{\cal U}(a_j,z)\right)_-^{-1}\left({\cal F}_j(\zeta)Q_j(z)^{-1}{\cal U}(a_j,z)\right)_+&={\cal U}_-(a_j,z)^{-1}{\cal U}_+(a_j,z)\\
&=I_{\nu+1}+
\left(\frac{W_j(z)}{W_k(z)}\frac{z^{n+\sum c}}{E_j(z)(z-a_j)^{c_j}}{\bf e}_{0j}\right)\bigg|_+^{-}
\\&=\displaystyle I_{\nu+1}+\frac{W_j(z)}{W_k(z)}\frac{\left(\eta_j-1\right)[z^{n+\sum c}]_-}{E_j(z)(z-a_j)^{c_j}}{\bf e}_{0j}.
  \end{split}
\end{equation}

When $z\in{\bf B}_{ji}\cap D_{a_j}$ and $a_j\in\Gamma_{jk}$ the calculation is similar to that in \eqref{eq cal p1}.

Lastly, we need to show that ${\cal F}_jQ_j^{-1}{\cal U}\Phi$ satisfies the boundedness of $S$ in \eqref{srhp} as $z\to a_j$.

When $z\in D_{a_j}\cap\Omega_j$ we have
\begin{align*}
 Q_j(z){\cal F}_j(\zeta(z))Q_j(z)^{-1}{\cal U}(a_j,z)\Phi(z)M_j(z)
 &=\left(I_{\nu+1}-\frac{\zeta^{c_j}(z-a_k)^{c_k}{\widetilde{\eta}_{kj}}}{[(z-a_j)^{c_j}]_{{\bf B}[k]}}f_{c_j}(\zeta(z)){\bf e}_{kj}\right){\cal U}(a_j,z)\Phi(z)M_j(z)
 \\\nonumber
 =\Phi(z)&+\left(\frac{(z-a_j)^{c_j}{\widetilde{\eta}_{kj}}f_{c_j}(\zeta(z))\zeta^{c_j}(z-a_k)^{c_k}}{W_j(z)[(z-a_j)^{c_j}]_{{\bf B}[k]}}-\frac{E_k(z)(z-a_k)^{c_k}}{E_j(z)W_k(z)}\right){\bf e}_{j0}
 \\
 &=\Phi(z)+\frac{\zeta(z)^{c_k}(z-a_k)^{c_k}}{W_k(z)}\left(f_{c_j}(\zeta(z))-\frac{E_k(z)}{E_j(z)\zeta(z)^{c_j}}\right){\bf e}_{j0}\\
  &=\Phi(z)+\frac{\zeta(z)^{c_k}(z-a_k)^{c_k}}{W_k(z)}\left(f_{c_j}(\zeta(z))-\frac{\ee^{\zeta(z)}}{\zeta(z)^{c_j}}\right){\bf e}_{j0}.
\end{align*}
The 3rd equality is obtained by \eqref{def etakjtilde}. Since $f_{c_j}(\zeta)-\ee^\zeta/\zeta^{c_j}$ is an entire function in $\zeta$, the $(j,0)$th entry is bounded. The boundedness of the other entries are inherited from the boundedness of the corresponding entries in $\Phi(z)$. By a similar argument, we also have $Q_j(z){\cal F}_j(\zeta)Q_j(z)^{-1}{\cal U}(a_j,z)\Phi(z)M_k(z)={\cal O}(1)$ as $z\to a_j$, $z\in\Omega_k$ and $k\neq j$. This ends the proof of Lemma \ref{lemma loc2}.
\end{proof}

\subsection{Construction of \texorpdfstring{${\cal R}$}{r} and \texorpdfstring{$H$}{h}}

To match the local parametrices obtained in Section \ref{subsection j0} and \ref{subsection jk} with the global parametrix obtained in Section \ref{global} along $\partial D_{a_j}$, we need to modify the global parametrix $\Phi$ into ${\cal R}\Phi$ with a rational function ${\cal R}$ with poles at $\{a_j\text{'s}\}$. This is called the ``partial Schlesinger transform'' \cite{Bertola 2008}, and it was used also for $\nu=1$ \cite{Lee 2017}.

From the  the asymptotic expansion of $f_c(\zeta)$ in Appendix \ref{appendix} the asymptotic expansion of ${\cal F}_j(\zeta)$ as $\zeta\to\infty$ follows. We denote the coefficients in the expansion by $\alpha_i(c_j)$ as below.

\begin{equation}\label{FF22}
{\cal F}_j(\zeta)=\begin{cases}
\displaystyle I_{\nu+1} +\sum_{i=1}^\infty\frac{\alpha_i(c_j)}{\zeta^i} {\bf e}_{0j} ,&\mbox{if } a_j\in \Gamma_{j0},\\
\displaystyle I_{\nu+1} -\widetilde{\eta}_{kj}\left(\sum_{i=1}^\infty\frac{\alpha_i(c_j)}{\zeta^i}\right){\bf e}_{kj},&\mbox{if } a_j\in \Gamma_{jk}.
\end{cases}
\end{equation}
In fact $\alpha_i(c_j)$ can be explicitly written by \eqref{exp fc}, see Appendix \ref{appendix}.

Let ${\cal F}_j^{(m)}(\zeta)$ be  the truncated asymptotic expansion of ${\cal F}_j(\zeta)$ given by
 \begin{equation}\label{hat fm}
      {\cal F}_j^{(m)}(\zeta)=\begin{cases}
\displaystyle I_{\nu+1}+\sum_{i=1}^m\frac{\alpha_i(c_j)}{\zeta^i}{\bf e}_{0j},&
\mbox{if } a_j\in \Gamma_{j0},\\
\displaystyle I_{\nu+1}-\widetilde{\eta}_{kj}\left(\sum_{i=1}^m\frac{\alpha_i(c_j)}{\zeta^i}\right){\bf e}_{kj} ,& \mbox{if } a_j\in \Gamma_{jk}.
\end{cases}
 \end{equation}

It follows that
\begin{equation}\label{fffm}
{\widehat{\cal F}}_j(\zeta):={\cal F}_j(\zeta)\left({\cal F}_j^{(m)}(\zeta)\right)^{-1}=I_{\nu+1}+{\cal O}\left(|\zeta|^{-m-1}\right),\quad{\text as}\quad |\zeta|\to \infty.    
\end{equation}

For a function $f$ with pole singularity at $z=a$ let us define
$$[f(z)]_{\text{sing}(a)}=\displaystyle\frac{1}{2\pi\ii}\oint_{a}\frac{f(s)}{z-s}\dd s,$$ which represents the singular part of the Laurent expansion of $f(z)$ at $z=a$. The integration contour circles around $a$ such that $a$ is the only singularity of the integrand inside the circle; especially $z$ must be outside the circle.

\begin{lemma}\label{handr}
Let ${\cal R}(z)$ be a rational matrix function of size $\nu+1$ by $\nu+1$ whose $(p,q)$th entry is given by $r_{p,q}(z)$ and $0\leq p,q\leq\nu$, where
\begin{equation}\label{r0jrkj}
    \left\{\begin{aligned}
\displaystyle r_{0,j}(z)&=\bigg[\frac{\zeta(z)^{c_j}z^{\sum c}}{(z-a_j)^{c_j}}\sum_{i=1}^{m}\frac{\alpha_i(c_j)}{\zeta(z)^i}\bigg]_{\text{sing}(a_j)},& a_j\in\Gamma_{j0},\vspace{0.1cm}\\
\displaystyle r_{k,j}(z)&=-\widetilde{\eta}_{kj}\bigg[\frac{\zeta(z)^{c_j} (z-a_k)^{c_k} }{[(z-a_j)^{c_j}]_{{\bf B}[k]}} \sum_{i=1}^{m}\frac{\alpha_i(c_j)}{\zeta(z)^i}\bigg]_{\text{sing}(a_j)},& a_j\in\Gamma_{jk}.
\end{aligned}\right.
\end{equation}
For any $p$ that belongs to the chain of $a_j$, i.e. $j\to k\to\dots\to p\to\dots$, we set $r_{p,j}$ by the recursively applying the following relation: 
\begin{equation}\label{eq r0j}
    r_{p,j}(z)=[r_{p,k}(z) r_{k,j}(z)]_{{\rm sing}(a_j)}.
\end{equation}
For all other entries we set $$r_{p,q}(z)=\delta_{pq},$$ where $\delta_{pq}=1$ for $p=q$ and $\delta_{pq}=0$ for $p\neq q$. 
With the above definitions the matrix function $H_j(z)$ given by \begin{equation}\label{def hj}
   H_j(z):=Q_j(z)^{-1}{\cal R}(z)Q_j(z)\left({\cal F}_j^{(m)}(\zeta(z))\right)^{-1}
\end{equation} is holomorphic at $a_j$.  We note that $r_{k,j}(z)$ has pole only at $a_j$ and only for $k$ that belongs to the chain of $a_j$.  

We have the following asymptotic behavior
\begin{equation}\label{r chain}
  r_{0,j}(z)= \frac{{\sf chain}(j)}{z-a_j}\left(1+{\cal O}\left(\frac{1}{N}\right)\right)
\end{equation}
uniformly over a compact subset of ${\mathbb C}\setminus D_{a_j}$, 
where the constant ${\sf chain}(j)$ is defined at \eqref{chain15}. 

For $p\neq 0$ and $z\in{\mathbb C}\setminus D_{a_j}$, we have
\begin{equation}
    r_{p,j}(z)={\cal O}(N^C)
\end{equation}
for a fixed finite $C$.
\end{lemma}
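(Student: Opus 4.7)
My plan has three parts, matching the three assertions.

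\textbf{Part 1 (Holomorphicity of $H_j$).} The key observation is that $({\cal F}_j^{(m)}(\zeta))^{-1}$ differs from the identity by a rank-one nilpotent matrix supported in the $(0,j)$ entry (when $a_j\in\Gamma_{j0}$) or the $(k,j)$ entry (when $a_j\in\Gamma_{jk}$). Consequently, $H_j(z)$ agrees with $Q_j(z)^{-1}{\cal R}(z)Q_j(z)$ outside the $j$th column, and only $(H_j)_{p,j}$ for $p$ in the chain of $a_j$ could possibly fail to be holomorphic at $a_j$. I will verify holomorphicity entry-by-entry along the chain. For the immediate predecessor of $j$ (either $0$ or $k$), the requirement reduces algebraically to the very definition \eqref{r0jrkj} of $r_{0,j}$ or $r_{k,j}$ as a singular part, combined with the structural identity $q_j^{2}\cdot\frac{\zeta^{c_j}z^{\sum c}}{(z-a_j)^{c_j}}=1$ (or its $\Gamma_{jk}$ analogue with $\frac{\zeta^{c_j}(z-a_k)^{c_k}}{[(z-a_j)^{c_j}]_{{\bf B}[k]}}$) that was already exploited in Lemmas \ref{lemma local1} and \ref{lemma loc2}. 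For $p$ deeper in the chain, I will induct using \eqref{eq r0j}: writing $r_{p,j}=r_{p,k}r_{k,j}-h$ with $h$ holomorphic at $a_j$ by construction, and noting that $r_{p,k}$ has no singularity at $a_j$ since its poles lie at $a_k\notin D_{a_j}$, the expression $(H_j)_{p,j}$ factors as $r_{p,k}\cdot q_j^{-1}\cdot(H_j)_{k,j}-h\cdot q_j$, which is holomorphic by the inductive hypothesis on $(H_j)_{k,j}$.

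\textbf{Part 2 (Asymptotic of $r_{0,j}$).} I will induct on the length $s$ of the chain $(k_s=j,k_{s-1},\dots,k_1)$. For the base case $s=1$ ($a_j\in\Gamma_{j0}$), the linear expansion $\zeta\sim\tfrac{1-|a_j|^2}{a_j}N(z-a_j)$ shows that the $i$th term of the defining sum contributes a pole of order $i$ at $a_j$ with leading coefficient scaling like $N^{c_j-i}$. The $i=1$ term alone produces a simple pole whose coefficient, upon invoking $\alpha_1(c_j)=1/\Gamma(c_j)$ from Appendix \ref{appendix}, matches exactly ${\sf chain}(j)$. For $z$ in a compact subset of ${\mathbb C}\setminus D_{a_j}$, the higher-order poles of order $i\geq 2$ contribute relative errors $O(N^{-(i-1)})|z-a_j|^{-(i-1)}=O(1/N)$, yielding the claimed $(1+{\cal O}(1/N))$ remainder. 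For the inductive step $s\geq 2$, I compute $r_{0,j}=[r_{0,k_{s-1}}r_{k_{s-1},j}]_{{\rm sing}(a_j)}$: by the inductive hypothesis $r_{0,k_{s-1}}(a_j)=\frac{{\sf chain}(k_{s-1})}{a_j-a_{k_{s-1}}}(1+{\cal O}(1/N))$, while the analogous local expansion for $r_{k_{s-1},j}$ using $\zeta\sim(\bar a_{k_{s-1}}-\bar a_j)N(z-a_j)$ isolates its leading simple pole at $a_j$. Multiplying the two, the nested product defining ${\sf chain}(j)$ in \eqref{chain15} --- involving the phase factors $\widetilde\eta_{k_i,k_{i+1}}$, the ratios $(a_{k_{i+1}}-a_{k_i})^{c_{k_i}}/(a_{k_i}-a_{k_{i+1}})^{c_{k_{i+1}}}$, and the $\Gamma(c_{k_{i+1}})^{-1}|a_{k_i}-a_{k_{i+1}}|^{2(c_{k_{i+1}}-1)}$ factors --- is reproduced term by term by the recursion.

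\textbf{Part 3 (Bound on $r_{p,j}$).} The same local expansions from Part 2 apply to any $r_{k_i,j}$ along the chain: each is dominated by a simple pole whose coefficient grows at most like $N^{c_j}$ times factors uniformly bounded in $N$. The recursion \eqref{eq r0j} involves at most $\nu$ such products, giving $r_{p,j}={\cal O}(N^C)$ uniformly on ${\mathbb C}\setminus D_{a_j}$ for $C$ a finite constant depending only on the $c$'s and on $\nu$.

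The main obstacle will be the algebraic matching in Part 2: verifying that the iteration $r_{0,j}=[r_{0,k_{s-1}}r_{k_{s-1},j}]_{{\rm sing}(a_j)}$ telescopes into precisely the product expression in \eqref{chain15}, with all branch choices encoded by $\widetilde\eta_{k_i,k_{i+1}}$ consistently threaded through the $[(\cdot)^{c_j}]_{{\bf B}[k]}$ factors in the definition of $r_{k,j}$. I expect this to close thanks to the compatibility of $Q_j$ with the local coordinate $\zeta$ established in Lemmas \ref{lemma local1} and \ref{lemma loc2}, but the bookkeeping is lengthy and is the principal technical content of the lemma.
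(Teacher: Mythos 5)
Your proposal is correct and follows essentially the same route as the paper. The nilpotent structure of $({\cal F}_j^{(m)}(\zeta))^{-1}-I_{\nu+1}$ and the reduction to the $j$th column, the identities $(Q_j)_{00}(Q_j)_{jj}=1$ and $(Q_j)_{jj}(Q_j)_{kk}=1$, the cancellation by singular parts along the chain, and the propagation of the $N$-asymptotics via the recursion \eqref{eq r0j} with the chain-length induction for \eqref{r chain} --- all of this mirrors the paper's own argument, which simply writes out $H_j(z)$ entry by entry (deriving the recursion from the holomorphicity requirement rather than taking it as given) and then isolates the $i=1$ pole term at each step to obtain the recurrence for ${\sf chain}(j)$.
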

\begin{proof}
When $a_j\in \Gamma_{j0}$, ${\cal F}_j^{(m)}(z)$ is given in \eqref{hat fm}.  We have
\begin{equation}\nonumber
\begin{split}
H_j(z)
&=\displaystyle\sum_{q\neq j}\big[Q_j(z)^{-1}{\cal R}(z)Q_j(z)\big]_{pq}{\bf e}_{pq}+\left(\frac{(z-a_j)^{c_j}r_{0,j}(z)}{\zeta^{c_j}z^{\sum c}}-\left(\sum_{i=1}^{m}\frac{\alpha_i(c_j)}{\zeta^i}\right)\right){\bf e}_{0j}\\
&+\displaystyle\sum_{p\notin \{0, j\} }\left(\frac{(z-a_j)^{c_j/2}r_{p,j}(z)}{\zeta^{c_j/2}z^{\sum c/2}}-\frac{\zeta^{c_j/2}{z^{\sum c/2}}r_{p,0}(z)}{(z-a_j)^{c_j/2}}\left(\sum_{i=1}^{m}\frac{\alpha_i(c_j)}{\zeta^i}\right)\right){\bf e}_{pj}\\
&-\left(\frac{\zeta^{c_j}{z^{\sum c}} r_{j,0}(z)}{(z-a_j)^{c_j}}\sum_{i=1}^{m}\frac{\alpha_i(c_j)}{\zeta^i}\right){\bf e}_{jj}.
\end{split}\end{equation}
Let us discuss each term.  In the 1st summation, since $q\neq j$, $r_{p,q}(z)$ in the summation are all holomorphic at $a_j$ and it follows that the summation is holomorphic at $a_j$.  The 2nd term with ${\bf e}_{0j}$ becomes holomorphic exactly because of \eqref{r0jrkj}.  The 3rd term with the summation vanishes because $r_{p,j}=r_{p,0}=0$ for the corresponding $p$'s.  The last term also vanishes because $r_{j,0}=0$ by definition.

When $a_j\in\Gamma_{jk}$, we have
\begin{equation}\label{2ndHj}
\begin{split}
H_j(z)
&=\displaystyle\sum_{q\neq j}^\nu\big[Q_j(z)^{-1}{\cal R}(z)Q_j(z)\big]_{pq}{\bf e}_{pq}
\\&+ \frac{\zeta^{c_j/2}{(z-a_k)^{c_k/2}}}{[(z-a_j)^{c_j/2}]_{{\bf B}[k]}}\left(\frac{[(z-a_j)^{c_j}]_{{\bf B}[k]} r_{0,j}(z)}{\zeta^{c_j}(z-a_k)^{c_k}}+{r_{0,k}(z)}\widetilde{\eta}_{kj}\sum_{i=1}^{m}\frac{\alpha_i(c_j)}{\zeta^i}\right){\bf e}_{0j}\\
&+\displaystyle\left(\frac{\zeta^{c_j}{(z-a_k)^{c_k}}\,{r_{j,k}(z)}\widetilde{\eta}_{kj}}{[(z-a_j)^{c_j}]_{{\bf B}[k]}} \sum_{i=1}^{m}\frac{\alpha_i(c_j)}{\zeta^i}\right){\bf e}_{jj}+\left(\frac{[(z-a_j)^{c_j}]_{{\bf B}[k]}r_{k,j}(z)}{\zeta^{c_j}(z-a_k)^{c_k}}+\widetilde{\eta}_{kj}\sum_{i=1}^{m}\frac{\alpha_i(c_j)}{\zeta^i}\right){\bf e}_{kj}\\
&+\displaystyle\sum_{p\notin \{0,j,k\}}\left(\frac{[(z-a_j)^{c_j/2}]_{{\bf B}[k]} r_{p,j}(z)}{\zeta^{c_j/2}(z-a_k)^{c_k/2}}+\frac{\zeta^{c_j/2}{(z-a_k)^{c_k/2}}{r_{p,k}(z)}\widetilde{\eta}_{kj}}{[(z-a_j)^{c_j/2}]_{{\bf B}[k]}}\sum_{i=1}^{m}\frac{\alpha_i(c_j)}{\zeta^i}\right){\bf e}_{pj}.
\end{split}\end{equation}
The 1st term with the summation is holomorphic by the similar argument as above.   The term with ${\bf e}_{jj}$ vanishes because $r_{j,k}(z)=0$ for $j$ does not belong to the chain of $a_k$.  The term with ${\bf e}_{kj}$ is holomorphic at $a_j$ exactly by the definition \eqref{r0jrkj}.  For the term with ${\bf e}_{0j}$ to be holomorphic one obtains the following (recursive) relation:
\begin{equation}\label{recursive-proof}
   r_{0,j}(z)= -\left[{r_{0,k}(z)}\frac{\zeta^{c_j}(z-a_k)^{c_k}\widetilde{\eta}_{kj}}{[(z-a_j)^{c_j}]_{{\bf B}[k]}}\sum_{i=1}^{m}\frac{\alpha_i(c_j)}{\zeta^i}\right]_{\text{sing}(a_j)}  \text{ when $a_j\in\Gamma_{jk}$.}
\end{equation}
In the last term with the summation in \eqref{2ndHj} to be holomorphic we obtain  
\begin{equation}\nonumber
   r_{p,j}(z)= -\left[{r_{p,k}(z)}\frac{\zeta^{c_j}(z-a_k)^{c_k}\widetilde{\eta}_{kj}}{[(z-a_j)^{c_j}]_{{\bf B}[k]}}\sum_{i=1}^{m}\frac{\alpha_i(c_j)}{\zeta^i}\right]_{\text{sing}(a_j)}  \text{ when $j\to k\to p\neq 0$.}
\end{equation}
The above two relations combined with the definition of $r_{k,j}(z)$ \eqref{r0jrkj},  gives \eqref{eq r0j}.

Now let us prove the asymptotic behaviors of $r_{0,j}(z)$ when $a_j\in \partial\Omega_0$. Using the linear approximation \eqref{eq linear 1} we may write
\begin{equation}\label{124}
\begin{split}
r_{0,j}(z)&=
\bigg[\frac{\zeta(z)^{c_j}z^{\sum c}}{(z-a_j)^{c_j}}\frac{\alpha_1(c_j)}{\zeta(z)}\bigg]_{\text{ sing}(a_j)}+\bigg[\frac{\zeta(z)^{c_j}z^{\sum c}}{(z-a_j)^{c_j}}\sum_{i=2}^{m}\frac{\alpha_i(c_j)}{\zeta(z)^i}\bigg]_{\text{sing}(a_j)}
\\
&=\frac{N^{c_j-1} a_j^{1+\sum_{i\neq j} c_i}}{\Gamma(c_j) (1-|a_j|^2)^{1-c_j}}\frac{1}{z-a_j} + \frac{1}{2\pi \ii}\oint_{a_j} \frac{\zeta(s)^{c_j}s^{\sum c}}{(s-a_j)^{c_j}}\sum_{i=2}^{m}\frac{\alpha_i(c_j)}{\zeta(s)^i}\frac{\dd s}{z-s}
\\
&=\displaystyle\frac{N^{c_j-1} a_j^{1+\sum_{i\neq j} c_i}}{\Gamma(c_j) (1-|a_j|^2)^{1-c_j}}\frac{1}{z-a_j}\left(1+{\cal O}\left(\frac{1}{N}\right)\right),\quad z\notin D_{a_j}.
 \end{split}   
\end{equation}
For the integration in the 2nd line we may choose the circular integration contour centered at $a_j$ with half the radius of $D_{a_j}$ such that $|\zeta(s)|> C N$ over the integration contour for some positive constant $C$.  Then the integral is bounded by ${\cal O}(N^{c_j-2})$ and we obtain the estimate.  

Note that the coefficient of $1/(z-a_j)$ in the 1st term of the 2nd line in \eqref{124} comes from the evaluation of $\zeta(z)^{c_j} z^{\sum c}/(z-a_j)^{c_j}$ at $z=a_j$.  We use that 
$$\zeta(z)^{c_j}= N^{c_j}(1-|a_j|^2)^{c_j} \frac{(z-a_j)^{c_j}}{z^{c_j}}(1+{\cal O}(z-a_j))$$ to determine the exact branch of the exponents.

By a similar consideration, we obtain
\begin{equation}\label{bd rkj}
    r_{k,j}(z)=\displaystyle\frac{-N^{c_j-1}}{\Gamma(c_j)} \frac{\widetilde{\eta}_{kj}(a_j-a_k)^{c_k}|a_k-a_j|^{2c_j}}{[(a_k-a_j)^{c_j}]_{{\bf B}[k]} (\overline a_k-\overline a_j)}  \frac{1}{z-a_j}\left(1+{\cal O}\left(\frac{1}{N}\right)\right),~~ a_j\in\Gamma_{jk}, ~~k\neq 0,
\end{equation}
where we have used the following identity to evaluate the correct branch of the leading coefficient. 
$$ \zeta^{c_j} = N^{c_j}|a_k-a_j|^{2c_j} \frac{[(z-a_j)^{c_j}]_{{\bf B}[k]}}{[(a_k-a_j)^{c_j}]_{{\bf B}[k]}}(1+{\cal O}(z-a_j)).$$
Here $[(a_k-a_j)^{c_j}]_{{\bf B}[k]}$ means the evaluation of $[(z-a_j)^{c_j}]_{{\bf B}[k]}$ at $z=a_k$.

Finally we estimate the $r_{0,j}$ when $j\to k$ and $k\neq 0$ using the relation \eqref{recursive-proof}.
\begin{align*}
   r_{0,j}(z)&= -\left[{r_{0,k}(z)}\frac{\zeta^{c_j}(z-a_k)^{c_k}\widetilde{\eta}_{kj}}{[(z-a_j)^{c_j}]_{{\bf B}[k]}}\frac{\alpha_1(c_j)}{\zeta}\right]_{{\rm sing}(a_j)}
   -\left[{r_{0,k}(z)}\frac{\zeta^{c_j}(z-a_k)^{c_k}\widetilde{\eta}_{kj}}{[(z-a_j)^{c_j}]_{{\bf B}[k]}}\sum_{i=2}^{m}\frac{\alpha_i(c_j)}{\zeta^i}\right]_{{\rm sing}(a_j)}
   \\ &=
 -r_{0,k}(a_j)  
 \frac{N^{c_j-1}}{\Gamma(c_j)} \frac{\widetilde{\eta}_{kj}(a_j-a_k)^{c_k}|a_k-a_j|^{2c_j}}{[(a_k-a_j)^{c_j}]_{{\bf B}[k]} (\overline a_k-\overline a_j)}
 \frac{1}{z-a_j} + {\cal O}(\|r_{0,k}\|_\infty N^{c_j-2}),
\end{align*}
where $\|r_{0,k}\|_\infty$ is the norm taken over $D_{a_j}$ and the error bound is uniformly over a compact subset in ${\mathbb C}\setminus D_{a_j}$. 
This allows us to define the constant ${\sf chain}(j)$ by 
\begin{equation}\nonumber
N^{\mu_j}\lim_{N\to\infty}\frac{r_{0,j}(z)}{N^{\mu_j}}  = \frac{{\sf chain}(j)}{z-a_j}, \quad \text{ for all $j=1,\dots,\nu,$}
\end{equation}
where $\mu_{j}$ is fixed such that the limit is non-trivial.
From the above two equations we obtain the recurrence relation
$$ \mu_j = \mu_k + c_j-1  $$ 
and
$$ {\sf chain}(j)=-\frac{{\sf chain}(k)}{a_j-a_k} \frac{N^{c_j-1}\widetilde{\eta}_{kj}(a_j-a_k)^{c_k}|a_k-a_j|^{2c_j}}{\Gamma(c_j)[(a_k-a_j)^{c_j}]_{{\bf B}[k]} (\overline a_k-\overline a_j)}.   $$
For a given chain $j=k_s\to k_{s-1}\to\dots\to k_1\to 0$ the above relation provides the recurrence relation that can be solved with the initial condition given by \eqref{124} as below.
\begin{equation}\label{def cha}
\begin{aligned}
 \mu_j &= \sum_{i=1}^s(c_{k_i}-1),\\
 {\sf chain}(j)&= \frac{a_{k_1}^{1+\sum_{i\neq k_1} c_i}N^{\sum_{i=1}^s(c_{k_i}-1)}}{\Gamma(c_{k_1})(1-|a_{k_1}|^2)^{1-c_{k_1}}}\prod_{i=1}^{s-1} \frac{\widetilde{\eta}_{k_i,k_{i+1}}(a_{k_{i+1}}-a_{k_i})^{c_{k_i}}|a_{k_i}-a_{k_{i+1}}|^{2c_{k_{i+1}}}}{\Gamma(c_{k_{i+1}})[(a_{k_i}-a_{k_{i+1}})^{c_{k_{i+1}}}]_{{\bf B}[k_i]} | a_{k_i}- a_{k_{i+1}}|^2}.
 \end{aligned}
\end{equation}
Using the identity \eqref{def subbk} about the branches of multivalued functions the above expression of ${\sf chain}(j)$ becomes the original definition at \eqref{chain15}.
Note that $s$ is the level of $a_j$ as defined in Definition \ref{def chain} and, if $s=1$, the product part is one. 

For other $r_{p,j}(z)$'s with $p\neq 0$ similar estimates as in \eqref{124} and \eqref{bd rkj}  can be made and shown to be bounded by ${\cal O}(N^C)$ with a finite $C$. 
\end{proof}

From the above lemma we realize that all $r_{p,q}$'s grows (or decays) algebraically in $N$ away from $a_q$. Therefore we have \begin{equation}\label{eq rbdd}
    {\cal R}(z)={\cal O}\left(N^C\right),\quad{\mbox{when $z\in \partial D_{a_j}$}}\end{equation} for some fixed finite $C$.

When $z\in \partial D_{a_j}$, by the definition of $H_j(z)$ in \eqref{def hj} and the boundedness of ${\cal R}(z)$, we have 
$$H_j(z)={\cal O}\left(N^{C'}\right),$$ for some finite ${C'}$.

\section{Strong asymptotics}\label{section strong}

Combining all the constructions of the (improved) global and the local parametrices we define $S^\infty(z)$ by
\begin{equation}\label{Sfinal}
S^\infty(z):=
\begin{cases}
{\cal R}(z)\Phi(z),&z\notin \cup_{j=1}^\nu D_{a_j},\\
Q_j(z)H_j(z){\cal F}_j(\zeta(z))Q_j(z)^{-1}{\cal U}(a_j,z)\Phi(z),& z\in D_{a_j},\quad j=1,\dots,\nu.
\end{cases}
\end{equation}
This will be the strong asymptotics of $S$ \eqref{srhp} as $N\to\infty$ as we prove now.

\subsection{Error analysis}
We define the error matrix by
\begin{equation}\label{def error}
    {\cal E}(z):=S^\infty(z)S(z)^{-1}.
\end{equation}

\begin{lemma}
Let ${\cal E}(z)$ be given above. Then ${\cal E}(z)= I_{\nu+1}+{\cal O}\left(1/{N^{\infty}}\right)$ as $N\to\infty$ uniformly over a compact set of the corresponding region. Here the error bound ${\cal O}(1/N^\infty)$ stands for ${\cal O}(1/N^m)$ for an arbitrary $m>0$.
\end{lemma}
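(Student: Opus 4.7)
The plan is to apply the standard small-norm Riemann--Hilbert argument to ${\cal E}(z)=S^\infty(z)S(z)^{-1}$. I first identify the contours on which ${\cal E}$ has genuine jumps. Inside each $D_{a_j}$, Lemmas \ref{lemma local1} and \ref{lemma loc2} show that ${\cal F}_j(\zeta)Q_j(z)^{-1}{\cal U}(a_j,z)\Phi(z)$ satisfies exactly the jump conditions of $S$, and left-multiplication by $Q_jH_j$---which is holomorphic throughout $D_{a_j}$ by Lemma \ref{handr} (the removal of the pole at $a_j$ being the content of that lemma)---preserves those jumps. Outside $\bigcup_j D_{a_j}$, $S^\infty={\cal R}\Phi$ with ${\cal R}$ analytic away from $\{a_j\}$, and $\Phi$ satisfies the jumps of $S$ on $\Gamma_{j0}$, $\Gamma_{jk}$, and ${\bf B}_j\cap\Omega_k$ via the model problem \eqref{mrhp}. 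Consequently the only surviving jump contours of ${\cal E}$ are $\bigcup_j\partial D_{a_j}$ together with the exterior pieces of $\widehat{\bf B}_j$, ${\bf B}_{jk}$, and $\partial U\cap\Omega_0$.

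On the exterior pieces, the residual jump is the one recorded in \eqref{jump of sphi1}--\eqref{jump of sphi4}, conjugated by the analytic factor ${\cal R}$. Every off-diagonal entry carries a factor of the form $z^n/E_j(z)=\exp[N(\log|z|-{\rm Re}(\overline a_jz)-l_j)]$ or $E_k(z)/E_j(z)=\exp[N({\rm Re}(\overline a_kz)+l_k-{\rm Re}(\overline a_jz)-l_j)]$. By Theorem \ref{Thm1} and the definition \eqref{def omegaj} of the $\Omega_j$, the corresponding exponent is bounded above by $-c(r)N$ for some $c(r)>0$, uniformly on the portion of each contour that lies outside the disks of fixed radius $r$. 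Combined with the fixed-power bound \eqref{eq rbdd} on ${\cal R}^{\pm 1}$, this yields $\|V_{\cal E}-I\|_{L^\infty\cap L^2}={\cal O}(e^{-cN})$ on those pieces.

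On each circle $\partial D_{a_j}$, substituting the definition \eqref{def hj} of $H_j$ together with the factorization ${\cal F}_j=\widehat{\cal F}_j\,{\cal F}_j^{(m)}$ from \eqref{fffm}, a short computation shows the $\Phi$ and ${\cal F}_j^{(m)}$ factors cancel and
\begin{equation*}
V_{\cal E}(z)={\cal R}(z)\,Q_j(z)\,\widehat{\cal F}_j(\zeta(z))\,Q_j(z)^{-1}\,{\cal U}(a_j,z)\,{\cal R}(z)^{-1}.
\end{equation*}
In the generic situation (Remark \ref{remark1}), the off-diagonal entries of ${\cal U}(a_j,z)-I$ carry analogous exponential factors whose exponents are strictly negative at $z=a_j$ (since for $i\neq 0,j,k$ the point $a_j$ does not lie on $\partial\Omega_i$), and hence remain exponentially small on $\partial D_{a_j}$ once the disk radius is taken sufficiently small. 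Since $|\zeta(z)|\asymp N$ on $\partial D_{a_j}$, \eqref{fffm} gives $\widehat{\cal F}_j-I={\cal O}(N^{-m-1})$, $Q_j^{\pm 1}$ is uniformly bounded there, and \eqref{eq rbdd} provides $\|{\cal R}^{\pm 1}\|={\cal O}(N^C)$ with $C$ independent of $m$. Hence $\|V_{\cal E}-I\|_{L^\infty(\partial D_{a_j})}={\cal O}(N^{2C-m-1})$.

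The standard small-norm theorem for Riemann--Hilbert problems then gives $\|{\cal E}-I_{\nu+1}\|_{L^\infty(K)}={\cal O}(N^{2C-m-1})$ uniformly on compact sets $K$ away from the jump contours. Since $m$ may be chosen arbitrarily large, the claimed $1/N^\infty$ bound follows. The main conceptual step, already accomplished in the construction of ${\cal R}$ via the recursive prescription \eqref{eq r0j}, is the design of the partial Schlesinger transform so that ${\cal R}$ cancels the full truncated singular expansion of $Q_j{\cal F}_j^{(m)}Q_j^{-1}$ at $a_j$ through order $m$; this is what allows a single fixed $m$ to dominate the a priori $N^C$ growth of ${\cal R}^{\pm 1}$ and thus deliver arbitrary polynomial decay.
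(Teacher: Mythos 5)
Your proposal follows the paper's argument essentially step for step: you correctly identify that ${\cal E}$ has jumps only on $\bigcup_j\partial D_{a_j}$ and on the exterior contour pieces, you conjugate the residual jumps of $S\Phi^{-1}$ by ${\cal R}$ and observe they remain exponentially small, you factor $H_j=Q_j^{-1}{\cal R}Q_j({\cal F}_j^{(m)})^{-1}$ on $\partial D_{a_j}$ to reduce the jump to ${\cal R}Q_j\widehat{\cal F}_jQ_j^{-1}{\cal U}{\cal R}^{-1}$, and you conclude via the small-norm theorem with $m$ taken arbitrarily large. Two small imprecisions worth flagging, neither of which alters the outcome: first, $Q_j^{\pm 1}$ is not uniformly bounded on $\partial D_{a_j}$ --- its nontrivial diagonal entries scale like $N^{\pm c_j/2}$ since $|\zeta|\asymp N$ there --- it is only polynomially bounded, which is what the paper's bound ${\cal O}(N^{c_j-1-m})$ encodes; second, the estimate $\|{\cal R}^{-1}\|={\cal O}(N^C)$ does not follow directly from \eqref{eq rbdd} (that equation bounds ${\cal R}$, not its inverse), and the paper supplies the missing justification by observing that after reordering indices by level, ${\cal R}-I_{\nu+1}$ is strictly upper triangular, hence nilpotent, so ${\cal R}^{-1}=I_{\nu+1}+\sum_{i=1}^{\nu-1}(-1)^i({\cal R}-I_{\nu+1})^i$ is also ${\cal O}(N^C)$ with $C$ independent of $m$.
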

\begin{proof}

When $z\in\partial D_{a_j}$, we have
\begin{align*}
     {\cal E}_+(z){\cal E}_-(z)^{-1}
         &=Q_j(z)H_j(z){\cal F}_j(\zeta(z))Q_j(z)^{-1}{\cal U}(a_j,z){\cal R}(z)^{-1}\\
         &={\cal R}(z)Q_j(z)\left({\cal F}_j^{(m)}(\zeta(z))\right)^{-1}{\cal F}_j(\zeta(z))Q_j(z)^{-1}{\cal U}(a_j,z){\cal R}(z)^{-1}\\
         &={\cal R}(z)Q_j(z){\widehat{\cal F}_j}(\zeta)Q_j(z)^{-1}\left(I_{\nu+1}+{\cal O}\left(\ee^{-C N}\right)\right){\cal R}(z)^{-1} \\
         &={\cal R}(z)\left(I_{\nu+1}+{\cal O}\left({N^{c_j-1-m}}\right)\right){\cal R}(z)^{-1}
\end{align*} for some $C>0$. The 2nd equality is obtained by \eqref{def hj}. The 3rd equality is obtained by \eqref{fffm} and the fact that, since $E_j(z)$ is dominant near $a_j$, $\|\,{\cal U}-I_{\nu+1}\|_\infty$ is exponentially small as $N$ grows on $\partial D_{a_j}$ from \eqref{def uaj1} and \eqref{def uaj2}.
By Lemma \ref{handr}, ${\cal R}(z)$ can be written as an upper triangular matrix by a reordering of $\{a_i\}_{i=1}^\nu$ such that the chain $j\to k$ always satisfies $j>k$, for instance by reordering $\{a_i\text{'s}\}$ by their levels.  Then 
${\cal R}(z)-I_{\nu+1}$ is a nilpotent matrix and we have 
$${\cal R}(z)^{-1}=I_{\nu+1}+\sum_{i=1}^{\nu-1}(-1)^i\left({\cal R}(z)-I_{\nu+1}\right)^i=I_{\nu+1}+{\cal O}\left(N^C\right)$$ for some fixed finite $C$ that does not depend on $m$ \eqref{eq rbdd}.  It means that, by increasing $m$, we can make 
$\|{\cal E}_+(z){\cal E}_-(z)^{-1}-I_{\nu+1}\|_\infty$ on $\partial D_{a_j}$ as small as we want.

When $z$ is on all the jump contours of $S\Phi^{-1}$ and $z\notin \cup_{j=1}^\nu D_{a_j}$ we have
\begin{equation}\label{error 4}
\begin{split}
      & {\cal E}_+(z){\cal E}_-(z)^{-1} =\left(S^\infty(z)S(z)^{-1}\right)_+\left(S^\infty(z)S(z)^{-1}\right)_-^{-1}\\
&={\cal R}(z)
\left(S(z)\Phi(z)^{-1}\right)_+^{-1}\left(S(z)\Phi(z)^{-1}\right)_-
{\cal R}(z)^{-1}.
\end{split}
\end{equation}
By the jump conditions of $S\Phi^{-1}$ in  \eqref{jump of sphi1},\eqref{jump of sphi2},\eqref{jump of sphi3} and \eqref{jump of sphi4}, $\|(S\Phi^{-1})_+^{-1}(S\Phi^{-1})_--I_{\nu+1}\|_\infty$ is exponentially small as $N$ grows we have
$${\cal E}_+(z){\cal E}_-^{-1}(z)=I_{\nu+1}+{\cal O}\left(\ee^{-C N}\right),\quad \mbox{when $z\notin \cup_{j=1}^\nu D_{a_j}$}$$
for some $C>0$. 

By Lemma \ref{lemma local1} and Lemma \ref{lemma loc2}, $S(z)\Phi(z)^{-1}$ and $S^\infty(z)\Phi(z)^{-1}$ have the same jump conditions in $D_{a_j}$. Therefore, ${\cal E}$ does not have any jump in $D_{a_j}$.

From the boundedness of $S$ in \eqref{srhp} and the definition of $S^{\infty}$ in \eqref{Sfinal}, as $z\to a_j$ from $\Omega_i$, both $S(z)M_i(z)={\cal O}(1)$ and $S^{\infty}(z)M_i(z)={\cal O}(1)$ are bounded for $i\in\{0,\dots,\nu\}$. Therefore ${\cal E}=S^\infty M_i(SM_i)^{-1}$ is bounded $z\to a_j$ in $\Omega_i$.

Since ${\cal E}_+(z){\cal E}_-(z)^{-1}=I_{\nu+1}+{\cal O}\left(1/{N^{\infty}}\right)$ when $z\in\partial D_{a_j}$ and $j=1,\dots,\nu$ and the other jump conditions of ${\cal E}$ in \eqref{error 4} are exponentially small in $N$ away from $\partial D_{a_j}$, by the small norm theorem (e.g. Theorem 7.171 in \cite{Deift 1999}, \cite{DKMVZ 1999} or \cite{Yang 2018}),  we obtain ${\cal E}(z)= I_{\nu+1}+{\cal O}\left(1/{N^{\infty}}\right)$.
\end{proof} 

By the definition of ${\cal E}(z)$ in \eqref{def error} and the above lemma, we have
 \begin{equation}\label{ssinf}
S(z)= \left(I_{\nu+1}+{\cal O}\left(1/{N^{\infty}}\right)\right)S^\infty(z).     
 \end{equation}

\subsection{Proof of Theorem \ref{thm01}}

\begin{lemma} \label{errorR}
On $z\notin D_{a_j}$ we have
\begin{equation}
\left[    \left(I_{\nu+1}+{\cal O}\left(\frac{1}{N^\infty}\right)\right) {\cal R}(z) \right]_{\mbox{1st row}}= \left(1+{\cal O}\left(\frac{1}{N^\infty}\right)\right) [{\cal R}(z)]_{\mbox{1st row}}.
\end{equation}
\end{lemma}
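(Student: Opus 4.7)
The plan is to verify the claim entry by entry along the first row, showing first that the additive error from multiplying ${\cal R}(z)$ on the left by $I_{\nu+1}+{\cal O}(1/N^\infty)$ is itself ${\cal O}(1/N^\infty)$, and then converting this additive correction into a multiplicative one by dividing by $r_{0,j}(z)$, whose size is controlled below by the leading-order formula \eqref{r chain}.

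First I would write, with $\eta={\cal O}(1/N^\infty)$ entrywise,
\begin{equation*}
\left[(I_{\nu+1}+\eta)\,{\cal R}(z)\right]_{0,j}=r_{0,j}(z)+\sum_{p=0}^{\nu}\eta_{0,p}(z)\,r_{p,j}(z).
\end{equation*}
For $j=0$ the $0$th column of ${\cal R}(z)$ is $(1,0,\dots,0)^{T}$ by construction (only $r_{0,0}=1$ is nonzero), so the sum collapses to $1+\eta_{0,0}(z)=1\cdot(1+{\cal O}(1/N^\infty))$, matching $[{\cal R}]_{0,0}=1$. For $j\geq 1$, Lemma \ref{handr} provides $|r_{p,j}(z)|={\cal O}(N^{C})$ uniformly over $z\notin D_{a_j}$ for some fixed finite exponent $C$ depending only on the chain structure and the $c_i$'s (in particular, independent of $N$). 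Since each $\eta_{0,p}(z)={\cal O}(1/N^{m})$ for every $m$, the product $\eta_{0,p}(z)r_{p,j}(z)$ is ${\cal O}(1/N^\infty)$, and hence so is the full sum.

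To upgrade the additive estimate $\left[(I_{\nu+1}+\eta){\cal R}\right]_{0,j}=r_{0,j}(z)+{\cal O}(1/N^\infty)$ into a multiplicative one, I invoke the leading-order expansion \eqref{r chain}, namely $r_{0,j}(z)={\sf chain}(j)/(z-a_j)\,(1+{\cal O}(1/N))$, together with the explicit form of ${\sf chain}(j)={\rm const}\cdot N^{\mu_j}$ with $\mu_j=\sum_{i=1}^{s}(c_{k_i}-1)$ a fixed exponent. On any compact $K\subset{\mathbb C}\setminus D_{a_j}$, $|z-a_j|$ is bounded above, so $|r_{0,j}(z)|\geq c_K N^{\mu_j}$ for all sufficiently large $N$. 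Consequently
\begin{equation*}
\frac{1}{r_{0,j}(z)}\sum_{p=0}^{\nu}\eta_{0,p}(z)\,r_{p,j}(z)={\cal O}\!\left(N^{-m-\mu_j}\right)\text{ for every }m,
\end{equation*}
which is ${\cal O}(1/N^\infty)$. Rewriting, $[(I_{\nu+1}+\eta){\cal R}]_{0,j}=r_{0,j}(z)(1+{\cal O}(1/N^\infty))$, as desired.

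The only real subtlety — rather than a serious obstacle — is keeping track of the fact that both the polynomial upper bound on the $r_{p,j}$'s and the leading-order lower bound on $r_{0,j}$ come with fixed $N$-exponents (inherited from the chain), so that an additive ${\cal O}(1/N^\infty)$ divided by $r_{0,j}(z)$ is again ${\cal O}(1/N^\infty)$ on any compact set avoiding $D_{a_j}$. This is exactly the regime in which the lemma is applied in the proof of Theorem \ref{thm01}, so the compact-subset uniformity is sufficient.
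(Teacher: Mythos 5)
Your proof is correct and takes essentially the same route as the paper: you bound the additive error using the polynomial-in-$N$ upper bound on all entries of ${\cal R}(z)$ from Lemma \ref{handr}, then convert it to a multiplicative error by noting the first-row entries are nonvanishing with a fixed polynomial $N$-scaling via \eqref{r chain}. Your write-up is slightly more explicit (entrywise decomposition, separate treatment of the trivial $j=0$ column, and spelling out the lower bound $c_K N^{\mu_j}$ on a compact set), but the underlying idea and the ingredients invoked are identical.
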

\begin{proof}
Recall that ${\cal O}(1/N^\infty)$ stands for ${\cal O}(1/N^m)$ for an arbitrary $m>0$.  From Lemma \ref{handr} all the entries of ${\cal R}(z)$ is growing at most polynomially in $N$ and, therefore, we have $\left(I+{\cal O}(1/N^\infty)\right) {\cal R}(z)  = {\cal R}(z)+{\cal O}(1/N^\infty)$. Furthermore, by \eqref{r chain} in Lemma \ref{handr}, all the entries in the 1st row of ${\cal R}(z)$ are nonvanishing away from $\{a_j\}_{j=1}^\nu$.   This leads to $[{\cal O}(1/N^\infty)]_\text{1st row}=(1+{\cal O}(1/N^\infty)) [{\cal R}(z)]_\text{1st row}$.
\end{proof}

\begin{cor}\label{errorH}
On $z\in D_{a_j}$ we have
\begin{equation}\label{errorhh}
\left[    \left(I_{\nu+1}+{\cal O}\left(\frac{1}{N^\infty}\right)\right) Q_j(z) H_j(z) \right]_{\mbox{1st row}}= \left(1+{\cal O}\left(\frac{1}{N^\infty}\right)\right) [Q_j(z) H_j(z)]_{\mbox{1st row}}.
\end{equation}
\end{cor}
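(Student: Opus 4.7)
The plan is to adapt the argument of Lemma \ref{errorR}, using the factorization $Q_j(z)H_j(z) = \mathcal{R}(z) \cdot M(z)$ with $M(z) := Q_j(z)(\mathcal{F}_j^{(m)}(\zeta(z)))^{-1}$ coming from \eqref{def hj}. As in the proof of Lemma \ref{errorR}, the two ingredients needed are an at-most-polynomial-in-$N$ upper bound on $Q_j H_j$ and an at-least-polynomial-in-$1/N$ lower bound on each entry of its first row.

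For the upper bound, each factor in $\mathcal{R}\cdot M$ grows at most polynomially on $D_{a_j}$: $\mathcal{R}$ by Lemma \ref{handr}; the diagonal matrix $Q_j$ by the explicit formulas \eqref{def qj1}, \eqref{def qj2} combined with $\zeta^{\pm c_j/2}=\mathcal{O}(N^{\pm c_j/2})$ from \eqref{eq linear 1}; and $(\mathcal{F}_j^{(m)})^{-1}=I_{\nu+1}\mp\sum_{i=1}^m\alpha_i(c_j)\zeta^{-i}{\bf e}_{ab}$ (using $({\bf e}_{ab})^2=0$, with $(a,b)=(0,j)$ or $(k,j)$) with coefficients $\mathcal{O}(N^{-1})$. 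Hence $(I_{\nu+1}+\mathcal{O}(1/N^\infty))Q_j H_j = Q_j H_j + \mathcal{O}(1/N^\infty)$.

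For the lower bound, I would compute $[Q_j H_j]_{0,k}$ directly from the factorization. Since $Q_j$ is diagonal and $\mathcal{F}_j^{(m)}$ is a rank-one nilpotent perturbation of the identity, most of these entries reduce to $r_{0,k}(z)$ (possibly multiplied by a diagonal entry of $Q_j$), and by \eqref{r chain} are nonvanishing of fixed polynomial order in $N$ on $D_{a_j}$ because $a_k\notin D_{a_j}$ for $k\neq j$. The only delicate entry is at position $k=j$: in the case $a_j\in\Gamma_{j0}$, using $Q_{j,0}Q_{j,j}=1$ and the singular-part definition \eqref{r0jrkj} of $r_{0,j}$, one finds
\begin{equation*}
[Q_j H_j]_{0,j} = -\frac{1}{Q_{j,0}(z)}\bigg[\frac{\zeta^{c_j}z^{\sum c}}{(z-a_j)^{c_j}}\sum_{i=1}^m\frac{\alpha_i(c_j)}{\zeta^i}\bigg]_{\mathrm{non\text{-}sing}(a_j)},
\end{equation*}
where "non-sing" denotes the holomorphic part at $a_j$ of the Laurent expansion. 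A similar identity -- with the $(k,j)$ entry playing the role of the $(0,j)$ entry and the definition of $r_{k,j}$ in \eqref{r0jrkj} taking the place of that of $r_{0,j}$ -- handles the case $a_j\in\Gamma_{jk}$, $k\neq 0$.

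Combining these bounds, the absolute error $\mathcal{O}(1/N^\infty)$ becomes the entry-wise relative error $(1+\mathcal{O}(1/N^\infty))$ asserted in \eqref{errorhh}, exactly as in the last step of Lemma \ref{errorR}. The main hurdle is verifying that the cancellation at the $(0,j)$ entry -- encoded in the holomorphicity of $H_j$ at $a_j$ from Lemma \ref{handr} -- leaves behind a nonzero remainder of controllable order in $N$; once this is in hand, the rest is bookkeeping identical to the scalar nonvanishing argument used for $[\mathcal{R}]_{\text{1st row}}$ in Lemma \ref{errorR}.
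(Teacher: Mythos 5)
Your proposal misses the key algebraic observation that makes the paper's proof a one-liner, and the substitute argument you sketch has a concrete error. The paper uses the factorization $Q_j(z)H_j(z)={\cal R}(z)\,Q_j(z)\big({\cal F}_j^{(m)}(\zeta(z))\big)^{-1}$, exactly as you do, but then simply observes that extracting the first row commutes with right-multiplication by the constant factor $C(z):=Q_j(z)\big({\cal F}_j^{(m)}(\zeta(z))\big)^{-1}$:
$$
\Big[\big(I_{\nu+1}+{\cal O}(1/N^\infty)\big){\cal R}(z)\,C(z)\Big]_{\text{1st row}}
=\Big[\big(I_{\nu+1}+{\cal O}(1/N^\infty)\big){\cal R}(z)\Big]_{\text{1st row}}C(z),
$$
after which Lemma~\ref{errorR} applies to the ${\cal R}$-factor alone and the $C(z)$ reattaches on the right of both sides. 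No estimate on $Q_jH_j$, $Q_j$, or $({\cal F}_j^{(m)})^{-1}$ is ever needed.

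Your attempt instead tries to re-run the argument of Lemma~\ref{errorR} from scratch for the matrix $Q_jH_j$, and this is where the gaps occur. First, the ``upper bound'' paragraph asserts that each factor in ${\cal R}\cdot M$ is polynomially bounded in $N$ on $D_{a_j}$ and cites Lemma~\ref{handr} for ${\cal R}$. But Lemma~\ref{handr} gives the bound $r_{p,j}(z)={\cal O}(N^C)$ only for $z\in{\mathbb C}\setminus D_{a_j}$; inside $D_{a_j}$ the entries $r_{p,j}(z)$ have a genuine pole at $a_j$, and so does $\big({\cal F}_j^{(m)}(\zeta)\big)^{-1}$, since $\zeta(a_j)=0$. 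Only the product $Q_jH_j$ is holomorphic on $D_{a_j}$ (Lemma~\ref{handr}); the individual factors are not polynomially bounded there, and the cancellation happens only in the product, so the factor-by-factor bound you assert is false. (Similarly, your justification for the boundedness of $Q_j$ via ``$\zeta^{\pm c_j/2}={\cal O}(N^{\pm c_j/2})$'' fails near $a_j$ where $\zeta\to 0$; what is actually bounded is $(\zeta/(z-a_j))^{\pm c_j/2}$, which is the combination appearing in $Q_j$.) Second, the ``lower bound'' paragraph correctly derives the expression for $[Q_jH_j]_{0,j}$ as the regular part of a Laurent expansion, but you explicitly defer the verification that this regular part is nonvanishing of controlled order in $N$ --- and this is precisely the hard part your route creates, which the paper's right-multiplication trick avoids altogether.
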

\begin{proof}
\begin{equation*}
\begin{aligned}
\mbox{LHS of \eqref{errorhh}}&=\left[    \left(I_{\nu+1}+{\cal O}\left(\frac{1}{N^\infty}\right)\right) {\cal R}(z)Q_j(z)\left({\cal F}_j^{(m)}(\zeta(z))\right)^{-1} \right]_{\mbox{1st row}} \\
&=\left[    \left(I_{\nu+1}+{\cal O}\left(\frac{1}{N^\infty}\right)\right) {\cal R}(z) \right]_{\mbox{1st row}}Q_j(z)\left({\cal F}_j^{(m)}(\zeta(z))\right)^{-1}\\
&=\left(1+{\cal O}\left(\frac{1}{N^\infty}\right)\right) [{\cal R}(z)]_{\mbox{1st row}}Q_j(z)\left({\cal F}_j^{(m)}(\zeta(z))\right)^{-1}\\
&=\mbox{RHS of \eqref{errorhh}},
\end{aligned}
\end{equation*}
where the 1st equality is obtained by the definition of $H_j$ in \eqref{def hj} and the 3rd equality is obtained by the above lemma.
\end{proof}

In the following proofs, we will use the facts
\begin{equation}\label{eq phimg}
\displaystyle\left[\Phi(z)M_j(z)G_j(z)^{-1}\right]_{\mbox{1st column}} =\displaystyle\left(\frac{z^{n+\sum c}}{W_j(z)},-\frac{(z-a_1)^{c_1}E_1(z)}{W_1(z)},
\dots,-\frac{(z-a_\nu)^{c_\nu}E_\nu(z)}{W_\nu(z)}\right)^T,\mbox{when $z\in\Omega_j$},
\end{equation} where $j=0,1,\dots,\nu.$

\begin{proof}(Proof of Theorem \ref{thm01})
 Using \eqref{tildeY}, \eqref{T}, \eqref{def S} and \eqref{ssinf}, when $z\in\Omega_0\setminus U$, we have
\begin{equation}\label{invers trans1}
\begin{split}
Y(z)&=\begin{bmatrix}
1&{\bf 0}\\
{\bf 0}&{\bf C}
\end{bmatrix}S(z)G_0(z)^{-1}\begin{bmatrix}
1&{\bf 0}\\
{\bf 0}&{\bf C}^{-1}
\end{bmatrix}\begin{bmatrix}
1&{\bf 0}\\
{\bf 0}&\Psi_0(z)
\end{bmatrix}\\
&=\begin{bmatrix}
1&{\bf 0}\\
{\bf 0}&{\bf C}
\end{bmatrix}\left(I_{\nu+1}+{\cal O}\left(\frac{1}{N^{\infty}}\right)\right)S^\infty(z)G_0(z)^{-1}\begin{bmatrix}
1&{\bf 0}\\
{\bf 0}&{\bf C}^{-1}
\end{bmatrix}\begin{bmatrix}
1&{\bf 0}\\
{\bf 0}&\Psi_0(z)
\end{bmatrix}.\end{split}\end{equation}
Here the error bound is uniformly over a compact subset of $\Omega_0$ therefore one can always choose $U$, the neighbourhood of $\bigcup_j \Gamma_{j0}$,   small enough such that the compact subset in question sits in $\Omega_0\setminus U$.

For $z\in\Omega_0\setminus D_{a_j},$ we have
\begin{equation}\nonumber
    \begin{split}
        p_n(z)&=[Y(z)]_{11}=\left[\left(I_{\nu+1}+{\cal O}\left(1/{N^{\infty}}\right)\right)S^{\infty}(z)G_0(z)^{-1}\right]_{11}  \\
        &=\left[\left(I_{\nu+1}+{\cal O}\left(1/{N^{\infty}}\right)\right){\cal R}(z)\Phi(z)G_0(z)^{-1}\right]_{11}\\
&=\left[I_{\nu+1}+{\cal O}\left(1/{N^{\infty}}\right)\right]_{\mbox{1st row}} 
\left[{\cal R}(z)\Phi(z)G_0(z)^{-1}\right]_{\mbox{1st column}}
\\
&=\displaystyle\frac{z^{n+\sum c}}{W(z)}\left(1+{\cal O}\left(\frac{1}{N^{\infty}}\right)\right).
    \end{split}
\end{equation}
Above we have used that ${\cal R}(z)$ has no off-diagonal entries along its 1st column, i.e. $[{\cal R}(z)]_{j0}=0$ for all $j\neq 0$ by Lemma \ref{handr}. The 2nd equality is obtained by \eqref{invers trans1}. The 3rd equality is obtained by \eqref{Sfinal}. The 4th equality is obtained by \eqref{eq phimg}.

 Using \eqref{tildeY}, \eqref{T}, \eqref{def S} and \eqref{ssinf}, when $z\in\Omega_j$ we have
\begin{equation}\label{invers trans2}
Y(z)=\begin{bmatrix}
1&{\bf 0}\\
{\bf 0}&{\bf C}
\end{bmatrix}S(z)M_j(z)G_j(z)^{-1}\begin{bmatrix}
1&{\bf 0}\\
{\bf 0}&{\bf C}^{-1}
\end{bmatrix}\begin{bmatrix}
1&{\bf 0}\\
{\bf 0}&\Psi_j(z)
\end{bmatrix}\left[{\begin{array}{c:c}
\begin{matrix}
1
\end{matrix}
&\begin{matrix}W(z)\widetilde\psi_1(z)-W_j(z)\psi_j(z)\end{matrix} \\\hdashline
\begin{matrix}
{\bf 0}
\end{matrix}
&\begin{matrix}I_{\nu}\end{matrix}
\end{array}}\right].\end{equation}

For $z\in\Omega_{j}\setminus D_{a_j}$ we have
\begin{align*}
        p_n(z)&=[Y(z)]_{11}=\left[\left(I_{\nu+1}+{\cal O}\left(1/{N^{\infty}}\right)\right)S^{\infty}(z)M_j(z)G_j(z)^{-1}\right]_{11}  \\
 &=\displaystyle\left[\left(I_{\nu+1}+{\cal O}\left(1/{N^{\infty}}\right)\right){\cal R}(z)\right]_{\mbox{1st row}}\left[\Phi(z)M_j(z)G_j(z)^{-1}\right]_{\mbox{1st column}} \\
&=\displaystyle E_j(z)\left(\frac{z^{n+\sum c}}{W_j(z)E_j(z)}-\sum_{i=1}^\nu\frac{r_{0,i}(z)(z-a_i)^{c_i}E_i(z)}{W_i(z)E_j(z)}\right)\left(1+{\cal O}\left(\frac{1}{N^{\infty}}\right)\right)
\\
&=\displaystyle-\frac{r_{0,j}(z)E_j(z)(z-a_j)^{c_j}}{W_j(z)}\left(1+{\cal O}\left(\frac{1}{N^{\infty}}\right)\right) \\
&=\displaystyle-\frac{E_j(z)(z-a_j)^{c_j}}{W_j(z)}\frac{{\sf chain}(j)}{z-a_j}\left(1+{\cal O}\left(\frac{1}{N}\right)\right), 
\end{align*}
where $j=k_s\to k_{s-1}\to \dots \to k_1\to 0 $. The 2nd equality is obtained by \eqref{invers trans2}. The 3rd equality is obtained by \eqref{Sfinal}. The 4th equality is obtained by Lemma \ref{errorR} and \eqref{eq phimg}. The 5th equality is obtained by the fact that $E_j(z)$ is dominant in $\Omega_j$. The last equality is obtained by \eqref{r chain}.

 Using \eqref{tildeY}, \eqref{T}, \eqref{def S} and \eqref{ssinf}, when $z\in\Omega_0\cap U$, we have
\begin{equation}\label{invers trans3}
\begin{split}
Y(z)&=\begin{bmatrix}
1&{\bf 0}\\
{\bf 0}&{\bf C}
\end{bmatrix}S(z)M_0(z)G_0(z)^{-1}\begin{bmatrix}
1&{\bf 0}\\
{\bf 0}&{\bf C}^{-1}
\end{bmatrix}\begin{bmatrix}
1&{\bf 0}\\
{\bf 0}&\Psi_0(z)
\end{bmatrix}\\
&=\begin{bmatrix}
1&{\bf 0}\\
{\bf 0}&{\bf C}
\end{bmatrix}\left(I_{\nu+1}+{\cal O}\left(\frac{1}{N^{\infty}}\right)\right)S^\infty(z)M_0(z)G_0(z)^{-1}\begin{bmatrix}
1&{\bf 0}\\
{\bf 0}&{\bf C}^{-1}
\end{bmatrix}\begin{bmatrix}
1&{\bf 0}\\
{\bf 0}&\Psi_0(z)
\end{bmatrix}.\end{split}\end{equation}

For $z\in\Omega_{0}\cap U$ and near $\Gamma_{j0}$ we have
\begin{align*}
        p_n(z)&=[Y(z)]_{11}=\left[\left(I_{\nu+1}+{\cal O}\left(1/{N^{\infty}}\right)\right)S^{\infty}(z)M_0(z)G_0(z)^{-1}\right]_{11}  \\
 &=\left[\left(I_{\nu+1}+{\cal O}\left(1/{N^{\infty}}\right)\right){\cal R}(z)\right]_{\mbox{1st row}}\left[\Phi(z)M_0(z)G_0(z)^{-1}\right]_{\mbox{1st column}}\\
&= \displaystyle z^n\left(\frac{z^{\sum c}}{W(z)}-\frac{(z-a_j)^{c_j}r_{0,j}(z)}{W_j(z)}\frac{E_j(z)}{z^n}-\sum_{i\neq j}\frac{(z-a_i)^{c_i}r_{0,i}(z)}{W_i(z)}\frac{E_i(z)}{z^n}\right)\left(1+{\cal O}\left(\frac{1}{N^{\infty}}\right)\right)\\
&= \displaystyle \frac{z^{n+\sum c}}{W(z)}\left(1+{\cal O}\left(\frac{1}{N^{\infty}}\right)\right)-\frac{E_j(z)(z-a_j)^{c_j}}{W_j(z)}\frac{{\sf chain}(j)}{z-a_j}\left(1+{\cal O}\left(\frac{1}{N}\right)\right).
\end{align*}
The 2nd equality is obtained by \eqref{invers trans3}. The 3rd equality is obtained by \eqref{Sfinal}. The 4th equality is obtained by Lemma \ref{errorR} and \eqref{eq phimg}. The last equality is obtained by \eqref{r chain} and the fact that $z^n$ and $E_j(z)$ are the most dominant in the vicinity of $\Gamma_{j0}$. A similar calculation can be done for $z\in \Omega_j$ and near $\Gamma_{j0}$. 

Similar to the case of $z\in\Omega_{j}\setminus D_{a_j}$, when $z\in\Omega_{j}$ and near $\Gamma_{jk}$ we have
\begin{align*}
        p_n(z)&=\displaystyle E_j(z)\left(\frac{z^{n+\sum c}}{W_j(z)E_j(z)}-\sum_{i=1}^\nu\frac{r_{0,i}(z)(z-a_i)^{c_i}E_i(z)}{W_i(z)E_j(z)}\right)\left(1+{\cal O}\left(\frac{1}{N^{\infty}}\right)\right)
\\
&=\displaystyle-\frac{r_{0,j}(z)E_j(z)(z-a_j)^{c_j}}{W_j(z)}\left(1+{\cal O}\left(\frac{1}{N^{\infty}}\right)\right)-\frac{r_{0,k}(z)E_k(z)(z-a_k)^{c_k}}{W_k(z)}\left(1+{\cal O}\left(\frac{1}{N^{\infty}}\right)\right)\\
&=\displaystyle-\frac{E_j(z)(z-a_j)^{c_j}}{W_j(z)}\frac{{\sf chain}(j)}{z-a_j}\left(1+{\cal O}\left(\frac{1}{N}\right)\right)-\frac{E_k(z)(z-a_k)^{c_k}}{W_k(z)}\frac{{\sf chain}(k)}{z-a_k}\left(1+{\cal O}\left(\frac{1}{N}\right)\right).
\end{align*}
 The 2nd equality is obtained by the fact that $E_j(z)$ and $E_k(z)$ are dominant in the vicinity of $\Gamma_{jk}$. The last equality is obtained by \eqref{r chain}. A similar calculation can be done for $z\in \Omega_k$ and near $\Gamma_{jk}$. 
 
 This ends the proof of Theorem \ref{thm01}.
\end{proof}

\subsection{Proof of Theorem \ref{thm02}}

\begin{proof}(Proof of Theorem \ref{thm02})
For $z\in D_{a_j}\cap\Omega_j$ and $a_j\in\Gamma_{j0}$ we have
\begin{align*}
        p_n(z)&=[Y(z)]_{11}=\left[\left(I_{\nu+1}+{\cal O}\left(1/{N^{\infty}}\right)\right)S^{\infty}(z)M_j(z)G_j(z)^{-1}\right]_{11}  \\
 &=\Big[\left(I_{\nu+1}+{\cal O}\left(1/{N^{\infty}}\right)\right)Q_j(z)H_j(z)\Big]_{\mbox{1st row}}
 \\ & \qquad \times\Big[{\cal F}_j(\zeta(z))Q_j(z)^{-1}{\cal U}(a_j,z)\Phi(z)M_j(z)G_j(z)^{-1}\Big]_{\mbox{1st column}}\\
 &=\left(I_{\nu+1}+{\cal O}\left(1/{N^{\infty}}\right)\right)\Big[Q_j(z)H_j(z)\Big]_{\mbox{1st row}}
 \\ & \qquad \times{\cal F}_j(\zeta(z))Q_j(z)^{-1}{\cal U}(a_j,z)\Big[\Phi(z)M_j(z)G_j(z)^{-1}\Big]_{\mbox{1st column}}.
\end{align*}
The 2nd equality is obtained by \eqref{invers trans2}. The 3rd equality is obtained by \eqref{Sfinal}. The 4th equality is obtained by Lemma \ref{errorR} and Corollary \ref{errorH}. Moreover, by the definition of $H_j$ in \eqref{def hj}, we have
\begin{equation}\nonumber
\Big[H_j(z)\Big]_{\mbox{1st row}}
=\displaystyle\sum_{q\neq j}\big[Q_j(z)^{-1}{\cal R}(z)Q_j(z)\big]_{0q}{\bf e}_{0q}+\left(\frac{(z-a_j)^{c_j}r_{0,j}(z)}{\zeta^{c_j}z^{\sum c}}-\sum_{i=1}^{m}\frac{\alpha_i(c_j)}{\zeta^i}\right){\bf e}_{0j}.
\end{equation}
Above, by Lemma \ref{handr}, $r_{0,q}$ grows (or decays) algebraically in $N$ away from $a_q$ for $q\neq j$. Let $h_j(z)$ be defined by \begin{equation}\label{estimate h}
\begin{split}
    h_j(z)&=\frac{(z-a_j)^{c_j}}{\zeta^{c_j}z^{\sum c}}r_{0,j}(z)-\sum_{i=1}^{m}\frac{\alpha_i(c_j)}{\zeta^i}
 \\
    &= \frac{(z-a_j)^{c_j}}{\zeta^{c_j}z^{\sum c}}
    \frac{N^{c_j-1} a_j^{1+\sum_{i\neq j} c_i}}{\Gamma(c_j) (1-|a_j|^2)^{1-c_j}}\frac{1}{z-a_j}\left(1+{\cal O}\left(\frac{1}{N}\right)\right)-\sum_{i=1}^{m}\frac{\alpha_i(c_j)}{\zeta^i} \\&={\cal O}\left(\frac{1}{N}\right),\quad z\in\partial D_{a_j},    
    \end{split}
        \end{equation}
    where we used \eqref{124} at the second equality. 
    Since $h_j$ is holomorphic in $D_{a_j}$ the above bound holds in $D_{a_j}$.
Therefore,
\begin{align*}
        p_n(z)
&=\displaystyle\left(
\frac{z^{n+\sum c}}{W_j(z)}-\frac{z^{\sum c}\zeta^{c_j}E_j(z)}{W_j(z)}\left(f_{c_j}(z)+h_j(z)\right)
\right)\left(1+{\cal O}\left(\frac{1}{N^{\infty}}\right)\right)\\
&=\frac{{z^{n+\sum c}}}{W_j(z)}\left(1+{\cal O}\left(\frac{1}{N^{\infty}}\right)\right)-\frac{{z^{n+\sum c}}\zeta^{c_j}}{W_j(z)\ee^{\zeta}}\left(f_{c_j}(\zeta)+{\cal O}\left(\frac{1}{N}\right)\right)\\
&=\displaystyle z^{n+\sum c}\frac{\zeta^{c_j}}{W_j(z)\ee^{\zeta}}\left(\frac{\ee^{\zeta}}{\zeta^{c_j}}-f_{c_j}(\zeta)+{\cal O}\left(\frac{1}{N}\right)\right).
\end{align*}
The 1st equality is obtained by the facts that $r_{0,q}$ grows (or decays) algebraically in $N$ away from $a_q$ for $q\neq j$ and $E_j(z)$ is dominant in $\Omega_j$. The 2nd equality is obtained by the identity $z^n/E_j(z)=e^{\zeta}$ from the definition of $\zeta$ in \eqref{def zeta1} and the estimate in \eqref{estimate h}. A similar calculation can be done for $z\in D_{a_j}\cap\Omega_0$ and $a_j\in\Gamma_{j0}$.

For $z\in D_{a_j}\cap \Omega_j$ and $a_j\in\Gamma_{jk}$ we have
\begin{align*}
        p_n(z)&=[Y(z)]_{11}=\left[\left(I_{\nu+1}+{\cal O}\left(1/{N^{\infty}}\right)\right)S^{\infty}(z)M_j(z)G_j(z)^{-1}\right]_{11} \\
 &=\left[\left(I_{\nu+1}+{\cal O}\left(1/{N^{\infty}}\right)\right)Q_j(z)H_j(z)\right]_{\mbox{1st row}}
 \\&\qquad\times\left[{\cal F}_j(\zeta(z))Q_j(z)^{-1}{\cal U}(a_j,z)\Phi(z)M_j(z)G_j(z)^{-1}\right]_{\mbox{1st column}}\\ 
 &=\left(I_{\nu+1}+{\cal O}\left(1/{N^{\infty}}\right)\right)\left[Q_j(z)H_j(z)\right]_{\mbox{1st row}}
 \\&\qquad\times{\cal F}_j(\zeta(z))Q_j(z)^{-1}{\cal U}(a_j,z)\left[\Phi(z)M_j(z)G_j(z)^{-1}\right]_{\mbox{1st column}}.
\end{align*}
The 2nd equality is obtained by \eqref{invers trans2}. The 3rd equality is obtained by \eqref{Sfinal}. The 4th equality is obtained by Lemma \ref{errorR} and Corollary \ref{errorH}. Moreover, by the definition of $H_j$ in \eqref{def hj} and the relation in \eqref{eq r0j}, we have
\begin{equation}\nonumber
\begin{split}
\Big[H_j(z)\Big]_{\mbox{1st row}}
&=\displaystyle\sum_{q\neq j}^\nu\big[Q_j(z)^{-1}{\cal R}(z)Q_j(z)\big]_{0q}{\bf e}_{0q}
\\&+ \frac{\zeta^{c_j/2}{(z-a_k)^{c_k/2}}{r_{0,k}(z)}}{[(z-a_j)^{c_j/2}]_{{\bf B}[k]}}\left(\frac{[(z-a_j)^{c_j}]_{{\bf B}[k]} r_{k,j}(z)}{\zeta^{c_j}(z-a_k)^{c_k}}+\widetilde{\eta}_{kj}\sum_{i=1}^{m}\frac{\alpha_i(c_j)}{\zeta^i}\right){\bf e}_{0j}.
\end{split}\end{equation}
Above, by Lemma \ref{handr}, $r_{0,q}$ grows (or decays) algebraically in $N$ away from $a_q$ for $q\neq j$. Let  $h_j(z)$ be defined by
\begin{equation}\label{estimate h2}h_j(z)=\frac{[(z-a_j)^{c_j}]_{{\bf B}[k]}}{\zeta^{c_j}(z-a_k)^{c_k}}\frac{r_{k,j}(z)}{\widetilde{\eta}_{kj}}+\sum_{i=1}^{m}\frac{\alpha_i(c_j)}{\zeta^i}={\cal O}\left(\frac{1}{N}\right),\quad z\in\partial D_{a_j},\end{equation} which is obtained by a similar argument in \eqref{estimate h}. 
Therefore,

\begin{align*}
        p_n(z)
 &=\displaystyle-\left(\frac{r_{0,k}(z)E_k(z)(z-a_k)^{c_k}}{W_k(z)}
\right)\left(1+{\cal O}\left(\frac{1}{N^{\infty}}\right)\right)\\
 &\displaystyle\quad-\left(
\frac{r_{0,k}(z)E_j(z){\widetilde{\eta}_{kj}}(z-a_j)^{c_j}\left(-f_{c_j}(\zeta)+h_j(z)\right)\zeta^{c_j}(z-a_k)^{c_k}}{W_j(z)[(z-a_j)^{c_j}]_{{\bf B}_{[k]}}}
\right)\left(1+{\cal O}\left(\frac{1}{N^{\infty}}\right)\right)\\
&=
\displaystyle -E_k(z)r_{0,k}(z)(z-a_k)^{c_k}\left(\frac{1}{W_k(z)}\left(1+{\cal O}\left(\frac{1}{N^{\infty}}\right)\right)-\frac{\zeta^{c_j}\left(f_{c_j}(\zeta)+{\cal O}\left(1/{N}\right)\right)}{W_k(z)\ee^{\zeta}}\right)
\\
&=
\displaystyle -\frac{E_k(z)(z-a_k)^{c_k}{\sf chain}(k)}{z-a_k}\frac{\zeta^{c_j}}{W_k(z)\ee^{\zeta}}\left(\frac{\ee^{\zeta}}{\zeta^{c_j}}-f_{c_j}(\zeta)+{\cal O}\left(\frac{1}{N}\right)\right).
\end{align*}
The 1st equality is obtained by the facts that $r_{0,q}$ grows (or decays) algebraically in $N$ away from $a_q$ for $q\neq j$ and $E_j(z)$ is dominant in $\Omega_j$. The 2nd equality is obtained by \eqref{def etakjtilde}, the identity $E_k(z)/E_j(z)=e^{\zeta}$ from the definition of $\zeta$ in \eqref{def zeta2} and the estimate in \eqref{estimate h2}.
A similar calculation can be done for $z\in D_{a_j}\cap\Omega_k$ and $a_j\in\Gamma_{jk}$. 

 This ends the proof of Theorem \ref{thm02}.
\end{proof}

\appendix
\section{\texorpdfstring{$f_c(\zeta)$}{fczeta} and its properties}\label{appendix}

Let us define $f_{c}(\zeta)$ by the two conditions, $f_{c}(\zeta)\to 0$ as $|\zeta|\to 0$ and $\ee^\zeta/\zeta^c-f_{c}(\zeta)$ is entire. The integral representation of $f_c(\zeta)$ can be written by
\begin{equation}\label{def fc}
    f_{c}(\zeta)=-\frac{1}{2\pi \ii}\int_{\cal L}\frac{\ee^s}{s^{c}(s-\zeta)}\dd s,\quad \zeta\notin {\mathbb R}_-\cup\{0\}.
\end{equation}
The integration contour ${\cal L}$ is enclosing the negative real axis counterclockwise from $-\infty -\ii\epsilon$ to $-\infty+\ii\epsilon$ for an infinitesimal $\epsilon>0$ such that $\zeta$ is on the other side of ${\cal L}$ from the negative real axis.  When $c$ is a positive integer $\zeta^c f_c(\zeta)$ is exactly the first $c$ terms in the Taylor expansion of $\exp(\zeta)$. We take the principal branch for $\zeta^{c}$. 

As $|\zeta|\to \infty$ we have the expansion
\begin{equation}\label{exp fc}
f_{c}(\zeta)=
\sum_{i=1}^\infty\frac{\alpha_i(c)}{\zeta^i}
\quad \text{ where }\quad  \alpha_i(c)=\frac{1}{2\pi \ii}\int_{\cal L}\frac{s^{i-1}\ee^s}{s^{c_j}}\dd s=\frac{\sin(c\pi)\Gamma(i-c)}{\pi(-1)^{i-1}}.
\end{equation}
We also note that 
$\alpha_1(c) = 1/\Gamma(c)$.  When $c$ is integer we notice that $\alpha_i=0$ when $i>c$ and, therefore, $f_c(\zeta)$ is written in terms of a finite truncation of the Taylor series of the exponential function.

Let us show that $\ee^\zeta/\zeta^c-f_{c}(\zeta)$ is an entire function in $\zeta$ as follows.
\begin{align}\nonumber
\frac{\ee^{\zeta}}{\zeta^{c}}-f_{c}(\zeta)=\frac{1}{2\pi \ii}\oint_{\zeta}\frac{\ee^s}{s^{c}(s-\zeta)}\dd s+\frac{1}{2\pi \ii}\int_{\cal L}\frac{\ee^s}{s^{c}(s-\zeta)}\dd s,
\end{align}
where the first integration contour is the small circle around $\zeta$ directed counterclockwise.  The two integration contours can be deformed into a single contour that encloses the negative real axis and $\zeta$, hence the resulting integral has the analytic continuation onto $\zeta\in{\mathbb R}_-\cup\{0\}$.


\begin{thebibliography}{0}
\bibitem{Ameur 2001} Y. Ameur, N.-G. Kang, S.-M. Seo, The random normal matrix model: insertion of a point charge, (2018), arXiv:1804.08587.


\bibitem{Assche 2001}  W.V. Assche, J.S. Geronimo, and A. Kuijlaars,
Riemann-Hilbert problems for multiple orthogonal polynomials, Special Functions 2000: Current Perspective and Future Directions, NATO Science Series (Series II: Mathematics, Physics and Chemistry), 30: 23-59, Springer, 2001.

\bibitem{Ba 2015} F. Balogh, M. Bertola, S.-Y. Lee, and K.T.-R. Mclaughlin: Strong asymptotics of the orthogonal polynomials
with respect to a measure supported on the plane, Communications on Pure and Applied Mathematics, {\bf 68}(2015), 112--172.

\bibitem{Ba1 2017} F. Balogh, T. Grava, and D. Merzi: Orthogonal polynomials for a class of measures with
discrete rotational symmetries in the complex plane, Constructive Approximation,
{\bf 46}(2017), 109-169.

\bibitem{Bertola 2018} M. Bertola, J. G. Elias Rebelo, and T. Grava: Painlev\'e IV critical asymptotics for orthogonal polynomials in the complex plane, Symmetry, Integrability and Geometry: Methods and Applications (SIGMA), {\bf 14}(2018), 091, 34.

\bibitem{Bertola 2008} M. Bertola and S.-Y. Lee: First colonization of a spectral outpost in random matrix theory. Constructive Approximation, {\bf 30}(2008), 225-263.


\bibitem{ku94 2015} P.M. Bleher and A. Kuijlaars: Orthogonal polynomials in the normal matrix model
with a cubic potential, Advances in Mathematics, {\bf 230}(2012), 1272-1321.

\bibitem{Alfredo 2019}  A. Dea\~no and N. Simm: Characteristic polynomials of complex random matrices and Painlev\'e transcendents, International Mathematics Research Notices, (2020), rnaa111.


\bibitem{Deift 1999} P.A. Deift, Orthogonal polynomials and random matrices: a Riemann-Hilbert approach, Volume 3 of
 Courant Lecture Notes in Mathematics. New York University Courant Institute of Mathematical Sciences, 1999.

\bibitem{DKMVZ 1999} P. Deift, T. Kriecherbauer, K.T-R McLaughlin, S. Venakides, and X. Zhou: Strong asymptotics of orthogonal
polynomials with respect to exponential weights, Communications on Pure and Applied Mathematics, {\bf 52}(1999), 1491-1552.

\bibitem{Fi 2017} A.M. Finkelshtein, and W.V. Assche: What is \ldots\,  a multiple orthogonal polynomial?,
Notices of the AMS, {\bf 63}(2016), 1029-1031.

\bibitem{Martinez 2019} A. M. Finkelshtein and G. Silva: Critical measures for vector energy: asymptotics of non-diagonal multiple orthogonal polynomials for a cubic weight, Advances in Mathematics, {\bf 349}(2019), 246-315.

\bibitem{HM 2013} H. Hedenmalm and N. Makarov:
Coulomb gas ensembles and Laplacian growth, Proceedings of the London Mathematical Society,
{\bf 106}(2013), 859-907.

\bibitem{Hedenmalm 2017} H. Hedenmalm and A. Wennman:
Planar orthogonal polynomials and boundary universality in the random normal matrix model, (2017),
arXiv:1710.06493.

\bibitem{Ku 2010} A. Kuijlaars: Multiple orthogonal polynomials in random matrix theory,
Proceedings of the International Congress of Mathematicians, (2010), 1417-1432.

\bibitem{ku103 2015} A. Kuijlaars and A. L\'opez: The normal matrix model with a monomial potential,
a vector equilibrium problem, and multiple orthogonal polynomials on
a star, Nonlinearity, {\bf 28}(2015), 347-406.

\bibitem{ku104 2015} A. Kuijlaars and K.T-R Mclaughlin: Asymptotic zero behavior of Laguerre polynomial with negative parameter, Constructive Approximation, {\bf 20}(2004), 497-523 .

\bibitem{ku105 2015} A. Kuijlaars and G.L.F. Silva: S-curves in polynomial external fields, Journal of Approximation Theory, {\bf 191}(2015), 1-37.

\bibitem{ku106 2015} A. Kuijlaars and A. Tovbis: The supercritical regime in the normal matrix model
with cubic potential, Advances in Mathematics, {\bf 283}(2015), 530-587.

\bibitem{Lee 2017} S.-Y. Lee and M. Yang: Discontinuity in the asymptotic behavior
of planar orthogonal polynomials under a perturbation of the
Gaussian weight, Communications in Mathematical Physics, {\bf 355}(2017), 303-338.

\bibitem{Lee 2018} S.-Y. Lee and M. Yang: Planar orthogonal polynomials as Type II multiple orthogonal
polynomials, Journal of Physics A: Mathematical and Theoretical, {\bf 52}(2019), 275202. 

\bibitem{Ra 2005} R. Teodorescu, E. Bettelheim, O. Agam, A. Zabrodin, and P. Wiegmann:
Normal random matrix ensemble as a growth problem, Nuclear Physics B, {\bf 704}(2005), 407-444.

\bibitem{Webb 2018} C. Webb and M.D. Wong: On the moments of the characteristic polynomial of Ginibre random matrix, Proceedings of the London Mathematical Society, {\bf 118}(2018),
1017-1056.

\bibitem{Yang 2018} M. Yang: Orthogonal polynomials With respect to the measure supported over the whole complex Plane. Graduate Theses and Dissertations, (2018),
https://scholarcommons.usf.edu/etd/7386.

\bibitem{Zabrodin 2011} A. Zabrodin, Random matrices and Laplacian growth, The Oxford handbook of random matrix theory, 802-823. Oxford University Press, 2011.

\bibitem{Zabrodin 2006} A. Zabrodin and P. Wiegmann: Large-N expansion for the 2D Dyson gas, Journal of Physics A: Mathematical and General, {\bf 39}(2006), 8933-8963.






 


\end{thebibliography}
\end{document}